\DeclareMathAlphabet{\bm}{OML}{cmm}{b}{it}
\newtheorem{theorem}{Theorem}
\newtheorem{lemma}{Lemma}
\newtheorem{corollary}{Corollary}
\newtheorem{remark}{Remark}
\newtheorem{assumption}{Assumption}
\newtheorem{example}{Example}
\newcommand{\qed}{\hfill \IEEEQED}
\newcommand{\bol}[1]{\mathbf{#1}}
\newcommand{\rom}[1]{\mathrm{#1}}
\newcommand{\san}[1]{\mathsf{#1}}
\newcommand{\argmax}{\mathop{\rm argmax}\limits}
\newcommand{\argmin}{\mathop{\rm argmin}\limits}
\newcommand{\textchange}[1]{#1}
\newcommand{\inner}[2]{\langle #1 | #2 \rangle} 
\newcommand{\Pe}{\rom{P}_{\rom{e}}}
\newcommand{\barPe}{\bar{\rom{P}}_{\rom{e}}}
\newcommand{\Pse}{\rom{P}_{\rom{s}}}
\newcommand{\barPse}{\bar{\rom{P}}_{\rom{s}}}
\newcommand{\Pce}{\rom{P}_{\rom{c}}}
\newcommand{\barPce}{\bar{\rom{P}}_{\rom{c}}}
\begin{document}

\title{Finite-Length Analyses for Source and Channel Coding on Markov Chains\thanks{Parts of this paper were presented at 51st Allerton conference
and 2014 Information Theory and Applications Workshop.}}

\author{Masahito~Hayashi~\IEEEmembership{Fellow,~IEEE}
\thanks{The first author is with the Graduate School of Mathematics, Nagoya University, Japan. He is also with the Center for
Quantum Technologies, National University of Singapore, Singapore, e-mail:masahito@math.nagoya-u.ac.jp}
and Shun~Watanabe~\IEEEmembership{Member,~IEEE}       
\thanks{The second author is with the Department of Computer and Information Sciences, Tokyo University of Agriculture and Technology, Japan, e-mail:shunwata@cc.tuat.ac.jp.
When the main part of this paper was done, he was with the Department
of Information Science and Intelligent Systems, 
University of Tokushima, Japan.}

\thanks{Manuscript received ; revised }}

\markboth{Journal of \LaTeX\ Class Files,~Vol.~6, No.~1, January~2007}%
{Shell \MakeLowercase{\textit{et al.}}: Bare Demo of IEEEtran.cls for Journals}

\maketitle
\begin{abstract}
We study finite-length bounds for source coding with side information for Markov sources
and channel coding for channels with conditional Markovian additive noise.
For this purpose, we propose two criteria for finite-length bounds. One is the asymptotic optimality and the other is the efficient computability of the bound.
Then, we derive finite-length upper and lower bounds for coding length in both settings
so that their computational complexity is efficient.
To discuss the first criterion, we derive the large deviation bounds, the moderate deviation bounds, and second order bounds for these two topics,
and show that these finite-length bounds achieves the asymptotic optimality in these senses.
For this discussion, we introduce several kinds of information measure for transition matrices.
\end{abstract}

\begin{IEEEkeywords}
Channel Coding, Markov Chain, Finite-Length Analysis, Source Coding
\end{IEEEkeywords}

\IEEEpeerreviewmaketitle

\section{Introduction}

Recently, finite-length analyses for coding problems are attracting a considerable attention 
\cite{polyanskiy:10}.
This paper focuses on finite-length analyses for the source coding with side-information for Markov sources and the channel coding for channels with conditional Markovian additive noise.
Although the main purpose of this paper is finite-length analyses,
this paper also develops a unified approach to investigate these topics including the asymptotic analyses.
Since this discussion spreads so many subtopics,
we explain them separately in the introduction. 

\subsection{Two criteria for finite-length bounds} \label{subsection:motivation}

For an explanation of the motivations of this paper,
we start with two criteria for finite-length bounds while the problems treated in this paper are not restricted to channel coding.
Until now, so many types of finite-length achievability bounds have been proposed. For example, Verd\'u and Han derived a
finite-length bound by using the information spectrum approach in order to derive the general formula \cite{verdu:94}
(see also \cite{han:book}), which we call the {\em information-spectrum bound}.
One of the authors and Nagaoka derived a bound (for the classical-quantum channel) by relating the error probability to
the binary hypothesis testing \cite[Remark 15]{hayashi:03} (see also \cite{wang:12}), which we call the {\em hypothesis testing bound}.
Polyanskiy {\em et.~al.} derived  the {\em RCU (random coding union) bound} and the {\em DT (dependence testing)
bound} \cite{polyanskiy:10}\footnote{A bound slightly looser (coefficients are worse) than the DT bound 
can be derived from the hypothesis testing bound of \cite{hayashi:03}.}. 
Also, Gallager's bound \cite{gallager:65} is known as an efficient bound to derive the exponential decreasing rate.

Here, we focus on two important criteria for finite-length bounds:
\begin{description}
\item[(C1)] Computational complexity for the bound, and
 
\item[(C2)] Asymptotic optimality for the bound.
\end{description}

First, we consider the first criterion, i.e., the computational complexity for the bound.
For the BSC, the computational complexity of the RCU bound is $O(n^2)$ and that of the DT bound is $O(n)$ \cite{polyanskiy:thesis}.
However, the computational complexities of these bounds is much larger for general DMCs or channels with memory.
It is known that the hypothesis testing bound can be described as a linear programming (eg.~see \cite{tomamichel:12, matthews:12}\footnote{In the 
case of quantum channel, the bound is described as a semi-definite programming.}), and can be efficiently computed under certain symmetry.
However, the number of variables in the linear programming grows exponentially in the block length, and it is 
difficult to compute in general.  
The computation of the information-spectrum bound depends on the evaluation of a tail probability.
The information-spectrum bound is less operational than the hypothesis testing bound 
in the sense of the hierarchy introduced in \cite{tomamichel:12}, and the computational complexity of
the former is much smaller than that of the latter. 
However the computation of a tail probability is still not so easy unless the channel is a DMC. 
For DMCs, computational complexity of Gallager's bound is $O(1)$ since the Gallager function is additive
quantity for DMCs. However, this is not the case if there is a memory\footnote{The Gallager bound for
finite states channels was considered in \cite[Section 5.9]{gallager:68}, but a closed form expression
for the exponent was not derived.}.
Consequently, there is no bound that is efficiently computable for the Markov chain so far.
The situation is the same for source coding with  side-ifnromation.

Next, let us consider the second criterion, i.e., asymptotic optimality.
So far, three kinds of asymptotic regimes have been studied in the information theory \cite{polyanskiy:10,hayashi:09,hayashi:08,altug:10,dakehe:09,tan:12b,kuzuoka:12}:
\begin{itemize}
\item The large deviation regime in which the error probability $\varepsilon$ asymptotically behaves like $e^{- n r}$ for some $r > 0$,

\item The moderate deviation regime in which $\varepsilon$ asymptotically behaves like $e^{- n^{1 - 2t} r}$ for some $r > 0$ and $t \in (0,1/2)$, and

\item The second order regime in which $\varepsilon$ is a constant.
\end{itemize}
We shall claim that a good finite-length bound should be asymptotically optimal at least one of the above mentioned three regimes.
In fact, the information spectrum bound, the hypothesis testing bound, and the DT bound are asymptotically optimal in the moderate deviation
regime and the second order regime; the Gallager bound is asymptotically optimal in the large deviation regime;
and the RCU bound is asymptotically optimal in all the regimes\footnote{The Gallager bound and the RCU bound are asymptotically optimal
in the large deviation regime only up to the critical rate.}. 
Recently, for DMC, Yang-Meng derived efficiently computable bound for
low density parity check (LDPC) codes \cite{YanMen:15}, which is asymptotically optimal in
the moderate deviation regime and the second order regime.

\subsection{Main Contribution for Finite-Length Analysis}

To derive finite-length achievability bounds on the problems,
we basically use the exponential type bounds\footnote{For channel coding, it corresponds to the Gallager bound.}.
In source coding with side-information, 
the exponential type upper bounds on error probability $\barPe(M_n)$ for a given message size 
$M_n$ are described by using conditional
R\'enyi entropies as follows (cf.~Lemma \ref{lemma:multi-source-tight-bound} and Lemma \ref{lemma:multi-source-loose-bound}):
\begin{eqnarray}
\barPe(M_n) \le 
\inf_{-\frac{1}{2} \le \theta \le 0} M_n^{\frac{\theta}{1+\theta}} e^{- \frac{\theta}{1+\theta} H_{1+\theta}^\uparrow(X^n|Y^n)}
\end{eqnarray}
and
\begin{eqnarray}
\barPe(M_n) \le \inf_{-1 \le \theta \le 0} M_n^{\theta} e^{-\theta H_{1+\theta}^\downarrow(X^n|Y^n)}.
\end{eqnarray}
Here, $H_{1+\theta}^\uparrow(X^n|Y^n)$ is the conditional R\'enyi entropy introduced by
Arimoto \cite{arimoto:75}, which we shall call {\em upper conditional R\'enyi entropy} (cf.~\eqref{eq:upper-conditional-renyi}).
On the other hand, $H_{1+\theta}^\downarrow(X^n|Y^n)$ is the conditional R\'enyi entropy introduced in \cite{hayashi:10},
which we shall call the {\em lower conditional R\'enyi entropy}. Although there are several other definitions of
conditional R\'enyi entropies, we will only use these two in this paper;
see \cite{teixeira:12, iwamoto:13} for extensive review on conditional R\'enyi entropies.

Although the above mentioned conditional R\'enyi entropies are additive for i.i.d. random variables,
they are not additive for Markov chains, which is a difficulty to derive finite-length bounds for Markov chains. 
In general, it is not easy to evaluate the conditional R\'enyi entropies for Markov chains. 
Thus, we consider two assumptions on transition matrices 
(see Assumption \ref{assumption-Y-marginal-markov} and Assumption \ref{assumption-memory-through-Y} of Section \ref{section:preparation-multi}). Without  Assumption \ref{assumption-Y-marginal-markov},
it should be noted that even the conditional entropy rate is difficult to be evaluated.
Under Assumption \ref{assumption-Y-marginal-markov}, we introduce  
the lower conditional R\'enyi entropy for transition matrices 
$H_{1+\theta}^{\downarrow,W}(X|Y)$ (cf.~\eqref{eq:definition-lower-conditional-renyi-markov}). 
Then, we evaluate
the lower conditional R\'nyi entropy for the Markov chain in terms of its transition matrix counterpart.
More specifically, we derive an approximation
\begin{eqnarray}
H_{1+\theta}^{\downarrow}(X^n|Y^n) = n H_{1+\theta}^{\downarrow,W}(X|Y) + O(1),
\end{eqnarray}
where an explicit form of $O(1)$ is also derived.
This evaluation gives finite-length bounds under Assumption \ref{assumption-Y-marginal-markov}.
Under more restrictive assumption, i.e., Assumption \ref{assumption-memory-through-Y}, we also 
introduce the upper conditional R\'enyi entropy for a transition matrix 
$H_{1+\theta}^{\uparrow,W}(X|Y)$ (cf.~\eqref{eq:definition-upper-conditional-renyi-markov}). 
Then, we evaluate 
the upper R\'enyi entropy for the Markov chain in terms of its transition matrix counterpart. 
More specifically, we derive an approximation
\begin{eqnarray}
H_{1+\theta}^{\uparrow}(X^n|Y^n) = n H_{1+\theta}^{\uparrow,W}(X|Y) + O(1),
\end{eqnarray}
where an explicit form of $O(1)$ is also derived.
This evaluation gives finite-length bounds that are tighter than those obtained under 
Assumption \ref{assumption-Y-marginal-markov}.

We also derive converse bounds by using the change of measure argument for Markov chains
developed by the authors in the accompanying paper on information geometry \cite{hayashi-watanabe:13, hayashi-watanabe:13b}.
For this purpose, we further introduce 
two-parameter conditional R\'enyi entropy and its transition matrix counterpart
(cf.~\eqref{eq:two-parameter-conditional-renyi} and \eqref{eq:definition-two-parameter-renyi-markov}).
This novel information measure includes the lower conditional R\'enyi entropy and the upper conditional
R\'enyi entropy as special cases.  
To clarify the relation among bounds based on these quantities,
we numerically calculate the upper and lower bounds for the optimal coding rate in source coding with Markovian source
as Figs. \ref{Fig:Comparison-single-source-fixed-epsilon} and \ref{Fig:Comparison-single-source-fixed-n}.
Thanks to the second criterion (C2), 
this calculation shows that our finite-length bounds are very close to the optimal value.
Although this numerical calculation contains the case with the huge size 
$n=1 \times 10^5$,
its calculation is not so difficult because their calculation complexity behaves as $O(1)$.
That is, this calculation shows the advantage of the first criterion (C1).

Here, we would like to remark on terminologies. There are a few ways to express 
exponential type bounds. In statistics or the large deviation theory, we usually use 
the cumulant generating function (CGF) to describe exponents. In information theory,
we use the Gallager function or the R\'enyi entropies. Although these three terminologies are 
essentially the same and are related by change of variables, 
the CGF and the Gallager function are convenient for some calculations since they have good properties such as convexity.
However, they are merely mathematical functions. On the other hand, the R\'enyi entropies
are information measures including Shannon's information measures as special cases. 
Thus, the R\'enyi entropies are intuitively familiar in the field of information theory.
The R\'enyi entropies also have an advantage that two types of bounds
(eg.~\eqref{eq:multi-source-ldp-converse-assumption-1} and \eqref{eq:multi-source-ldp-converse-assumption-2}) 
can be expressed in a unified manner. For these reasons, we state our main results in terms of
the R\'enyi entropies while we use the CGF and the Gallager function in the proofs. 
For readers' convenience, the relation between the R\'enyi entropies and corresponding CGFs are 
summarized in Appendices \ref{Appendix:preparation} and
and \ref{Appendix:preparetion-multi-terminal}.

\subsection{Main Contribution for Channel Coding}

It is known that there is an intimate relationship between channel coding
and source coding with side-information (eg.~\cite{wyner:74, csiszar:82, ahlswede:82}). 
In particular, for an additive channel, the error probability of channel coding by a linear code can be related to
the corresponding source coding problem with side information \cite{wyner:74}. Chen {\em et.~al.}~also showed that
the error probability of source coding with side-information by a linear encoder
can be related to the error probability of a dual channel coding problem and vice versa \cite{chen:09c} (see also \cite{hayashi:10b}). 
Since those dual channels can be regarded as 
additive channels conditioned by state-information, 
we call those channels {\em conditional additive channels}\footnote{In \cite{hayashi:10b}, we called those
channels {\em general additive channels} but we think "conditional" is more suitably describing the situation.}.
As a similar symmetric channel, a regular channel \cite{delsarte:82} is known.

In this paper, we mainly discuss a conditional additive channel, in which, the additive noise is operated subject to a distribution conditioned with an additional output information,
and propose a method to convert a regular channel into a conditional additive channel
so that our treatment covers regular channels.
Additionally, we show that the BPSK-AWGN channel is included in conditional additive channels.
Thus, by using aforementioned 
duality between channel coding and source coding with side-information, we can evaluate 
the error probability of channel coding for regular channels.

By the same reason as source coding with side-information, we assume two assumptions,
Assumption \ref{assumption-Y-marginal-markov} and Assumption \ref{assumption-memory-through-Y},
on the noise process of a conditional additive channel.
It should be noted that the Gilbert-Elliott channel \cite{gilbert:60,elliott:63} with state-information available at the receiver
can be regraded as a conditional additive channel such that the noise process is a Markov chain
satisfying both Assumption \ref{assumption-Y-marginal-markov} and Assumption \ref{assumption-memory-through-Y} (see Example \ref{example:gilbert-elliot}).
Thus, we believe that Assumption \ref{assumption-Y-marginal-markov} and Assumption \ref{assumption-memory-through-Y}
are quite reasonable assumptions.

\subsection{Asymptotic bounds and asymptotic optimality for finite-length bounds}

For asymptotic analyses of the large deviation and the moderate deviation regimes,
we derive the characterizations\footnote{For the large deviation regime, we only derive the characterizations
up to the critical rate.} by using our finite-length
achievability and converse bounds, which implies that our finite-length bounds are tight in 
the large deviation regime and the moderate deviation regime.
We also derive the second order rate. 
Although the second order rate can be derived by application of the central limit theorem to the 
information spectrum bound, the variance involves the limit with respect to the block length because of memory.
In this paper, we derive a single letter form of the variance by using the conditional R\'enyi entropy for transition 
matrices\footnote{An alternative way to derive a single letter characterization of the variance 
for the Markov chain was shown in \cite[Lemma 20]{tomamichel:13}. It should be also noted that a single letter characterization
can be derived by using the fundamental matrix \cite{kemeny-snell-book}. The single letter characterization of the variance in \cite[Section VII]{hayashi:08}
and \cite[Section III]{hayashi:09} has an error, which is corrected in this paper.}.

As we will see in Theorem \ref{theorem:source-coding-second-order}, Theorem \ref{theorem:multi-moderate-deviation},
Theorem \ref{theorem:source-coding-ldp-assumption-1}, Theorem \ref{theorem:source-coding-ldp-assumption-2},
Theorem \ref{theorem:channel-second-order}, Theorem \ref{theorem:channel-mdp}, Theorem \ref{theorem:large-deviation-source-coding-assumption-1},
and Theorem \ref{theorem:channel-ldp-2}, our asymptotic results have the same forms as  the counterparts of the i.i.d. case (cf.~\cite{gallager:65,polyanskiy:10,hayashi:09,hayashi:08,altug:10,dakehe:09})
when the information measures for distributions in the i.i.d. case are replaced by the information measures for transition matrices 
introduced in this paper.

To see the asymptotic optimality for finite-length bounds,
we summarize the relation between the asymptotic results and the finite-length bounds in Table \ref{table:summary:Asymptotic-results}.
In the table, the computational complexity of the finite-length bounds are also described.
"$\mbox{Solved}^*$" indicates that those problems are solved up to the critical rates.
"Ass.~1" and "Ass.~2" indicate that those problems are solved under 
Assumption \ref{assumption-Y-marginal-markov} or Assumption \ref{assumption-memory-through-Y}.
"$O(1)$" indicates that both the achievability part and the converse part of those asymptotic results are derived from our 
finite-length achievability bounds and converse bounds whose
computational complexities are $O(1)$. "Tail" indicates that both the achievability part and the converse part of those asymptotic results are derived from the
information-spectrum type achievability bounds and converse bounds whose computational complexities depend on the computational complexities
of tail probabilities.

Exact computations of tail probabilities are difficult in general though it may be feasible for a simple 
case such as an i.i.d.~case. One way to approximately compute tail probabilities is to use the Berry-Ess\'een
theorem \cite[Theorem 16.5.1]{feller:book} or its variant \cite{tikhomirov:80}. 
This direction of research is still continuing \cite{kontoyiannis:03,herve:12}, and an evaluation of
the constant was done in \cite{herve:12} though it is not clear how much tight it is. 
If we can derive a tight Berry-Ess\'een type bound for the Markov chain, we  can derive a finite-length bound
that is asymptotically tight in the second order regime.
However, the approximation errors
of Berry-Ess\'een type bounds converge only in the order of $1/\sqrt{n}$, and cannot be applied when $\varepsilon$ is rather small. 
Even in the cases such that exact computations of tail probabilities are possible, the information-spectrum type bounds
are looser than the exponential type bounds when $\varepsilon$ is rather small, and we need to use appropriate 
bounds depending on the size of $\varepsilon$. In fact, this observation was explicitly clarified in \cite{watanabe:13c}
for the random number generation with side-information. Consequently, we believe that our exponential type finite-length bounds are very useful.  
It should be also noted that, for source coding with side-information and channel coding for regular channels,
even the first order results have not been revealed as long as the authors know, and
they are clarified in this paper\footnote{General formulae for those problems were known \cite{verdu:94, han:book}, 
but single-letter expressions for Markov sources or channels were not clarified in the literature.}.

\begin{table}[htbp]
\begin{center}
\caption{Summary of asymptotic results and Finite-Length Bounds to Derive Asymptotic Results}
\label{table:summary:Asymptotic-results}
\begin{tabular}{|c|c|c|c|c|} \hline
 Problem & First Order & Large Deviation & Moderate Deviation & Second Order \\ \hline
 SC with SI & Solved (Ass.~1)  &$\mbox{Solved}^*$  (Ass.~2), $O(1)$ & Solved (Ass.~1), $O(1)$ & Solved (Ass.~1), Tail \\ \hline 
 CC for Conditional Additive Channels & Solved (Ass.~1) & $\mbox{Solved}^*$  (Ass.~2), $O(1)$ & Solved (Ass.~1), $O(1)$ & Solved (Ass.~1), Tail \\ \hline
\end{tabular}
\end{center}
\end{table}

\subsection{Related Works on Markov chains}

Since related works concerning the finite-length analysis has been reviewed in Section \ref{subsection:motivation},
we only review related works concerning the asymptotic analysis here.
There are some studies on Markov chains for the large deviation regime 
\cite{davisson:81, vasek:80, zhong:07b}. 
The derivation in \cite{davisson:81} uses the Markov type method. A drawback of 
this method is that it involves a term that stems from the number of types, which is not important for the asymptotic analysis but
is crucial for the finite-length analysis. Our achievability is derived by a similar approach as in \cite{vasek:80,zhong:07b},
i.e., the Perron-Frobenius theorem, but our derivation separates the single-shot part and the evaluation of the R\'enyi entropy,
and thus is more transparent. Also, the converse part of \cite{vasek:80,zhong:07b} is based 
on the Shannon-McMillan-Breiman limiting theorem 
and does not yield finite-length bounds. 
 
For the second order regime, 
Polyanskiy {\em et.~al.} studied the second order rate (dispersion) of the Gilbert-Elliott channel \cite{polyanskiy:11}. 
Tomamichel and Tan studied the second order rate of channel coding with state-information such that the 
state-information may be a general source, and derived a formula for the Markov chain as a special case \cite{tomamichel:13}.
Kontoyiannis studied the second order variable length source coding 
for the Markov chain \cite{kontoyiannis:97}. In \cite{kontoyiannis-verdu:14},
Kontoyiannis-Verd\'u derived the second order rate of lossless source coding under overflow probability criterion.

For channel coding of i.i.d. case, Scarlett {\em et. al.} derived a saddle-point approximation,
which unifies all the three regimes \cite{scarlett:13,scarlett:14}.
\subsection{Organization of Paper}

In Section \ref{section:preparation-multi}, we introduce information measures and their properties that will
be used in Section \ref{section:multi-source}
and Section \ref{section:channel}. 
Then, source coding with side-information and channel coding will be discussed in
Section \ref{section:multi-source}
and Section \ref{section:channel} respectively. 
As we mentioned above, we state our main result in terms of the
R\'enyi entropies, and we use the CGFs and the Gallager function in the proofs.
We explain how to cover the continuous case in Remarks \ref{R10-2-3} and \ref{R10-2-4}.
In Appendices \ref{Appendix:preparation}  
and \ref{Appendix:preparetion-multi-terminal}, the relation between the R\'enyi entropies and 
corresponding CGFs are summarized. The relation between the R\'enyi entropies and the Gallager function are explained 
as necessary. Proofs of some technical results are also shown in the rest of appendices.

\subsection{Notations}

For a set ${\cal X}$, the set of all distributions on ${\cal X}$ is denoted by ${\cal P}({\cal X})$.
The set of all sub-normalized non-negative functions on ${\cal X}$ is denoted by $\bar{{\cal P}}({\cal X})$.
The cumulative distribution function of the standard Gaussian random variable is denoted by
\begin{eqnarray}
\Phi(t ) = \int_{-\infty}^t \frac{1}{\sqrt{2\pi}} \exp\left[ - \frac{x^2}{2} \right] dx.
\end{eqnarray}
Throughout the paper, the base of the logarithm is $e$.

\section{Information Measures} \label{section:preparation-multi}

Since this paper discusses the second order optimality, 
we need to discuss the central limit theorem for the Markovian process.
For this purpose,
we usually employ advanced mathematical methods from probability theory.
For example, the paper \cite[Theorem 4]{Ben-Ari}
showed the Markov version of the central limit theorem by using a martingale
stopping technique.
Lalley \cite{Lalley} employed regular perturbation theory of operators on the infinite dimensional space \cite[Ch. 7, \#1, Ch. 4, \#3, and Ch. 3, \#5]{Kato}. 
The papers \cite{R5,R6}\cite[Lemma 1.5 of Chapter 1]{KLO}
employed the spectral measure
while it is hard to calculate the spectral measure in general even in the finite state case.
Further, the papers \cite{R5,CLT2,CLT3,CLT4} showed the central limit theorem
by using the asymptotic variance, 
but they did not give any computable expression of the asymptotic variance
without the infinite sum.
In summary, to derive the central limit theorem with the variance of computable form, these papers need to use very advanced mathematics beyond calculus and linear algebra.

To overcome this problem, we employ the method used in our recent paper \cite{hayashi-watanabe:13b}.
The paper \cite{hayashi-watanabe:13b} employed the method based on the cumulant generating function for transition matrices, 
which the Perron eigenvalue of a specific non-negative-entry matrix.
Since a Perron eigenvalue can be explained in the framework of linear algebra,
the method can be described with elementary mathematics.
To employ this method, we need to define the information measure 
in a way similar to the cumulant generating function for transition matrices.
That is, we define the information measures for transition matrices, 
e.g., the conditional R\'{e}nyi entropy for transition matrices, etc,
by using Perron eigenvalues. 

Fortunately, these information measures for transition matrices
are very useful even for large deviation type evaluation and finite-length bounds.
For example, our recent paper \cite{hayashi-watanabe:13b} derived finite-length bounds
for simple hypothesis testing for Markovian chain by using the cumulant generating function for transition matrices.
Therefore, using these information measures for transition matrices, 
this paper derives finite-length bounds for source coding and channel coding with Markov chains, and discusses their asymptotic bounds with large deviation, moderate deviation, and second order type.

Since they are natural extensions of information measures for single-shot setting,
we first review information measures for single-shot setting in Section \ref{section:multi-terminal-single-shot}.
Next, we introduce information measures for transition matrices in Section \ref{subsection:multi-terminal-measures-markov}.
Then, we show that information measures for Markov chains can be approximated by information measures for transition matrices generating 
those Markov chains in Section \ref{subsection-multi-terminal-information-measures-markov}.

\subsection{Information measures for Single-Shot Setting} \label{section:multi-terminal-single-shot}

In this section, we introduce conditional R\'enyi entropies for the single-shot setting. 
For more detailed review of conditional R\'enyi entropies, see \cite{iwamoto:13}.
For a correlated random variable $(X,Y)$ on ${\cal X} \times {\cal Y}$ with probability distribution $P_{XY}$ and a marginal distribution
$Q_Y$ on ${\cal Y}$, we introduce the conditional R\'enyi entropy of order $1+\theta$ relative to $Q_Y$ as
\begin{eqnarray}
H_{1+\theta}(P_{XY}|Q_Y) := - \frac{1}{\theta} \log \sum_{x,y} P_{XY}(x,y)^{1+\theta} Q_Y(y)^{-\theta},
\end{eqnarray}
where $\theta \in (-1,0) \cup (0,\infty)$. The conditional R\'enyi entropy of order $0$ relative to $Q_Y$ is defined by the limit with
respect to $\theta$.
When ${\cal  Y}$ is singleton, it is nothing but the ordinary R\'enyi entropy, and it is denoted by 
$H_{1+\theta}(X) = H_{1+\theta}(P_X)$ throughout the paper. 

One of important special cases of $H_{1+\theta}(P_{XY}|Q_Y)$ is the case with $Q_Y = P_Y$,
where $P_Y$ is the marginal of $P_{XY}$.
We shall call this special case the {\em lower conditional R\'enyi entropy} of order $1+\theta$ and denote\footnote{
This notation was first introduce in \cite{tomamichel:13b}.}
\begin{eqnarray} \label{eq:lower-conditional-renyi}
H_{1+\theta}^\downarrow(X|Y) &:=& H_{1+\theta}(P_{XY}|P_Y) \\
&=& - \frac{1}{\theta} \log \sum_{x,y} P_{XY}(x,y)^{1+\theta} P_Y(y)^{-\theta}.
\end{eqnarray}
We have the following property, which follows from the correspondence
between the conditional R\'enyi entropy and the cumulant generating function (cf.~Appendix \ref{Appendix:preparetion-multi-terminal}).
\begin{lemma} \label{lemma:property-down-conditional-renyi}
We have
\begin{eqnarray} \label{eq:down-conditional-renyi-theta-0}
\lim_{\theta \to 0} H_{1+\theta}^\downarrow(X|Y) = H(X|Y)
\end{eqnarray}
and
\begin{eqnarray}
\san{V}(X|Y) &:=& \mathrm{Var}\left[ \log \frac{1}{P_{X|Y}(X|Y)} \right] \\
&=& \lim_{\theta \to 0} \frac{2\left[ H(X|Y) - H_{1+\theta}^\downarrow(X|Y) \right]}{\theta} \label{eq:multi-single-shot-variance-1}.
\end{eqnarray}
\end{lemma}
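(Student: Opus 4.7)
The plan is to identify $H_{1+\theta}^\downarrow(X|Y)$ with a normalization of the cumulant generating function (CGF) of the conditional self-information random variable, and then read off both claims from the standard Taylor expansion of a CGF at the origin.

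First, I would rewrite the summand using $P_{XY}(x,y)/P_Y(y) = P_{X|Y}(x|y)$ so that
\begin{eqnarray}
\sum_{x,y} P_{XY}(x,y)^{1+\theta} P_Y(y)^{-\theta}
= \sum_{x,y} P_{XY}(x,y)\, P_{X|Y}(x|y)^{\theta}
= E\bigl[e^{-\theta Z}\bigr],
\end{eqnarray}
where $Z := -\log P_{X|Y}(X|Y)$. Defining $\psi(\theta) := \log E[e^{-\theta Z}]$, this yields the identity
\begin{eqnarray}
H_{1+\theta}^\downarrow(X|Y) = -\frac{\psi(\theta)}{\theta},
\end{eqnarray}
which recasts the claims as statements about the CGF $\psi$ at the origin.

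Next, I would invoke the standard fact that $\psi$ is smooth in a neighborhood of $\theta=0$ with $\psi(0)=0$, $\psi'(0) = -E[Z] = -H(X|Y)$, and $\psi''(0) = \mathrm{Var}(Z) = \san{V}(X|Y)$. Taylor expanding at the origin gives
\begin{eqnarray}
\psi(\theta) = -H(X|Y)\,\theta + \tfrac{1}{2}\san{V}(X|Y)\,\theta^{2} + o(\theta^{2}),
\end{eqnarray}
and dividing by $-\theta$ immediately produces
\begin{eqnarray}
H_{1+\theta}^\downarrow(X|Y) = H(X|Y) - \tfrac{1}{2}\san{V}(X|Y)\,\theta + o(\theta).
\end{eqnarray}
Letting $\theta \to 0$ yields \eqref{eq:down-conditional-renyi-theta-0}, and rearranging gives \eqref{eq:multi-single-shot-variance-1}.

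The argument is essentially routine; the only step deserving attention is the smoothness of $\psi$ at $\theta=0$ and the identification of its first two derivatives with the relevant conditional mean and variance. Since the alphabets are finite, $Z$ is bounded and differentiation under the sum is immediate, so no real obstacle arises. As the statement already indicates, this lemma is a direct instance of the general dictionary between conditional R\'enyi entropies and CGFs that is compiled in Appendix \ref{Appendix:preparetion-multi-terminal}.
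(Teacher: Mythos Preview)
Your proof is correct and follows essentially the same approach as the paper: the paper simply states that the lemma follows from the correspondence between the conditional R\'enyi entropy and the CGF recorded in Appendix~\ref{Appendix:preparetion-multi-terminal}, which is precisely the identification $\theta H_{1+\theta}^\downarrow(X|Y) = -\phi(-\theta;P_{XY}|P_Y)$ with $Z=\log\frac{1}{P_{X|Y}(X|Y)}$ that you write out explicitly and then Taylor-expand. Your version is in fact more detailed than what the paper spells out.
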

\begin{proof}
\eqref{eq:down-conditional-renyi-theta-0} follows from the relation in \eqref{eq:relation-single-shot-downarrow-phi} and
the fact that the first-order derivative of cumulant generating function is the expectation. \eqref{eq:multi-single-shot-variance-1}
follows from \eqref{eq:relation-single-shot-downarrow-phi}, \eqref{eq:down-conditional-renyi-theta-0},
and \eqref{eq:variance-second-derivative-cgf}.
\end{proof}

The other important special cases of $H_{1+\theta}(P_{XY}|Q_Y)$ is the measure maximized over $Q_Y$.
We shall call this special case the {\em upper conditional R\'enyi entropy} of order $1+\theta$ and 
denote\footnote{For $-1 < \theta < 0$, \eqref{eq:optimal-choice-upper-conditional} can be proved by using the H\"older inequality,
and, for $0 < \theta$,  \eqref{eq:optimal-choice-upper-conditional} can be proved by using the reverse H\"older 
inequality \cite[Lemma 8]{hayashi:12d}.}
\begin{eqnarray} \label{eq:upper-conditional-renyi}
H_{1+\theta}^\uparrow(X|Y) &:=& \max_{Q_Y \in {\cal P}({\cal Y})} H_{1+\theta}(P_{XY}|Q_Y) \\
&=& H_{1+\theta}(P_{XY}|P_Y^{(1+\theta)}) \label{eq:optimal-choice-upper-conditional} \\
&=& - \frac{1+\theta}{\theta} \log \sum_y P_Y(y) \left[ \sum_x P_{X|Y}(x|y)^{1+\theta} \right]^{\frac{1}{1+\theta}},
\end{eqnarray}
where 
\begin{eqnarray} \label{eq:single-shot-optimal-conditioning-distribution}
P_Y^{(1+\theta)}(y) := 
\frac{\left[ \sum_x P_{XY}(x,y)^{1+\theta} \right]^{\frac{1}{1+\theta}}}{\sum_{y^\prime} \left[ \sum_x P_{XY}(x,y^\prime)^{1+\theta} \right]^{\frac{1}{1+\theta}}}. 
\end{eqnarray}
For this measure, we also have properties similar to Lemma \ref{lemma:property-down-conditional-renyi}.
This lemma will be proved in Appendix \ref{Appendix:lemma:property-upper-conditional-renyi-single-shot}.
\begin{lemma} \label{lemma:property-upper-conditional-renyi-single-shot}
We have
\begin{eqnarray} \label{eq:up-conditional-renyi-theta-0}
\lim_{\theta \to 0} H_{1+\theta}^\uparrow(X|Y) = H(X|Y)
\end{eqnarray}
and 
\begin{eqnarray}
\lim_{\theta \to 0} \frac{2\left[ H(X|Y) - H_{1+\theta}^\uparrow(X|Y) \right]}{\theta} 
= \san{V}(X|Y).
\label{eq:multi-single-shot-variance-2}
\end{eqnarray}
\end{lemma}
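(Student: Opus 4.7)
The plan is to derive the Taylor expansion of $H_{1+\theta}^\uparrow(X|Y)$ in $\theta$ around $\theta=0$ up to first order and read off both claims from its coefficients. Starting from \eqref{eq:optimal-choice-upper-conditional}, I would set
$$\psi(\theta) := \log \sum_{y} P_Y(y)\, G(y,\theta)^{1/(1+\theta)}, \qquad G(y,\theta) := \sum_{x} P_{X|Y}(x|y)^{1+\theta},$$
so that $H_{1+\theta}^\uparrow(X|Y) = -\tfrac{1+\theta}{\theta}\psi(\theta)$. Since $G(y,0) = 1$ we have $\psi(0) = 0$, and hence $H_{1+\theta}^\uparrow(X|Y) = -\psi'(0) - (\psi'(0) + \tfrac12 \psi''(0))\,\theta + O(\theta^2)$; both claims therefore reduce to identifying $\psi'(0)$ and $\psi''(0)$.

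The first step is to observe that, for each fixed $y$, $\theta\mapsto \log G(y,\theta)$ is the cumulant generating function of $\log P_{X|Y}(X|y)$ under $P_{X|Y}(\cdot|y)$, so standard CGF identities give $\partial_\theta \log G(y,0) = -h(y)$ and $\partial_\theta^2 \log G(y,0) = v(y)$, where $h(y) := H(X|Y=y)$ and $v(y) := \mathrm{Var}(\log P_{X|Y}(X|y)\mid Y=y)$. Using $1/(1+\theta) = 1 - \theta + O(\theta^2)$ and exponentiating produces a Taylor expansion of $G(y,\theta)^{1/(1+\theta)}$; averaging over $y$ with weights $P_Y(y)$ and expanding the outer logarithm leads, after a routine bookkeeping of cross terms, to the coefficients $\psi'(0) = -H(X|Y)$ and $\tfrac12 \psi''(0) = H(X|Y) + \tfrac12 \mathbb{E}_Y[v(Y)] + \tfrac12 \mathrm{Var}_Y[h(Y)]$.

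The final step is to apply the law of total variance to $-\log P_{X|Y}(X|Y)$, which gives $\san{V}(X|Y) = \mathbb{E}_Y[v(Y)] + \mathrm{Var}_Y[h(Y)]$ and hence $\psi'(0) + \tfrac12\psi''(0) = \tfrac12\san{V}(X|Y)$. Plugging this into the expansion of $H_{1+\theta}^\uparrow(X|Y)$ yields $H_{1+\theta}^\uparrow(X|Y) = H(X|Y) - \tfrac12 \san{V}(X|Y)\,\theta + O(\theta^2)$, from which \eqref{eq:up-conditional-renyi-theta-0} and \eqref{eq:multi-single-shot-variance-2} are immediate.

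I expect the main obstacle to be the bookkeeping in the second step: the conditional CGF, the $(1+\theta)^{-1}$ exponent, and the outer $\log$ each contribute terms involving $h(y)$ and $h(y)^2$ at the quadratic order, and these must be reorganised via the total-variance identity before $\san{V}(X|Y)$ emerges cleanly. Unlike the lower case in Lemma \ref{lemma:property-down-conditional-renyi}, the extra exponent $1/(1+\theta)$ inside the sum prevents $\psi$ itself from being a CGF, so the correspondence summarised in Appendix \ref{Appendix:preparetion-multi-terminal} cannot be cited directly and the expansion must be carried out by hand.
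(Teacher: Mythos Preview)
Your approach is correct but takes a genuinely different route from the paper. The paper avoids the direct Taylor expansion altogether by invoking the sandwich inequality (Lemma~\ref{lemma:appendix-upper-conditional-renyi-upper-and-lower})
\[
H_{\frac{1}{1-\theta}}^\downarrow(X|Y) \;\le\; H_{\frac{1}{1-\theta}}^\uparrow(X|Y) \;\le\; H_{1+\theta}^\downarrow(X|Y),
\]
and then applies the already-established expansion $H_{1+\theta}^\downarrow(X|Y)=H(X|Y)-\tfrac12\san{V}(X|Y)\theta+o(\theta)$ from Lemma~\ref{lemma:property-down-conditional-renyi} to both bounds (noting $\tfrac{1}{1-\theta}=1+\theta+o(\theta)$); since the two bounds agree to first order, so must $H^\uparrow$. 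Your direct computation is more self-contained and makes the law-of-total-variance decomposition of $\san{V}(X|Y)$ explicit, at the price of the bookkeeping you already flagged. The paper's squeeze argument is shorter and reuses ingredients needed elsewhere anyway (the same inequality is the key step in Lemma~\ref{lemma:multi-source-loose-bound} and is recycled verbatim for the transition-matrix analogue in Lemma~\ref{lemma:properties-upper-conditional-renyi-transition-matrix}), but it only yields the expansion with remainder $o(\theta)$, whereas your argument gives $O(\theta^2)$.
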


When we derive converse bounds, we need to consider the case such that 
the order of the R\'enyi entropy and the order of conditioning distribution defined in \eqref{eq:single-shot-optimal-conditioning-distribution} are different.
For this purpose, we introduce two-parameter conditional R\'enyi entropy:
\begin{eqnarray} \label{eq:two-parameter-conditional-renyi}
\lefteqn{ H_{1+\theta,1+\theta^\prime}(X|Y) } \\
&:=& H_{1+\theta}(P_{XY}|P_Y^{(1+\theta^\prime)}) \\
&=& - \frac{1}{\theta} \log \sum_y P_Y(y) \left[ \sum_x P_{X|Y}(x|y)^{1+\theta} \right] \left[\sum_x P_{X|Y}(x|y)^{1+\theta^\prime} \right]^{\frac{-\theta}{1+\theta^\prime}}
 + \frac{\theta^\prime}{1+\theta^\prime} H_{1+\theta^\prime}^\uparrow(X|Y).
\end{eqnarray}

Next, we investigate some properties of the measures defined above, 
which will be proved in Appendix \ref{Appendix:lemma:multi-terminal-single-shot-property}.
\begin{lemma} \label{lemma:multi-terminal-single-shot-property}
$\phantom{aaa}$ 
\begin{enumerate}
\item \label{item:multi-terminal-single-shot-property-1}
For fixed $Q_Y$, $\theta H_{1+\theta}(P_{XY}|Q_Y)$ is a concave function of $\theta$,
and it is strict concave iff. $\rom{Var}\left[ \log \frac{Q_Y(Y)}{P_{XY}(X,Y)} \right] > 0$.

\item \label{item:multi-terminal-single-shot-property-1-b}
For fixed $Q_Y$, $H_{1+\theta}(P_{XY}|Q_Y)$ is a monotonically decreasing\footnote{Technically, $H_{1+\theta}(P_{XY}|Q_Y)$ is always non-increasing and it is monotonically decreasing iff. strict concavity holds in Statement \ref{item:multi-terminal-single-shot-property-1}. Similar remarks are also applied for other information measures throughout the paper.} function of $\theta$.

\item \label{item:multi-terminal-single-shot-property-2}
The function $\theta H_{1+\theta}^\downarrow(X|Y)$ is a concave function of $\theta$, and it 
is strict concave iff. $\san{V}(X|Y) > 0$.

\item \label{item:multi-terminal-single-shot-property-2-b}
$H_{1+\theta}^\downarrow(X|Y)$ is a monotonically decreasing function of $\theta$.

\item \label{item:multi-terminal-single-shot-property-3}
The function $\theta H_{1+\theta}^\uparrow(X|Y)$ is a concave function of $\theta$, and it is strict concave 
iff. $\san{V}(X|Y) > 0$.

\item \label{item:multi-terminal-single-shot-property-3-b}
$H_{1+\theta}^\uparrow(X|Y)$ is a monotonically decreasing function of $\theta$.

\item \label{item:multi-terminal-single-shot-property-4}
For every $\theta \in (-1,0) \cup (0,\infty)$, we have $H_{1+\theta}^\downarrow(X|Y) \le H_{1+\theta}^\uparrow(X|Y)$.

\item \label{item:multi-terminal-single-shot-property-5}
For fixed $\theta^\prime$, the function $\theta H_{1+\theta,1+\theta^\prime}(X|Y)$ is a concave function of $\theta$,
and it is strict concave iff. $\san{V}(X|Y) > 0$.

\item \label{item:multi-terminal-single-shot-property-5-b}
For fixed $\theta^\prime$, $H_{1+\theta,1+\theta^\prime}(X|Y)$ is a monotonically decreasing function of $\theta$.

\item \label{item:multi-terminal-single-shot-property-6}
We have
\begin{eqnarray}
H_{1+\theta,1}(X|Y) = H_{1+\theta}^\downarrow(X|Y).
\end{eqnarray}

\item \label{item:multi-terminal-single-shot-property-7}
We have
\begin{eqnarray}
H_{1+\theta,1+\theta}(X|Y) = H_{1+\theta}^\uparrow(X|Y).
\end{eqnarray}

\item \label{item:multi-terminal-single-shot-property-8}
For every $\theta \in (-1,0) \cup (0,\infty)$, $H_{1+\theta,1+\theta^\prime}(X|Y)$ is maximized at $\theta^\prime = \theta$.


\end{enumerate}
\end{lemma}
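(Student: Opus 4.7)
The overall strategy is to reduce each property to convexity/monotonicity of an appropriately chosen cumulant generating function (CGF), together with some simple algebraic relations among the various R\'enyi quantities. Statement \ref{item:multi-terminal-single-shot-property-3}, concerning the upper conditional R\'enyi entropy, is the only item that requires substantive analysis, since the optimizing reference distribution $P_Y^{(1+\theta)}$ in \eqref{eq:optimal-choice-upper-conditional} itself depends on $\theta$ and so escapes a direct CGF-of-fixed-measure argument.

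The one-variable and two-parameter items are all driven by the identity
\[
\theta H_{1+\theta}(P_{XY}|Q_Y) \;=\; -\log E_{P_{XY}}\!\left[\left(\tfrac{P_{XY}(X,Y)}{Q_Y(Y)}\right)^\theta\right] \;=:\; -\psi_{Q_Y}(\theta),
\]
which exhibits $\psi_{Q_Y}$ as the CGF of $\log(P_{XY}/Q_Y)$ under $P_{XY}$. Convexity of $\psi_{Q_Y}$, strict iff $\rom{Var}[\log(Q_Y(Y)/P_{XY}(X,Y))]>0$, gives Statement \ref{item:multi-terminal-single-shot-property-1}. Since $\psi_{Q_Y}(0)=0$, convexity moreover forces the secant $\theta\mapsto\psi_{Q_Y}(\theta)/\theta$ to be nondecreasing, which is Statement \ref{item:multi-terminal-single-shot-property-1-b}. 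Specializing to $Q_Y=P_Y$ turns the log-ratio into $-\log P_{X|Y}(X|Y)$ and its variance into $\san{V}(X|Y)$, yielding Statements \ref{item:multi-terminal-single-shot-property-2} and \ref{item:multi-terminal-single-shot-property-2-b}; specializing to the $\theta$-independent $Q_Y=P_Y^{(1+\theta')}$ and noting that its support coincides with that of $P_Y$ gives Statements \ref{item:multi-terminal-single-shot-property-5} and \ref{item:multi-terminal-single-shot-property-5-b} with the same variance condition. Statement \ref{item:multi-terminal-single-shot-property-6} is direct: inserting $1+\theta'=1$ in \eqref{eq:single-shot-optimal-conditioning-distribution} yields $P_Y^{(1)}=P_Y$, hence $H_{1+\theta,1}=H_{1+\theta}(P_{XY}|P_Y)=H_{1+\theta}^\downarrow$. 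Statement \ref{item:multi-terminal-single-shot-property-7} is \eqref{eq:optimal-choice-upper-conditional} itself. For Statement \ref{item:multi-terminal-single-shot-property-8}, the map $Q_Y\mapsto H_{1+\theta}(P_{XY}|Q_Y)$ attains its global maximum on ${\cal P}({\cal Y})$ at $Q_Y=P_Y^{(1+\theta)}$; restricting to the subfamily $\{P_Y^{(1+\theta')}\}_{\theta'}$ still meets that maximizer precisely when $\theta'=\theta$, so the restricted maximum is attained there. Statement \ref{item:multi-terminal-single-shot-property-4} is immediate: $H_{1+\theta}^\downarrow=H_{1+\theta,1}\le H_{1+\theta,1+\theta}=H_{1+\theta}^\uparrow$ by the preceding three items.

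For Statements \ref{item:multi-terminal-single-shot-property-3} and \ref{item:multi-terminal-single-shot-property-3-b}, I would work directly from
\[
\theta H_{1+\theta}^\uparrow(X|Y) \;=\; -(1+\theta)\log A(1+\theta), \qquad A(\alpha):=\sum_y\Big(\sum_x P_{XY}(x,y)^\alpha\Big)^{1/\alpha},
\]
and prove convexity of $\theta\mapsto(1+\theta)\log A(1+\theta)$. My plan is to pass to the Gallager parameter $s=-\theta/(1+\theta)$, in which the quantity becomes $-E_0^s(s)/(1+s)$ for the source-coding Gallager function $E_0^s(s):=-\log\sum_y(\sum_x P_{XY}(x,y)^{1/(1+s)})^{1+s}$, and then combine the log-convexity of the $L^p$-norms $\|P_{XY}(\cdot,y)\|_{1/(1+s)}$ (a Minkowski/H\"older inequality) with a careful accounting of the $(1+s)^{-1}$ prefactor under the resulting change of variables. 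The equality case of those inequalities traces back to constancy of $-\log P_{X|Y}(X|Y)$ given $Y$, i.e.\ $\san{V}(X|Y)=0$, yielding the sharp strict-concavity characterization. Statement \ref{item:multi-terminal-single-shot-property-3-b} then falls out from Statement \ref{item:multi-terminal-single-shot-property-3} and the boundary value $\lim_{\theta\to 0}\theta H_{1+\theta}^\uparrow=0$ supplied by Lemma \ref{lemma:property-upper-conditional-renyi-single-shot} via the same secant-monotonicity argument used for Statement \ref{item:multi-terminal-single-shot-property-1-b}. The principal obstacle is precisely this last step: the $\theta$-dependence of $P_Y^{(1+\theta)}$ couples the outer $(1+\theta)$-exponent with the inner $1/(1+\theta)$-exponent, which prevents a clean CGF-of-fixed-measure reduction and requires one to handle both exponents simultaneously while tracking the exact equality condition that distinguishes strict concavity from linearity.
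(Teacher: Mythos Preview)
Your approach is essentially the paper's: Statements \ref{item:multi-terminal-single-shot-property-1}, \ref{item:multi-terminal-single-shot-property-2} and the concavity part of \ref{item:multi-terminal-single-shot-property-5} via convexity of the appropriate CGF (cf.~\eqref{eq:relation-single-shot-q-phi}, \eqref{eq:relation-single-shot-downarrow-phi}, \eqref{eq:relation-single-shot-two-parameter-phi}); Statement \ref{item:multi-terminal-single-shot-property-3} via convexity of the Gallager function $E_0(\tau;P_{XY})$ established by two applications of H\"older and then transferred through the relation $\theta H_{1+\theta}^\uparrow(X|Y)=-(1+\theta)E_0(-\theta/(1+\theta);P_{XY})$; the monotonicity statements from concavity together with the value at $\theta=0$; and Statements \ref{item:multi-terminal-single-shot-property-4}, \ref{item:multi-terminal-single-shot-property-6}--\ref{item:multi-terminal-single-shot-property-8} from the definitions.

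There is one gap. For the \emph{strict}-concavity clause of Statement \ref{item:multi-terminal-single-shot-property-5}, your CGF argument delivers strict concavity iff $\rom{Var}\bigl[\log\bigl(P_Y^{(1+\theta')}(Y)/P_{XY}(X,Y)\bigr)\bigr]>0$, whereas the statement asks for strict concavity iff $\san{V}(X|Y)>0$. These two conditions are in fact equivalent, but ``the supports coincide'' does not establish that equivalence; you still need to argue that when the former variance vanishes, each $P_{X|Y}(\cdot|y)$ is uniform on a support whose size does not depend on $y$ (and conversely that this forces $P_Y^{(1+\theta')}=P_Y$, collapsing the log-ratio to a constant). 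The paper explicitly flags this issue (``we need another argument to check the conditions for strict concavity'') and bypasses it by applying H\"older twice directly to the first term of \eqref{eq:strict-concavity-of-two-parameter-renyi-single-shot}, tracking the equality case just as in its proof of Statement \ref{item:multi-terminal-single-shot-property-3}.
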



We can also derive explicit forms of the conditional R\'enyi entropies of order $0$.
\begin{lemma} \label{lemma:limit-renyi-max-entropy}
We have
\begin{eqnarray}
\lim_{\theta \to -1} H_{1+\theta}(P_{XY}|Q_Y) &=& H_0(P_{XY}|Q_Y) \\
 &:=& \log \sum_y Q_Y(y) |\rom{supp}(P_{X|Y}(\cdot|y))|, \label{eq:limit-renyi-max-entropy-0} \\
\lim_{\theta \to -1} H_{1+\theta}^\uparrow(X|Y) &=&H_0^\uparrow(X|Y) \\
 &:=& \log \max_{y \in \rom{supp}(P_Y)} |\rom{supp}(P_{X|Y}(\cdot|y))|, 
  \label{eq:limit-renyi-max-entropy-1} \\
\lim_{\theta \to -1} H_{1+\theta}^\downarrow(X|Y) &=& H_0^\downarrow(X|Y) \\
&:=& \log \sum_y P_Y(y) |\rom{supp}(P_{X|Y}(\cdot|y))|. 
  \label{eq:limit-renyi-max-entropy-2}
\end{eqnarray}
\end{lemma}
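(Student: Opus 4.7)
The plan is to handle the three identities in the order \eqref{eq:limit-renyi-max-entropy-0}, then \eqref{eq:limit-renyi-max-entropy-2}, then \eqref{eq:limit-renyi-max-entropy-1}, each building on the previous. Only termwise limits on finite sums are needed, together with the variational formula for $H^\uparrow$ and the monotonicity item \ref{item:multi-terminal-single-shot-property-1-b} of Lemma \ref{lemma:multi-terminal-single-shot-property}.

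For \eqref{eq:limit-renyi-max-entropy-0}, I would work directly with the argument of the logarithm in the definition of $H_{1+\theta}(P_{XY}|Q_Y)$, namely
\begin{equation}
\sum_{x,y} P_{XY}(x,y)^{1+\theta} \, Q_Y(y)^{-\theta}.
\end{equation}
Since ${\cal X}\times{\cal Y}$ is finite, the limit $\theta\to -1^+$ can be taken termwise. For each fixed $(x,y)$, $P_{XY}(x,y)^{1+\theta}\to \mathbf{1}[P_{XY}(x,y)>0]$ (with the convention $0^0=0$ already implicit in the R\'enyi definition), while $Q_Y(y)^{-\theta}\to Q_Y(y)$. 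Hence the sum converges to $\sum_y Q_Y(y)\,|\rom{supp}(P_{X|Y}(\cdot|y))|$, where we set $|\rom{supp}(P_{X|Y}(\cdot|y))|:=0$ whenever $P_Y(y)=0$. Applying $-\tfrac{1}{\theta}\log(\cdot)\to\log(\cdot)$ yields \eqref{eq:limit-renyi-max-entropy-0}.

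Statement \eqref{eq:limit-renyi-max-entropy-2} then follows at once by setting $Q_Y=P_Y$ in \eqref{eq:limit-renyi-max-entropy-0}, since $H^\downarrow_{1+\theta}(X|Y)=H_{1+\theta}(P_{XY}|P_Y)$. For \eqref{eq:limit-renyi-max-entropy-1}, I would exploit the variational characterisation $H^\uparrow_{1+\theta}(X|Y)=\max_{Q_Y\in{\cal P}({\cal Y})} H_{1+\theta}(P_{XY}|Q_Y)$. By Lemma \ref{lemma:multi-terminal-single-shot-property}(\ref{item:multi-terminal-single-shot-property-1-b}), for each fixed $Q_Y$ the map $\theta\mapsto H_{1+\theta}(P_{XY}|Q_Y)$ is non-increasing, so $\lim_{\theta\to-1^+} H_{1+\theta}(P_{XY}|Q_Y)=\sup_{\theta\in(-1,0)} H_{1+\theta}(P_{XY}|Q_Y)$. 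Interchanging the two suprema (which is automatic for any double supremum) and inserting \eqref{eq:limit-renyi-max-entropy-0} gives
\begin{equation}
\lim_{\theta\to -1^+} H^\uparrow_{1+\theta}(X|Y) = \max_{Q_Y\in{\cal P}({\cal Y})} \log \sum_y Q_Y(y)\,|\rom{supp}(P_{X|Y}(\cdot|y))|.
\end{equation}
The right-hand side is the logarithm of a convex combination of nonnegative integers, so the maximum is attained by concentrating $Q_Y$ on a $y^*\in\rom{supp}(P_Y)$ that maximises $|\rom{supp}(P_{X|Y}(\cdot|y))|$, producing exactly the claim in \eqref{eq:limit-renyi-max-entropy-1}.

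The only delicate point is boundary bookkeeping: terms with $P_{XY}(x,y)=0$ require the $0^0=0$ convention, and $y\notin\rom{supp}(P_Y)$ must contribute nothing on either side so that the outer maximum in the last display is effectively over $\rom{supp}(P_Y)$. Once these conventions are pinned down, every step is a routine termwise limit on a finite sum together with a legal exchange of two suprema; no delicate analytic estimate such as asymptotic expansion of $(\sum_x P_{X|Y}(x|y)^{1+\theta})^{1/(1+\theta)}$ near $\theta=-1$ is needed.
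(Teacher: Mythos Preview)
Your argument is correct. For \eqref{eq:limit-renyi-max-entropy-0} and \eqref{eq:limit-renyi-max-entropy-2} you do exactly what the paper does (the paper simply calls these ``obvious from the definitions''), but for \eqref{eq:limit-renyi-max-entropy-1} you take a genuinely different route. The paper works directly with the closed-form expression
\[
H_{1+\theta}^\uparrow(X|Y)=-\tfrac{1+\theta}{\theta}\log\sum_y P_Y(y)\Bigl[\sum_x P_{X|Y}(x|y)^{1+\theta}\Bigr]^{1/(1+\theta)}
\]
and sandwiches $\bigl[\sum_y P_Y(y)[\sum_x P_{X|Y}(x|y)^{1+\theta}]^{1/(1+\theta)}\bigr]^{1+\theta}$ between an upper bound $\max_{y\in\rom{supp}(P_Y)}|\rom{supp}(P_{X|Y}(\cdot|y))|$ and a lower bound obtained by dropping all terms except the maximising $y^*$ and sending $\theta\to-1$. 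You instead invoke the variational formula $H^\uparrow_{1+\theta}=\max_{Q_Y}H_{1+\theta}(P_{XY}|Q_Y)$, use monotonicity in $\theta$ to rewrite both the outer limit and the inner limit as suprema over $\theta\in(-1,0)$, swap the two suprema, and then optimise the affine map $Q_Y\mapsto\sum_y Q_Y(y)|\rom{supp}(P_{X|Y}(\cdot|y))|$ over the simplex. Your approach is cleaner in that it avoids any analytic estimate of the $(1+\theta)$-norm near $\theta=-1$ and makes the appearance of the $\max_y$ transparent; the paper's approach has the virtue of being entirely self-contained, not relying on Lemma~\ref{lemma:multi-terminal-single-shot-property}. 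One small point worth making explicit in your write-up: to turn the \emph{outer} limit $\lim_{\theta\to-1^+}\sup_{Q_Y}$ into $\sup_{\theta}\sup_{Q_Y}$ you are implicitly using that a supremum of non-increasing functions is non-increasing (equivalently, item~\ref{item:multi-terminal-single-shot-property-3-b}); you only state the monotonicity for each fixed $Q_Y$.
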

\begin{proof}
See Appendix \ref{appendix:lemma:limit-renyi-max-entropy}.
\end{proof}

From Statement \ref{item:multi-terminal-single-shot-property-1} of Lemma \ref{lemma:multi-terminal-single-shot-property},
$\frac{d[\theta H_{1+\theta}(P_{XY}|Q_Y)]}{d\theta}$ is monotonically decreasing.
Thus, we can define the inverse function\footnote{Throughout the paper, the notations $\theta(a)$ and
$a(R )$ are reused for several inverse functions. Although the meanings of those notations are
obvious from the context, we occasionally put superscript $Q$, $\downarrow$ or $\uparrow$ to
emphasize that those inverse functions are induced from corresponding conditional R\'enyi entropies.
This definition is related to Legendre transform of the concave function $\theta \mapsto \theta H_{1+\theta}^\downarrow(X|Y)$.}  
$\theta(a) = \theta^Q(a)$ of $\frac{d[\theta H_{1+\theta}(P_{XY}|Q_Y)]}{d\theta}$ by
\begin{eqnarray} \label{eq:definition-inverse-theta-multi-one-shot}
\frac{d[\theta H_{1+\theta}(P_{XY}|Q_Y)]}{d\theta} \bigg|_{\theta = \theta(a)} = a
\end{eqnarray}
for $\underline{a} < a \le \overline{a}$, where
$\underline{a} = \underline{a}^Q := \lim_{\theta\to \infty} \frac{d[\theta H_{1+\theta}(P_{XY}|Q_Y)]}{d\theta}$
and $\overline{a} = \overline{a}^Q := \lim_{\theta\to -1} \frac{d[\theta H_{1+\theta}(P_{XY}|Q_Y)]}{d\theta}$. 
Let 
\begin{eqnarray}
R(a) = R^Q(a) := (1+\theta(a)) a - \theta(a) H_{1+\theta(a)}(P_{XY}|Q_Y).
\end{eqnarray}
Since
\begin{eqnarray}
R^\prime(a) = (1+\theta(a)),
\end{eqnarray}
$R(a)$ is a monotonic increasing function of $\underline{a} < a \le R(\overline{a})$.
Thus, we can define the inverse function $a(R ) = a^Q(R )$ of $R(a)$ by
\begin{eqnarray} \label{eq:definition-inverse-a-multi-one-shot}
(1+\theta(a(R ))) a(R ) - \theta(a(R )) H_{1+\theta(a(R ))}(P_{XY}|Q_Y) = R
\end{eqnarray}
for $R(\underline{a}) < R \le H_0(P_{XY}|Q_Y)$.

For $\theta H_{1+\theta}^\downarrow(X|Y)$,
by the same reason as above,
we can define the inverse functions
$\theta(a) = \theta^\downarrow(a)$ and $a(R ) = a^\downarrow(R )$ by
\begin{eqnarray} \label{eq:definition-inverse-theta-multi-one-shot-2}
\frac{d[\theta H_{1+\theta}^\downarrow(X|Y)]}{d\theta} \bigg|_{\theta = \theta(a)} = a
\end{eqnarray}
and 
\begin{eqnarray}
(1+\theta(a(R ))) a(R ) - \theta(a(R )) H_{1+\theta(a(R ))}^\downarrow(X|Y) = R,
\end{eqnarray}
for $R(\underline{a}) < R \le H_0^\downarrow(X|Y)$.
For $\theta H_{1+\theta}^\uparrow(X|Y)$,
we also introduce the inverse functions $\theta(a) = \theta^\uparrow(a)$ and $a(R ) = a^\uparrow(R )$ by
\begin{eqnarray} \label{eq:definition-rho-inverse-Gallager-one-shot}
\frac{d\theta H_{1+\theta}^\uparrow(X|Y)}{d\theta} \bigg|_{\theta = \theta(a)} = a
\end{eqnarray}
and 
\begin{eqnarray} \label{eq:definition-a-inverse-Gallager-one-shot}
(1+\theta(a(R ))) a(R ) - \theta(a(R )) H_{1+\theta(a(R ))}^\uparrow(X|Y) = R
\end{eqnarray}
for $R(\underline{a}) < R \le H_0^\uparrow(X|Y)$.

\begin{remark}\label{R10-2-3}
Here, we discuss the possibility for extension to the continuous case.
Since the entropy on the continuous diverges,
we cannot extend the information quantities to the case when ${\cal X}$ is continuous.
However, it is possible to extend these quantities
to the case when ${\cal Y}$ is continuous but ${\cal X}$ is a discrete finite set.
In this case, 
we prepare a general measure $\mu$ (like the Lebesgue measure) on ${\cal Y}$
and probability density function $p_Y$ and $q_Y$ such that 
the distributions $P_Y$ and $Q_Y$ are given as
$ p_Y(y) \mu(d y)$ and $ q_Y(y) \mu(d y)$, respectively.
Then, it is sufficient to replace $\sum$, $Q(y)$, and $P_{XY}(x,y)$
by $\int_{{\cal Y}} \mu(dy)$, $P_{X|Y}(x|y) p_Y(y)$,
and $q_Y(y)$, respectively.
Hence, in the $n$-independent and identical distributed case,
these information measures are given as $n$ times of the original information measures.

One might consider the information quantities for transition matrices given in the next subsection to this continuous case.
However, it is not so easy because it needs a continuous extension of the Perron eigenvalue.
\end{remark}

\subsection{Information Measures for Transition Matrix}  \label{subsection:multi-terminal-measures-markov}

Let $\{ W(x,y|x^\prime,y^\prime) \}_{((x,y),(x^\prime,y^\prime)) \in ({\cal X} \times {\cal Y})^2}$ be 
an ergodic and irrecucible transition matrix.
The purpose of this section is to introduce transition matrix
 counter parts of those measures in Section \ref{section:multi-terminal-single-shot}.
For this purpose, we first need to introduce some assumptions on transition matrices:
\begin{assumption}[Non-Hidden] \label{assumption-Y-marginal-markov}
We say that a transition matrix $W$ is {\em non-hidden} (with respect to ${\cal Y}$) if\footnote{The reason for the name ``non-hidden" is
the following. In general, the random variable $Y$ is subject to a hidden Markov process.
However, when the condition \eqref{eq:condition-non-hidden} holds, the random variable $Y$ is subject to a Markov process. Hence, we call the 
condition \eqref{eq:condition-non-hidden} non-hidden.} 
\begin{eqnarray} \label{eq:condition-non-hidden}
\sum_x W(x,y|x^\prime,y^\prime) = W(y|y^\prime)
\end{eqnarray}
for every $x^\prime \in {\cal X}$ and $y,y^\prime \in {\cal Y}$.
This condition is equivalent to
the existence of the following decomposition of 
$W(x,y|x^\prime,y^\prime)$;
\begin{eqnarray}
W(x,y|x^\prime,y^\prime) = W(y|y^\prime) W(x|x^\prime,y^\prime,y).
\label{10-2-1}
\end{eqnarray}
\end{assumption}
\begin{assumption}[Strongly Non-Hidden] \label{assumption-memory-through-Y}
We say that a transition matrix $W$ is {\em strongly non-hidden} (with respect to ${\cal Y}$)
if, for every $\theta \in (-1,\infty)$ and $y,y^\prime \in {\cal Y}$,
\begin{eqnarray} \label{eq:condition-assumption-2}
W_\theta(y|y^\prime) := \sum_x W(x,y|x^\prime,y^\prime)^{1+\theta} 
\end{eqnarray}
is well defined, i.e., the right hand side of \eqref{eq:condition-assumption-2} is
independent of $x^\prime$.
\end{assumption}
Assumption \ref{assumption-Y-marginal-markov} requires \eqref{eq:condition-assumption-2} to hold only
for $\theta = 0$, and thus Assumption \ref{assumption-memory-through-Y} implies Assumption \ref{assumption-Y-marginal-markov}.
However, Assumption \ref{assumption-memory-through-Y} is strictly stronger condition than 
Assumption \ref{assumption-Y-marginal-markov}. For example, let consider the case such that the transition matrix is a product form,
i.e., $W(x,y|x^\prime,y^\prime) = W(x|x^\prime) W(y|y^\prime)$. In this case, Assumption \ref{assumption-Y-marginal-markov} is
 obviously satisfied. However, Assumption \ref{assumption-memory-through-Y} is not satisfied in general.

\begin{remark}\label{R10-19}
Assumption \ref{assumption-memory-through-Y} has another expression as follows.
Assumption \ref{assumption-memory-through-Y} holds if and only if, for every $x^\prime \neq \tilde{x}^\prime$, there exists a permeation $\pi_{x^\prime ; \tilde{x}^\prime}$ on ${\cal X}$ such that
$W(x|x^\prime,y^\prime,y) = W(\pi_{x^\prime ; \tilde{x}^\prime}(x) | \tilde{x}^\prime,y^\prime,y)$.

Now, we fix an element $x_0 \in {\cal X}$,
and transform a sequence of random numbers $(X_1,Y_1,X_2,Y_2,\ldots, X_n,Y_n)$
to the sequence of random numbers $
(X_1',Y_1',X_2',Y_2',\ldots, X_n',Y_n')
:=(X_1,Y_1,\pi_{x_0 ; X_1}^{-1}(X_2),Y_2,\ldots, \pi_{x_0 ; X_1}^{-1}(X_n),Y_n)$.
Then, letting $W'(x|y^\prime,y) := W(x|x_0,y^\prime,y) $, we
have
$P_{X_{i}',Y_i'|X_{i-1}',Y_{i-1}'}=
W'(y_i'|y_{i-1}') W(x_i'|y_i',y_{i-1}')$.
That is, essentially, the transition matrix of this case can be written by 
the transition matrix $W(y_i'|y_{i-1}') W'(x_i'|y_i',y_{i-1}')$.
So, the transition matrix can be written by using the positive-entry matrix $W_{x_i'}(y_i'|y_{i-1}') := W(y_i'|y_{i-1}') W'(x_i'|y_i',y_{i-1}')$.

Since the part ``if'' is trivial, we show the part ``only if'' as follow.
By noting \eqref{10-2-1}, Assumption \ref{assumption-memory-through-Y} can be rephrased as 
\begin{eqnarray} \label{eq:equivalent-formulation-assumption-2}
\sum_x W(x|x^\prime,y^\prime,y)^{1+\theta}
\end{eqnarray}
does not depend on $x^\prime$ for every $\theta \in (-1,\infty)$. Furthermore, this condition can be rephrased as follows.
For $x^\prime \neq \tilde{x}^\prime$, if the largest values of $\{ W(x|x^\prime,y^\prime) \}_{x \in {\cal X}}$ and $\{ W(x|\tilde{x}^\prime,y^\prime) \}_{x \in {\cal X}}$
are different, say the former is larger, then $\sum_x W(x|x^\prime,y^\prime)^{1+\theta} > \sum_x W(x|\tilde{x}^\prime,y^\prime)^{1+\theta}$ for sufficiently large $\theta$,
which contradict the fact that \eqref{eq:equivalent-formulation-assumption-2} does not depend on $x^\prime$.
Thus, the largest values of $\{ W(x|x^\prime,y^\prime) \}_{x \in {\cal X}}$ and $\{ W(x|\tilde{x}^\prime,y^\prime) \}_{x \in {\cal X}}$ must coincide. 
By repeating this argument for the second largest value of $\{ W(x|x^\prime,y^\prime) \}_{x \in {\cal X}}$ and $\{ W(x|\tilde{x}^\prime,y^\prime) \}_{x \in {\cal X}}$
and so on, we find 
Assumption \ref{assumption-memory-through-Y} implies that for every $x^\prime \neq \tilde{x}^\prime$, there exists a permeation $\pi_{x^\prime ; \tilde{x}^\prime}$ on ${\cal X}$ such that
$W(x|x^\prime,y^\prime,y) = W(\pi_{x^\prime ; \tilde{x}^\prime}(x) | \tilde{x}^\prime,y^\prime,y)$.
\end{remark}

The followings are non-trivial examples satisfying Assumption \ref{assumption-Y-marginal-markov}
and Assumption \ref{assumption-memory-through-Y}.
\begin{example} \label{example:markov-Y-plus-additive-noise}
Suppose that ${\cal X} = {\cal Y}$ are a module. Let $P$ and $Q$ be transition matrices on ${\cal X}$. Then, the transition matrix 
given by
\begin{eqnarray} \label{eq:example-markov-plus-additive-noise}
W(x,y|x^\prime,y^\prime) = Q(y|y^\prime) P(x-y| x^\prime - y^\prime)
\end{eqnarray}
satisfies Assumption \ref{assumption-Y-marginal-markov}. Furthermore, if transition matrix $P(z|z^\prime)$
can be written as 
\begin{eqnarray} \label{eq:symmetric-additive-noise}
P(z|z^\prime) = P_Z(\pi_{z^\prime}(z)) 
\end{eqnarray}
for permutation $\pi_{z^\prime}$ and a distribution $P_Z$ on ${\cal X}$, then
transition matrix $W$ defined by \eqref{eq:example-markov-plus-additive-noise} satisfies 
 Assumption \ref{assumption-memory-through-Y} as well.
\end{example}
\begin{example}
Suppose that ${\cal X}$ is a module, and $W$ is (strongly) non-hidden with respect to ${\cal Y}$. Let $Q$ be a transition
matrix on ${\cal Z} = {\cal X}$. Then, the transition matrix given by
\begin{eqnarray}
V(x, y,z|x^\prime,y^\prime,z^\prime) = W(x-z, y| x^\prime - z^\prime,y) Q(z|z^\prime)
\end{eqnarray}
is (strongly) non-hidden with respect to ${\cal Y} \times {\cal Z}$.
\end{example}
The following is also an example satisfying Assumption \ref{assumption-memory-through-Y},
 which describes a noise process of an important class of channels with memory (cf.~Example \ref{example:gilbert-elliot}).
\begin{example} \label{example:gilbert-elliot-noise}
Let ${\cal X} = {\cal Y} = \{0,1\}$. Then, let 
\begin{eqnarray}
W(y |y^\prime) = \left\{
\begin{array}{ll}
1 - q_{y^\prime} & \mbox{if } y = y^\prime \\
q_{y^\prime} & \mbox{if } y \neq y^\prime
\end{array}
\right.
\end{eqnarray}
for some $0 < q_0,q_1 < 1$, and let
\begin{eqnarray}
W(x|x^\prime,y^\prime,y) = \left\{
\begin{array}{ll}
1 - p_y & \mbox{if } x = 0 \\
p_y & \mbox{if } x = 1
\end{array}
\right.
\end{eqnarray}
for some $0 < p_0,p_1 < 1$. 
By choosing $\pi_{x';\tilde{x}'}$ to be the identity, 
this transition matrix satisfies 
the condition given in Remark \ref{R10-19}, that is equivalent to 
Assumption \ref{assumption-memory-through-Y}.
\end{example}


First, we introduce information measures under Assumption \ref{assumption-Y-marginal-markov}. 
In order to define a transition matrix counterpart of \eqref{eq:lower-conditional-renyi}, let us introduce the following tilted matrix:
\begin{eqnarray}
\tilde{W}_\theta(x,y|x^\prime,y^\prime) := W(x,y|x^\prime,y^\prime)^{1+\theta} W(y|y^\prime)^{-\theta}.
\end{eqnarray}
Here, we should notice that the tilted matrix $\tilde{W}_\theta$ is not normalized, i.e., is not a transition matrix.
Let $\lambda_\theta$ be the Perron-Frobenius eigenvalue of $\tilde{W}_\theta$
and $\tilde{P}_{\theta,XY}$ be its normalized eigenvector. 
Then, we define the lower conditional R\'enyi entropy for $W$ by
\begin{eqnarray} \label{eq:definition-lower-conditional-renyi-markov}
H_{1+\theta}^{\downarrow,W}(X|Y) := - \frac{1}{\theta} \log \lambda_\theta,
\end{eqnarray}
where $\theta \in (-1,0) \cup (0,\infty)$. For $\theta = 0$, we define the lower conditional R\'enyi entropy for $W$ by 
\begin{eqnarray}
H^W(X|Y) &=& H_1^{\downarrow,W}(X|Y) \\
&:=& \lim_{\theta \to 0} H_{1+\theta}^{\downarrow,W}(X|Y), \label{eq:lower-conditional-renyi-markov-theta-0}
\end{eqnarray}
and we just call it the conditional entropy for $W$. In fact, the definition of $H^W(X|Y)$ above coincide with
\begin{eqnarray}
- \sum_{x^\prime, y^\prime} P_{0,XY}(x^\prime,y^\prime) \sum_{x,y} W(x,y|x^\prime,y^\prime) \log \frac{W(x,y|x^\prime,y^\prime)}{W(y|y^\prime)},
\end{eqnarray}
where $P_{0,XY}$ is the stationary distribution of $W$ (cf.~\cite[Eq.~(30)]{HayWat16}).
For $\theta=-1$, $H_0^{\downarrow,W}(X|Y)$ is also defined by taking the limit.
When ${\cal Y}$ is singleton, the R\'enyi entropy $H_{1+\theta}^W(X)$ for $W$ is defined
as a special case of $H_{1+\theta}^{\downarrow,W}(X|Y)$.

As a counterpart of \eqref{eq:multi-single-shot-variance-1}, we also define\footnote{Since the limiting expression 
in \eqref{eq:lower-conditional-renyi-markov-theta-0-derivative}
coincides with the second derivative of the CGF (cf.~\eqref{eq:relation-variance-cgf-transition-matrix}), 
and since the second derivative of the CGF exists (cf.~\cite[Appendix D]{hayashi-watanabe:13}), 
the variance in \eqref{eq:lower-conditional-renyi-markov-theta-0-derivative} is well defined.} 
\begin{eqnarray}
\san{V}^W(X|Y) := 
\lim_{\theta \to 0} \frac{2\left[ H^W(X|Y) - H_{1+\theta}^{\downarrow,W}(X|Y) \right]}{\theta}.
\label{eq:lower-conditional-renyi-markov-theta-0-derivative}
\end{eqnarray}

\begin{remark}
When transition matrix $W$ satisfies Assumption \ref{assumption-memory-through-Y}, $H_{1+\theta}^{\downarrow,W}(X|Y)$
can be written as 
\begin{eqnarray}
H_{1+\theta}^{\downarrow,W}(X|Y) = - \frac{1}{\theta} \log \lambda_\theta^\prime,
\end{eqnarray}
where $\lambda_\theta^\prime$ is the Perron-Frobenius eigenvalue of
$W_\theta(y|y^\prime) W(y|y^\prime)^{-\theta}$.
In fact, for the left Perro-Frobenius eigenvector $\hat{Q}_\theta$ of $W_\theta(y|y^\prime) W(y|y^\prime)^{-\theta}$, we have
\begin{eqnarray}
\sum_{x,y} \hat{Q}_\theta(y) W(x,y|x^\prime,y^\prime)^{1+\theta} W(y|y^\prime)^{-\theta}
= \lambda_\theta^\prime Q_\theta(y^\prime),
\end{eqnarray}
which implies that $\lambda_\theta^\prime$ is the Perron-Frobenius eigenvalue of $\tilde{W}_\theta$.
Consequently, we can evaluate $H_{1+\theta}^{\downarrow,W}(X|Y)$ by calculating the Perron-Frobenius 
eigenvalue of $|{\cal Y}| \times |{\cal Y}|$ matrix instead of $|{\cal X}| |{\cal Y}| \times |{\cal X}| |{\cal Y}|$ matrix
when $W$ satisfies Assumption \ref{assumption-memory-through-Y}.
\end{remark}

Next, we introduce information measures under Assumption \ref{assumption-memory-through-Y}.
In order to define a transition matrix counterpart of \eqref{eq:upper-conditional-renyi}, let us introduce the 
following $|{\cal Y}| \times |{\cal Y}|$ matrix:
\begin{eqnarray}
K_\theta(y|y^\prime) := W_\theta(y|y^\prime)^{\frac{1}{1+\theta}},
\end{eqnarray}
where $W_\theta$ is defined by \eqref{eq:condition-assumption-2}.
Let $\kappa_\theta$ be the Perron-Frobenius eigenvalue of $K_\theta$.
Then, we define the upper conditional R\'enyi entropy for $W$ by
\begin{eqnarray} \label{eq:definition-upper-conditional-renyi-markov}
H_{1+\theta}^{\uparrow,W}(X|Y) := - \frac{1+\theta}{\theta} \log \kappa_\theta,
\end{eqnarray}
where $\theta \in (-1,0)\cup (0,\infty)$. 
For $\theta = -1$ and $\theta = 0$, $H_{1+\theta}^{\uparrow,W}(X|Y)$ is defined by taking the limit.
We have the following properties, which will be proved in 
Appendix \ref{appendix:lemma:properties-upper-conditional-renyi-transition-matrix}.
\begin{lemma} \label{lemma:properties-upper-conditional-renyi-transition-matrix}
We have
\begin{eqnarray}
\lim_{\theta \to 0} H_{1+\theta}^{\uparrow,W}(X|Y) = H^W(X|Y)
\end{eqnarray}
and
\begin{eqnarray} \label{eq:upper-conditional-renyi-transition-variance}
\lim_{\theta \to 0} \frac{2\left[ H^W(X|Y) - H_{1+\theta}^{\uparrow,W}(X|Y) \right]}{\theta}
= \san{V}^W(X|Y).
\end{eqnarray}
\end{lemma}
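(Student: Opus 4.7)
\emph{Proof strategy.} The plan is to reduce both claims to a Perron--Frobenius perturbation analysis of $\kappa_\theta$ and to compare it against the same analysis for $\lambda_\theta$, mirroring the single-shot argument used for Lemma~\ref{lemma:property-upper-conditional-renyi-single-shot}.

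First, Assumption~\ref{assumption-memory-through-Y} implies Assumption~\ref{assumption-Y-marginal-markov}, so $W_0(y|y')=\sum_x W(x,y|x',y')=W(y|y')$. Hence $K_0=W$ is stochastic and $\kappa_0=1$. Since $W$ is ergodic and irreducible, $\kappa_\theta$ is real-analytic in $\theta$ near $0$ by the standard perturbation theory for a simple Perron eigenvalue; set $\phi(\theta):=\log\kappa_\theta$. A Taylor expansion of $-(1+\theta)\phi(\theta)/\theta$ yields
\begin{eqnarray*}
H_{1+\theta}^{\uparrow,W}(X|Y) = -\phi'(0) - \left[\phi'(0)+\frac{1}{2}\phi''(0)\right]\theta + O(\theta^2).
\end{eqnarray*}
The analogous computation for $\psi(\theta):=\log\lambda_\theta$ (which satisfies $\lambda_0=1$ because $\tilde{W}_0=W$), combined with \eqref{eq:lower-conditional-renyi-markov-theta-0} and \eqref{eq:lower-conditional-renyi-markov-theta-0-derivative}, gives $H^W(X|Y)=-\psi'(0)$ and $\san{V}^W(X|Y)=\psi''(0)$. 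Consequently, the two claims of the lemma reduce to the identities
\begin{eqnarray*}
(\mathrm{a})~~\phi'(0) = \psi'(0), \qquad (\mathrm{b})~~ \phi''(0)-\psi''(0) = 2H^W(X|Y).
\end{eqnarray*}

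To establish $(\mathrm{a})$ and $(\mathrm{b})$, I use the remark following \eqref{eq:definition-lower-conditional-renyi-markov}: under Assumption~\ref{assumption-memory-through-Y}, $\lambda_\theta$ also equals the Perron eigenvalue of the $|{\cal Y}|\times|{\cal Y}|$ matrix $L_\theta(y|y'):=W_\theta(y|y')W(y|y')^{-\theta}$. Both $K_\theta$ and $L_\theta$ reduce to $W$ at $\theta=0$, with the same right and left Perron eigenvectors $\mathbf{1}$ and the stationary distribution $Q$. The standard first- and second-order eigenvalue perturbation formulas then read
\begin{eqnarray*}
\kappa'(0) = Q^{\top} K_0'\mathbf{1}, \qquad \lambda'(0) = Q^{\top} L_0'\mathbf{1},
\end{eqnarray*}
\begin{eqnarray*}
\kappa''(0) = Q^{\top} K_0''\mathbf{1} + 2 Q^{\top} K_0' S K_0'\mathbf{1}, \qquad \lambda''(0) = Q^{\top} L_0''\mathbf{1} + 2 Q^{\top} L_0' S L_0'\mathbf{1},
\end{eqnarray*}
where $S$ is the group inverse of $I-W$ on the complement of the Perron eigenspace. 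Under Assumption~\ref{assumption-memory-through-Y}, the decomposition $W(x,y|x',y')=W(y|y')W(x|x',y',y)$ together with the $x'$-independence of $\eta(y',y):=-\sum_x W(x|x',y',y)\log W(x|x',y',y)$ yields, by elementwise differentiation, the key identity $K_0'(y|y')=L_0'(y|y')=-W(y|y')\eta(y',y)$. This identity immediately delivers $(\mathrm{a})$ and forces the reduced-resolvent terms in $\kappa''(0)$ and $\lambda''(0)$ to coincide, leaving $\phi''(0)-\psi''(0)=\kappa''(0)-\lambda''(0)=Q^{\top}(K_0''-L_0'')\mathbf{1}$.

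For the final step, a direct second-derivative computation --- using again the $x'$-independence of $\sum_x W(x|x',y',y)^{1+\theta}$ (hence of $\sum_x W(x|x',y',y)(\log W(x|x',y',y))^2$) under Assumption~\ref{assumption-memory-through-Y} --- gives $K_0''(y|y')-L_0''(y|y')=2W(y|y')\eta(y',y)$. Averaging against $Q$ then yields $Q^{\top}(K_0''-L_0'')\mathbf{1}=2\sum_{y,y'}Q(y')W(y|y')\eta(y',y)=2H^W(X|Y)$ (where the last step identifies $H^W(X|Y)$ as the entropy rate $\sum_{y,y'}Q(y')W(y|y')\eta(y',y)$ via the decomposition $W(x,y|x',y')=W(y|y')W(x|x',y',y)$), which is exactly $(\mathrm{b})$. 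The main technical obstacle is the second-order perturbation step: it is essential that $K_0'=L_0'$ so that the two reduced-resolvent contributions cancel exactly, and the comparison can be completed by elementary elementwise calculations without explicit computation of the fundamental matrix of $W$.
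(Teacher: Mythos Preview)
Your argument is correct and constitutes a genuinely different proof from the paper's. The paper never computes $\kappa_\theta$ directly; instead it lifts the single-shot sandwich inequality $H_{1/(1-\theta)}^{\downarrow}(X^n|Y^n)\le H_{1/(1-\theta)}^{\uparrow}(X^n|Y^n)\le H_{1+\theta}^{\downarrow}(X^n|Y^n)$ (Lemma~\ref{lemma:appendix-upper-conditional-renyi-upper-and-lower}) to the transition-matrix quantities via Theorems~\ref{theorem:asymptotic-down-conditional-renyi} and~\ref{theorem:asymptotic-up-conditional-renyi}, and then squeezes using the already-established Taylor expansion of $H_{1+\theta}^{\downarrow,W}(X|Y)$ together with $1/(1-\theta)=1+\theta+o(\theta)$. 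That route is extremely short but logically relies on the $n\to\infty$ machinery (Lemmas~\ref{lemma:mult-terminal-finite-evaluation-down-conditional-renyi} and~\ref{lemma:multi-terminal-finite-evaluation-upper-conditional-renyi}). Your perturbation approach is self-contained at the level of the $|{\cal Y}|\times|{\cal Y}|$ matrices $K_\theta$ and $L_\theta$: the entrywise identity $K_0'=L_0'$ (so the reduced-resolvent contributions cancel) and the clean relation $K_0''-L_0''=-2K_0'$ make the comparison purely local, with no appeal to block-length asymptotics. This buys you explicit derivative formulas and independence from the later theorems, at the cost of a longer calculation.

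Two minor remarks. First, with the paper's convention $(K_0)_{y,y'}=W(y|y')$ one has $\sum_y W(y|y')=1$, so $\mathbf{1}$ is the \emph{left} Perron eigenvector and the stationary $Q$ is the \emph{right} one; your displayed formula should read $\mathbf{1}^\top K_0'\,Q$ rather than $Q^\top K_0'\mathbf{1}$. This does not affect the argument, since your final averaged expression $\sum_{y,y'}Q(y')W(y|y')\eta(y',y)$ is the correct one. Second, the exact form of the second-order eigenvalue formula (whether one writes $A'SA'$ or $(A'-\mu'I)S(A'-\mu'I)$) is immaterial here precisely because $K_0=L_0$, $K_0'=L_0'$ and $\kappa'(0)=\lambda'(0)$ force all such terms to coincide, leaving only $\mathbf{1}^\top(K_0''-L_0'')Q$ in the difference---which is exactly what you use.
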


Now, let us introduce a transition matrix counterpart of \eqref{eq:two-parameter-conditional-renyi}. For this purpose, we introduce 
the following $|{\cal Y}| \times |{\cal Y}|$ matrix:
\begin{eqnarray}
N_{\theta,\theta^\prime}(y|y^\prime) := W_\theta(y|y^\prime)
W_{\theta^\prime}(y|y^\prime)^{\frac{-\theta}{1+\theta^\prime}}.
\end{eqnarray}
Let $\nu_{\theta,\theta^\prime}$ be the Perron-Frobenius eigenvalue of $N_{\theta,\theta^\prime}$.
Then, we define the two-parameter conditional R\'enyi entropy by
\begin{eqnarray} \label{eq:definition-two-parameter-renyi-markov}
H_{1+\theta,1+\theta^\prime}^W(X|Y) 
 := -\frac{1}{\theta} \log \nu_{\theta,\theta^\prime} 
 + \frac{\theta^\prime}{1+\theta^\prime} H_{1+\theta^\prime}^{\uparrow,W}(X|Y).
\end{eqnarray}

\begin{remark}
Although we defined $H_{1+\theta}^{\downarrow,W}(X|Y)$ and $H_{1+\theta}^{\uparrow,W}(X|Y)$
by \eqref{eq:definition-lower-conditional-renyi-markov} and \eqref{eq:definition-upper-conditional-renyi-markov} respectively,
we can alternatively define these measures in the same spirit as the single-shot setting 
by introducing a transition matrix counterpart of $H_{1+\theta}(P_{XY}|Q_Y)$ as follows.
For the marginal $W(y|y^\prime)$ of $W(x,y|x^\prime,y^\prime)$,
let ${\cal Y}^2_W := \{(y,y^\prime) : W(y|y^\prime) > 0\}$. For another transition matrix $V$ on ${\cal Y}$,
we define ${\cal Y}_V^2$ in a similar manner. For $V$ satisfying ${\cal Y}_W^2 \subset {\cal Y}_V^2$, 
we define\footnote{Although we can also define $H_{1+\theta}^{W|V}(X|Y)$ even if ${\cal Y}_W^2 \subset {\cal Y}_V^2$
is not satisfied (see \cite{hayashi-watanabe:13} for the detail), for our purpose of defining $H_{1+\theta}^{\downarrow,W}(X|Y)$ and 
$H_{1+\theta}^{\uparrow,W}(X|Y)$, other cases are irrelevant.}
\begin{eqnarray}
H_{1+\theta}^{W|V}(X|Y) := - \frac{1}{\theta} \log \lambda_{\theta}^{W|V}
\end{eqnarray}
for $\theta \in (-1,0) \cup (0,\infty)$, where $\lambda_\theta^{W|V}$
is the Perron-Frobenius eigenvalue of 
\begin{eqnarray}
W(x,y|x^\prime,y^\prime)^{1+\theta} V(y|y^\prime)^{-\theta}.
\end{eqnarray}
By using this measure, we obviously have
\begin{eqnarray}
H_{1+\theta}^{\downarrow,W}(X|Y) = H_{1+\theta}^{W|W}(X|Y).
\end{eqnarray}
Furthermore, under Assumption \ref{assumption-memory-through-Y}, we can show that 
\begin{eqnarray} \label{eq:alternative-definition-of-upper-conditional-W}
H_{1+\theta}^{\uparrow,W}(X|Y) = \max_V H_{1+\theta}^{W|V}(X|Y)
\end{eqnarray}
holds (see Appendix \ref{appendix:eq:alternative-definition-of-upper-conditional-W} for the proof), 
where the maximum is taken over all transition matrices 
satisfying ${\cal Y}_W^2 \subset {\cal Y}_V^2$.
\end{remark}

Next, we investigate some properties of the information measures introduced in this section.
The following lemma is proved in Appendix \ref{appendix:lemma:multi-terminal-markov-property}.

\begin{lemma} \label{lemma:multi-terminal-markov-property}
$\phantom{a}$
\begin{enumerate}
\item \label{item:multi-terminal-markov-property-1} 
The function $\theta H_{1+\theta}^{\downarrow,W}(X|Y)$ is a concave function of $\theta$, and it is strict concave iff. 
$\san{V}^W(X|Y) > 0$.

\item \label{item:multi-terminal-markov-property-1-b}
$H_{1+\theta}^{\downarrow,W}(X|Y)$ is a monotonically decreasing function of $\theta$. 

\item \label{item:multi-terminal-markov-property-2} 
The function $\theta H_{1+\theta}^{\uparrow,W}(X|Y)$ is a concave function of $\theta$, and it is strict concave iff.
$\san{V}^W(X|Y) > 0$.

\item \label{item:multi-terminal-markov-property-2-b}
$H_{1+\theta}^{\uparrow,W}(X|Y)$ is a monotonically decreasing function of $\theta$. 

\item \label{item:multi-terminal-markov-property-3} 
For every $\theta \in (-1,0) \cup (0,\infty)$, we have 
$ H_{1+\theta}^{\downarrow,W}(X|Y) \le H_{1+\theta}^{\uparrow,W}(X|Y)$.

\item \label{item:multi-terminal-markov-property-4} 
For fixed $\theta^\prime$, the function $\theta H_{1+\theta,1+\theta^\prime}^W(X|Y)$ is a concave function of $\theta$,
and it is strict concave iff. $\san{V}^W(X|Y) > 0$.

\item \label{item:multi-terminal-markov-property-4-b} 
For fixed $\theta^\prime$, $H_{1+\theta,1+\theta^\prime}^W(X|Y)$ is a monotonically decreasing function of $\theta$.

\item \label{item:multi-terminal-markov-property-5} 
We have
\begin{eqnarray}
H_{1+\theta,1}^W(X|Y) = H_{1+\theta}^{\downarrow,W}(X|Y). 
\end{eqnarray}

\item \label{item:multi-terminal-markov-property-6} 
We have
\begin{eqnarray}
H_{1+\theta,1+\theta}^W(X|Y) = H_{1+\theta}^{\uparrow,W}(X|Y). 
\end{eqnarray}

\item \label{item:multi-terminal-markov-property-7} 
For every $\theta \in (-1,0) \cup (0,\infty)$, 
$H_{1+\theta,1+\theta^\prime}^W(X|Y)$ is maximized at $\theta^\prime = \theta$, i.e.,
\begin{align} \label{eq:derivative-second-argument}
\frac{d[H_{1+\theta,1+\theta^\prime}^W(X|Y)]}{d\theta^\prime} \bigg|_{\theta^\prime = \theta} = 0.
\end{align}


\end{enumerate}
\end{lemma}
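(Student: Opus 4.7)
My strategy is to transport the single-shot arguments of Lemma~\ref{lemma:multi-terminal-single-shot-property} to the Markov setting by replacing ordinary sums with Perron--Frobenius eigenvalues. The workhorse is the classical fact that if $\{M_\theta\}$ is a family of irreducible non-negative matrices on a finite alphabet whose entries are log-convex in $\theta$, then $\log\lambda(M_\theta)$ is convex in $\theta$; this is immediate from Gelfand's formula $\log\lambda(M_\theta)=\lim_n\frac{1}{n}\log\langle\mathbf{1},M_\theta^n\mathbf{1}\rangle$, since finite sums of products of log-convex functions are log-convex and pointwise limits of convex functions are convex. For the upper entropy I also need the more flexible Donsker--Varadhan representation
\begin{align*}
\log\lambda(M_\theta) = \sup_Q\Bigl[\sum_{y,y'}\pi_Q(y')Q(y|y')\log M_\theta(y|y') - \sum_{y'}\pi_Q(y')D\bigl(Q(\cdot|y')\,\|\,\pi_Q(\cdot)\bigr)\Bigr],
\end{align*}
whose second term is $\theta$-independent. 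Real-analyticity of $\theta\mapsto\log\lambda(M_\theta)$ under irreducibility also lets me reduce strict concavity to a condition at $\theta=0$: a convex real-analytic function that is not strictly convex on any subinterval is affine on the full domain.

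Statements~\ref{item:multi-terminal-markov-property-1},\ref{item:multi-terminal-markov-property-1-b} fall out immediately, because $\tilde W_\theta$ has log-affine entries, so $\theta H^{\downarrow,W}_{1+\theta}(X|Y)=-\log\lambda_\theta$ is concave, and monotonicity of $H^{\downarrow,W}_{1+\theta}$ follows because it is the secant slope from $\theta=0$ (where $\theta H^{\downarrow,W}_{1+\theta}=0$) of that concave function. For Statements~\ref{item:multi-terminal-markov-property-2},\ref{item:multi-terminal-markov-property-2-b} I multiply the Donsker--Varadhan formula for $\log\kappa_\theta$ by $1+\theta>0$ to get
\begin{align*}
(1+\theta)\log\kappa_\theta = \sup_Q\Bigl[\sum_{y,y'}\pi_Q(y')Q(y|y')\log\sum_x W(x,y|x',y')^{1+\theta} - (1+\theta)\sum_{y'}\pi_Q(y')D\bigl(Q(\cdot|y')\,\|\,\pi_Q\bigr)\Bigr];
\end{align*}
for each $Q$ the integrand is a non-negative combination of log-sum-exp terms (convex in $\theta$) plus a $\theta$-affine term, so it is convex, and a supremum of convex functions is convex. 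Hence $\theta H^{\uparrow,W}_{1+\theta}(X|Y)=-(1+\theta)\log\kappa_\theta$ is concave, and monotonicity follows as before. Statements~\ref{item:multi-terminal-markov-property-4},\ref{item:multi-terminal-markov-property-4-b} proceed analogously, since for fixed $\theta'$ one has $\log N_{\theta,\theta'}(y|y')=\log W_\theta(y|y')-\frac{\theta}{1+\theta'}\log W_{\theta'}(y|y')$, a log-sum-exp plus a $\theta$-affine correction, while the extra summand $\frac{\theta\theta'}{1+\theta'}H^{\uparrow,W}_{1+\theta'}$ is $\theta$-linear. In all three concavity statements, strict concavity reduces by real-analyticity to the second derivative at $\theta=0$ being nonzero, which by \eqref{eq:lower-conditional-renyi-markov-theta-0-derivative} and \eqref{eq:upper-conditional-renyi-transition-variance} is exactly $\san{V}^W(X|Y)>0$.

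Statements~\ref{item:multi-terminal-markov-property-5} and~\ref{item:multi-terminal-markov-property-6} are one-line identifications: $N_{\theta,0}(y|y')=W_\theta(y|y')\,W(y|y')^{-\theta}$ has Perron--Frobenius eigenvalue $\lambda_\theta$ by the remark following \eqref{eq:definition-lower-conditional-renyi-markov}, giving $H^W_{1+\theta,1}=H^{\downarrow,W}_{1+\theta}$, while $N_{\theta,\theta}(y|y')=W_\theta(y|y')^{1/(1+\theta)}=K_\theta(y|y')$ together with an algebraic simplification yields $H^W_{1+\theta,1+\theta}=H^{\uparrow,W}_{1+\theta}$. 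Statement~\ref{item:multi-terminal-markov-property-3} then drops out of the chain $H^{\downarrow,W}_{1+\theta}=H^W_{1+\theta,1}\le H^W_{1+\theta,1+\theta}=H^{\uparrow,W}_{1+\theta}$ via Statement~\ref{item:multi-terminal-markov-property-7}.

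The main obstacle is Statement~\ref{item:multi-terminal-markov-property-7}. My plan is to identify $H^W_{1+\theta,1+\theta'}(X|Y)=H^{W|V^*_{1+\theta'}}_{1+\theta}(X|Y)$, where $V^*_{1+\theta'}$ is the optimizer in the variational formula \eqref{eq:alternative-definition-of-upper-conditional-W} for $H^{\uparrow,W}_{1+\theta'}$; once this is established, maximality at $\theta'=\theta$ is automatic since
\begin{align*}
H^W_{1+\theta,1+\theta'}(X|Y)\le\max_V H^{W|V}_{1+\theta}(X|Y)=H^{\uparrow,W}_{1+\theta}(X|Y)=H^W_{1+\theta,1+\theta}(X|Y).
\end{align*}
The identification itself reduces to showing that the PF eigenvalue of $W(x,y|x',y')^{1+\theta}V^*_{1+\theta'}(y|y')^{-\theta}$ equals $\nu_{\theta,\theta'}\exp\bigl(-\frac{\theta\theta'}{1+\theta'}H^{\uparrow,W}_{1+\theta'}\bigr)$, a delicate eigenvector manipulation in which $V^*_{1+\theta'}$ must be assembled from the left and right PF eigenvectors of $K_{\theta'}$ so as to be a bona-fide stochastic matrix; it proceeds in parallel with the proof of \eqref{eq:alternative-definition-of-upper-conditional-W} scheduled for Appendix~\ref{appendix:eq:alternative-definition-of-upper-conditional-W}.
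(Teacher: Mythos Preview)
Your route differs substantially from the paper's, and most of it is sound. The paper derives Statements~\ref{item:multi-terminal-markov-property-3}, \ref{item:multi-terminal-markov-property-5}, \ref{item:multi-terminal-markov-property-6}, \ref{item:multi-terminal-markov-property-7} by passing to the limit from the single-shot Lemma~\ref{lemma:multi-terminal-single-shot-property} via Theorems~\ref{theorem:asymptotic-down-conditional-renyi}, \ref{theorem:asymptotic-up-conditional-renyi}, \ref{theorem:asymptotic-two-parameter-renyi-markov}; you instead verify \ref{item:multi-terminal-markov-property-5} and \ref{item:multi-terminal-markov-property-6} by direct eigenvalue identification and reduce \ref{item:multi-terminal-markov-property-7} (hence \ref{item:multi-terminal-markov-property-3}) to showing $H^W_{1+\theta,1+\theta'}=H^{W|V_{\theta'}}_{1+\theta}$ for the optimizer $V_{\theta'}$ of \eqref{eq:alternative-definition-of-upper-conditional-W}. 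That identification is correct: one checks that $W(x,y|x',y')^{1+\theta}V_{\theta'}(y|y')^{-\theta}$ is a diagonal conjugate of a matrix whose $y$-marginal is $N_{\theta,\theta'}$ scaled by $\kappa_{\theta'}^{\theta}$, so its Perron--Frobenius eigenvalue is $\nu_{\theta,\theta'}\kappa_{\theta'}^{\theta}$, matching the definition \eqref{eq:definition-two-parameter-renyi-markov}. This is more constructive than the paper's limiting argument. For Statement~\ref{item:multi-terminal-markov-property-2} the paper introduces a transition-matrix Gallager function $E_0^W(\tau)=\log\bar\kappa_\tau$ and proves its convexity by H\"older plus a Fisher-information-type identity from \cite{hayashi-watanabe:13}; your Donsker--Varadhan approach is a legitimate alternative for plain concavity, though the variational formula you wrote should have the entropy rate $-\sum_{y,y'}\pi_Q(y')Q(y|y')\log Q(y|y')$ as the second term rather than the relative entropy to $\pi_Q$.

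There is, however, a genuine gap in your treatment of the ``only if'' direction of strict concavity in Statements~\ref{item:multi-terminal-markov-property-2} and~\ref{item:multi-terminal-markov-property-4}. Real-analyticity gives you ``strictly concave $\Leftrightarrow$ not affine'', but ``not affine $\Leftrightarrow$ second derivative at $0$ nonzero'' fails in the $\Leftarrow$ direction: $-\theta^4$ is real-analytic, strictly concave, and has vanishing second derivative at $0$. So from $\san{V}^W(X|Y)=0$ you cannot conclude non-strict-concavity by analyticity alone. The paper closes this for Statement~\ref{item:multi-terminal-markov-property-2} by its H\"older equality analysis, which shows that when $\san{V}^W(X|Y)=0$ one has $E_0^W=\phi$ and hence $\theta H^{\uparrow,W}_{1+\theta}$ is affine. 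A quicker fix in your framework: if $\san{V}^W(X|Y)=0$ then $\theta H^{\downarrow,W}_{1+\theta}$ is affine (it is a genuine CGF, so zero variance forces constant generator), hence $H^{\downarrow,W}_{1+\theta}\equiv H^W(X|Y)$; the sandwich $H^{\downarrow,W}_{1/(1-\theta)}\le H^{\uparrow,W}_{1/(1-\theta)}\le H^{\downarrow,W}_{1+\theta}$ (limit of Lemma~\ref{lemma:appendix-upper-conditional-renyi-upper-and-lower}) then forces $H^{\uparrow,W}_{1+\theta}\equiv H^W(X|Y)$ as well, so $\theta H^{\uparrow,W}_{1+\theta}$ is affine. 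An analogous sandwich via Statement~\ref{item:multi-terminal-markov-property-7} handles Statement~\ref{item:multi-terminal-markov-property-4}.
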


From Statement \ref{item:multi-terminal-markov-property-1} of Lemma \ref{lemma:multi-terminal-markov-property},
$\frac{d [\theta H_{1+\theta}^{\downarrow,W}(X|Y)]}{d\theta}$ is monotonically decreasing.
Thus, we can define the inverse function $\theta(a) = \theta^\downarrow(a)$ of 
$\frac{d [\theta H_{1+\theta}^{\downarrow,W}(X|Y)]}{d\theta}$ by
\begin{eqnarray} \label{eq:definition-theta-inverse-multi-markov}
\frac{d [\theta H_{1+\theta}^{\downarrow,W}(X|Y)]}{d\theta} \bigg|_{\theta = \theta(a)} = a
\end{eqnarray}
for $\underline{a} < a \le \overline{a}$, where 
$\underline{a} := \lim_{\theta \to \infty} \frac{d [\theta H_{1+\theta}^{\downarrow,W}(X|Y)]}{d\theta}$
and $\overline{a} := \lim_{\theta \to -1} \frac{d [\theta H_{1+\theta}^{\downarrow,W}(X|Y)]}{d\theta}$.
Let 
\begin{eqnarray}
R(a) := (1+\theta(a)) a - \theta(a) H_{1+\theta(a)}^{\downarrow,W}(X|Y).
\end{eqnarray}
Since 
\begin{eqnarray}
R^\prime(a) = (1+\theta(a)),
\end{eqnarray}
$R(a)$ is a monotonic increasing function of $\underline{a} < a < R(\overline{a})$. Thus, we can define 
the inverse function $a(R ) = a^\downarrow(R )$ of $R(a)$ by
\begin{eqnarray} \label{eq:definition-a-inverse-multi-markov}
(1+\theta(a(R ))) a(R ) - \theta(a(R )) H_{1+\theta(a(R ))}^{\downarrow,W}(X|Y) = R
\end{eqnarray}
for $R(\underline{a}) < R < H_0^{\downarrow,W}(X|Y)$, where 
$H_0^{\downarrow,W}(X|Y) := \lim_{\theta \to -1} H_{1+\theta}^{\downarrow,W}(X|Y)$.

For $\theta H_{1+\theta}^{\uparrow,W}(X|Y)$, by the same reason, we can define the inverse 
function $\theta(a) = \theta^\uparrow(a)$ by
\begin{eqnarray} \label{eq:definition-theta-inverse-markov-optimal-Q}
\frac{d [\theta H_{1+\theta,1+\theta(a)}^W(X|Y)]}{d \theta} \bigg|_{\theta = \theta(a)}
= \frac{d [\theta H_{1+\theta}^{\uparrow,W}(X|Y)]}{d\theta} \bigg|_{\theta = \theta(a)} = a,
\end{eqnarray}
and the inverse function $a(R ) = a^\uparrow(R )$ of 
\begin{eqnarray} \label{eq:definition-R-a-optimal-q-markov}
R(a) := (1+\theta(a)) a - \theta(a) H_{1+\theta(a)}^{\uparrow,W}(X|Y) 
\end{eqnarray}
by
\begin{eqnarray} \label{eq:definition-a-inverse-markov-optimal-Q}
(1+\theta(a(R ))) a(R ) - \theta(a(R )) H_{1+\theta(a(R ))}^{\uparrow,W}(X|Y) = R,
\end{eqnarray}
for $R(\underline{a}) < R < H_0^{\uparrow,W}(X|Y)$, where 
$H_0^{\uparrow,W}(X|Y) := \lim_{\theta \to -1} H_{1+\theta}^{\uparrow,W}(X|Y)$.
Here, the first equality in \eqref{eq:definition-theta-inverse-markov-optimal-Q} follows from \eqref{eq:derivative-second-argument}.

Since $\theta \mapsto \theta H_{1+\theta}^{\downarrow,W}(X|Y)$ is concave, 
and $-1 \le \theta^\downarrow(R ) \le 0$ for $H^W(X|Y) \le R \le H_0^{\downarrow,W}(X|Y)$, we can prove the following.
\begin{lemma} \label{lemma:legendra-transform}
The function $\theta(R)$ defined in \eqref{eq:definition-theta-inverse-multi-markov} satisfies
\begin{align}
\sup_{-1 \le \theta \le 0} [-\theta R + \theta H_{1+\theta}^{\downarrow,W}(X|Y)] 
 = - \theta(R ) R + \theta(R ) H_{1+\theta(R )}^{\downarrow,W}(X|Y) 
\end{align}
for $H^W(X|Y) \le R \le H_0^{\downarrow,W}(X|Y)$.
\end{lemma}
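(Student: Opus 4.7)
The plan is to observe that the objective function is concave in $\theta$, identify its unconstrained maximizer via the first-order condition, and then verify that this maximizer lies in $[-1,0]$ whenever $R$ is in the stated range, so the constrained supremum coincides with the unconstrained one.

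Set $F(\theta) := -\theta R + \theta H_{1+\theta}^{\downarrow,W}(X|Y)$. First I would note that $F$ is concave on $(-1,\infty)$: the second summand is concave by Statement \ref{item:multi-terminal-markov-property-1} of Lemma \ref{lemma:multi-terminal-markov-property}, and the first is linear in $\theta$. The stationary equation $F'(\theta)=0$ rearranges to
\[
\frac{d[\theta H_{1+\theta}^{\downarrow,W}(X|Y)]}{d\theta} = R,
\]
which by the defining equation \eqref{eq:definition-theta-inverse-multi-markov} is solved exactly by $\theta^\downarrow(R)$. Concavity then promotes this critical point to a global maximizer of $F$, so once I know $\theta^\downarrow(R)\in[-1,0]$ the claim follows immediately.

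It therefore remains to verify $\theta^\downarrow(R)\in[-1,0]$ under $H^W(X|Y)\le R\le H_0^{\downarrow,W}(X|Y)$; equivalently, by monotonicity of $F'$ (from concavity), to show $F'(0)\le 0\le F'(-1)$. Applying the product rule at $\theta=0$ yields $\tfrac{d}{d\theta}[\theta H_{1+\theta}^{\downarrow,W}(X|Y)]\big|_{\theta=0}=H^W(X|Y)$, so $F'(0)=H^W(X|Y)-R\le 0$ by the lower hypothesis on $R$. Applying the product rule at $\theta=-1$ (as a right-sided limit), the correction term equals $-\tfrac{d}{d\theta}H_{1+\theta}^{\downarrow,W}(X|Y)\big|_{\theta=-1}$, which is non-negative because $H_{1+\theta}^{\downarrow,W}(X|Y)$ is monotonically decreasing in $\theta$ by Statement \ref{item:multi-terminal-markov-property-1-b} of Lemma \ref{lemma:multi-terminal-markov-property}. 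Consequently $F'(-1)\ge H_0^{\downarrow,W}(X|Y)-R\ge 0$ by the upper hypothesis on $R$, finishing the range check.

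The main technical point is handling the boundary $\theta=-1$: the derivatives there must be interpreted as one-sided limits, and the argument leans on the monotonicity of $H_{1+\theta}^{\downarrow,W}(X|Y)$ to fix the sign of the correction term appearing in the product rule. Apart from this, the proof is a direct Legendre-transform computation applied to a concave function.
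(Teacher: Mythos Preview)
Your proof is correct and follows the same approach the paper indicates in the sentence preceding the lemma: concavity of $\theta\mapsto \theta H_{1+\theta}^{\downarrow,W}(X|Y)$ together with the fact that $-1\le \theta^\downarrow(R)\le 0$ on the stated range of $R$. You simply spell out the range check in more detail than the paper does, via the signs of $F'(0)$ and $F'(-1)$.
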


Furthermore, we can show the following.
\begin{lemma} \label{lemma:legendra-transform-2}
The function $\theta(a(R))$ defined by \eqref{eq:definition-a-inverse-multi-markov} satisfies
\begin{align} \label{eq:legendra-transform-2-lower}
\sup_{-1 \le \theta \le 0} \frac{- \theta R + \theta H_{1+\theta}^{\downarrow,W}(X|Y)}{1+\theta}
 = - \theta(a(R )) a(R ) + \theta(a(R )) H_{1+\theta(a(R ))}^{\downarrow,W}(X|Y)
\end{align}
for $H^W(X|Y) \le R \le H_0^{\downarrow,W}(X|Y)$, and the function $\theta(a(R))$ defined in \eqref{eq:definition-a-inverse-markov-optimal-Q} satisfies
\begin{align} \label{eq:legendra-transform-2-upper}
\sup_{-1 \le \theta \le 0} \frac{- \theta R + \theta H_{1+\theta}^{\uparrow,W}(X|Y)}{1+\theta}
 = - \theta(a(R )) a(R ) + \theta(a( R)) H_{1+\theta(a(R ))}^{\uparrow,W}(X|Y)
\end{align}
for $H^W(X|Y) \le R \le H_0^{\uparrow,W}(X|Y)$.
\end{lemma}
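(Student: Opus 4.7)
I will treat the lower identity \eqref{eq:legendra-transform-2-lower} in detail; the upper identity \eqref{eq:legendra-transform-2-upper} follows by an identical argument, invoking Statement~\ref{item:multi-terminal-markov-property-2} of Lemma~\ref{lemma:multi-terminal-markov-property} in place of Statement~\ref{item:multi-terminal-markov-property-1} and using \eqref{eq:definition-theta-inverse-markov-optimal-Q}--\eqref{eq:definition-a-inverse-markov-optimal-Q} in place of \eqref{eq:definition-theta-inverse-multi-markov}--\eqref{eq:definition-a-inverse-multi-markov}.

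Set $g(\theta) := \theta H_{1+\theta}^{\downarrow,W}(X|Y)$, which is concave on $(-1,\infty)$ by Statement~\ref{item:multi-terminal-markov-property-1} of Lemma~\ref{lemma:multi-terminal-markov-property}. The objective to be maximized over $\theta \in [-1,0]$ is $f(\theta) := (-\theta R + g(\theta))/(1+\theta)$, and direct differentiation gives
\begin{align}
f'(\theta) = \frac{G(\theta) - R}{(1+\theta)^2}, \qquad G(\theta) := (1+\theta)\, g'(\theta) - g(\theta).
\end{align}
The first step is to exhibit a critical point. Taking $\theta^\star := \theta(a(R))$, the definition \eqref{eq:definition-theta-inverse-multi-markov} gives $g'(\theta^\star) = a(R)$, and substituting this into \eqref{eq:definition-a-inverse-multi-markov} yields $G(\theta^\star) = (1+\theta^\star) a(R) - g(\theta^\star) = R$.

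The second step is to promote $\theta^\star$ to the global maximizer on $[-1,0]$. Concavity of $g$ implies $G'(\theta) = (1+\theta)\, g''(\theta) \le 0$ on $(-1,\infty)$, so $G$ is non-increasing and hence $f'$ changes sign from non-negative to non-positive precisely at $\theta^\star$. That $\theta^\star \in [-1,0]$ throughout the claimed range of $R$ can be read off from the boundary values $G(0) = H^W(X|Y)$ and $\lim_{\theta \to -1^+} G(\theta) = H_0^{\downarrow,W}(X|Y)$ together with the monotonicity of $G$. The third step is purely algebraic: substituting $g(\theta^\star) = (1+\theta^\star) a(R) - R$ into $f(\theta^\star)$, the numerator rearranges as
\begin{align}
-\theta^\star R + g(\theta^\star) = (1+\theta^\star)\bigl[-\theta^\star a(R) + g(\theta^\star)\bigr],
\end{align}
so $f(\theta^\star) = -\theta^\star a(R) + \theta^\star H_{1+\theta^\star}^{\downarrow,W}(X|Y)$, which matches the right-hand side of \eqref{eq:legendra-transform-2-lower}.

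The main obstacle I anticipate is controlling the behavior as $\theta \to -1^+$, where the denominator $1+\theta$ of $f$ vanishes. In particular, verifying $\lim_{\theta \to -1^+} G(\theta) = H_0^{\downarrow,W}(X|Y)$ requires $(1+\theta)\, g'(\theta) \to 0$; this should follow from concavity of $g$ and finiteness of the limit $H_0^{\downarrow,W}(X|Y)$ coming from the Perron--Frobenius analysis behind \eqref{eq:definition-lower-conditional-renyi-markov} and Lemma~\ref{lemma:limit-renyi-max-entropy}, but care is needed to make the limit rigorous and to handle the endpoint $R = H_0^{\downarrow,W}(X|Y)$ where $\theta^\star$ lands on the boundary.
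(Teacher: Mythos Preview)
Your proposal is correct and follows essentially the same route as the paper's proof. In the paper's notation, your function $G(\theta) = (1+\theta)g'(\theta) - g(\theta)$ is exactly the composition $R(a)\big|_{a = g'(\theta)}$ from \eqref{eq:definition-a-inverse-multi-markov}, so your derivative formula $f'(\theta) = (G(\theta)-R)/(1+\theta)^2$ and the paper's $f'(\theta) = \bigl(-R + R(g'(\theta))\bigr)/(1+\theta)^2$ coincide; the sign-change argument via $G'(\theta) = (1+\theta)g''(\theta) \le 0$ is just the direct computation of what the paper phrases as ``$R(a)$ increasing composed with $g'(\theta)$ decreasing,'' and the final algebraic rewriting is identical. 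The boundary concern you flag at $\theta \to -1^+$ is not special to your approach: the paper simply asserts $-1 \le \theta(a(R)) \le 0$ for $R$ in the stated range, which already relies on the same limit $\lim_{\theta\to -1^+}\bigl[(1+\theta)g'(\theta) - g(\theta)\bigr] = H_0^{\downarrow,W}(X|Y)$ built into the definition of the domain of $a(\cdot)$ just after \eqref{eq:definition-a-inverse-multi-markov}.
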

\begin{proof}
See Appendix \ref{subsection:proof-lemma:legendra-transform-2}.
\end{proof}

\begin{remark}
As we can find from \eqref{eq:lower-conditional-renyi-markov-theta-0}, \eqref{eq:lower-conditional-renyi-markov-theta-0-derivative},
and Lemma \ref{lemma:properties-upper-conditional-renyi-transition-matrix}, both the conditional R\'enyi entropies expand as
\begin{eqnarray}
H_{1+\theta}^{\downarrow,W}(X|Y) &=& H^W(X|Y) - \frac{1}{2} \san{V}^W(X|Y) \theta + o(\theta), \\
H_{1+\theta}^{\uparrow,W}(X|Y) &=& H^W(X|Y) - \frac{1}{2} \san{V}^W(X|Y) \theta + o(\theta)
\end{eqnarray}
around $\theta = 0$.
Thus, the difference of these measures significantly appear only when
$|\theta|$ is rather large. For the transition matrix of Example \ref{example:gilbert-elliot-noise} with
$q_0 = q_1 = 0.1$, $p_0 = 0.1$, and $p_1 = 0.4$, we plotted the values of the information measures 
in Fig.~\ref{Fig:Comparison-Renyi-Entropies}. Although the values at $\theta = -1$ coincide in 
Fig.~\ref{Fig:Comparison-Renyi-Entropies}, note that the values at $\theta = -1$ may differ  in general.

In Example \ref{example:markov-Y-plus-additive-noise}, 
we have mentioned that transition matrix $W$ in \eqref{eq:example-markov-plus-additive-noise} 
satisfies Assumtption \ref{assumption-memory-through-Y}
when transition matrix $P$ is given by \eqref{eq:symmetric-additive-noise}.
In this case, we can find that 
\begin{eqnarray}
H_{1+\theta}^{\uparrow,W}(X|Y) &=& H_{1+\theta}^{\downarrow,W}(X|Y) \\
&=& H_{1+\theta}(P_Z),
\end{eqnarray}
i.e., the two kinds of conditional R\'enyi entropies coincide.
\end{remark}
\begin{figure}[t]
\centering
\includegraphics[width=0.5\linewidth]{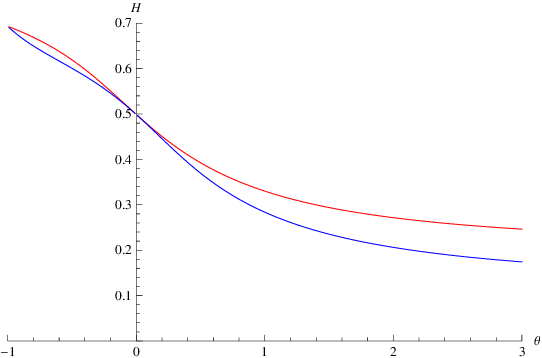}
\caption{A comparison of $H_{1+\theta}^{\uparrow,W}(X|Y)$ (red curve)
and $H_{1+\theta}^{\downarrow,W}(X|Y)$ (blue curve)
for the transition matrix of Example \ref{example:gilbert-elliot-noise} with
$q_0 = q_1 = 0.1$, $p_0 = 0.1$, and $p_1 = 0.4$.
The horizontal axis is $\theta$, and 
the vertical axis is the values of the information measures (nats).
}
\label{Fig:Comparison-Renyi-Entropies}
\end{figure}

Now, let's consider asymptotic behavior of $H_{1+\theta}^{\downarrow,W}(X|Y)$ around $\theta = 0$.
When $\theta(a)$ is close to $0$, we have
\begin{align} \label{eq:expansion-theta-a}
\theta(a) H_{1+\theta(a)}^{\downarrow,W}(X|Y)
 = \theta(a) H^W(X|Y) - \frac{1}{2} \san{V}^W(X|Y) \theta(a)^2 + o(\theta(a)^2).
\end{align}
Taking the derivative, \eqref{eq:definition-theta-inverse-multi-markov} implies that 
\begin{align} \label{eq:expansion-a}
a = H^W(X|Y) - \san{V}^W(X|Y) \theta(a) + o(\theta(a)).
\end{align}
Hence, when $R$ is close to $H^W(X|Y)$, we have
\begin{align}
R 
&= (1+\theta(a(R )) a(R ) - \theta(a(R )) H_{1+\theta(a(R ))}^{\downarrow,W}(X|Y) \\
&= H^W(X|Y) - \left(1 + \frac{\theta(a(R ))}{2}\right) \theta(a(R )) \san{V}^W(X|Y) + o(\theta(a(R )), 
\end{align}
i.e., 
\begin{align}
\theta(a(R )) = \frac{-R + H^W(X|Y)}{\san{V}^W(X|Y)} + o\left(  \frac{R - H^W(X|Y)}{\san{V}^W(X|Y)} \right).
\end{align}
Furthermore, \eqref{eq:expansion-theta-a} and \eqref{eq:expansion-a} imply
\begin{align}
& - \theta(a(R )) a(R ) + \theta(a(R )) H_{1+\theta(a(R ))}^{\downarrow,W}(X|Y) \\
&= \san{V}^W(X|Y) \frac{\theta(a(R ))^2}{2} + o(\theta(a(R ))^2) \\
&= \frac{\san{V}^W(X|Y)}{2} \left( \frac{R - H^W(X|Y)}{\san{V}^W(X|Y)} \right)^2 
 + o\left( \left( \frac{R - H^W(X|Y)}{\san{V}^W(X|Y)} \right)^2\right).
  \label{eq:expansion-exponent-around-H}
\end{align}

\subsection{Information Measures for Markov Chain} 
\label{subsection-multi-terminal-information-measures-markov}

Let $(\mathbf{X},\mathbf{Y})$ be the Markov chain induced by transition matrix $W$ and some initial distribution $P_{X_1Y_1}$.
Now, we show how information measures introduced in Section \ref{subsection:multi-terminal-measures-markov} are 
related to the conditional R\'enyi entropy rates.
First, we introduce the following lemma, which
gives finite upper and lower bounds on the lower conditional R\'enyi entropy.

\begin{lemma} \label{lemma:mult-terminal-finite-evaluation-down-conditional-renyi}
Suppose that transition matrix $W$ satisfies Assumption \ref{assumption-Y-marginal-markov}.
Let $v_\theta$ be the eigenvector of $W_\theta^T$ with respect to the Perron-Frobenius eigenvalue $\lambda_\theta$
such that $\min_{x,y} v_\theta(x,y) = 1$.\footnote{Since the eigenvector corresponding to the 
Perron-Frobenius eigenvalue for an irreducible non-negative matrix has always strictly positive
entries \cite[Theorem 8.4.4, p.~508]{horn-johnson}, we can choose the eigenvector $v_\theta$ satisfying this condition.} Let $w_\theta(x,y) := P_{X_1 Y_1}(x,y)^{1+\theta} P_{Y_1}(y)^{-\theta}$. Then, 
for every $n\ge 1$, we have
\begin{eqnarray}
(n-1) \theta H_{1+\theta}^{\downarrow,W}(X|Y) + \underline{\delta}(\theta)
 \le \theta H_{1+\theta}^\downarrow(X^n|Y^n) 
 \le (n-1) \theta H_{1+\theta}^{\downarrow,W}(X|Y) + \overline{\delta}(\theta), 
 \end{eqnarray}
 where 
 \begin{eqnarray}
 \overline{\delta}(\theta) &:=& - \log \langle v_\theta | w_\theta \rangle + \log \max_{x,y} v_\theta(x,y),
    \label{eq:definition-overline-delta} \\
 \underline{\delta}(\theta) &:=& - \log \langle v_\theta | w_\theta \rangle, \label{eq:definition-underline-delta} 
 \end{eqnarray}
 and $\langle v_\theta | w_\theta \rangle$ is defined as $\sum_{x,y} v_\theta(x,y) w_\theta(x,y)$.
\end{lemma}
\begin{proof}
It follows from \eqref{eq:relation-down-conditional-renyi-moment-transition-matrix} and 
Lemma \ref{lemma:finite-evaluation-of-cgf}.
\end{proof}

From Lemma \ref{lemma:mult-terminal-finite-evaluation-down-conditional-renyi}, we have the following.

\begin{theorem} \label{theorem:asymptotic-down-conditional-renyi}
Suppose that transition matrix $W$ satisfies Assumption \ref{assumption-Y-marginal-markov}.
For any initial distribution, we have
\begin{eqnarray}
\lim_{n\to\infty} \frac{1}{n} H_{1+\theta}^\downarrow(X^n|Y^n) &=& H_{1+\theta}^{\downarrow,W}(X|Y), \\
\lim_{n\to\infty} \frac{1}{n} H(X^n|Y^n) &=& H^W(X|Y).
\end{eqnarray}
\end{theorem}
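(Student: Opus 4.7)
The plan is to read the theorem as an immediate consequence of the non-asymptotic sandwich provided by Lemma~\ref{lemma:mult-terminal-finite-evaluation-down-conditional-renyi}, which already isolates the linear-in-$n$ part with slope $\theta H_{1+\theta}^{\downarrow,W}(X|Y)$ and an $O(1)$ remainder that depends only on $\theta$ and the initial distribution. The R\'enyi statement will follow by dividing by $n$ and sending $n\to\infty$ for each fixed $\theta$; the Shannon statement will then follow by sending $\theta\to 0$, provided the $n\to\infty$ convergence can be made uniform in $\theta$ on a neighborhood of $0$.

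For the first limit, I would fix $\theta\in(-1,0)\cup(0,\infty)$ and divide the inequalities of Lemma~\ref{lemma:mult-terminal-finite-evaluation-down-conditional-renyi} by $n\theta$. The sign of $\theta$ only swaps the roles of $\underline{\delta}(\theta)/\theta$ and $\overline{\delta}(\theta)/\theta$, but in either regime one obtains a sandwich of the form
\begin{align*}
\frac{n-1}{n}H_{1+\theta}^{\downarrow,W}(X|Y) + \frac{A(\theta)}{n}
\le \frac{1}{n}H_{1+\theta}^\downarrow(X^n|Y^n)
\le \frac{n-1}{n}H_{1+\theta}^{\downarrow,W}(X|Y) + \frac{B(\theta)}{n},
\end{align*}
with $A(\theta),B(\theta)$ finite constants. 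Letting $n\to\infty$ squeezes both envelopes to $H_{1+\theta}^{\downarrow,W}(X|Y)$, which settles the first claim for an arbitrary initial distribution.

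For the conditional entropy, I would derive it by exchanging the $n\to\infty$ and $\theta\to 0$ limits in the first statement. A preparatory observation makes the uniformity clean: at $\theta=0$ the tilted matrix coincides with $W$, whose Perron-Frobenius eigenvalue is $1$, so with the normalization $\min v_0=1$ one gets that the eigenvector is essentially the constant vector. A short computation then gives $\underline{\delta}(0)=\overline{\delta}(0)=0$, and by analyticity of the Perron-Frobenius eigendata in $\theta$ both $\underline{\delta}(\theta)$ and $\overline{\delta}(\theta)$ vanish linearly at $\theta=0$, so $A(\theta)$ and $B(\theta)$ above remain bounded on a punctured neighborhood of $0$. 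Consequently $|\tfrac{1}{n}H_{1+\theta}^\downarrow(X^n|Y^n)-H_{1+\theta}^{\downarrow,W}(X|Y)|=O(1/n)$ uniformly in $\theta$ near $0$, and combining this with $\lim_{\theta\to 0}H_{1+\theta}^\downarrow(X^n|Y^n)=H(X^n|Y^n)$ from Lemma~\ref{lemma:property-down-conditional-renyi} and $\lim_{\theta\to 0}H_{1+\theta}^{\downarrow,W}(X|Y)=H^W(X|Y)$ from \eqref{eq:lower-conditional-renyi-markov-theta-0} allows the double limit to interchange.

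The main technical obstacle is precisely this uniform control of $\underline{\delta}(\theta),\overline{\delta}(\theta)$ near $\theta=0$. If the analytic-perturbation route proves awkward to spell out in context, a clean fallback is the chain rule together with standard ergodic theory: under Assumption~\ref{assumption-Y-marginal-markov} the joint process $(X,Y)$ is an ergodic Markov chain and, crucially, its $Y$-marginal is also Markov, so $\tfrac{1}{n}H(X^n,Y^n)$ and $\tfrac{1}{n}H(Y^n)$ each converge and their difference gives $\lim_{n\to\infty}\tfrac{1}{n}H(X^n|Y^n)$; a first-order Perron-Frobenius perturbation of $\lambda_\theta$ at $\theta=0$ identifies this difference with $H^W(X|Y)$ and closes the proof.
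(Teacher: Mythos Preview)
Your approach is correct and matches the paper's: the theorem is presented there as an immediate consequence of Lemma~\ref{lemma:mult-terminal-finite-evaluation-down-conditional-renyi}, with no further argument given. Your treatment of the first limit is exactly this, and for the second limit you supply the missing justification for interchanging $n\to\infty$ and $\theta\to 0$ (via the observation that $\underline{\delta}(0)=\overline{\delta}(0)=0$ and analytic dependence of the Perron--Frobenius data on $\theta$), which the paper leaves implicit.
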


We also have the following asymptotic evaluation of the variance,
which follows from Lemma \ref{lemma:appendix-variance-limit} in Appendix \ref{Appendix:preparation}. 

\begin{theorem} \label{theorem:multi-markov-variance}
Suppose that transition matrix $W$ satisfies Assumption \ref{assumption-Y-marginal-markov}.
For any initial distribution, we have
\begin{eqnarray}
 \lim_{n\to\infty} \frac{1}{n} \san{V}(X^n|Y^n) = \san{V}^W(X|Y).
\end{eqnarray}
\end{theorem}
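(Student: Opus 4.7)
My plan is to lift the function-level estimate of Lemma \ref{lemma:mult-terminal-finite-evaluation-down-conditional-renyi} to an estimate on second derivatives at $\theta = 0$. The starting observation is that the quantity
\begin{align*}
\phi_n(\theta) := -\theta H^\downarrow_{1+\theta}(X^n|Y^n) = \log \sum_{x^n,y^n} P_{X^nY^n}(x^n,y^n)^{1+\theta} P_{Y^n}(y^n)^{-\theta}
\end{align*}
is exactly the cumulant generating function of $\log P_{X^n|Y^n}(X^n|Y^n)$ in the parameter $\theta$, so the standard derivative--moment identity gives $\phi_n''(0) = \mathrm{Var}\bigl[\log P_{X^n|Y^n}(X^n|Y^n)\bigr] = \san{V}(X^n|Y^n)$. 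The analogous transition-matrix quantity $\phi^W(\theta) := -\theta H^{\downarrow,W}_{1+\theta}(X|Y)$ satisfies $\phi^{W\prime\prime}(0) = \san{V}^W(X|Y)$ by \eqref{eq:lower-conditional-renyi-markov-theta-0-derivative}. The claim thus reduces to showing $\tfrac{1}{n}\phi_n''(0) \to \phi^{W\prime\prime}(0)$.

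Lemma \ref{lemma:mult-terminal-finite-evaluation-down-conditional-renyi} already delivers
\begin{align*}
\phi_n(\theta) = (n-1)\phi^W(\theta) + r_n(\theta), \qquad r_n(\theta) \in [-\overline{\delta}(\theta), -\underline{\delta}(\theta)],
\end{align*}
so the remainder $r_n$ is uniformly bounded in $n$ at each fixed $\theta$. The missing step is to upgrade this to a uniform bound on $r_n''(0)$. I would do this via analyticity and Cauchy's integral formula. Assumption \ref{assumption-Y-marginal-markov} ensures that the tilted matrix $\tilde{W}_\theta$ depends analytically on $\theta$, and since $\lambda_0 = 1$ is a simple Perron--Frobenius eigenvalue of the ergodic irreducible matrix $W$, standard perturbation theory for simple eigenvalues produces analytic branches $\theta \mapsto \lambda_\theta$ and $\theta \mapsto v_\theta$ on a complex disk $D_r$ about $0$. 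Consequently $\phi_n$, $\phi^W$, $\underline{\delta}$, and $\overline{\delta}$ all extend analytically to $D_r$, and $r_n$ remains bounded on $\partial D_r$ by some constant $C$ independent of $n$. Cauchy's formula then gives $|r_n''(0)| \le 2C/r^2$, so
\begin{align*}
\tfrac{1}{n}\phi_n''(0) = \tfrac{n-1}{n}\phi^{W\prime\prime}(0) + \tfrac{1}{n} r_n''(0) \longrightarrow \phi^{W\prime\prime}(0),
\end{align*}
which is the desired limit. This is precisely what Lemma \ref{lemma:appendix-variance-limit} packages once applied to the CGF of $\log P_{X^n|Y^n}(X^n|Y^n)$.

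The main obstacle I anticipate is verifying the uniform complex-analytic bound on $r_n$: the pointwise bound along the real axis from Lemma \ref{lemma:mult-terminal-finite-evaluation-down-conditional-renyi} does not immediately extend into the complex plane. The analyticity of $\lambda_\theta$ and $v_\theta$ on a fixed disk is standard but relies on the simple-eigenvalue property at $\theta = 0$ (guaranteed by ergodicity and irreducibility of $W$) and on choosing the disk small enough that no other eigenvalue collides with $\lambda_\theta$. Once this is in place, $\underline{\delta}(\theta)$ and $\overline{\delta}(\theta)$, which are rational expressions in $v_\theta$ and $w_\theta$, inherit analyticity and are bounded on a slightly smaller contour, which is all the Cauchy estimate needs.
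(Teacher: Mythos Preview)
Your reduction to $\tfrac{1}{n}\phi_n''(0)\to\phi^{W\prime\prime}(0)$ is exactly what the paper does: it simply invokes Lemma~\ref{lemma:appendix-variance-limit} with $g((x,y),(x',y'))=\log\frac{W(y|y')}{W(x,y|x',y')}$, so that $S_n=-\log P_{X^n|Y^n}(X^n|Y^n)$ and the theorem drops out. You note this yourself in the last line, so the two approaches coincide at the level of the theorem; your additional content is a proposed proof of Lemma~\ref{lemma:appendix-variance-limit} itself (which the paper defers to \cite{hayashi-watanabe:13,hayashi-watanabe:13b}) via Cauchy estimates.

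That Cauchy strategy is viable, but the resolution you propose for the ``main obstacle'' does not work as written. First, $\overline{\delta}(\theta)=-\log\langle v_\theta|w_\theta\rangle+\log\max_{x,y}v_\theta(x,y)$ is not a rational expression in $v_\theta,w_\theta$: it contains a $\max$, and the normalization $\min_{x,y}v_\theta(x,y)=1$ used to define $v_\theta$ is itself non-analytic, so neither $\underline{\delta}$ nor $\overline{\delta}$ extends holomorphically. Second, and more fundamentally, even if you had analytic extensions of both, the sandwich $-\overline{\delta}(\theta)\le r_n(\theta)\le -\underline{\delta}(\theta)$ is a real inequality coming from the componentwise comparison $u\le v_\theta$ in the proof of Lemma~\ref{lemma:finite-evaluation-of-cgf}; it has no meaning off the real axis and cannot be used to bound $|r_n|$ on $\partial D_r$. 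The correct way to get the uniform complex bound is to bypass Lemma~\ref{lemma:mult-terminal-finite-evaluation-down-conditional-renyi} and write directly $e^{r_n(\theta)}=\lambda_\theta^{-(n-1)}\langle u\,|\,\tilde{W}_\theta^{\,n-1}w_\theta\rangle$, then use the analytic spectral projection $\tilde{W}_\theta=\lambda_\theta P_\theta+R_\theta$ with spectral radius of $\lambda_\theta^{-1}R_\theta$ strictly below $1$ on a small disk (by simplicity of $\lambda_0$ and continuity). This gives $e^{r_n(\theta)}=\langle u|P_\theta w_\theta\rangle+o(1)$ uniformly on $D_r$, hence the uniform bound on $r_n$ you need for Cauchy.
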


Theorem \ref{theorem:multi-markov-variance} is practically important since the limit of the variance can
be described by a single letter characterized quantity. A method to calculate $\san{V}^W(X|Y)$ can be 
found in \cite{hayashi-watanabe:13b}.

Next, we show the lemma that gives finite upper and lower bound on the upper conditional R\'enyi entropy in terms of
the upper conditional R\'enyi entropy for the transition matrix.

\begin{lemma} \label{lemma:multi-terminal-finite-evaluation-upper-conditional-renyi}
Suppose that transition matrix $W$ satisfies Assumption \ref{assumption-memory-through-Y}. 
Let $v_\theta$ be the eigenvector of
$K_\theta^T$ with respect to the Perro-Frobenius eigenvalue $\kappa_\theta$ such that $\min_y v_\theta(y) = 1$.
Let $w_\theta$ be the $|{\cal Y}|$-dimensional vector defined by
\begin{eqnarray}
w_\theta(y) := \left[ \sum_x P_{X_1 Y_1}(x,y)^{1+\theta} \right]^{\frac{1}{1+\theta}}.
\end{eqnarray}
Then, we have
\begin{eqnarray}
(n-1) \frac{\theta}{1+\theta} H_{1+\theta}^{\uparrow,W}(X|Y) + \underline{\xi}(\theta)
 \le \frac{\theta}{1+\theta}H_{1+\theta}^\uparrow(X^n|Y^n)
 \le (n-1) \frac{\theta}{1+\theta} H_{1+\theta}^{\uparrow,W}(X|Y) + \overline{\xi}(\theta),
\end{eqnarray}
where 
\begin{eqnarray}
\overline{\xi}(\theta) &:=& - \log \langle v_\theta | w_\theta \rangle + \log \max_y v_\theta(y), \label{eq:definition-overline-xi} \\
\underline{\xi}(\theta) &:=& - \log \langle v_\theta | w_\theta \rangle. \label{eq:definition-underline-xi}
\end{eqnarray}
\end{lemma}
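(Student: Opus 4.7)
The plan is to reduce the $n$-fold upper conditional R\'enyi entropy to a Perron--Frobenius problem on the $|{\cal Y}|\times|{\cal Y}|$ matrix $K_\theta$, in direct parallel with the proof of Lemma \ref{lemma:mult-terminal-finite-evaluation-down-conditional-renyi}. Starting from the identity
\begin{align*}
\frac{\theta}{1+\theta} H_{1+\theta}^\uparrow(X^n|Y^n) = -\log \sum_{y^n}\left[\sum_{x^n} P_{X^nY^n}(x^n,y^n)^{1+\theta}\right]^{\frac{1}{1+\theta}},
\end{align*}
I would first use the Markov factorization of $P_{X^nY^n}$ together with Assumption \ref{assumption-memory-through-Y}, applied iteratively from $i=n$ down to $i=2$. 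At each step the identity $\sum_{x_i} W(x_i,y_i|x_{i-1},y_{i-1})^{1+\theta} = W_\theta(y_i|y_{i-1})$ eliminates $x_i$ cleanly, because the result does not depend on $x_{i-1}$ and hence does not interfere with the next elimination. This yields
\begin{align*}
\sum_{x^n} P_{X^nY^n}(x^n,y^n)^{1+\theta} = \left[\sum_{x_1} P_{X_1Y_1}(x_1,y_1)^{1+\theta}\right] \prod_{i=2}^n W_\theta(y_i|y_{i-1}),
\end{align*}
so after taking the $(1+\theta)$-th root the bracketed expression becomes $w_\theta(y_1)\prod_{i=2}^n K_\theta(y_i|y_{i-1})$.

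Next, the remaining sum over $y^n$ is precisely $\langle \mathbf{1} | K_\theta^{n-1} w_\theta \rangle$, where $\mathbf{1}$ denotes the all-ones vector on ${\cal Y}$. Using the normalization $\min_y v_\theta(y)=1$, one has the elementwise bounds $v_\theta/\max_y v_\theta(y) \le \mathbf{1} \le v_\theta$, and since $K_\theta$ is entrywise non-negative with $(K_\theta^T)^{n-1} v_\theta = \kappa_\theta^{n-1} v_\theta$, this yields the sandwich
\begin{align*}
\frac{\kappa_\theta^{n-1}\langle v_\theta | w_\theta\rangle}{\max_y v_\theta(y)} \le \langle \mathbf{1} | K_\theta^{n-1} w_\theta\rangle \le \kappa_\theta^{n-1}\langle v_\theta | w_\theta\rangle.
\end{align*}
Taking $-\log$ (monotone decreasing, independently of the sign of $\theta$) and invoking the definition $-\log \kappa_\theta = \frac{\theta}{1+\theta} H_{1+\theta}^{\uparrow,W}(X|Y)$ then produces exactly the two-sided bound in the lemma, with $\underline{\xi}(\theta)$ and $\overline{\xi}(\theta)$ as stated in \eqref{eq:definition-overline-xi} and \eqref{eq:definition-underline-xi}.

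The main obstacle is really the first step: one must check that Assumption \ref{assumption-memory-through-Y} permits the iterative elimination of $x_n,x_{n-1},\ldots,x_2$ without leaving $x_{i-1}$-dependent residues, because it is exactly this property that collapses the $(1+\theta)$-power of the joint distribution into a product of scalar $|{\cal Y}|\times|{\cal Y}|$ matrix entries and makes the subsequent matrix calculation possible. Once this factorization is secured, the remaining argument is essentially a verbatim repetition of the Perron--Frobenius sandwich used in Lemma \ref{lemma:mult-terminal-finite-evaluation-down-conditional-renyi}, so one should be able to close the proof by a direct appeal to Lemma \ref{lemma:finite-evaluation-of-cgf} applied with $K_\theta$ and $w_\theta$ playing the roles of $\tilde{W}_\theta$ and its initial vector.
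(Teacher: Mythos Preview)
Your proposal is correct and follows essentially the same route as the paper's own proof: factorize $\sum_{x^n} P_{X^nY^n}(x^n,y^n)^{1+\theta}$ via Assumption~\ref{assumption-memory-through-Y} to obtain $w_\theta(y_1)\prod_{i=2}^n K_\theta(y_i|y_{i-1})^{1+\theta}$ inside the $(1{+}\theta)$-th root, rewrite the $y^n$-sum as $\langle \mathbf{1}\,|\,K_\theta^{n-1} w_\theta\rangle$, and then use the Perron--Frobenius sandwich $v_\theta/\max_y v_\theta(y)\le \mathbf{1}\le v_\theta$ together with $(K_\theta^T)^{n-1}v_\theta=\kappa_\theta^{n-1}v_\theta$. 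The paper carries this out directly rather than invoking Lemma~\ref{lemma:finite-evaluation-of-cgf}, but the argument is otherwise identical.
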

\begin{proof}
See Appendix \ref{appendix:proof-lemma:multi-terminal-finite-evaluation-upper-conditional-renyi}.
\end{proof}

From Lemma \ref{lemma:multi-terminal-finite-evaluation-upper-conditional-renyi}, we have the following.

\begin{theorem} \label{theorem:asymptotic-up-conditional-renyi}
Suppose that transition matrix $W$ satisfies Assumption \ref{assumption-memory-through-Y}. 
For any initial distribution, we have
\begin{eqnarray}
\lim_{n\to\infty} \frac{1}{n} H_{1+\theta}^\uparrow(X^n|Y^n) &=& H_{1+\theta}^{\uparrow,W}(X|Y).
 \label{eq:markov-theta-entropy-up-asymptotic} 
\end{eqnarray}
\end{theorem}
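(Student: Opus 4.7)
The plan is to derive the theorem as an essentially immediate corollary of Lemma \ref{lemma:multi-terminal-finite-evaluation-upper-conditional-renyi}, which has already absorbed all the nontrivial Perron--Frobenius analysis. That lemma provides the two-sided sandwich
$$(n-1) \frac{\theta}{1+\theta} H_{1+\theta}^{\uparrow,W}(X|Y) + \underline{\xi}(\theta) \le \frac{\theta}{1+\theta}H_{1+\theta}^\uparrow(X^n|Y^n) \le (n-1) \frac{\theta}{1+\theta} H_{1+\theta}^{\uparrow,W}(X|Y) + \overline{\xi}(\theta),$$
in which the correction terms $\underline{\xi}(\theta)$ and $\overline{\xi}(\theta)$ are determined by the initial distribution and by the left Perron--Frobenius eigenvector of $K_\theta$, and in particular are \emph{finite constants not depending on $n$}.

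Given this, I would divide the sandwich by $n$ and take $n \to \infty$. Since $\underline{\xi}(\theta)/n \to 0$ and $\overline{\xi}(\theta)/n \to 0$, and $(n-1)/n \to 1$, the squeeze yields
$$\lim_{n \to \infty} \frac{1}{n}\, \frac{\theta}{1+\theta}\, H_{1+\theta}^\uparrow(X^n|Y^n) = \frac{\theta}{1+\theta}\, H_{1+\theta}^{\uparrow,W}(X|Y).$$
For $\theta \in (-1,0) \cup (0,\infty)$ the prefactor $\theta/(1+\theta)$ is nonzero, so it can be cancelled to recover the claimed identity \eqref{eq:markov-theta-entropy-up-asymptotic}. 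The independence from the initial distribution is automatic, since $\underline{\xi}(\theta)$ and $\overline{\xi}(\theta)$ both remain bounded (in $n$) for any fixed $P_{X_1 Y_1}$.

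Since the theorem reduces to the squeeze above, there is no real obstacle in the theorem itself; the substantive content has already been pushed into Lemma \ref{lemma:multi-terminal-finite-evaluation-upper-conditional-renyi}, whose proof is where the use of Assumption \ref{assumption-memory-through-Y} (to guarantee that $K_\theta$ is a well-defined $|\mathcal{Y}|\times |\mathcal{Y}|$ matrix) and the Perron--Frobenius machinery actually enter. If one wished to include the boundary values $\theta = 0$ and $\theta = -1$, I would handle them by continuity in $\theta$, using Lemma \ref{lemma:properties-upper-conditional-renyi-transition-matrix} and Lemma \ref{lemma:limit-renyi-max-entropy} together with the fact that the bounding constants $\underline{\xi}(\theta), \overline{\xi}(\theta)$ remain finite near the boundary; but for $\theta \in (-1,0) \cup (0,\infty)$ the argument above is complete in a few lines.
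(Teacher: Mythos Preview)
Your proposal is correct and matches the paper's approach exactly: the paper states the theorem as an immediate consequence of Lemma \ref{lemma:multi-terminal-finite-evaluation-upper-conditional-renyi}, and your argument---dividing the sandwich bounds by $n$, letting $n\to\infty$, and cancelling the nonzero factor $\theta/(1+\theta)$---is precisely the intended (and only natural) way to read that implication.
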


Finally, we show the lemma that gives finite upper and lower bounds on the two-parameter conditional 
R\'enyi entropy in terms of the two-parameter conditional R\'enyi entropy for the transition matrix. 

\begin{lemma} \label{lemma:multi-terminal-finite-evaluation-two-parameter-conditional-renyi}
Suppose that transition matrix $W$ satisfies Assumption \ref{assumption-memory-through-Y}. 
Let $v_{\theta,\theta^\prime}$ be the eigenvector of
$N_{\theta,\theta^\prime}^T$ with respect to the Perro-Frobenius eigenvalue $\nu_{\theta,\theta^\prime}$ such that 
$\min_y v_{\theta,\theta^\prime}(y) = 1$.
Let $w_{\theta,\theta^\prime}$ be the $|{\cal Y}|$-dimensional vector defined by
\begin{eqnarray}
w_{\theta,\theta^\prime}(y) := \left[ \sum_x P_{X_1 Y_1}(x,y)^{1+\theta} \right] 
 \left[ \sum_x P_{X_1 Y_1}(x,y)^{1+\theta^\prime} \right]^{\frac{-\theta}{1+\theta^\prime}}.
\end{eqnarray}
Then, we have
\begin{eqnarray}
(n-1) \theta H_{1+\theta,1+\theta^\prime}^W(X|Y) + \underline{\zeta}(\theta,\theta^\prime)
 \le \theta H_{1+\theta,1+\theta^\prime}(X^n|Y^n)
 \le (n-1) \theta H_{1+\theta,1+\theta^\prime}^W(X|Y) + \overline{\zeta}(\theta,\theta^\prime),
\end{eqnarray}
where 
\begin{eqnarray}
\overline{\zeta}(\theta,\theta^\prime) &:=& - \log \langle v_{\theta,\theta^\prime} | w_{\theta,\theta^\prime} \rangle
   + \log \max_y v_{\theta,\theta^\prime}(y) + \theta \overline{\xi}(\theta^\prime), \label{eq:definition-overline-zeta} \\
\underline{\zeta}(\theta,\theta^\prime) &:=& 
 - \log \langle v_{\theta,\theta^\prime} | w_{\theta,\theta^\prime} \rangle + \theta \underline{\xi}(\theta^\prime) 
  \label{eq:definition-underline-zeta}
\end{eqnarray}
for $\theta > 0$ and
\begin{eqnarray}
\overline{\zeta}(\theta,\theta^\prime) &:=& - \log \langle v_{\theta,\theta^\prime} | w_{\theta,\theta^\prime} \rangle
   + \log \max_y v_{\theta,\theta^\prime}(y) + \theta \underline{\xi}(\theta^\prime), \label{eq:definition-overline-zeta-2} \\
\underline{\zeta}(\theta,\theta^\prime) &:=& 
 - \log \langle v_{\theta,\theta^\prime} | w_{\theta,\theta^\prime} \rangle + \theta \overline{\xi}(\theta^\prime)
 \label{eq:definition-underline-zeta-2}
\end{eqnarray}
for $\theta < 0$
\end{lemma}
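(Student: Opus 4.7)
The plan is to reduce the proof of this lemma to the same Perron–Frobenius finite-evaluation argument that underlies Lemma \ref{lemma:mult-terminal-finite-evaluation-down-conditional-renyi} and Lemma \ref{lemma:multi-terminal-finite-evaluation-upper-conditional-renyi}, combined with an application of Lemma \ref{lemma:multi-terminal-finite-evaluation-upper-conditional-renyi} itself to handle the $H_{1+\theta^\prime}^\uparrow$ term.

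First I would mimic the single-shot derivation to rewrite $\theta H_{1+\theta,1+\theta^\prime}(X^n|Y^n)$ in the form
\begin{eqnarray*}
\theta H_{1+\theta,1+\theta^\prime}(X^n|Y^n)
= -\log \sum_{y^n} \Psi_n(y^n) + \theta \frac{\theta^\prime}{1+\theta^\prime} H_{1+\theta^\prime}^\uparrow(X^n|Y^n),
\end{eqnarray*}
where
\begin{eqnarray*}
\Psi_n(y^n) := \left[\sum_{x^n} P_{X^n Y^n}(x^n,y^n)^{1+\theta}\right]
\left[\sum_{x^n} P_{X^n Y^n}(x^n,y^n)^{1+\theta^\prime}\right]^{-\theta/(1+\theta^\prime)}.
\end{eqnarray*}
This follows from the single-shot identity used to define $H_{1+\theta,1+\theta^\prime}$ together with the fact that the normalization of $P_{Y^n}^{(1+\theta^\prime)}$ encodes $H_{1+\theta^\prime}^\uparrow(X^n|Y^n)$.

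The next step is to factor $\Psi_n(y^n)$ along the Markov chain. Using $W(x,y|x^\prime,y^\prime) = W(y|y^\prime)W(x|x^\prime,y^\prime,y)$ (Assumption \ref{assumption-Y-marginal-markov}) and iteratively summing over $x_n, x_{n-1}, \ldots, x_2$, Assumption \ref{assumption-memory-through-Y} is what guarantees that each partial sum $\sum_{x_i} W(x_i,y_i|x_{i-1},y_{i-1})^{1+\theta} = W_\theta(y_i|y_{i-1})$ is independent of $x_{i-1}$, so that the sum telescopes cleanly. This yields
\begin{eqnarray*}
\sum_{x^n} P_{X^n Y^n}(x^n,y^n)^{1+\theta} = \left[\sum_{x_1} P_{X_1 Y_1}(x_1,y_1)^{1+\theta}\right] \prod_{i=2}^n W_\theta(y_i|y_{i-1}),
\end{eqnarray*}
and the analogous factorization for $\theta^\prime$. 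Taking the product and grouping terms by $y_1$ and by the inter-step factors gives exactly
\begin{eqnarray*}
\Psi_n(y^n) = w_{\theta,\theta^\prime}(y_1) \prod_{i=2}^n N_{\theta,\theta^\prime}(y_i|y_{i-1}),
\end{eqnarray*}
with $w_{\theta,\theta^\prime}$ and $N_{\theta,\theta^\prime}$ as in the lemma. This is the key algebraic step and the place that genuinely uses the strong non-hidden assumption; I expect it to be the main obstacle, not because it is deep, but because the bookkeeping of the exponents $-\theta/(1+\theta^\prime)$ and the interplay with the factor $W(y|y^\prime)^{1+\theta}$ buried inside $W_\theta$ needs to be tracked with care.

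Once this factorization is in place, $\sum_{y^n} \Psi_n(y^n) = \langle \bol{1} | N_{\theta,\theta^\prime}^{n-1} | w_{\theta,\theta^\prime}\rangle$, and applying the same Perron–Frobenius sandwich estimate used in Appendix \ref{Appendix:preparation} (Lemma \ref{lemma:finite-evaluation-of-cgf}) gives
\begin{eqnarray*}
(n-1)(-\log \nu_{\theta,\theta^\prime}) - \log \langle v_{\theta,\theta^\prime} | w_{\theta,\theta^\prime}\rangle
\le -\log \sum_{y^n}\Psi_n(y^n)
\le (n-1)(-\log \nu_{\theta,\theta^\prime}) - \log \langle v_{\theta,\theta^\prime} | w_{\theta,\theta^\prime}\rangle + \log \max_y v_{\theta,\theta^\prime}(y),
\end{eqnarray*}
using the normalization $\min_y v_{\theta,\theta^\prime}(y) = 1$. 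Finally, I would invoke Lemma \ref{lemma:multi-terminal-finite-evaluation-upper-conditional-renyi} to bound $\frac{\theta^\prime}{1+\theta^\prime} H_{1+\theta^\prime}^\uparrow(X^n|Y^n)$ by $(n-1)\frac{\theta^\prime}{1+\theta^\prime} H_{1+\theta^\prime}^{\uparrow,W}(X|Y)$ up to $\underline{\xi}(\theta^\prime)$ and $\overline{\xi}(\theta^\prime)$, then multiply by $\theta$. The sign of $\theta$ causes the roles of $\underline{\xi}(\theta^\prime)$ and $\overline{\xi}(\theta^\prime)$ to swap, which is exactly what produces the two different definitions of $\overline{\zeta}$ and $\underline{\zeta}$ in the cases $\theta > 0$ and $\theta < 0$. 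Adding this contribution to the Perron–Frobenius sandwich above, and subtracting the analogous decomposition $\theta H_{1+\theta,1+\theta^\prime}^W(X|Y) = -\log \nu_{\theta,\theta^\prime} + \theta \frac{\theta^\prime}{1+\theta^\prime} H_{1+\theta^\prime}^{\uparrow,W}(X|Y)$, yields the asserted inequalities with the stated constants.
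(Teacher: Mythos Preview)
Your proposal is correct and follows essentially the same approach as the paper: decompose $\theta H_{1+\theta,1+\theta^\prime}(X^n|Y^n)$ into the $-\log\sum_{y^n}\Psi_n(y^n)$ term plus $\frac{\theta\theta^\prime}{1+\theta^\prime}H_{1+\theta^\prime}^\uparrow(X^n|Y^n)$, handle the first term by the Perron--Frobenius sandwich argument of Lemma \ref{lemma:multi-terminal-finite-evaluation-upper-conditional-renyi} (via the factorization enabled by Assumption \ref{assumption-memory-through-Y}), and invoke Lemma \ref{lemma:multi-terminal-finite-evaluation-upper-conditional-renyi} directly for the second term, with the sign of $\theta$ determining which of $\underline{\xi},\overline{\xi}$ enters each bound. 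The paper's proof is in fact terser than yours and simply points to these two ingredients.
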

\begin{proof}
We can write
\begin{eqnarray}
\lefteqn{ \theta H_{1+\theta,1+\theta^\prime}(X^n|Y^n) } \\
 &=& - \log \sum_{y^n} \left[ \sum_{x^n} P_{X^nY^n}(x^n,y^n)^{1+\theta} \right] 
  \left[ \sum_{x^n} P_{X^nY^n}(x^n,y^n)^{1+\theta^\prime} \right]^{\frac{-\theta}{1+\theta^\prime}}
   + \frac{\theta \theta^\prime}{1+\theta^\prime} H_{1+\theta^\prime}^\uparrow(X^n|Y^n).
\end{eqnarray}
The second term is evaluated by Lemma \ref{lemma:multi-terminal-finite-evaluation-upper-conditional-renyi}.
The first term can be evaluated almost the same manner as Lemma \ref{lemma:multi-terminal-finite-evaluation-upper-conditional-renyi}.
\end{proof}

From Lemma \ref{lemma:multi-terminal-finite-evaluation-two-parameter-conditional-renyi}, we have the following.

\begin{theorem} \label{theorem:asymptotic-two-parameter-renyi-markov}
Suppose that transition matrix $W$ satisfies Assumption \ref{assumption-memory-through-Y}. 
For any initial distribution, we have
\begin{eqnarray}
\lim_{n\to\infty} \frac{1}{n} H_{1+\theta,1+\theta^\prime}(X^n|Y^n) 
 = H_{1+\theta,1+\theta^\prime}^W(X|Y).
\end{eqnarray}
\end{theorem}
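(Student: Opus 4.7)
The plan is to obtain the theorem as a direct corollary of Lemma \ref{lemma:multi-terminal-finite-evaluation-two-parameter-conditional-renyi}. That lemma supplies the non-asymptotic sandwich
\begin{eqnarray*}
(n-1)\theta H_{1+\theta,1+\theta^\prime}^W(X|Y) + \underline{\zeta}(\theta,\theta^\prime)
 \le \theta H_{1+\theta,1+\theta^\prime}(X^n|Y^n)
 \le (n-1)\theta H_{1+\theta,1+\theta^\prime}^W(X|Y) + \overline{\zeta}(\theta,\theta^\prime),
\end{eqnarray*}
in which the bounding constants $\underline{\zeta}(\theta,\theta^\prime)$ and $\overline{\zeta}(\theta,\theta^\prime)$ depend on $W$, on the initial distribution $P_{X_1Y_1}$, and on $(\theta,\theta^\prime)$, but are independent of the block length~$n$.

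Next I would divide each of the two inequalities by $n\theta$. For $\theta>0$ this preserves their direction, while for $\theta<0$ the case split between \eqref{eq:definition-overline-zeta}--\eqref{eq:definition-underline-zeta} and \eqref{eq:definition-overline-zeta-2}--\eqref{eq:definition-underline-zeta-2} in the lemma has been arranged so that the correct $\underline{\zeta},\overline{\zeta}$ are used once the sign flip is taken into account. Letting $n\to\infty$ sends $(n-1)/n$ to $1$ and the $O(1/n)$ residuals to zero, which yields
\[
\lim_{n\to\infty} \frac{1}{n} H_{1+\theta,1+\theta^\prime}(X^n|Y^n) = H_{1+\theta,1+\theta^\prime}^W(X|Y),
\]
as claimed, for every $\theta \in (-1,0)\cup(0,\infty)$.

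The only technical point to verify en route is that $\underline{\zeta}(\theta,\theta^\prime)$ and $\overline{\zeta}(\theta,\theta^\prime)$ are finite. This reduces to the strict positivity of the inner products $\langle v_{\theta,\theta^\prime}\,|\,w_{\theta,\theta^\prime}\rangle$ and $\langle v_{\theta^\prime}\,|\,w_{\theta^\prime}\rangle$ entering \eqref{eq:definition-overline-zeta}--\eqref{eq:definition-underline-zeta-2} and \eqref{eq:definition-overline-xi}--\eqref{eq:definition-underline-xi}, which follows from strict positivity of the Perron--Frobenius eigenvectors of the irreducible non-negative matrices $N_{\theta,\theta^\prime}^T$ and $K_{\theta^\prime}^T$ (irreducibility being inherited from $W$ under Assumption \ref{assumption-memory-through-Y}), together with the fact that each $w$-vector has at least one strictly positive coordinate whenever $P_{X_1Y_1}$ is a genuine probability distribution. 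Consequently, the substantive work sits entirely in Lemma \ref{lemma:multi-terminal-finite-evaluation-two-parameter-conditional-renyi} itself (and, through it, in Lemma \ref{lemma:multi-terminal-finite-evaluation-upper-conditional-renyi} and the Perron--Frobenius input); once those lemmas are in hand, Theorem \ref{theorem:asymptotic-two-parameter-renyi-markov} is a one-line consequence and there is no further obstacle.
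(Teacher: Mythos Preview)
Your proposal is correct and matches the paper's approach: the paper states Theorem \ref{theorem:asymptotic-two-parameter-renyi-markov} immediately after Lemma \ref{lemma:multi-terminal-finite-evaluation-two-parameter-conditional-renyi} with the one-line justification ``From Lemma \ref{lemma:multi-terminal-finite-evaluation-two-parameter-conditional-renyi}, we have the following,'' and gives no further argument. Your division by $n\theta$ and passage to the limit is exactly the intended step; the only minor imprecision is that the $\theta>0$ versus $\theta<0$ case split in \eqref{eq:definition-overline-zeta}--\eqref{eq:definition-underline-zeta-2} is there to handle the sign of the $\theta\,\xi(\theta')$ correction term in the proof of the lemma itself, not the sign flip from dividing the sandwich by $n\theta$---but either way both bounds share the same leading term $(n-1)\theta H_{1+\theta,1+\theta'}^W(X|Y)$, so the squeeze goes through regardless.
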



\section{Source Coding with Full Side-Information} \label{section:multi-source}

In this section, we investigate the source coding with side-information.
We start this section by showing the problem setting in Section \ref{subsection:multi-source-problem-formulation}.
Then, we review and introduce some single-shot bounds in Section \ref{subsection:multi-source-one-shot}.
We derive finite-length bounds for the Markov chain in Section \ref{subsection:multi-source-finite-markov}.
Then, in Sections \ref{subsection:multi-source-large-deviation} and \ref{subsection:multi-source-mdp},
we show the asymptotic characterization for the large deviation regime and the moderate deviation regime
by using those finite-length bounds. We also derive the second order rate in Section \ref{subsection:multi-source-second-order}.


The results shown in this section are summarized in Table \ref{table:summary:multi-source}.
The checkmarks $\checkmark$ indicate that the tight asymptotic bounds (large deviation, moderate deviation, and second order)
 can be obtained from those bounds.
The marks $\checkmark^*$ indicate that the large deviation bound can be derived up to the critical rate. 
The computational complexity "Tail" indicates that the computational complexities of those bounds 
depend on the computational complexities of tail probabilites.
It should be noted that Theorem \ref{theorem:multi-source-finite-markov-converse-assumptiotn-1}
is derived from a special case ($Q_{Y} = P_Y$) of Theorem \ref{theorem:multi-source-converse}.
The asymptotically optimal choice is $Q_Y = P_Y^{(1+\theta)}$, which corresponds to Corollary \ref{corollary:multi-source-converse}.
Under Assumption \ref{assumption-Y-marginal-markov}, we can derive the bound of the Markov case only for that
special choice of $Q_Y$, while under Assumption \ref{assumption-memory-through-Y}, we can derive the bound of
the Markov case for the optimal choice of $Q_Y$.

\begin{table}[htbp]
\begin{center}
\caption{Summary of the bounds for source coding with full side-information.}
\label{table:summary:multi-source}
{\renewcommand{\arraystretch}{1.4}
\begin{tabular}{|c|c|c|c|c|c|c|c|} \hline
  \multirow{2}{*}{Ach./Conv.}  
  & \multirow{2}{*}{Markov} & \multirow{2}{*}{Single Shot} &  \multirow{2}{*}{$\Pse$/$\barPse$} &  \multirow{2}{*}{Complexity} 
  & Large & Moderate & Second \\
  & & & & & Deviation & Deviation & Order \\ \hline
\multirow{3}{*}{Achievability} 
 & Theorem \ref{theorem:multi-source-finite-marko-direct-assumptiton-1} (Ass.~1) & Lemma \ref{lemma:multi-source-loose-bound} & $\barPse$ & $O(1)$ &    & \checkmark &  \\ \cline{2-8}
 & Theorem \ref{theorem:multi-source-finite-marko-direct-assumptiton-2} (Ass.~2) & Lemma \ref{lemma:multi-source-tight-bound}  & $\barPse$ & $O(1)$ & $\checkmark^*$ & \checkmark &  \\ \cline{2-8}
 & \multicolumn{2}{|c|}{Lemma \ref{lemma:multi-source-han-direct}}   &  $\barPse$ & Tail &  & \checkmark & \checkmark \\ 
 \hline
\multirow{3}{*}{Converse} 
 & Theorem \ref{theorem:multi-source-finite-markov-converse-assumptiotn-1} (Ass.~1)  & (Theorem \ref{theorem:multi-source-converse})  & $\Pse$ & $O(1)$ &  & \checkmark &  \\ \cline{2-8} 
 & Theorem \ref{theorem:multi-source-finite-marko-converse-assumptiton-2} (Ass.~2)  & Corollary \ref{corollary:multi-source-converse}  & $\Pse$ & $O(1)$ & $\checkmark^*$ & \checkmark &  \\ \cline{2-8} 
 & \multicolumn{2}{|c|}{Lemma \ref{lemma:multi-source-han-converse}}   &  $\Pse$ & Tail &  & \checkmark & \checkmark \\ 
\hline 
\end{tabular}}
\end{center}
\end{table}

\subsection{Problem Formulation} \label{subsection:multi-source-problem-formulation}

A code $\Psi = (\san{e},\san{d})$ consists of one encoder $\san{e}:{\cal X} \to \{1,\ldots,M\}$ and one decoder
$\san{d}:\{1,\ldots,M\} \times {\cal Y} \to {\cal X}$. The decoding error probability is defined by
\begin{eqnarray}
\Pse[\Psi] &=& \Pse[\Psi|P_{XY}] \\
&:=& \Pr\{ X \neq \san{d}(\san{e}(X),Y) \}.
\end{eqnarray}
For notational convenience, we introduce the infimum of error probabilities under the condition that the message size is $M$:
\begin{eqnarray}
\Pse(M) &=& \Pse(M|P_{XY}) \\
&:=& \inf_{\Psi} \Pse[\Psi]
\label{eq:definition-Pe-bar-source-side-info-3}.
\end{eqnarray}
For theoretical simplicity, we focus on a randomized choice of our encoder.
For this purpose, we employ a randomized hash function $F$ from ${\cal X}$ to $\{1, \ldots, M\}$.
A randomized hash function $F$ 
is called two-universal hash when $\Pr\{ F(x) = F(x^\prime) \} \le \frac{1}{M}$
for any distinctive $x$ and $x^\prime$ \cite{wegman:81}; 
the so-called bin coding \cite{cover} is an example of two-universal hash function.
In the following, we denote the set of two-universal hash functions by 
${\cal F}$.
Given an encoder $f$ as a function from ${\cal X}$ to $\{1, \ldots, M\}$,
we define the decoder $\san{d}_f$ as the optimal decoder
by $\argmin_{\san{d}} \Pse[(f,\san{d})]$. Then, we denote the code $(f,\san{d}_f)$ by $\Psi(f)$.
Then, we bound the error probability $\Pse[\Psi(F)]$ averaged
over the random function $F$ by only using the property of two-universality. 
In order to consider the worst case of such schemes, we introduce the following quantity:
\begin{eqnarray} \label{eq:definition-Pe-bar-source-side-info}
\barPse(M) &=& \barPse(M|P_{XY}) \\
&:=& \sup_{F \in {\cal F}}  \mathbb{E}_F[\Pse[\Psi(F)]], \label{eq:definition-Pe-bar-source-side-info-2}.
\end{eqnarray}

When we consider $n$-fold extension, the source code and related quantities are denoted with the superscript $(n)$.
For example,
the quantities in \eqref{eq:definition-Pe-bar-source-side-info-3} and \eqref{eq:definition-Pe-bar-source-side-info-2} are written to be
$\Pse^{(n)}(M)$ and $\barPse^{(n)}(M)$, respectively.
Instead of evaluating them, we are often interested in evaluating 
\begin{eqnarray}
M(n,\varepsilon) &:=& \inf\{ M_n : \Pse^{(n)}(M_n) \le \varepsilon \},  \\
\bar{M}(n,\varepsilon) &:=& \inf\{ M_n : \barPse^{(n)}(M_n) \le \varepsilon \} 
\end{eqnarray}
for given $0 \le \varepsilon < 1$.

\subsection{Single Shot Bounds} \label{subsection:multi-source-one-shot}

In this section, we review existing single shot bounds
and also show novel converse bounds. 
For the information meaures used below, see Section \ref{section:preparation-multi}.

\textchange{By using the standard argument on information-spectrum approach, we have the following achievability bound.}
\begin{lemma}[Lemma 7.2.1 of \cite{han:book}] \label{lemma:multi-source-han-direct}
The following bound holds:
\begin{eqnarray}
\barPse(M) \le \inf_{\gamma \ge 0} \left[ P_{XY}\left\{ \log \frac{1}{P_{X|Y}(x|y)} > \gamma \right\} + \frac{e^\gamma}{M} \right].
\end{eqnarray}
\end{lemma}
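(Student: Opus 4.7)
The plan is to upper bound $\mathbb{E}[\Pse[\Psi(F)]]$ for a random two-universal hash $F$ by analyzing a specific suboptimal threshold decoder; since $\san{d}_F$ is by definition the optimal decoder for $F$, any such suboptimal decoder's error probability provides an upper bound on $\Pse[\Psi(F)]$. Fix $\gamma \ge 0$ and consider the decoder that, given hash value $f$ and side information $y$, outputs the unique $x \in F^{-1}(f)$ satisfying $\log \frac{1}{P_{X|Y}(x|y)} \le \gamma$ whenever such a unique $x$ exists, and declares an error otherwise.

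I would then decompose the decoding error event at a realization $(x,y)$ into two pieces: event (A), that the true source violates the threshold, namely $\log \frac{1}{P_{X|Y}(x|y)} > \gamma$; and event (B), that some $x' \neq x$ with $\log \frac{1}{P_{X|Y}(x'|y)} \le \gamma$ lands in the same bin, i.e., $F(x') = F(x)$. The probability of (A) averaged over $(X,Y) \sim P_{XY}$ is independent of $F$ and equals precisely the tail probability appearing in the statement.

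For (B), the plan is to apply the union bound over all candidates $x' \neq x$ satisfying $P_{X|Y}(x'|y) \ge e^{-\gamma}$, then take expectation over $F$ and invoke the two-universality property $\Pr\{F(x') = F(x)\} \le 1/M$. The key counting step is the elementary observation that $|\{x' : P_{X|Y}(x'|y) \ge e^{-\gamma}\}| \le e^\gamma$, since these conditional probabilities must sum to at most one. Hence the expected contribution of (B) is at most $e^\gamma/M$.

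Combining the two pieces and taking infimum over $\gamma \ge 0$ delivers the stated bound. I do not anticipate any substantial obstacle; the only minor subtlety is handling decoder ties (which can be resolved by declaring an error conservatively, without loosening the bound) and confirming that the argument proceeds pointwise in $(x,y)$ before averaging, so that the two-universality property can be applied cleanly to the conditional collision probability.
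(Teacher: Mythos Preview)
The paper does not prove this lemma; it is stated with a citation to Han's book (Lemma~7.2.1) and used as a black box. Your proposal is correct and is precisely the standard argument: use a threshold decoder, split the error event into the atypicality term and the collision term, and control the latter via two-universality together with the counting bound $|\{x' : P_{X|Y}(x'|y) \ge e^{-\gamma}\}| \le e^{\gamma}$. This is exactly the proof one finds in Han's book, so there is nothing to compare.
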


\textchange{Although Lemma \ref{lemma:multi-source-han-direct} is useful for the second-order regime, it is known to be
not tight in the large deviation regime.
By using the large deviation technique of Gallager, we have the following exponential type achievability bound.}
\begin{lemma}[\cite{gallager:76}] \label{lemma:multi-source-tight-bound}
The following bound holds:\footnote{Note that the Gallager function and the upper conditional R\'enyi entropy 
are related by \eqref{eq:relation-upper-renyi-gallager}.}
\begin{eqnarray} \label{eq:gallager-bound}
\barPse(M) \le \inf_{-\frac{1}{2} \le \theta \le 0} M^{\frac{\theta}{1+\theta}} e^{- \frac{\theta}{1+\theta} H_{1+\theta}^\uparrow(X|Y)}.
\end{eqnarray}
\end{lemma}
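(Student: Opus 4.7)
The plan is to analyze the error probability of a code whose encoder is drawn from a two-universal hash family and whose decoder is the MAP rule, then apply the standard Gallager trick followed by an optimized H\"older-type estimate.

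First I would fix a two-universal hash function $F$ and define the decoder $\san{d}_F(m,y) := \argmax_{x':F(x')=m} P_{X|Y}(x'|y)$, consistent with the definition of $\Psi(F)$ given before \eqref{eq:definition-Pe-bar-source-side-info}. Conditioned on $(X,Y)=(x,y)$, a decoding error occurs only if some $x'\neq x$ with $F(x')=F(x)$ satisfies $P_{X|Y}(x'|y) \ge P_{X|Y}(x|y)$. Taking expectation over $F$ and using two-universality ($\Pr\{F(x')=F(x)\}\le 1/M$), the conditional error probability is at most $\min\bigl(1,\tfrac{1}{M}\sum_{x'\neq x}\mathbf{1}\{P_{X|Y}(x'|y)\ge P_{X|Y}(x|y)\}\bigr)$.

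Next I would apply the two standard Gallager tricks. For $\rho\in[0,1]$, use $\min(1,z)\le z^\rho$ to obtain
\begin{eqnarray}
\mathbb{E}[\Pse[\Psi(F)] \mid X=x,Y=y] \le M^{-\rho}\Bigl[\sum_{x'\neq x}\mathbf{1}\{P_{X|Y}(x'|y)\ge P_{X|Y}(x|y)\}\Bigr]^{\rho}.
\end{eqnarray}
Then, for any $s\ge 0$, bound the indicator by $\bigl(P_{X|Y}(x'|y)/P_{X|Y}(x|y)\bigr)^{s}$ and extend the sum to all $x'$. Averaging over $(X,Y)$ gives
\begin{eqnarray}
\mathbb{E}[\Pse[\Psi(F)]] \le M^{-\rho}\sum_{y}P_Y(y)\sum_{x}P_{X|Y}(x|y)^{1-s\rho}\Bigl[\sum_{x'}P_{X|Y}(x'|y)^{s}\Bigr]^{\rho}.
\end{eqnarray}
The optimal choice is $s=1/(1+\rho)$, which makes the inner $x$-sum coincide with the bracketed $x'$-sum to the $\rho$-power, yielding
\begin{eqnarray}
\mathbb{E}[\Pse[\Psi(F)]] \le M^{-\rho}\sum_{y}P_Y(y)\Bigl[\sum_{x}P_{X|Y}(x|y)^{\frac{1}{1+\rho}}\Bigr]^{1+\rho}.
\end{eqnarray}

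Finally I would reparametrize by $\theta=-\rho/(1+\rho)$, so that $\rho\in[0,1]$ corresponds to $\theta\in[-1/2,0]$ and $1/(1+\rho)=1+\theta$. With this substitution $M^{-\rho}=M^{\theta/(1+\theta)}$, and by the closed form \eqref{eq:upper-conditional-renyi} of $H_{1+\theta}^\uparrow(X|Y)$ the remaining factor equals $\exp\bigl[-\tfrac{\theta}{1+\theta}H_{1+\theta}^\uparrow(X|Y)\bigr]$. Since the bound holds uniformly over $F$ in the two-universal family, taking the supremum on the left and the infimum over $\theta\in[-1/2,0]$ on the right gives the claim. The main delicate point to verify will be the sign bookkeeping when passing between the Gallager parameter $\rho$ and the R\'enyi parameter $\theta$ together with the restriction of $\rho$ to $[0,1]$ (equivalently $\theta\in[-1/2,0]$), which is exactly where the use of $\min(1,z)\le z^\rho$ requires $\rho\le 1$; everything else is a routine optimization via Jensen/H\"older.
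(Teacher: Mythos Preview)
Your proof is correct and is precisely the classical Gallager random-coding argument; the paper itself does not give a proof of this lemma but simply cites \cite{gallager:76}, with the footnote pointing to \eqref{eq:relation-upper-renyi-gallager} for the translation between the Gallager function and $H_{1+\theta}^\uparrow(X|Y)$. Your derivation is exactly the proof that citation stands in for, including the choice $s=1/(1+\rho)$ and the reparametrization $\theta=-\rho/(1+\rho)$.
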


\textchange{Although Lemma \ref{lemma:multi-source-tight-bound} is known to be tight in the large deviation regime
for i.i.d. sources,  $H_{1+\theta}^\uparrow(X|Y)$ for Markov chains can only be evaluated under the strongly non-hidden assumption. 
For this reason, even though the following bound is looser than Lemma \ref{lemma:multi-source-tight-bound},
it is useful to have another bound in terms of $H_{1+\theta}^\downarrow(X|Y)$, which 
can be evaluated for Markov chains under the non-hidden assumption.}
\begin{lemma} \label{lemma:multi-source-loose-bound}
The following bound holds:
\begin{eqnarray}
\barPse(M) \le \inf_{-1 \le \theta \le 0} M^{\theta} e^{-\theta H_{1+\theta}^\downarrow(X|Y)}. 
\end{eqnarray}
\end{lemma}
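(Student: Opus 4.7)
The plan is to prove the equivalent inequality
\[
\barPse(M) \le M^{-\rho} \sum_{x,y} P_{XY}(x,y)^{1-\rho} P_Y(y)^\rho
\]
for every $\rho \in [0,1]$, where $\rho := -\theta$. The right-hand side equals $M^{\theta} e^{-\theta H_{1+\theta}^\downarrow(X|Y)}$ directly from the definition of the lower conditional R\'enyi entropy, so taking the infimum over $\theta \in [-1,0]$ delivers the claim.

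I will argue via the standard two-universal hashing technique. Fix a realization $F$ of the random hash and pair it with the maximum a-posteriori decoder $\san{d}_F(m,y) = \argmax_{x':\,F(x')=m} P_{X|Y}(x'|y)$. Under this rule, an error at source value $(x,y)$ can occur only if some $x' \neq x$ satisfies both $F(x')=F(x)$ and $P_{X|Y}(x'|y) \ge P_{X|Y}(x|y)$. Setting $T_{x,y} := \{x' \in {\cal X} : P_{X|Y}(x'|y) \ge P_{X|Y}(x|y)\}$ and $N(x,y,F) := \sum_{x' \in T_{x,y}\setminus\{x\}} \mathbb{1}\{F(x')=F(x)\}$, the elementary bound $\mathbb{1}\{N \ge 1\} \le N^\rho$, valid for every non-negative integer $N$ and $\rho \in [0,1]$, converts the indicator of error into a $\rho$-th power of $N(x,y,F)$.

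Next, I will take expectation over $F$. Jensen's inequality applied to the concave map $t \mapsto t^\rho$ pulls the expectation inside, and the defining property of two-universality (each pairwise collision probability is at most $1/M$) then yields
\[
\mathbb{E}_F[N(x,y,F)^\rho] \le \bigl(\mathbb{E}_F[N(x,y,F)]\bigr)^\rho \le \bigl(|T_{x,y}|/M\bigr)^\rho.
\]
A Markov-type estimate applied to the sub-probability $\sum_{x' \in T_{x,y}} P_{X|Y}(x'|y) \le 1$ bounds $|T_{x,y}| \le 1/P_{X|Y}(x|y)$.

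Finally, averaging over $(x,y) \sim P_{XY}$ gives
\[
\mathbb{E}_F[\Pse[\Psi(F)]] \le \sum_{x,y} P_{XY}(x,y) \left(\frac{1}{M\, P_{X|Y}(x|y)}\right)^\rho = M^{-\rho} \sum_{x,y} P_{XY}(x,y)^{1-\rho} P_Y(y)^\rho,
\]
which, upon taking the supremum over two-universal $F$ in the defining identity of $\barPse(M)$, is the target bound. I do not foresee any serious obstacle: the critical care points are the direction of Jensen (which relies on $\rho \in [0,1]$ so that $t^\rho$ is concave) and the Markov estimate for $|T_{x,y}|$, both of which are routine.
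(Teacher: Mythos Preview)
Your proof is correct, but it takes a different route from the paper. The paper derives Lemma~\ref{lemma:multi-source-loose-bound} as a relaxation of the Gallager bound in Lemma~\ref{lemma:multi-source-tight-bound}: it performs the change of variable $\theta = \theta'/(1-\theta')$ in \eqref{eq:gallager-bound}, so that the exponent becomes $-\theta' H_{1/(1-\theta')}^\uparrow(X|Y)$, and then invokes the inequality $H_{1/(1-\theta')}^\uparrow(X|Y) \le H_{1+\theta'}^\downarrow(X|Y)$ (Lemma~\ref{lemma:appendix-upper-conditional-renyi-upper-and-lower}) to replace the upper conditional R\'enyi entropy by the lower one. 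This makes explicit that Lemma~\ref{lemma:multi-source-loose-bound} is strictly weaker than Lemma~\ref{lemma:multi-source-tight-bound}, which is the conceptual point the paper wants to convey.

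Your argument, by contrast, is a direct and self-contained random-hashing proof: MAP decoding, the bound $\bol{1}\{N\ge 1\}\le N^\rho$, Jensen for the concave map $t\mapsto t^\rho$, two-universality to control $\mathbb{E}_F[N]$, and the Markov-type count $|T_{x,y}|\le 1/P_{X|Y}(x|y)$. This avoids any appeal to Lemma~\ref{lemma:multi-source-tight-bound} or Lemma~\ref{lemma:appendix-upper-conditional-renyi-upper-and-lower}, at the cost of not exhibiting the looseness relative to the Gallager bound. Both approaches are valid; the paper's is shorter given its prerequisites, while yours is more elementary and stands on its own.
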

\begin{proof}
To derive this bound, we change variable in \eqref{eq:gallager-bound} as 
$\theta = \frac{\theta^\prime}{1-\theta^\prime}$. Then, $-1 \le \theta^\prime \le 0$, and we have
\begin{eqnarray*}
M^{\theta^\prime} e^{-\theta^\prime H_{\frac{1}{1-\theta^\prime}}^\uparrow(X|Y)} \le 
 M^{\theta^\prime} e^{- \theta^\prime H_{1+\theta^\prime}^\downarrow(X|Y)},
\end{eqnarray*}
where we used Lemma \ref{lemma:appendix-upper-conditional-renyi-upper-and-lower} 
in Appendix \ref{Appendix:lemma:property-upper-conditional-renyi-single-shot}.
\end{proof}

When ${\cal Y}$ is singleton, we have the following bound, which is tighter than Lemma \ref{lemma:multi-source-tight-bound}.
\begin{lemma}[(2.39) \cite{hayashi-book:06}] \label{lemma:tight-bound-singleton-Y}
The following bound holds
\begin{eqnarray}
\Pe(M) \le \inf_{-1 < \theta \le 0} M^{\frac{\theta}{1+\theta}} e^{- \frac{\theta}{1+\theta} H_{1+\theta}(X)}.
\end{eqnarray}
\end{lemma}

\textchange{For converse part, we first have the following bound, which is very close to the operational definition of 
source coding with side-information.}
\begin{lemma}[\cite{renner:05d}] \label{lemma:multi-source-one-shot-hypo-converse}
Let $\{ \Omega_y \}_{y \in {\cal Y}}$ be a family of subsets $\Omega_y \subset {\cal X}$, and let 
$\Omega = \cup_{y\in{\cal Y}} \Omega_y \times \{ y \}$. Then, for any $Q_Y \in {\cal P}({\cal Y})$, the following bound holds:
\begin{eqnarray}
\Pse(M) \ge \min_{\{ \Omega_y \}}\left\{ P_{XY}(\Omega^c) : \sum_y Q_Y(y) |\Omega_y| \le M  \right\}. 
\end{eqnarray}
\end{lemma}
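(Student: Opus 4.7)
The plan is to show that any source code with $M$ messages yields a feasible family $\{\Omega_y\}$ in the right-hand-side minimization, with objective value equal to the code's error probability. This reduces the converse to a direct combinatorial comparison.

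First, I would fix an arbitrary code $\Psi = (\san{e}, \san{d})$ with message size $M$ and, for each $y \in {\cal Y}$, define
\begin{eqnarray*}
\Omega_y := \{ x \in {\cal X} : \san{d}(\san{e}(x), y) = x \},
\end{eqnarray*}
i.e., the set of source symbols correctly decoded given side-information $y$. Letting $\Omega = \bigcup_{y} \Omega_y \times \{y\}$, the event $\{X = \san{d}(\san{e}(X), Y)\}$ coincides with $\{(X,Y) \in \Omega\}$, so $\Pse[\Psi] = P_{XY}(\Omega^c)$. Moreover, $\Omega_y$ is contained in the image $\{\san{d}(m,y) : m \in \{1, \ldots, M\}\}$, which has cardinality at most $M$; hence $|\Omega_y| \le M$ for every $y$.

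Next, for any $Q_Y \in {\cal P}({\cal Y})$, the per-$y$ bound $|\Omega_y| \le M$ immediately yields
\begin{eqnarray*}
\sum_y Q_Y(y) |\Omega_y| \le M \sum_y Q_Y(y) = M,
\end{eqnarray*}
so the family $\{\Omega_y\}$ is feasible in the minimization on the right-hand side. Consequently,
\begin{eqnarray*}
\Pse[\Psi] = P_{XY}(\Omega^c) \ge \min_{\{\Omega_y\}}\left\{ P_{XY}(\Omega^c) : \sum_y Q_Y(y) |\Omega_y| \le M \right\}.
\end{eqnarray*}
Taking the infimum over all codes $\Psi$ with message size $M$ gives the claim, and since the argument is valid for every $Q_Y$, the bound holds uniformly in the choice of $Q_Y$.

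There is essentially no hard step: the argument is the standard reduction of source coding with side-information to a family of per-$y$ list-decoding problems, exploited here in hypothesis-testing form. The only mildly delicate point is recognizing that the simple per-$y$ cardinality bound $|\Omega_y| \le M$ is already strong enough to imply the weighted constraint for arbitrary $Q_Y$, which is what allows the right-hand side to be optimized freely over $Q_Y$ when one wants a tight bound.
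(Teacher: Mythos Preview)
Your proof is correct and is the standard argument: from any code you extract the per-$y$ correct-decoding sets, note that each has cardinality at most $M$ since it lies in the range of $\san{d}(\cdot,y)$, and observe that this pointwise bound already implies the averaged constraint $\sum_y Q_Y(y)|\Omega_y|\le M$ for any $Q_Y$. The paper does not give its own proof of this lemma; it is cited from \cite{renner:05d}, so there is nothing to compare against beyond noting that your argument is exactly the expected one.
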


\textchange{Since Lemma \ref{lemma:multi-source-one-shot-hypo-converse} is close to the operational definition, it is not 
easy to evaluate Lemma \ref{lemma:multi-source-one-shot-hypo-converse}. Thus, we derive another bound by loosening Lemma \ref{lemma:multi-source-one-shot-hypo-converse},
which is more tractable for evaluation.}
Slightly weakening Lemma \ref{lemma:multi-source-one-shot-hypo-converse}, we have the following.
\begin{lemma} [\cite{han:book,hayashi:03}] \label{lemma:multi-source-han-converse}
For any $Q_Y \in {\cal P}({\cal Y})$, we have\footnote{In fact, a special case
for $Q_Y = P_Y$ correspond to Lemma 7.2.2 of \cite{han:book}. A bound that involve $Q_Y$ was 
introduced in \cite{hayashi:03} for channel coding, and it can be regarded as a source coding counterpart of that result.}
\begin{eqnarray}
\Pse(M) \ge \sup_{\gamma \ge 0} \left[ P_{XY}\left\{ \log \frac{Q_Y(y)}{P_{XY}(x,y)} > \gamma  \right\} - \frac{M}{e^{\gamma}} \right].
\end{eqnarray}
\end{lemma}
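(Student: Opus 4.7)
The plan is to deduce this information-spectrum type converse directly from the hypothesis-testing type converse of Lemma \ref{lemma:multi-source-one-shot-hypo-converse}, by means of a standard thresholding argument. Fix $\gamma \ge 0$ and $Q_Y \in \mathcal{P}(\mathcal{Y})$, and introduce the ``typical'' set
\begin{eqnarray}
T_\gamma := \left\{ (x,y) \in \mathcal{X}\times\mathcal{Y} \mymid \log \frac{Q_Y(y)}{P_{XY}(x,y)} > \gamma \right\},
\end{eqnarray}
equivalently $P_{XY}(x,y) < Q_Y(y)\, e^{-\gamma}$ on $T_\gamma$. The point of the proof will be that any admissible family of decoding regions $\{\Omega_y\}_{y\in\mathcal{Y}}$ in Lemma \ref{lemma:multi-source-one-shot-hypo-converse} cannot have large $P_{XY}$-mass inside $T_\gamma$, because each $(x,y)\in T_\gamma$ has very small $P_{XY}$-probability relative to $Q_Y$.

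First I would start from the bound in Lemma \ref{lemma:multi-source-one-shot-hypo-converse}: for any $\{\Omega_y\}$ with $\sum_y Q_Y(y)|\Omega_y| \le M$,
\begin{eqnarray}
\Pse(M) \ge P_{XY}(\Omega^c) \ge P_{XY}(\Omega^c \cap T_\gamma) = P_{XY}(T_\gamma) - P_{XY}(\Omega \cap T_\gamma),
\end{eqnarray}
where $\Omega = \bigcup_y \Omega_y \times \{y\}$. The key step is to upper-bound the second term by $M e^{-\gamma}$. Using the defining inequality of $T_\gamma$,
\begin{eqnarray}
P_{XY}(\Omega \cap T_\gamma) = \sum_y \sum_{x\in\Omega_y,\, (x,y)\in T_\gamma} P_{XY}(x,y) \le e^{-\gamma} \sum_y Q_Y(y)\, |\Omega_y| \le M e^{-\gamma},
\end{eqnarray}
where the last inequality uses the constraint on $\{\Omega_y\}$.

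Combining the two displays gives $\Pse(M) \ge P_{XY}(T_\gamma) - M e^{-\gamma}$ for every admissible $\{\Omega_y\}$ and hence, after taking the infimum on the right-hand side of the hypothesis-testing converse (which affects only the first inequality) and then the supremum over $\gamma \ge 0$, yields the claimed bound. I do not expect any real obstacle: the calculation is essentially a one-line application of the standard trick that converts a hypothesis-testing converse into an information-spectrum converse, and it works verbatim for any reference distribution $Q_Y$ (recovering the $Q_Y = P_Y$ version of \cite{han:book} as a special case).
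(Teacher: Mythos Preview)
Your proof is correct and follows exactly the route the paper indicates: the paper says this lemma follows by ``slightly weakening Lemma \ref{lemma:multi-source-one-shot-hypo-converse}'' without giving details, and your thresholding argument is precisely that weakening. One minor wording issue: in your first display you write $\Pse(M) \ge P_{XY}(\Omega^c)$ for an arbitrary admissible $\{\Omega_y\}$, whereas Lemma \ref{lemma:multi-source-one-shot-hypo-converse} only gives $\Pse(M) \ge \min_{\{\Omega_y\}} P_{XY}(\Omega^c)$; your final sentence about taking the infimum fixes this, so just rephrase the display to first show $P_{XY}(\Omega^c) \ge P_{XY}(T_\gamma) - M e^{-\gamma}$ for every admissible family and then invoke the lemma.
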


\textchange{By using the change-of-measure argument,
we can also derive the following converse bound.}
\begin{theorem} \label{theorem:multi-source-converse}
For any $Q_Y \in {\cal P}({\cal Y})$, we have
\begin{eqnarray}
\lefteqn{ - \log \Pse(M) } \\
&\le& \inf_{s > 0 \atop \tilde{\theta} \in \mathbb{R}, \vartheta \ge 0}
\bigg[ (1+s) \tilde{\theta}\left\{ H_{1+\tilde{\theta}}(P_{XY}|Q_Y) - H_{1+(1+s)\tilde{\theta}}(P_{XY}|Q_Y) \right\} \\
 && - (1+s) \log \left( 1- 2 e^{- \frac{- \vartheta R + (\tilde{\theta} + \vartheta(1+\tilde{\theta})) 
  H_{\tilde{\theta} + \vartheta(1+\tilde{\theta})}(P_{XY}|Q_Y) - (1+\vartheta)\tilde{\theta} H_{1+\tilde{\theta}}(P_{XY}|Q_Y)}{1+\vartheta}} \right) \bigg] / s
  \label{eq:bound-multi-source-exponential-converse-1} \\
&\le& \inf_{s > 0 \atop -1 < \tilde{\theta} < \theta(a(R )) }
\bigg[
 (1+s) \tilde{\theta}\left\{ H_{1+\tilde{\theta}}(P_{XY}|Q_Y) - H_{1+(1+s)\tilde{\theta}}(P_{XY}|Q_Y) \right\} \\
&&  - (1+s) \log \left( 1 - 2 e^{(\theta(a(R )) - \tilde{\theta}) a(R ) - \theta(a(R )) H_{1+\theta(a(R ))}(P_{XY}|Q_Y) + \tilde{\theta} H_{1+\tilde{\theta}}(P_{XY}|Q_Y)} \right)
\bigg] / s, 
 \label{eq:bound-multi-source-exponential-converse-2}
\end{eqnarray}
where $R = \log M$, 
and $\theta(a) = \theta^Q(a)$ and $a(R ) = a^Q(R )$ are the inverse functions defined in 
\eqref{eq:definition-inverse-theta-multi-one-shot} 
and \eqref{eq:definition-inverse-a-multi-one-shot} respectively.
\end{theorem}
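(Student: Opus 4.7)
My plan is to derive \eqref{eq:bound-multi-source-exponential-converse-1} from the information-spectrum converse of Lemma \ref{lemma:multi-source-han-converse} via a two-step change-of-measure argument, and then specialize to obtain \eqref{eq:bound-multi-source-exponential-converse-2}. Lemma \ref{lemma:multi-source-han-converse} yields, for every $\gamma \ge 0$,
\[
P_{XY}(A_\gamma) \le \Pse(M) + M e^{-\gamma}, \quad A_\gamma := \bigl\{(x,y) : \log \tfrac{Q_Y(y)}{P_{XY}(x,y)} > \gamma\bigr\},
\]
so the task is to lower-bound $P_{XY}(A_\gamma)$ in exponential form and to parameterize $\gamma$ in terms of $R = \log M$, $\tilde\theta$, and $\vartheta \ge 0$ so that $Me^{-\gamma}$ can be separated cleanly from the bound on $P_{XY}(A_\gamma)$.

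For the first tilt I would introduce
\[
\tilde P_{\tilde\theta}(x,y) := P_{XY}(x,y)^{1+\tilde\theta} Q_Y(y)^{-\tilde\theta}\, e^{\tilde\theta H_{1+\tilde\theta}(P_{XY}|Q_Y)},
\]
which is a probability distribution by the definition of $H_{1+\tilde\theta}(P_{XY}|Q_Y)$. Writing $P_{XY}(A_\gamma)$ as an expectation against $\tilde P_{\tilde\theta}$ and exploiting the pointwise bound $(Q_Y/P_{XY})^{\tilde\theta} > e^{\tilde\theta\gamma}$ on $A_\gamma$ (with the sign of $\tilde\theta$ chosen appropriately) gives
\[
P_{XY}(A_\gamma) \ge e^{\tilde\theta(\gamma - H_{1+\tilde\theta}(P_{XY}|Q_Y))}\, \tilde P_{\tilde\theta}(A_\gamma).
\]
A second tilt to $\tilde P_{(1+s)\tilde\theta}$ is then used to lower-bound $\tilde P_{\tilde\theta}(A_\gamma)$ away from zero. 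Under $\tilde P_{\tilde\theta}$ the mean of $\log(Q_Y/P_{XY})$ equals $-\frac{d}{d\theta}[\theta H_{1+\theta}(P_{XY}|Q_Y)]|_{\theta=\tilde\theta}$, and a Chebyshev-type deviation bound sandwiching this mean near the threshold $\gamma$ from both sides produces a factor of the form $1 - 2e^{-\chi}$, where $\chi$ is built from $\tilde\theta\{H_{1+(1+s)\tilde\theta}(P_{XY}|Q_Y) - H_{1+\tilde\theta}(P_{XY}|Q_Y)\}$ together with $\gamma$. The factor of $2$ arises because both tails around the mean must be controlled simultaneously.

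Combining the two steps, parameterizing $\gamma$ through $\vartheta$, and raising the resulting inequality to the $(1+s)$-th power before taking logarithms and dividing by $s$ produces \eqref{eq:bound-multi-source-exponential-converse-1}. For \eqref{eq:bound-multi-source-exponential-converse-2}, I would specialize $\vartheta$ so that $\tilde\theta + \vartheta(1+\tilde\theta) = \theta(a(R))$; the defining identity \eqref{eq:definition-inverse-a-multi-one-shot} of $a(R)$ then rewrites the exponent inside the logarithm in \eqref{eq:bound-multi-source-exponential-converse-1} into the claimed form, while $\vartheta \ge 0$ becomes the constraint $\tilde\theta < \theta(a(R))$.

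The main obstacle will be the Chebyshev-type step that produces the factor $1 - 2e^{-\chi}$ with precisely the exponent stated. One must carefully align the tilt parameter $(1+s)\tilde\theta$ with the threshold $\gamma$ so that $\chi$ matches the required Rényi-entropy combination, and handle the sign of $\tilde\theta$ to ensure that the tilt controls the correct tail. Once this exponent is pinned down, the reduction from \eqref{eq:bound-multi-source-exponential-converse-1} to \eqref{eq:bound-multi-source-exponential-converse-2} is purely algebraic through the inverse-function definitions \eqref{eq:definition-inverse-theta-multi-one-shot} and \eqref{eq:definition-inverse-a-multi-one-shot}.
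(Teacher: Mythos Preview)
Your proposal has a genuine structural gap. You start from Lemma \ref{lemma:multi-source-han-converse} applied to $P_{XY}$ and then try to lower-bound the tail probability $P_{XY}(A_\gamma)$ by a change of measure on the fixed set $A_\gamma$. But the pointwise inequality $(Q_Y/P_{XY})^{\tilde\theta} > e^{\tilde\theta\gamma}$ on $A_\gamma$ gives the direction you want only for $\tilde\theta \ge 0$; in the relevant converse regime $\tilde\theta < \theta(a(R)) < 0$, the inequality flips and your first displayed lower bound on $P_{XY}(A_\gamma)$ fails. More importantly, the $(1+s)/s$ structure in \eqref{eq:bound-multi-source-exponential-converse-1} does not arise from ``raising an inequality to the $(1+s)$-th power'': nothing in your outline explains why the first R\'enyi-entropy combination should appear with those coefficients.

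The paper's route is different. It does \emph{not} tilt on a tail event. Instead, for the tilted law $P_{XY,\tilde\rho}(x,y) \propto P_{XY}(x,y)^{1-\tilde\rho}Q_Y(y)^{\tilde\rho}$ (with $\tilde\rho=-\tilde\theta$), it sets $\alpha=\Pse(M|P_{XY})$ and $\beta=\Pse(M|P_{XY,\tilde\rho})$ for the \emph{same} code, and applies the data-processing inequality for the R\'enyi divergence of order $1+s$ to the binary ``error / no error'' variable:
\[
sD_{1+s}(P_{XY,\tilde\rho}\|P_{XY}) \ge \log\bigl(\beta^{1+s}\alpha^{-s}\bigr),
\]
which yields $-\log\alpha \le \bigl[\,\phi((1+s)\tilde\rho)-(1+s)\phi(\tilde\rho)-(1+s)\log\beta\,\bigr]/s$. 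This is precisely where the factor $(1+s)\tilde\theta\{H_{1+\tilde\theta}-H_{1+(1+s)\tilde\theta}\}$ and the division by $s$ come from. The information-spectrum converse (Lemma \ref{lemma:multi-source-han-converse}) is then applied to the \emph{tilted} distribution $P_{XY,\tilde\rho}$ to upper-bound $1-\beta$; a Chernoff bound with parameter $\sigma=-\vartheta$ on the resulting tail, followed by equating the two additive terms, produces the $2e^{-\chi}$ factor. Your plan conflates these two distinct mechanisms (R\'enyi DPI on the error event, and the spectrum converse under the tilted law) into a single tilt on a tail set, and that is why the exponents and the $(1+s)$ structure do not line up. The passage from \eqref{eq:bound-multi-source-exponential-converse-1} to \eqref{eq:bound-multi-source-exponential-converse-2} via the choice $\vartheta=(\theta(a(R))-\tilde\theta)/(1+\tilde\theta)$ is, as you say, purely algebraic once the first bound is established.
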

\begin{proof}
See Appendix \ref{appendix:theorem:multi-source-converse}.
\end{proof}

In particular, by taking $Q_Y = P_Y^{(1+\theta(a(R )))}$ in Theorem \ref{theorem:multi-source-converse}, we have the following.
\begin{corollary} \label{corollary:multi-source-converse}
We have
\begin{eqnarray}
\lefteqn{ - \log \Pse(M) } \\ 
&\le& \inf_{s > 0 \atop -1 < \tilde{\theta} < \theta(a(R )) }
\bigg[
 (1+s) \tilde{\theta} \left\{ H_{1+\tilde{\theta},1+\theta(a(R ))}(X|Y) - H_{1+(1+s)\tilde{\theta},1+\theta(a(R ))}(X|Y) \right\} \\
&&  - (1+s) \log \left( 1 - 2 e^{(\theta(a(R )) - \tilde{\theta}) a(R ) - \theta(a(R )) H_{1+\theta(a(R ))}^\uparrow(X|Y)
 + \tilde{\theta} H_{1+\tilde{\theta}, 1+\theta(a(R ))}(X|Y) } \right)
\bigg] / s, 
 \label{eq:bound-multi-source-exponential-Gallager-converse-2}
\end{eqnarray}
where 
$\theta(a) = \theta^\uparrow(a)$ and $a(R ) = a^\uparrow(R )$ 
are the inverse functions defined in \eqref{eq:definition-rho-inverse-Gallager-one-shot}
and \eqref{eq:definition-a-inverse-Gallager-one-shot}.
\end{corollary}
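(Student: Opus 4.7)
The plan is to specialize Theorem~\ref{theorem:multi-source-converse} to the choice $Q_Y = P_Y^{(1+\theta_0)}$, where $\theta_0 := \theta^\uparrow(a^\uparrow(R))$ and $R := \log M$, and to verify that this substitution reproduces \eqref{eq:bound-multi-source-exponential-Gallager-converse-2}. Since $R$ is fixed throughout, $\theta_0$ is a fixed scalar and $Q_Y$ is a genuine (fixed) distribution on ${\cal Y}$; the only nontrivial point is to check that the inverse functions $\theta^Q, a^Q$ associated with this particular $Q_Y$ agree with $\theta^\uparrow, a^\uparrow$ at the relevant point, so that the symbol ``$\theta(a(R))$'' appearing in \eqref{eq:bound-multi-source-exponential-converse-2} is simply $\theta_0$ after substitution.

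First I would rewrite the R\'enyi entropies in \eqref{eq:bound-multi-source-exponential-converse-2}. By the defining formula \eqref{eq:two-parameter-conditional-renyi} of the two-parameter conditional R\'enyi entropy,
\begin{eqnarray*}
H_{1+\tilde\theta}(P_{XY}|Q_Y) &=& H_{1+\tilde\theta,1+\theta_0}(X|Y), \\
H_{1+(1+s)\tilde\theta}(P_{XY}|Q_Y) &=& H_{1+(1+s)\tilde\theta,1+\theta_0}(X|Y).
\end{eqnarray*}
Moreover, because $Q_Y = P_Y^{(1+\theta_0)}$ is the maximizer in \eqref{eq:upper-conditional-renyi} at order $1+\theta_0$, Statement~\ref{item:multi-terminal-single-shot-property-7} of Lemma~\ref{lemma:multi-terminal-single-shot-property} yields $H_{1+\theta_0}(P_{XY}|Q_Y) = H_{1+\theta_0,1+\theta_0}(X|Y) = H_{1+\theta_0}^\uparrow(X|Y)$.

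Next I would verify the consistency of the inverse functions. Because $P_Y^{(1+\theta)}$ pointwise maximizes $Q_Y' \mapsto H_{1+\theta}(P_{XY}|Q_Y')$, the envelope theorem gives
\begin{eqnarray*}
\frac{d[\theta H_{1+\theta}^\uparrow(X|Y)]}{d\theta}\bigg|_{\theta=\theta_0}
= \frac{d[\theta H_{1+\theta}(P_{XY}|Q_Y)]}{d\theta}\bigg|_{\theta=\theta_0} = a^\uparrow(R),
\end{eqnarray*}
the last equality being \eqref{eq:definition-rho-inverse-Gallager-one-shot} together with $\theta_0 = \theta^\uparrow(a^\uparrow(R))$. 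Hence $\theta^Q(a^\uparrow(R)) = \theta_0$. Plugging this together with $H_{1+\theta_0}(P_{XY}|Q_Y) = H_{1+\theta_0}^\uparrow(X|Y)$ into the defining identity \eqref{eq:definition-inverse-a-multi-one-shot} of $a^Q$ reduces it to the defining identity \eqref{eq:definition-a-inverse-Gallager-one-shot} of $a^\uparrow$, and monotonicity (Statements~\ref{item:multi-terminal-single-shot-property-1} and \ref{item:multi-terminal-single-shot-property-1-b} of Lemma~\ref{lemma:multi-terminal-single-shot-property}) forces uniqueness, so $a^Q(R) = a^\uparrow(R)$ and $\theta^Q(a^Q(R)) = \theta_0$.

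Inserting these rewrites into \eqref{eq:bound-multi-source-exponential-converse-2} turns the exponent inside $\log(1-2e^{\cdot})$ into $(\theta_0-\tilde\theta)\,a^\uparrow(R) - \theta_0\, H_{1+\theta_0}^\uparrow(X|Y) + \tilde\theta\, H_{1+\tilde\theta,1+\theta_0}(X|Y)$ and the range $-1<\tilde\theta<\theta^Q(a^Q(R))$ into $-1<\tilde\theta<\theta_0$, matching \eqref{eq:bound-multi-source-exponential-Gallager-converse-2} term by term. The main obstacle is the consistency check in the previous paragraph: because $Q_Y$ itself depends on $R$ through $\theta_0$, one must invoke the envelope identity at the single point $\theta=\theta_0$ rather than along a curve, and then leverage uniqueness of the inverse functions to transfer the defining equations from $\theta^Q, a^Q$ to $\theta^\uparrow, a^\uparrow$. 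Everything else is direct substitution.
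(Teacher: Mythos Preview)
Your proposal is correct and follows the same route as the paper: both specialize Theorem~\ref{theorem:multi-source-converse} with $Q_Y = P_Y^{(1+\theta(a(R)))}$, where $\theta(a(R)) = \theta^\uparrow(a^\uparrow(R))$. The paper states this substitution without further justification, whereas you carefully verify that the inverse functions $\theta^Q, a^Q$ built from this particular $Q_Y$ coincide at the relevant point with $\theta^\uparrow, a^\uparrow$; your envelope-theorem argument is precisely the single-shot analogue of the identity the paper records in \eqref{eq:definition-theta-inverse-markov-optimal-Q} (which it derives from Statement~\ref{item:multi-terminal-single-shot-property-8} of Lemma~\ref{lemma:multi-terminal-single-shot-property}).
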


\begin{remark}\label{R10-2-4}
Here, it is better to discuss the possibility for extension to the continuous case.
As explained in Remark \ref{R10-2-3},
we can define the information quantities
to the case when ${\cal Y}$ is continuous but ${\cal X}$ is a discrete finite set.
The discussions in this subsection still hold even in this continuous case.
In particular, in the $n$-i.i.d. extension case with this continuous setting,
Lemma \ref{lemma:multi-source-tight-bound} and Corollary \ref{corollary:multi-source-converse}
hold when the information measures are replaced by 
$n$ times of the single-shot information measures.
\end{remark}

\subsection{Finite-Length Bounds for Markov Source} \label{subsection:multi-source-finite-markov}

\textchange{In this subsection, we derive several finite-length bounds for Markovian source with a computable
form. Unfortunately, it is not easy to evaluate how tight those bounds are only with their formula. 
Their tightness will be discussed by considering the asymptotic limit in the remaining subsections of this section.
Since we assume the irreducibility for the transition matrix describing the Markovian chain, the following bound
hold with any initial distribution. }

\textchange{To derive a lower bounds on $- \log \barPse(M_n)$ in terms of the R\'enyi entropy of transition matrix, 
we substitute the formula for the R\'enyi entropy given in Lemma \ref{lemma:mult-terminal-finite-evaluation-down-conditional-renyi}
into Lemma \ref{lemma:multi-source-loose-bound}. Then, we can derive the following achievability bound.}

\begin{theorem}[Direct, Ass. 1] \label{theorem:multi-source-finite-marko-direct-assumptiton-1}
Suppose that transition matrix $W$ satisfies Assumption \ref{assumption-Y-marginal-markov}.
Let $R := \frac{1}{n} \log M_n$. Then, for every $n \ge 1$, we have
\begin{eqnarray}
- \log \barPse^{(n)}(M_n) \ge 
\sup_{-1 \le \theta \le 0} \left[- \theta n R + (n-1) \theta H_{1+\theta}^{\downarrow,W}(X|Y) + \underline{\delta}(\theta) \right],
\end{eqnarray}
where $\underline{\delta}(\theta)$ is given by \eqref{eq:definition-underline-delta}.
\end{theorem}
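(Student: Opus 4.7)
The plan is to chain together two results already at our disposal: the single-shot loose exponential achievability bound in Lemma~\ref{lemma:multi-source-loose-bound}, applied to the $n$-fold source, and the finite-$n$ two-sided evaluation of the lower conditional R\'enyi entropy for a Markov chain given in Lemma~\ref{lemma:mult-terminal-finite-evaluation-down-conditional-renyi}. No new ingredient beyond these two lemmas is needed.

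First I would instantiate Lemma~\ref{lemma:multi-source-loose-bound} with the correlated source $(X^n,Y^n)$ and code size $M_n$, which yields
\begin{eqnarray}
\barPse(M_n) \le \inf_{-1 \le \theta \le 0} M_n^{\theta} \, e^{-\theta H_{1+\theta}^\downarrow(X^n|Y^n)}.
\end{eqnarray}
Taking $-\log$ and swapping $\inf$ for $\sup$ converts this into the additive form
\begin{eqnarray}
-\log \barPse(M_n) \ge \sup_{-1 \le \theta \le 0} \left[-\theta \log M_n + \theta H_{1+\theta}^\downarrow(X^n|Y^n)\right].
\end{eqnarray}

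Next, under Assumption~\ref{assumption-Y-marginal-markov}, Lemma~\ref{lemma:mult-terminal-finite-evaluation-down-conditional-renyi} provides the lower bound
\begin{eqnarray}
\theta H_{1+\theta}^\downarrow(X^n|Y^n) \ge (n-1)\theta H_{1+\theta}^{\downarrow,W}(X|Y) + \underline{\delta}(\theta),
\end{eqnarray}
valid uniformly in $\theta \in [-1,0]$ (with $\underline{\delta}(\theta)$ as in \eqref{eq:definition-underline-delta}). Note that it is exactly the lower bound side of the lemma, involving $\underline{\delta}$ rather than $\overline{\delta}$, that we need here; the sign of $\theta$ is already absorbed into the statement of Lemma~\ref{lemma:mult-terminal-finite-evaluation-down-conditional-renyi}, so no case split is required.

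Finally, substituting this lower bound into the previous display and using $\log M_n = nR$ gives
\begin{eqnarray}
-\log \barPse(M_n) \ge \sup_{-1 \le \theta \le 0} \left[-\theta n R + (n-1)\theta H_{1+\theta}^{\downarrow,W}(X|Y) + \underline{\delta}(\theta)\right],
\end{eqnarray}
which is precisely the claimed inequality. Since both inputs are already proved, there is no real obstacle; the only conceptual point worth flagging is that Lemma~\ref{lemma:multi-source-loose-bound}, being looser than Lemma~\ref{lemma:multi-source-tight-bound}, is the appropriate single-shot bound to use here because the lower conditional R\'enyi entropy is exactly what admits the single-letter Markov evaluation under Assumption~\ref{assumption-Y-marginal-markov}, whereas the upper one requires the stronger Assumption~\ref{assumption-memory-through-Y} and will be used in Theorem~\ref{theorem:multi-source-finite-marko-direct-assumptiton-2}.
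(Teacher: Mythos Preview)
Your proposal is correct and follows exactly the approach the paper itself indicates: the paper simply states that the theorem follows from Lemma~\ref{lemma:multi-source-loose-bound} and Lemma~\ref{lemma:mult-terminal-finite-evaluation-down-conditional-renyi}, and your write-up is precisely the detailed execution of that chaining.
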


When ${\cal Y}$ is singleton, from Lemma \ref{lemma:tight-bound-singleton-Y} and a special case of Lemma \ref{lemma:mult-terminal-finite-evaluation-down-conditional-renyi}, 
we have the following achievability bound.
\begin{theorem}[Direct, Singleton] \label{theorem:tight-finite-bound-singleton-Y}
Let $R := \frac{1}{n} \log M_n$. Then, for every $n \ge 1$, we have
\begin{eqnarray}
- \log \Pe^{(n)}(M_n) \ge \sup_{-1 < \theta \le 0} \frac{- n \theta R + (n-1) \theta H_{1+\theta}^W(X) + \underline{\delta}(\theta)}{1+\theta}.
\end{eqnarray}
\end{theorem}

\textchange{To derive an upper bound on $- \log \Pse(M_n)$ in terms of the R\'enyi entropy of transition matrix, we substitute 
the formula for the R\'enyi entropy given in Lemma \ref{lemma:mult-terminal-finite-evaluation-down-conditional-renyi}
to Theorem \ref{theorem:multi-source-converse}. Then, we have the following converse bound.}

\begin{theorem}[Converse, Ass. 1] \label{theorem:multi-source-finite-markov-converse-assumptiotn-1}
Suppose that transition matrix $W$ satisfies Assumption \ref{assumption-Y-marginal-markov}.
Let $R := \frac{1}{n} \log M_n$. For any $H^W(X|Y) < R < H_0^{\downarrow,W}(X|Y)$, we have
\begin{eqnarray}
\lefteqn{ - \log \Pse^{(n)}(M_n) } \\ 
&\le&
 \inf_{s > 0 \atop -1 < \tilde{\theta} < \theta(a(R )) } 
 \bigg[ (n-1) (1+s) \tilde{\theta} \left\{ H_{1+\tilde{\theta}}^{\downarrow,W}(X|Y) - H_{1+(1+s)\tilde{\theta}}^{\downarrow,W}(X|Y) \right\} 
  + \delta_1 \\
  && - (1+s) \log \left(1 - 2e^{(n-1)[ (\theta(a(R )) - \tilde{\theta}) a(R ) - \theta(a(R )) H_{1+\theta(a(R ))}^{\downarrow,W}(X|Y)
   + \tilde{\theta} H_{1+\tilde{\theta}}^{\downarrow,W}(X|Y) ] + \delta_2} \right) \bigg] / s,
\end{eqnarray}
where 
$\theta(a) = \theta^\downarrow(a)$ and $a(R ) = a^\downarrow(R )$ are the inverse functions defined by 
\eqref{eq:definition-theta-inverse-multi-markov} 
and \eqref{eq:definition-a-inverse-multi-markov} respectively, 
\begin{eqnarray}
\delta_1 &:=& (1+s) \overline{\delta}(\tilde{\theta}) - \underline{\delta}((1+s)\tilde{\theta}), \\
\delta_2 &:=& \frac{(\theta(a(R )) - \tilde{\theta}) R - (1+\tilde{\theta}) \underline{\delta}(\theta(a(R ))) + (1+\theta(a(R ))) \overline{\delta}(\tilde{\theta})}{1 + \theta(a(R ))},
\end{eqnarray}
and $\overline{\delta}(\cdot)$ and $\underline{\delta}(\cdot)$ are given by \eqref{eq:definition-overline-delta}
and \eqref{eq:definition-underline-delta}, respectively.
\end{theorem}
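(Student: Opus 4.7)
The plan is to apply the single-shot converse \eqref{eq:bound-multi-source-exponential-converse-1} of Theorem \ref{theorem:multi-source-converse} to the $n$-fold source $(X^n,Y^n)$ with the choice $Q_{Y^n}=P_{Y^n}$. Under this choice $H_{1+\theta}(P_{X^nY^n}|P_{Y^n})$ reduces to $H_{1+\theta}^\downarrow(X^n|Y^n)$, so for every $s>0$, $\tilde\theta\in(-1,0)$, and $\vartheta\ge 0$ the bound yields an inequality
\begin{align*}
-\log\Pse(M_n)\le \bigl[(1+s)\tilde\theta\{H_{1+\tilde\theta}^\downarrow(X^n|Y^n)-H_{1+(1+s)\tilde\theta}^\downarrow(X^n|Y^n)\}-(1+s)\log(1-2 e^{-E_n/(1+\vartheta)})\bigr]/s,
\end{align*}
where $E_n$ is the numerator in the exponent of \eqref{eq:bound-multi-source-exponential-converse-1} applied to $(X^n,Y^n)$ at rate $\log M_n=nR$. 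I deliberately use \eqref{eq:bound-multi-source-exponential-converse-1} rather than its simplification \eqref{eq:bound-multi-source-exponential-converse-2}, because the inverse functions appearing in the target bound are the transition-matrix inverses $\theta^\downarrow$ and $a^\downarrow$ defined in \eqref{eq:definition-theta-inverse-multi-markov}--\eqref{eq:definition-a-inverse-multi-markov}, whereas \eqref{eq:bound-multi-source-exponential-converse-2} would give the $n$-letter inverses induced by $H_{1+\theta}^\downarrow(X^n|Y^n)$, which are not what the statement asserts.

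The crucial step is to specialize the free parameter $\vartheta$ so that the inner order of the R\'enyi entropy inside $E_n$ matches the transition-matrix quantity. Writing $\theta^{*}:=\theta^\downarrow(a^\downarrow(R))$ for $R=\tfrac{1}{n}\log M_n$, I set $\vartheta=(\theta^{*}-\tilde\theta)/(1+\tilde\theta)$, which is non-negative precisely because $-1<\tilde\theta<\theta^{*}$. With this choice $\tilde\theta+\vartheta(1+\tilde\theta)=\theta^{*}$ and $1+\vartheta=(1+\theta^{*})/(1+\tilde\theta)$, so the exponent reads
\begin{align*}
\frac{-\vartheta\,nR+\theta^{*}H_{1+\theta^{*}}^\downarrow(X^n|Y^n)-(1+\vartheta)\tilde\theta\,H_{1+\tilde\theta}^\downarrow(X^n|Y^n)}{1+\vartheta}.
\end{align*}

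Next, I invoke Lemma \ref{lemma:mult-terminal-finite-evaluation-down-conditional-renyi} to replace each $n$-letter lower conditional R\'enyi entropy by its transition-matrix counterpart. In the first (polynomial) term, since $(1+s)\tilde\theta<0$, the upper bound $\overline{\delta}(\tilde\theta)$ on $\tilde\theta H_{1+\tilde\theta}^\downarrow(X^n|Y^n)$ and the lower bound $\underline{\delta}((1+s)\tilde\theta)$ on $(1+s)\tilde\theta H_{1+(1+s)\tilde\theta}^\downarrow(X^n|Y^n)$ combine to produce $(n-1)(1+s)\tilde\theta\{H_{1+\tilde\theta}^{\downarrow,W}(X|Y)-H_{1+(1+s)\tilde\theta}^{\downarrow,W}(X|Y)\}+\delta_1$. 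In the logarithmic term, since $-\log(1-x)$ is increasing in $x$, I need an upper bound on the exponent; applying Lemma \ref{lemma:mult-terminal-finite-evaluation-down-conditional-renyi} with a lower bound $\underline{\delta}(\theta^{*})$ on $\theta^{*}H_{1+\theta^{*}}^\downarrow(X^n|Y^n)$ and an upper bound $\overline{\delta}(\tilde\theta)$ on $\tilde\theta H_{1+\tilde\theta}^\downarrow(X^n|Y^n)$ does the job.

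The main obstacle will be the algebraic consolidation: after splitting $\vartheta nR=(n-1)\vartheta R+\vartheta R$ and dividing by $1+\vartheta$, one must verify that the $(n-1)$ part collapses, via the defining identity $R=(1+\theta^{*})a^\downarrow(R)-\theta^{*}H_{1+\theta^{*}}^{\downarrow,W}(X|Y)$, to the compact form $(n-1)[(\theta^{*}-\tilde\theta)a^\downarrow(R)-\theta^{*}H_{1+\theta^{*}}^{\downarrow,W}(X|Y)+\tilde\theta H_{1+\tilde\theta}^{\downarrow,W}(X|Y)]$, and that the $O(1)$ boundary corrections telescope into exactly the stated $\delta_2=[(\theta^{*}-\tilde\theta)R-(1+\tilde\theta)\underline{\delta}(\theta^{*})+(1+\theta^{*})\overline{\delta}(\tilde\theta)]/(1+\theta^{*})$. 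Once this identification is confirmed, taking the infimum over $s>0$ and $\tilde\theta\in(-1,\theta^{*})$ finishes the proof.
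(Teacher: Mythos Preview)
Your proposal is correct and follows essentially the same approach as the paper: apply \eqref{eq:bound-multi-source-exponential-converse-1} of Theorem~\ref{theorem:multi-source-converse} with $Q_{Y^n}=P_{Y^n}$, specialize $\vartheta=(\theta(a(R))-\tilde\theta)/(1+\tilde\theta)$ on the restricted range $-1<\tilde\theta<\theta(a(R))$, and then invoke Lemma~\ref{lemma:mult-terminal-finite-evaluation-down-conditional-renyi} to pass from the $n$-letter quantities to their transition-matrix counterparts. Your explicit verification of the algebraic consolidation (the collapse of the $(n-1)$ part via the identity $R=(1+\theta^{*})a(R)-\theta^{*}H_{1+\theta^{*}}^{\downarrow,W}(X|Y)$ and the emergence of $\delta_1,\delta_2$) is more detailed than what the paper provides, but matches it exactly.
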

\begin{proof}
We first use \eqref{eq:bound-multi-source-exponential-converse-1} of Theorem \ref{theorem:multi-source-converse} 
for $Q_{Y^n} = P_{Y^n}$ and Lemma \ref{lemma:mult-terminal-finite-evaluation-down-conditional-renyi}.
Then, we restrict the range of $\tilde{\theta}$ 
as $-1 < \tilde{\theta} < \theta(a(R ))$ and set $\vartheta = \frac{\theta(a(R )) - \tilde{\theta}}{1+\tilde{\theta}}$.
Then, we have the assertion of the theorem.
\end{proof}

Next, we derive tighter bounds under Assumption \ref{assumption-memory-through-Y}.
\textchange{To derive a lower bound on $- \log \barPse(M_n)$ in terms of the R\'enyi entropy of
transition matrix, we substitute the formula for the R\'enyi entropy in Lemma \ref{lemma:multi-terminal-finite-evaluation-upper-conditional-renyi}
to Lemma \ref{lemma:multi-source-tight-bound}. Then, we have the following achievability bound.}

\begin{theorem}[Direct, Ass. 2] \label{theorem:multi-source-finite-marko-direct-assumptiton-2}
Suppose that transition matrix $W$ satisfies Assumption \ref{assumption-memory-through-Y}. 
Let $R := \frac{1}{n} \log M_n$. Then we have
\begin{eqnarray}
- \log \barPse^{(n)}(M_n) \ge 
\sup_{-\frac{1}{2} \le \theta \le 0}  \frac{-\theta nR + (n-1) \theta H_{1+\theta}^{\uparrow,W}(X|Y) }{1+\theta} + \underline{\xi}(\theta),
\end{eqnarray}
where $\underline{\xi}(\theta)$ is given by \eqref{eq:definition-underline-xi}.
\end{theorem}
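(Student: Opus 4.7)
The plan is to combine two existing ingredients: (i) the single-shot Gallager-type achievability bound of Lemma \ref{lemma:multi-source-tight-bound}, instantiated on the $n$-fold extension $(X^n,Y^n)$, and (ii) the finite-block evaluation of the upper conditional R\'enyi entropy given in Lemma \ref{lemma:multi-terminal-finite-evaluation-upper-conditional-renyi}, which replaces $H_{1+\theta}^\uparrow(X^n|Y^n)$ by its transition-matrix counterpart $H_{1+\theta}^{\uparrow,W}(X|Y)$ plus a boundary term $\underline{\xi}(\theta)$. Since the single-letterization in Lemma \ref{lemma:multi-terminal-finite-evaluation-upper-conditional-renyi} requires Assumption \ref{assumption-memory-through-Y}, the hypothesis of the theorem is exactly what the argument needs.

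Concretely, I would first apply Lemma \ref{lemma:multi-source-tight-bound} to the block coding problem to obtain
\[
\barPse(M_n) \le \inf_{-\frac{1}{2}\le\theta\le 0} M_n^{\frac{\theta}{1+\theta}} e^{-\frac{\theta}{1+\theta} H_{1+\theta}^\uparrow(X^n|Y^n)}.
\]
Taking $-\log$ of both sides converts the $\inf$ into a $\sup$, and substituting $\log M_n = nR$ yields
\[
-\log\barPse(M_n) \ge \sup_{-\frac{1}{2}\le\theta\le 0} \left[ -\frac{\theta\,nR}{1+\theta} + \frac{\theta}{1+\theta} H_{1+\theta}^\uparrow(X^n|Y^n) \right].
\]
Then I would invoke the left-hand inequality of Lemma \ref{lemma:multi-terminal-finite-evaluation-upper-conditional-renyi},
\[
\frac{\theta}{1+\theta} H_{1+\theta}^\uparrow(X^n|Y^n) \ge (n-1)\frac{\theta}{1+\theta} H_{1+\theta}^{\uparrow,W}(X|Y) + \underline{\xi}(\theta),
\]
and plug this lower bound inside the supremum. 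Combining the two terms over the common denominator $1+\theta$ reproduces exactly the expression claimed in the theorem.

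There is essentially no obstacle; the only subtle point worth verifying is the direction of the inequality. Since for $\theta\in[-1/2,0]$ the factor $\theta/(1+\theta)$ is non-positive, one might worry about a sign flip when lower-bounding $H_{1+\theta}^\uparrow(X^n|Y^n)$ itself. However, Lemma \ref{lemma:multi-terminal-finite-evaluation-upper-conditional-renyi} is already stated for the product $\frac{\theta}{1+\theta} H_{1+\theta}^\uparrow(X^n|Y^n)$ rather than for $H_{1+\theta}^\uparrow$ in isolation, so the sign is absorbed into the definitions of $\underline{\xi}(\theta)$ and $\overline{\xi}(\theta)$. Hence no case analysis over $\theta$ is necessary, and the substitution goes through uniformly in $\theta\in[-1/2,0]$.
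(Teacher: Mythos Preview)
Your proposal is correct and follows exactly the paper's approach: the paper simply states that the theorem follows from Lemma \ref{lemma:multi-source-tight-bound} and Lemma \ref{lemma:multi-terminal-finite-evaluation-upper-conditional-renyi}, and your argument spells out precisely how these two lemmas combine. Your remark that Lemma \ref{lemma:multi-terminal-finite-evaluation-upper-conditional-renyi} is already phrased for the product $\frac{\theta}{1+\theta} H_{1+\theta}^\uparrow(X^n|Y^n)$, so that no sign check is needed, is the only non-mechanical point and you handled it correctly.
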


\textchange{Finally, to derive an upper bound on $- \log \Pse(M_n)$ in terms of the R\'enyi entropy for transition matrix, 
we substitute the formula for the R\'enyi entropy in Lemma \ref{lemma:multi-terminal-finite-evaluation-two-parameter-conditional-renyi}
to Theorem \ref{theorem:multi-source-converse} for $Q_{Y^n} = P_{Y^n}^{(1+\theta(a(R )))}$. Then, we can derive the following converse bound.}

\begin{theorem}[Converse, Ass. 2] \label{theorem:multi-source-finite-marko-converse-assumptiton-2}
Suppose that transition matrix $W$ satisfies Assumption \ref{assumption-memory-through-Y}. 
Let $R := \frac{1}{n} \log M_n$. For any $H^W(X|Y) < R < H_0^{\uparrow,W}(X|Y)$, we have
\begin{eqnarray}
\lefteqn{ - \log \Pse^{(n)}(M_n) } \\
&\le& \inf_{s > 0 \atop -1 < \tilde{\theta} < \theta(a(R )) } \bigg[
 (n-1)(1+s) \tilde{\theta} \left\{ H_{1+\tilde{\theta},1+\theta(a(R ))}^W(X|Y) - H_{1+(1+s)\tilde{\theta},1+\theta(a(R ))}^W(X|Y) \right\} + \delta_1 \\
 && - (1+s)\log \left( 1 - 2e^{(n-1)[(\theta(a(R ))-\tilde{\theta}) a(R ) - \theta(a(R )) H_{1+\theta(a(R ))}^{\uparrow,W}(X|Y) 
  + \tilde{\theta} H_{1+\tilde{\theta},1+\theta(a(R ))}^W(X|Y) ] + \delta_2} \right)
\bigg] / s,
\end{eqnarray}
where 
$\theta(a) = \theta^\uparrow(a)$ and $a(R ) = a^\uparrow(R )$ 
are the inverse functions defined by \eqref{eq:definition-theta-inverse-markov-optimal-Q} 
and \eqref{eq:definition-a-inverse-markov-optimal-Q} respectively, 
\begin{eqnarray}
\delta_1 &:=& (1+s) \overline{\zeta}(\tilde{\theta}, \theta(a(R ))) - \underline{\zeta}((1+s)\tilde{\theta},\theta(a(R ))) , \\
\delta_2 &:=& \frac{(\theta(a(R )) - \tilde{\theta}) R - (1+\tilde{\theta}) \underline{\zeta}(\theta(a(R )), \theta(a(R ))) + (1+\theta(a(R ))) \overline{\zeta}(\tilde{\theta},\theta(a(R )))}{1+\theta(a(R ))},
\end{eqnarray}
and $\overline{\zeta}(\cdot,\cdot)$ and $\underline{\zeta}(\cdot,\cdot)$ 
are given by \eqref{eq:definition-overline-zeta}-\eqref{eq:definition-underline-zeta-2}.
\end{theorem}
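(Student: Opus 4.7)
The plan mirrors the proof of Theorem \ref{theorem:multi-source-finite-markov-converse-assumptiotn-1}, but with the sharper conditioning distribution $Q_{Y^n} = P_{Y^n}^{(1+\theta(a(R)))}$ that is made possible by Assumption \ref{assumption-memory-through-Y}. I would start from bound \eqref{eq:bound-multi-source-exponential-converse-1} of Theorem \ref{theorem:multi-source-converse} applied to $(X^n, Y^n)$ with $\log M_n = nR$. With this choice of $Q_{Y^n}$, definition \eqref{eq:two-parameter-conditional-renyi} identifies $H_{1+\theta}(P_{X^n Y^n}|Q_{Y^n})$ with $H_{1+\theta,1+\theta(a(R))}(X^n|Y^n)$, and the special case $\theta = \theta(a(R))$ collapses to $H_{1+\theta(a(R))}^\uparrow(X^n|Y^n)$ via Statement \ref{item:multi-terminal-single-shot-property-7} of Lemma \ref{lemma:multi-terminal-single-shot-property}. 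I would next set $\vartheta = (\theta(a(R))-\tilde{\theta})/(1+\tilde{\theta})$, which is non-negative under the restriction $-1 < \tilde{\theta} < \theta(a(R))$ and forces $\tilde{\theta} + \vartheta(1+\tilde{\theta}) = \theta(a(R))$, so that the middle R\'enyi entropy inside the logarithm of \eqref{eq:bound-multi-source-exponential-converse-1} becomes exactly $H_{1+\theta(a(R))}^\uparrow(X^n|Y^n)$.

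All remaining work is to replace the $n$-fold quantities by their transition-matrix counterparts. Lemma \ref{lemma:multi-terminal-finite-evaluation-two-parameter-conditional-renyi} sandwiches $\theta H_{1+\theta, 1+\theta(a(R))}(X^n|Y^n)$ between $(n-1)\theta H_{1+\theta, 1+\theta(a(R))}^W(X|Y)+\underline{\zeta}(\theta,\theta(a(R)))$ and the same expression with $\overline{\zeta}$, and the same lemma with both arguments equal to $\theta(a(R))$ handles $\theta(a(R)) H_{1+\theta(a(R))}^\uparrow(X^n|Y^n)$ through Statement \ref{item:multi-terminal-markov-property-6} of Lemma \ref{lemma:multi-terminal-markov-property}. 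For each occurrence I would pick whichever side of the sandwich inflates the RHS of the converse, split the block rate as $nR = (n-1)R + R$, and substitute the defining identity $(1+\theta(a(R)))a(R) - \theta(a(R)) H_{1+\theta(a(R))}^{\uparrow,W}(X|Y) = R$ from \eqref{eq:definition-a-inverse-markov-optimal-Q} to recast the rate term in terms of $a(R)$. The non-$(n-1)$ residue coming from $\overline{\zeta}$, $\underline{\zeta}$, and the leftover $R$ term gathers, after division by $1+\vartheta = (1+\theta(a(R)))/(1+\tilde{\theta})$, into $\delta_2$, while the quadratic-difference prefactor $(1+s)\tilde{\theta}\{H_{1+\tilde{\theta}} - H_{1+(1+s)\tilde{\theta}}\}$ is handled separately by the same lemma at orders $\tilde{\theta}$ and $(1+s)\tilde{\theta}$ (both with second argument $\theta(a(R))$) and yields $\delta_1 = (1+s)\overline{\zeta}(\tilde{\theta},\theta(a(R))) - \underline{\zeta}((1+s)\tilde{\theta},\theta(a(R)))$.

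The main technical difficulty is the sign bookkeeping. The coefficients multiplying the various $n$-fold R\'enyi entropies, namely $\tilde{\theta}$, $(1+s)\tilde{\theta}$, $\theta(a(R))$, and $-(1+\vartheta)\tilde{\theta}$, can be positive or negative in the relevant regime $-1 < \tilde{\theta} < \theta(a(R)) < 0$, so one must choose $\overline{\zeta}$ versus $\underline{\zeta}$ on a term-by-term basis to ensure every step inflates the converse RHS. The constraint $R < H_0^{\uparrow,W}(X|Y)$ ensures that the inverse functions of \eqref{eq:definition-theta-inverse-markov-optimal-Q}--\eqref{eq:definition-a-inverse-markov-optimal-Q} are well-defined, and the derivative identity \eqref{eq:derivative-second-argument} justifies using the transition-matrix-level $\theta(a(R))$ consistently in both the conditioning exponent of $Q_{Y^n}$ and the two-parameter R\'enyi entropy. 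After this bookkeeping the algebra collapses mechanically to the stated expressions for $\delta_1, \delta_2$, completing the proof.
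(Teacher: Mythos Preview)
Your proposal is correct and follows essentially the same approach as the paper: apply bound \eqref{eq:bound-multi-source-exponential-converse-1} of Theorem \ref{theorem:multi-source-converse} with $Q_{Y^n}=P_{Y^n}^{(1+\theta(a(R)))}$, invoke Lemma \ref{lemma:multi-terminal-finite-evaluation-two-parameter-conditional-renyi}, then restrict $-1<\tilde{\theta}<\theta(a(R))$ and set $\vartheta=(\theta(a(R))-\tilde{\theta})/(1+\tilde{\theta})$. Your write-up actually supplies more of the sign-bookkeeping and $\delta_1,\delta_2$ algebra than the paper's terse proof does.
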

\begin{proof}
We first use \eqref{eq:bound-multi-source-exponential-converse-1} of 
Theorem \ref{theorem:multi-source-converse} for $Q_{Y^n} = P_{Y^n}^{(1+\theta(a(R )))}$
and Lemma \ref{lemma:multi-terminal-finite-evaluation-two-parameter-conditional-renyi}.
Then, we restrict the range of $\tilde{\theta}$ 
as $-1 < \tilde{\theta} < \theta(a(R ))$ and set $\vartheta = \frac{\theta(a(R )) - \tilde{\theta}}{1+\tilde{\theta}}$.
Then, we have the assertion of the theorem.
\end{proof}

\subsection{Second Order} \label{subsection:multi-source-second-order}

By applying the central limit theorem
to Lemma \ref{lemma:multi-source-han-direct} 
(cf.~\cite[Theorem 27.4, Example 27.6]{billingsley-book})
and Lemma \ref{lemma:multi-source-han-converse}
for $Q_Y = P_Y$,
and by using Theorem \ref{theorem:multi-markov-variance},
 we have the following.
\begin{theorem} \label{theorem:source-coding-second-order}
Suppose that transition matrix $W$ on ${\cal X}\times {\cal Y}$ satisfies Assumption \ref{assumption-Y-marginal-markov}.
For arbitrary $\varepsilon \in (0,1)$, we have
\begin{eqnarray}
M(n,\varepsilon)= \bar{M}(n,\varepsilon)+o(\sqrt{n})
= n H^W(X|Y)+ \sqrt{\san{V}^W(X|Y)}\sqrt{n}+o(\sqrt{n}).
\end{eqnarray}
\end{theorem} 
\begin{proof}
The central limit theorem for Markovian process cf.~\cite[Theorem 27.4, Example 27.6]{billingsley-book}
guarantees that the random variable $(-\log P_{X^n|Y^n}(X^n|Y^n) - n H^W(X|Y)) / \sqrt{n}$ 
asymptotically obeys the normal distribution with average $0$ and the variance $\san{V}^W(X|Y)$,
where we use Theorem \ref{theorem:multi-markov-variance} to show that the limit of the variance is given by
$\san{V}^W(X|Y)$. Let $R = \sqrt{\san{V}^W(X|Y)} \Phi^{-1}(1-\varepsilon)$.
Substituting $M = e^{n H^W(X|Y) + \sqrt{n}R}$ and $\gamma = n H^W(X|Y) + \sqrt{n}R - n^{\frac{1}{4}}$ in
Lemma \ref{lemma:multi-source-han-direct}, we have 
\begin{eqnarray} \label{eq:second-order-proof-1}
\lim_{n \to \infty} \barPse^{(n)}\left( e^{n H^W(X|Y) + \sqrt{n}R} \right) \le \varepsilon.
\end{eqnarray}
On the other hand, substituting $M = e^{n H^W(X|Y) + \sqrt{n}R}$ and $\gamma = n H^W(X|Y) + \sqrt{n}R + n^{\frac{1}{4}}$ in
Lemma \ref{lemma:multi-source-han-converse} for $Q_Y = P_Y$, we have
\begin{eqnarray} \label{eq:second-order-proof-2}
\lim_{n \to \infty} \Pse^{(n)}\left( e^{n H^W(X|Y) + \sqrt{n}R} \right) \ge \varepsilon.
\end{eqnarray}
Combining \eqref{eq:second-order-proof-1} and \eqref{eq:second-order-proof-2}, we have the statement of the theorem. 
\end{proof}

From the above theorem,
the (first-order) compression limit of source coding with side-information for  a Markov source 
under Assumption \ref{assumption-Y-marginal-markov} is given by\footnote{Although the compression limit of source coding with side-information for a Markov chain is known more generally \cite{cover:75}, 
we need Assumption \ref{assumption-Y-marginal-markov} to get a single letter characterization.}
\begin{eqnarray}
 \lim_{n\to \infty} \frac{1}{n} \log M(n,\varepsilon) 
&=&  \lim_{n\to \infty} \frac{1}{n} \log \bar{M}(n,\varepsilon) \\
&=& H^W(X|Y)
\end{eqnarray}
for any $\varepsilon \in (0,1)$.
In the next subsections, we consider the asymptotic behavior of the error probability when the rate
is larger than the compression limit $H^W(X|Y)$ in the moderate deviation regime
and the large deviation regime, respectively.

\subsection{Moderate Deviation} \label{subsection:multi-source-mdp}

From Theorem \ref{theorem:multi-source-finite-marko-direct-assumptiton-1} and
Theorem \ref{theorem:multi-source-finite-markov-converse-assumptiotn-1}, we have the following.
\begin{theorem} \label{theorem:multi-moderate-deviation}
Suppose that transition matrix $W$ satisfies Assumption \ref{assumption-Y-marginal-markov}.
For arbitrary $t \in (0,1/2)$ and $\delta > 0$, we have 
\begin{eqnarray}
\lim_{n\to\infty} - \frac{1}{n^{1-2t}} \log \Pse^{(n)}\left(e^{n H^W(X|Y) + n^{1-t}\delta} \right)
&=& \lim_{n\to\infty} - \frac{1}{n^{1-2t}} \log \barPse^{(n)}\left(e^{n H^W(X|Y) + n^{1-t}\delta} \right) \\
&=& \frac{\delta^2}{2 \san{V}^W(X|Y)}.
\end{eqnarray}
\end{theorem}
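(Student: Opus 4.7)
The plan is to derive matching lower and upper bounds on $-n^{-(1-2t)}\log\barPse(M_n)$ and $-n^{-(1-2t)}\log\Pse(M_n)$ at $M_n=e^{nH^W(X|Y)+n^{1-t}\delta}$, by inserting appropriately scaled parameters into the finite-length achievability bound of Theorem \ref{theorem:multi-source-finite-marko-direct-assumptiton-1} and the finite-length converse bound of Theorem \ref{theorem:multi-source-finite-markov-converse-assumptiotn-1}; the monotonicity $\Pse(M_n)\le\barPse(M_n)$ then closes the loop between the two limits. The key analytic tool is the second-order Taylor expansion $\theta H_{1+\theta}^{\downarrow,W}(X|Y)=\theta H^W(X|Y)-\tfrac{1}{2}\san{V}^W(X|Y)\theta^2+o(\theta^2)$ around $\theta=0$, inherited from Theorem \ref{theorem:asymptotic-down-conditional-renyi} together with Theorem \ref{theorem:multi-markov-variance}. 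Throughout, set $R_n:=H^W(X|Y)+n^{-t}\delta$.

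For the achievability direction, I would substitute $\theta_n=-n^{-t}\delta/\san{V}^W(X|Y)$ (which tends to zero) into Theorem \ref{theorem:multi-source-finite-marko-direct-assumptiton-1}. The linear-in-$H^W(X|Y)$ contributions of $-n\theta_n R_n$ and $(n-1)\theta_n H^W(X|Y)$ cancel up to an $O(n^{-t})$ remainder, the quadratic piece yields $n^{1-2t}\delta^2/(2\san{V}^W(X|Y))$ up to an $O(1)$ error, and $\underline{\delta}(\theta_n)$ stays bounded since it is continuous at $\theta=0$. Dividing by $n^{1-2t}$ and passing to the limit gives $\liminf_{n}-n^{-(1-2t)}\log\barPse(M_n)\ge\delta^2/(2\san{V}^W(X|Y))$.

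For the converse, write $\rho_n:=\theta^\downarrow(a^\downarrow(R_n))$ for the inverse defined by \eqref{eq:definition-theta-inverse-multi-markov}--\eqref{eq:definition-a-inverse-multi-markov}; the expansions in the paragraph ending at \eqref{eq:expansion-exponent-around-H} give $\rho_n=-n^{-t}\delta/\san{V}^W(X|Y)+o(n^{-t})$. In Theorem \ref{theorem:multi-source-finite-markov-converse-assumptiotn-1} I would pick $\tilde\theta_n=\rho_n-\eta_n$ with $\eta_n=n^{-(1+2t)/4}$ (so $\eta_n\to 0$, $\eta_n=o(n^{-t})$, and $n\eta_n^2=n^{(1-2t)/2}\to\infty$) together with $s_n=n^{-\beta}$ for an arbitrary $\beta\in(0,(1-2t)/2)$. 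A second-order expansion yields
\[
(n-1)(1+s_n)\tilde\theta_n\bigl[H_{1+\tilde\theta_n}^{\downarrow,W}(X|Y)-H_{1+(1+s_n)\tilde\theta_n}^{\downarrow,W}(X|Y)\bigr]=\tfrac{1}{2}s_n(1+s_n)\san{V}^W(X|Y)(n-1)\tilde\theta_n^2+o(s_n n^{1-2t}),
\]
so after division by $s_n$ this contributes $n^{1-2t}\delta^2/(2\san{V}^W(X|Y))+o(n^{1-2t})$. The finite corrections $\delta_1$, $\delta_2$, $\overline{\delta}(\cdot)$, and $\underline{\delta}(\cdot)$ stay $O(1)$ because the relevant Perron-Frobenius eigenvectors depend continuously on $\theta$ near zero. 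A parallel Taylor computation collapses the exponent $E_n$ inside the logarithm to $-(n-1)\eta_n^2\san{V}^W(X|Y)/2+O(1)$, and since $n\eta_n^2=n^{(1-2t)/2}\gg\beta\log n=|\log s_n|$, we have $e^{E_n}/s_n\to 0$ and hence $-(1+s_n)\log(1-2e^{E_n})/s_n\to 0$. This yields $\limsup_{n}-n^{-(1-2t)}\log\Pse(M_n)\le\delta^2/(2\san{V}^W(X|Y))$.

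The main obstacle is the triple balancing inside the converse: $\tilde\theta_n$ must be close enough to $\rho_n$ for the quadratic leading term to hit exactly $\delta^2/(2\san{V}^W(X|Y))$, yet far enough below $\rho_n$ that the logarithmic penalty vanishes, while $s_n\to 0$ must be slow enough not to undo that attenuation but fast enough that $(1+s_n)\to 1$. The choice $\eta_n=n^{-(1+2t)/4}$ and $s_n=n^{-\beta}$ meets all three constraints precisely because $t\in(0,1/2)$ leaves room between $n^{-1/2}$ and $n^{-t}$. The only careful routine step is verifying that the second-order Taylor remainders of the involved R\'enyi-entropy differences are uniformly $o(\tilde\theta_n^2)$ along this scaling, which follows from the smoothness of $\theta\mapsto H_{1+\theta}^{\downarrow,W}(X|Y)$ encoded in Lemma \ref{lemma:mult-terminal-finite-evaluation-down-conditional-renyi} and the analyticity of the Perron-Frobenius eigenvalue.
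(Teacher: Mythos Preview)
Your proposal is correct and uses the same two ingredients as the paper's proof: Theorem~\ref{theorem:multi-source-finite-marko-direct-assumptiton-1} for the achievability and Theorem~\ref{theorem:multi-source-finite-markov-converse-assumptiotn-1} for the converse, each combined with the second-order expansion of $\theta H_{1+\theta}^{\downarrow,W}(X|Y)$ at $\theta=0$.

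The only substantive difference is how the converse handles the parameter $s$. The paper keeps $s>0$ \emph{fixed}, shows that the main term of Theorem~\ref{theorem:multi-source-finite-markov-converse-assumptiotn-1} divided by $n^{1-2t}$ tends to $(1+s)\,\delta^{2}/(2\san{V}^W(X|Y))$, and then lets $s\downarrow 0$ after the $\limsup$; with $s$ fixed, $\delta_1/s$ is $O(1)$ and the logarithmic term only has to be $o(n^{1-2t})$, so no balancing is needed. Your diagonal choice $s_n=n^{-\beta}$ forces you to track three rates simultaneously, which is more bookkeeping, but it buys an explicit control of the logarithmic penalty: your gap $\eta_n=n^{-(1+2t)/4}$ yields $n\eta_n^{2}=n^{(1-2t)/2}\to\infty$ for every $t\in(0,1/2)$, whereas the paper's separation of order $n^{-2t}$ between $\tilde\theta$ and $\theta(a(R))$ gives $n\cdot n^{-4t}=n^{1-4t}$, which does not diverge once $t\ge 1/4$. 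So your version is a touch more careful on that point, at the cost of the extra calibration.
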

\begin{proof}
We apply Theorem \ref{theorem:multi-source-finite-marko-direct-assumptiton-1} and
Theorem \ref{theorem:multi-source-finite-markov-converse-assumptiotn-1} to the case with
$R = H^W(X|Y) + n^{-t} \delta$, i.e., $\theta(a(R )) = -n^{-1} \frac{\delta}{\san{V}^W(X|Y)} + o(n^{-t})$.
For the achievability part,
from \eqref{eq:expansion-exponent-around-H} and Theorem \ref{theorem:multi-source-finite-marko-direct-assumptiton-1},
we have
\begin{align}
- \log \Pse^{(n)}\left(M_n \right)
 &\ge \sup_{-1 \le \theta \le 0} \left[ - \theta n R + (n-1) \theta H_{1+\theta}^{\downarrow,W}(X|Y) \right] 
  + \inf_{-1\le\theta\le0} \underline{\delta}(\theta) \\
 &\ge n^{1-2t} \frac{\delta^2}{2 \san{V}^W(X|Y)} + o(n^{1-2t}).
\end{align}
To prove the converse part, we fix arbitrary $s > 0$ and choose $\tilde{\theta}$ to be $-n^{-t} \frac{\delta}{\san{V}^W(X|Y)} + n^{-2t}$.
Then, Theorem \ref{theorem:multi-source-finite-markov-converse-assumptiotn-1} implies that 
\begin{align}
\limsup_{n\to\infty} - \frac{1}{n^{1-2t}} \log \Pse(M_n)
&\le \limsup_{n\to\infty} n^{2t} \frac{1+s}{s} \tilde{\theta}
 \left\{ H_{1+\tilde{\theta}}^{\downarrow,W}(X|Y) - H_{1+(1+s)\tilde{\theta}}^{\downarrow,W}(X|Y) \right\} \\
&= \limsup_{n\to\infty} n^{2t} \frac{1+s}{s} s \tilde{\theta}^2 \frac{d H_{1+\theta}^{\downarrow,W}(X|Y)}{d\theta} \bigg|_{\theta=\tilde{\theta}} \\
&= (1+s) \frac{\delta^2}{2 \san{V}^W(X|Y)}.
\end{align}
\end{proof}

\begin{remark} \label{remark:convergence-on-moderate-deviation}
In the literatures \cite{altug:10,dembo-zeitouni-book}, the moderate deviation results are stated for
$\epsilon_n$ such that $\epsilon_n \to 0$ and $n \epsilon_n^2 \to \infty$ instead of $n^{-t}$ for $t \in (0,1/2)$.
Although the former is slightly more general than the latter, we employ the latter formulation in 
Theorem \ref{theorem:multi-moderate-deviation} 
since the order of convergence is clearer. In fact, $n^{-t}$ in Theorem \ref{theorem:multi-moderate-deviation} can 
be replaced by general $\epsilon_n$ without modifying the argument of the proof.
\end{remark}

\subsection{Large Deviation} \label{subsection:multi-source-large-deviation}

From Theorem \ref{theorem:multi-source-finite-marko-direct-assumptiton-1} and
Theorem \ref{theorem:multi-source-finite-markov-converse-assumptiotn-1}, we have the following.
\begin{theorem} \label{theorem:source-coding-ldp-assumption-1}
Suppose that transition matrix $W$ satisfies Assumption \ref{assumption-Y-marginal-markov}.
For $H^W(X|Y) < R$, we have
\begin{eqnarray} \label{eq:multi-source-large-deviation-achievability}
\liminf_{n\to\infty} - \frac{1}{n} 
\log \barPse^{(n)}\left(e^{nR}\right) 
\ge \sup_{- 1 \le \theta \le 0} [-\theta R + \theta H_{1+\theta}^{\downarrow,W}(X|Y)].
\end{eqnarray}
On the other hand, for $H^W(X|Y) < R < H_0^{\downarrow,W}(X|Y)$, we have
\begin{eqnarray} 
\limsup_{n\to\infty} - \frac{1}{n} \log \Pse^{(n)}\left(e^{nR}\right) 
&\le& - \theta(a(R )) a(R ) + \theta(a(R )) H_{1+\theta(a(R ))}^{\downarrow,W}(X|Y) \label{eq:multi-source-ldp-converse-assumption-1} \\
&=& \sup_{-1 < \theta \le 0} \frac{-\theta R + \theta H_{1+\theta}^{\downarrow,W}(X|Y)}{1+\theta}.
 \label{eq:multi-source-ldp-converse-assumption-1-2}
\end{eqnarray}
\end{theorem}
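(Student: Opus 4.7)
The plan is to derive \eqref{eq:multi-source-large-deviation-achievability} and \eqref{eq:multi-source-ldp-converse-assumption-1} directly from the non-asymptotic bounds in Theorem \ref{theorem:multi-source-finite-marko-direct-assumptiton-1} and Theorem \ref{theorem:multi-source-finite-markov-converse-assumptiotn-1}, respectively, and then recognize \eqref{eq:multi-source-ldp-converse-assumption-1-2} as a restatement of \eqref{eq:multi-source-ldp-converse-assumption-1} via Lemma \ref{lemma:legendra-transform-2}.

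For the achievability bound, I would substitute $M_n = e^{nR}$ into Theorem \ref{theorem:multi-source-finite-marko-direct-assumptiton-1} and use the trivial inequality that the supremum is at least the value at any fixed $\theta \in [-1,0]$. Dividing by $n$ gives
\begin{align*}
-\frac{1}{n}\log \barPse(e^{nR}) \ge -\theta R + \frac{n-1}{n}\theta H_{1+\theta}^{\downarrow,W}(X|Y) + \frac{\underline{\delta}(\theta)}{n}.
\end{align*}
Since $\underline{\delta}(\theta)$ is a finite constant depending only on the (fixed) transition matrix and initial distribution, passing to $\liminf_{n\to\infty}$ yields $-\theta R + \theta H_{1+\theta}^{\downarrow,W}(X|Y)$ for each $\theta$, and then taking the supremum over $\theta \in [-1,0]$ gives \eqref{eq:multi-source-large-deviation-achievability}.

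For the converse bound, I would substitute $M_n = e^{nR}$ into Theorem \ref{theorem:multi-source-finite-markov-converse-assumptiotn-1}, fix any $s > 0$ and any $\tilde{\theta}$ with $-1 < \tilde{\theta} < \theta(a(R))$, divide by $n$, and take $\limsup_{n\to\infty}$. The key observation is that, by the concavity of $\theta \mapsto \theta H_{1+\theta}^{\downarrow,W}(X|Y)$ (Statement \ref{item:multi-terminal-markov-property-1} of Lemma \ref{lemma:multi-terminal-markov-property}) together with the definition of $\theta(a(R))$ as the slope-$a(R)$ point of this function, for every such $\tilde{\theta}$ one has
\begin{align*}
(\theta(a(R)) - \tilde{\theta})\, a(R) - \theta(a(R)) H_{1+\theta(a(R))}^{\downarrow,W}(X|Y) + \tilde{\theta} H_{1+\tilde{\theta}}^{\downarrow,W}(X|Y) < 0,
\end{align*}
since the left-hand side is (the negative of) the gap between the concave function and its tangent at $\theta(a(R))$ evaluated at $\tilde{\theta}$. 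Consequently, the argument of the logarithm in Theorem \ref{theorem:multi-source-finite-markov-converse-assumptiotn-1} is of the form $1 - 2 e^{-c_{\tilde{\theta}} n + O(1)}$ with $c_{\tilde{\theta}} > 0$, so $\log(\cdots)/n$ and $\delta_1/n$ both vanish. This leaves
\begin{align*}
\limsup_{n\to\infty}-\frac{1}{n}\log \Pse(e^{nR}) \le \frac{(1+s)\,\tilde{\theta}\bigl[H_{1+\tilde{\theta}}^{\downarrow,W}(X|Y) - H_{1+(1+s)\tilde{\theta}}^{\downarrow,W}(X|Y)\bigr]}{s}.
\end{align*}
Letting first $s \to 0^+$ and then $\tilde{\theta} \to \theta(a(R))^-$, the right-hand side converges to $\theta(a(R)) H_{1+\theta(a(R))}^{\downarrow,W}(X|Y) - \theta(a(R))\,a(R)$ by differentiability of $\theta H_{1+\theta}^{\downarrow,W}(X|Y)$ and the defining relation $\frac{d[\theta H_{1+\theta}^{\downarrow,W}(X|Y)]}{d\theta}\big|_{\theta = \theta(a(R))} = a(R)$. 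This proves \eqref{eq:multi-source-ldp-converse-assumption-1}, and \eqref{eq:multi-source-ldp-converse-assumption-1-2} is then immediate from \eqref{eq:legendra-transform-2-lower} of Lemma \ref{lemma:legendra-transform-2}.

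The main subtlety is the converse step: one must be careful to take the limits in the right order, namely first $n \to \infty$ with $(s, \tilde{\theta})$ held fixed (so that the exponentially small log term vanishes and the $O(1)$ corrections disappear), and only afterwards push $(s, \tilde{\theta}) \to (0, \theta(a(R)))$. This interchange is legitimate because the non-asymptotic bound is valid for every admissible $(s, \tilde{\theta})$ individually, hence so is the limiting bound, and one may then optimize.
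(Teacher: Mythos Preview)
Your proposal is correct and follows essentially the same route as the paper's proof: achievability from Theorem~\ref{theorem:multi-source-finite-marko-direct-assumptiton-1} by fixing $\theta$, dividing by $n$, and then optimizing; converse from Theorem~\ref{theorem:multi-source-finite-markov-converse-assumptiotn-1} by fixing $(s,\tilde{\theta})$, letting $n\to\infty$, then sending $s\to 0$ and $\tilde{\theta}\to\theta(a(R))$, with \eqref{eq:multi-source-ldp-converse-assumption-1-2} read off from Lemma~\ref{lemma:legendra-transform-2}. Your explicit concavity argument showing that the exponent inside the logarithm is strictly negative (so that the $\log(1-2e^{\cdots})$ term is $o(n)$) is a detail the paper leaves implicit, but it is correct and worth including.
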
 
\begin{proof}
The achievability bound \eqref{eq:multi-source-large-deviation-achievability} follows from 
Theorem \ref{theorem:multi-source-finite-marko-direct-assumptiton-1}. 
The converse part \eqref{eq:multi-source-ldp-converse-assumption-1} is proved 
from Theorem \ref{theorem:multi-source-finite-markov-converse-assumptiotn-1} as follows.
We first fix $s > 0$ and $-1 < \tilde{\theta} < \theta(a(R ))$. Then, Theorem \ref{theorem:multi-source-finite-markov-converse-assumptiotn-1} implies 
\begin{align}
\limsup_{n\to \infty} - \frac{1}{n} \log \Pse^{(n)}\left(e^{nR}\right)
 \le \frac{1+s}{s} \tilde{\theta} \left\{ H_{1+\tilde{\theta}}^{\downarrow,W}(X|Y) -  H_{1+(1+s)\tilde{\theta}}^{\downarrow,W}(X|Y)\right\}.
\end{align}
By taking the limit $s \to 0$ and $\tilde{\theta} \to \theta(a(R ))$, we have
\begin{align}
& \frac{1+s}{s} \tilde{\theta} \left\{ H_{1+\tilde{\theta}}^{\downarrow,W}(X|Y) -  H_{1+(1+s)\tilde{\theta}}^{\downarrow,W}(X|Y) \right\} \\
&=  \frac{1}{s}\left(\tilde{\theta} H_{1+\tilde{\theta}}^{\downarrow,W}(X|Y)
- (1+s) \tilde{\theta} H_{1+(1+s)\tilde{\theta}}^{\downarrow,W}(X|Y)  \right) 
 + \tilde{\theta} H_{1+\tilde{\theta}}^{\downarrow,W}(X|Y) \\
&\to - \tilde{\theta} \frac{d[\theta H_{1+\theta}^{\downarrow,W}(X|Y)]}{d\theta} \bigg|_{\theta=\tilde{\theta}} 
 + \tilde{\theta} H_{1+\tilde{\theta}}^{\downarrow,W}(X|Y) ~~~(\mbox{as } s \to 0) \\
&\to - \theta(a(R )) \frac{d[\theta H_{1+\theta}^{\downarrow,W}(X|Y)]}{d\theta} \bigg|_{\theta=\theta(a(R ))}
 + \theta(a(R )) H_{1+\theta(a(R ))}^{\downarrow,W}(X|Y)~~~(\mbox{as } \tilde{\theta} \to \theta(a(R ))) \\
&= - \theta(a(R )) a(R )   + \theta(a(R )) H_{1+\theta(a(R ))}^{\downarrow,W}(X|Y).
\end{align}
Thus, \eqref{eq:multi-source-ldp-converse-assumption-1} is proved. 
The alternative expression \eqref{eq:multi-source-ldp-converse-assumption-1-2} is derived via Lemma \ref{lemma:legendra-transform-2}.
\end{proof}

Under Assumption \ref{assumption-memory-through-Y}, from 
Theorem \ref{theorem:multi-source-finite-marko-direct-assumptiton-2} and 
Theorem \ref{theorem:multi-source-finite-marko-converse-assumptiton-2}, we have the following tighter bound.
\begin{theorem} \label{theorem:source-coding-ldp-assumption-2}
Suppose that transition matrix $W$ satisfies Assumption \ref{assumption-memory-through-Y}. 
For $H^W(X|Y) < R$, we have
\begin{eqnarray} \label{eq:multi-source-ldp-assumption-2-direct}
\liminf_{n\to\infty} 
- \frac{1}{n} \log \barPse^{(n)}\left(e^{nR}\right) 
 \ge \sup_{-\frac{1}{2} \le \theta \le 0} \frac{- \theta R + \theta H_{1+\theta}^{\uparrow,W}(X|Y)}{1+\theta}.
\end{eqnarray}
On the other hand, for $H^W(X|Y) < R < H_0^{\uparrow,W}(X|Y)$, we have
\begin{eqnarray} \label{eq:multi-source-ldp-assumption-2-converse-statement}
\limsup_{n\to\infty} - \frac{1}{n} \log \Pse^{(n)}\left(e^{nR}\right) 
 &\le& - \theta(a(R )) a(R ) + \theta(a(R )) H_{1+\theta(a(R ))}^{\uparrow,W}(X|Y) \label{eq:multi-source-ldp-converse-assumption-2} \\
 &=& \sup_{-1 < \theta \le 0} \frac{- \theta R + \theta H_{1+\theta}^{\uparrow,W}(X|Y)}{1+\theta}.
  \label{eq:multi-source-ldp-assumption-2-converse-statement-2}
\end{eqnarray}
\end{theorem}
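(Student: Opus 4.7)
The plan is to prove both halves by specializing our finite-length bounds under Assumption \ref{assumption-memory-through-Y} (Theorem \ref{theorem:multi-source-finite-marko-direct-assumptiton-2} and Theorem \ref{theorem:multi-source-finite-marko-converse-assumptiton-2}) and then sending the block length to infinity, in exact analogy with the proof of Theorem \ref{theorem:source-coding-ldp-assumption-1}. The only additional ingredients are the properties of the two-parameter conditional R\'enyi entropy $H_{1+\theta,1+\theta'}^W(X|Y)$ from Lemma \ref{lemma:multi-terminal-markov-property}, which replaces the role played by $H_{1+\theta}^{\downarrow,W}(X|Y)$ in the earlier proof.

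For the achievability bound \eqref{eq:multi-source-ldp-assumption-2-direct}, I would substitute $M_n = e^{nR}$ into Theorem \ref{theorem:multi-source-finite-marko-direct-assumptiton-2}, divide by $n$, and take the liminf. The constant $\underline{\xi}(\theta)/n$ disappears, and the prefactor $(n-1)/n$ tends to $1$, leaving the supremum of $(-\theta R + \theta H_{1+\theta}^{\uparrow,W}(X|Y))/(1+\theta)$ over $\theta \in [-1/2,0]$. This step is essentially immediate.

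For the converse \eqref{eq:multi-source-ldp-converse-assumption-2}, I would fix $s>0$ and $-1 < \tilde{\theta} < \theta(a(R))$, apply Theorem \ref{theorem:multi-source-finite-marko-converse-assumptiton-2}, and divide by $n$. The key subtlety, and likely the main obstacle, is controlling the logarithmic term $-(1+s)\log(1 - 2e^{(n-1)[\,\cdot\,] + \delta_2})$ so that it does not blow up as $n\to\infty$. Writing $\phi(\theta) := \theta H_{1+\theta,1+\theta(a(R))}^W(X|Y)$, one observes that by Statement \ref{item:multi-terminal-markov-property-6} of Lemma \ref{lemma:multi-terminal-markov-property} we have $\phi(\theta(a(R))) = \theta(a(R)) H_{1+\theta(a(R))}^{\uparrow,W}(X|Y)$ and by \eqref{eq:definition-theta-inverse-markov-optimal-Q} we have $\phi'(\theta(a(R))) = a(R)$. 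Concavity of $\phi$ (Statement \ref{item:multi-terminal-markov-property-4}) therefore gives
\begin{align}
\phi(\tilde\theta) + (\theta(a(R)) - \tilde\theta)\, a(R) \;\le\; \phi(\theta(a(R))),
\end{align}
i.e., the bracketed quantity inside the exponential is nonpositive, and strictly negative when $\san{V}^W(X|Y)>0$ since $\tilde\theta \neq \theta(a(R))$. Consequently $2 e^{(n-1)[\,\cdot\,] + \delta_2} \to 0$ and the log term vanishes in the limit.

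Once that is handled, the leading contribution $(n-1)(1+s)\tilde\theta\{H_{1+\tilde\theta,1+\theta(a(R))}^W(X|Y) - H_{1+(1+s)\tilde\theta,1+\theta(a(R))}^W(X|Y)\}/(ns)$ tends, as $n\to\infty$, to $\frac{1+s}{s}\tilde\theta\{H_{1+\tilde\theta,1+\theta(a(R))}^W(X|Y) - H_{1+(1+s)\tilde\theta,1+\theta(a(R))}^W(X|Y)\}$. Letting $s\to 0$ converts this into $-\tilde\theta\,\phi'(\tilde\theta) + \tilde\theta H_{1+\tilde\theta,1+\theta(a(R))}^W(X|Y)$, and finally letting $\tilde\theta \to \theta(a(R))$ and using $\phi'(\theta(a(R))) = a(R)$ together with Statement \ref{item:multi-terminal-markov-property-6} of Lemma \ref{lemma:multi-terminal-markov-property} yields exactly $-\theta(a(R)) a(R) + \theta(a(R)) H_{1+\theta(a(R))}^{\uparrow,W}(X|Y)$. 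The equivalent supremum form \eqref{eq:multi-source-ldp-assumption-2-converse-statement-2} then follows from \eqref{eq:legendra-transform-2-upper} in Lemma \ref{lemma:legendra-transform-2}.
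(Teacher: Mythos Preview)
Your proposal is correct and follows essentially the same route as the paper's own proof: the achievability is read off directly from Theorem \ref{theorem:multi-source-finite-marko-direct-assumptiton-2}, and the converse proceeds by fixing $s>0$ and $\tilde\theta<\theta(a(R))$ in Theorem \ref{theorem:multi-source-finite-marko-converse-assumptiton-2}, letting $n\to\infty$ (so the log term drops), then sending $s\to 0$ and $\tilde\theta\to\theta(a(R))$ using \eqref{eq:definition-theta-inverse-markov-optimal-Q} and Statement \ref{item:multi-terminal-markov-property-6} of Lemma \ref{lemma:multi-terminal-markov-property}, with the final reformulation via Lemma \ref{lemma:legendra-transform-2}. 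Your explicit concavity argument for why the bracketed exponent is strictly negative is a welcome clarification of a step the paper leaves implicit.
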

\begin{proof}
The achievability bound \eqref{eq:multi-source-ldp-assumption-2-direct} follows from 
Theorem \ref{theorem:multi-source-finite-marko-direct-assumptiton-2}. 
The converse part \eqref{eq:multi-source-ldp-assumption-2-converse-statement} is proved 
from Theorem \ref{theorem:multi-source-finite-marko-converse-assumptiton-2} as follows.
We first fix $s > 0$ and $-1 < \tilde{\theta} < \theta(a(R ))$. Then, Theorem \ref{theorem:multi-source-finite-marko-converse-assumptiton-2} implies 
\begin{align}
\limsup_{n\to \infty} - \frac{1}{n} \log \Pse^{(n)}\left(e^{nR}\right)
 \le \frac{1+s}{s} \tilde{\theta} \left\{ H_{1+\tilde{\theta},1+\theta(a(R ))}^W(X|Y) 
  -  H_{1+(1+s)\tilde{\theta}, 1+\theta(a(R ))}^W(X|Y)\right\}.
\end{align}
By taking the limit $s \to 0$ and $\tilde{\theta} \to \theta(a(R ))$, we have
\begin{align}
& \frac{1+s}{s} \tilde{\theta} \left\{ H_{1+\tilde{\theta},1+\theta(a(R ))}^W(X|Y) 
 -  H_{1+(1+s)\tilde{\theta},1+\theta(a(R ))}^W(X|Y) \right\} \\
&=  \frac{1}{s}\left(\tilde{\theta} H_{1+\tilde{\theta},1+\theta(a(R ))}^W(X|Y)
- (1+s) \tilde{\theta} H_{1+(1+s)\tilde{\theta}, 1+\theta(a(R ))}^W(X|Y)  \right) 
 + \tilde{\theta} H_{1+\tilde{\theta},1+\theta(a(R ))}^W(X|Y) \\
&\to - \tilde{\theta} \frac{d[\theta H_{1+\theta,1+\theta(a(R ))}^W(X|Y)]}{d\theta} \bigg|_{\theta=\tilde{\theta}} 
 + \tilde{\theta} H_{1+\tilde{\theta},1+\theta(a(R ))}^W(X|Y) ~~~(\mbox{as } s \to 0) \\
&\to - \theta(a(R )) \frac{d[\theta H_{1+\theta,1+\theta(a(R ))}^W(X|Y)]}{d\theta} \bigg|_{\theta=\theta(a(R ))}
 + \theta(a(R )) H_{1+\theta(a(R ))}^{\uparrow,W}(X|Y)~~~(\mbox{as } \tilde{\theta} \to \theta(a(R ))) \\
&= - \theta(a(R )) a(R )   + \theta(a(R )) H_{1+\theta(a(R ))}^{\uparrow,W}(X|Y).
\end{align}
Thus, \eqref{eq:multi-source-ldp-assumption-2-converse-statement} is proved. 
The alternative expression \eqref{eq:multi-source-ldp-assumption-2-converse-statement-2} is derived via Lemma \ref{lemma:legendra-transform-2}.
\end{proof}

\begin{remark}
For $R \le R_{\rom{cr}}$, where (cf.~\eqref{eq:definition-R-a-optimal-q-markov} for the definition of $R(a)$)
\begin{eqnarray}
R_{\rom{cr}} := R\left( \frac{d[\theta H_{1+\theta}^{\uparrow,W}(X|Y)]}{d\theta} \bigg|_{\theta = -\frac{1}{2}} \right)
\end{eqnarray}
is the critical rate, we can rewrite the lower bound in \eqref{eq:multi-source-ldp-assumption-2-direct} as 
(cf.~Lemma \ref{lemma:legendra-transform-2})
\begin{eqnarray}
\sup_{- \frac{1}{2} \le \theta \le 0} \frac{- \theta R + \theta H_{1+\theta}^{\uparrow,W}(X|Y)}{1+\theta}
= - \theta(a(R )) a(R ) + \theta(a(R )) H_{1+\theta(a(R ))}^{\uparrow,W}(X|Y).
\end{eqnarray}
Thus, the lower bound and the upper bound coincide up to the critical rate.
\end{remark}

\begin{remark} \label{remark:singleton}
When ${\cal Y}$ is singleton, from Theorem \ref{theorem:tight-finite-bound-singleton-Y} and 
a special case of \eqref{eq:multi-source-ldp-converse-assumption-1},
we can derive 
\begin{eqnarray}
\liminf_{n\to\infty} - \frac{1}{n} \log \barPse^{(n)}\left(e^{nR}\right) 
\ge \sup_{- 1 \le \theta \le 0} \frac{-\theta R + \theta H_{1+\theta}^{W}(X)}{1+\theta}
\end{eqnarray}
and 
\begin{eqnarray}
\limsup_{n\to\infty} - \frac{1}{n} \log \Pse^{(n)}\left(e^{nR}\right) 
&\le& \sup_{-1 < \theta \le 0} \frac{-\theta R + \theta H_{1+\theta}^{W}(X)}{1+\theta}.
\end{eqnarray}
for $H^W(X) < R < H_0^W(X)$. Thus, we can recover the results in \cite{davisson:81, vasek:80} by our approach.
\end{remark}

\subsection{Numerical Example} \label{subsection:single-source-numerical}

\begin{figure}[t]
\centering
\includegraphics[width=0.5\linewidth]{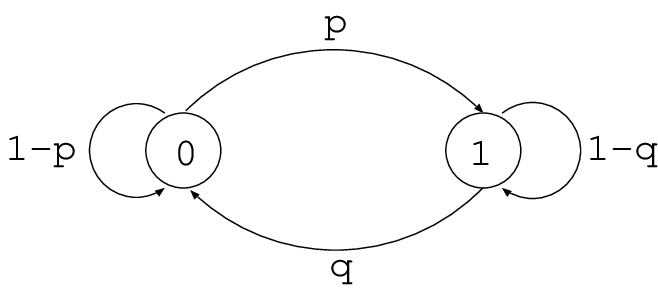}
\caption{The description of the transition matrix.}
\label{Fig:transition}
\end{figure}

In this section, 
to demonstrate the advantage of our finite-length bound,
we numerically evaluate the achievability bound in Theorem \ref{theorem:tight-finite-bound-singleton-Y}
and a special case of the converse bound in Theorem \ref{theorem:multi-source-finite-markov-converse-assumptiotn-1}
for singleton ${\cal Y}$.
Thanks to the criterion (C2), our numerical calculation shows that 
our upper finite-length bounds 
is very close to 
our lower finite-length bounds when the size $n$ is sufficiently large.
Thanks to the criterion (C1), we could calculate both bounds with huge size $n=1\times 10^5$
because the calculation complexity behaves as $O(1)$.

We consider a binary transition matrix $W$ given by Fig.~\ref{Fig:transition}, i.e.,
\begin{eqnarray}
W = \left[
\begin{array}{cc}
1-p & q \\
p & 1-q
\end{array}
\right].
\end{eqnarray}
In this case, the stationary distribution is
\begin{eqnarray}
\tilde{P}(0) &=& \frac{q}{p+q}, \\
\tilde{P}(1) &=& \frac{p}{p+q}.
\end{eqnarray}
The entropy is
\begin{eqnarray}
H^W(X) = \frac{q}{p+q} h(p ) + \frac{p}{p+q} h(q ),
\end{eqnarray}
where $h(\cdot)$ is the binary entropy function.
The tilted transition matrix is 
\begin{eqnarray}
W_\theta = \left[
\begin{array}{cc}
(1-p)^{1+\theta} & q^{1+\theta} \\
p^{1+\theta} & (1-q)^{1+\theta}
\end{array}
\right].
\end{eqnarray}
The Perron-Frobenius eigenvalue is
\begin{eqnarray}
\lambda_\theta = \frac{(1-p)^{1+\theta} + (1-q)^{1+\theta} + \sqrt{\{(1-p)^{1+\theta} - (1-q)^{1+\theta} \}^2 + 4 p^{1+\theta} q^{1+\theta}} }{2}
\end{eqnarray}
and its normalized eigenvector is
\begin{eqnarray}
\tilde{P}_\theta(0) &=& \frac{q^{1+\theta}}{\lambda_\theta - (1-p)^{1+\theta} + q^{1+\theta}}, \\
\tilde{P}_\theta(1) &=& \frac{\lambda_\theta - (1-p)^{1+\theta}}{\lambda_\theta - (1-p)^{1+\theta} + q^{1+\theta}}.
\end{eqnarray}
The normalized eigenvector of $W_\rho^T$ is also given by
\begin{eqnarray}
\hat{P}_\theta(0) &=& \frac{p^{1+\theta}}{\lambda_\theta - (1-p)^{1+\theta} + p^{1+\theta}}, \\
\hat{P}_\theta(1) &=& \frac{\lambda_\theta - (1-p)^{1+\theta}}{\lambda_\theta - (1-p)^{1+\theta} + p^{1+\theta}}.
\end{eqnarray}
From these calculations, we can evaluate the bounds in Theorem \ref{theorem:tight-finite-bound-singleton-Y}
and  Theorem \ref{theorem:multi-source-finite-markov-converse-assumptiotn-1}. For $p = 0.1$, $q=0.2$, the bounds are plotted 
in Fig.~\ref{Fig:Comparison-single-source-fixed-epsilon} for fixed error probability $\varepsilon = 10^{-3}$.
Although there is a gap between the achievability bound and the converse bound for rather small $n$, the gap is less than
approximately $5$\% of the entropy rate for $n$ larger than $10000$. 
We also plotted the bounds in Fig.~\ref{Fig:Comparison-single-source-fixed-n} for fixed block length $n=10000$
and varying $\varepsilon$. The gap between the achievability bound and the converse bound remains 
approximately $5$\% of the entropy rate even for $\varepsilon$ as small as $10^{-10}$.

\begin{figure}[t]
\centering
\includegraphics[width=0.5\linewidth]{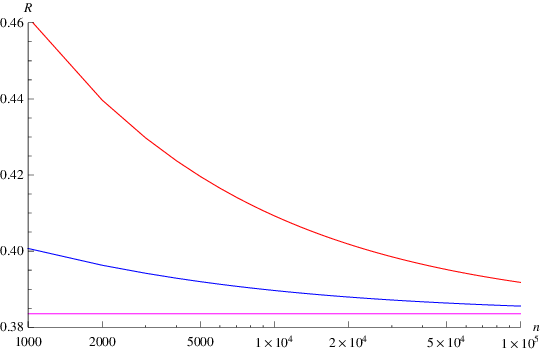}
\caption{A comparison of the bounds for $p=0.1$, $q=0.2$, and $\varepsilon = 10^{-3}$.
The horizontal axis is the block length $n$
and the vertical axis is the rate $R$ (nats).
The red curve is the achievability bound in Theorem \ref{theorem:tight-finite-bound-singleton-Y}.
The blue curve is the converse bound in Theorem \ref{theorem:multi-source-finite-markov-converse-assumptiotn-1}.
The purple line is the entropy $H^W(X)$.
}
\label{Fig:Comparison-single-source-fixed-epsilon}
\end{figure}
\begin{figure}[t]
\centering
\includegraphics[width=0.5\linewidth]{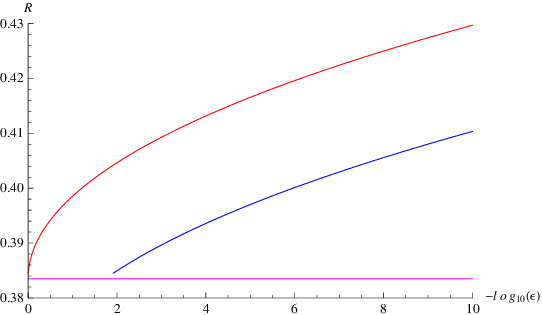}
\caption{A comparison of the bounds for $p=0.1$, $q=0.2$, and $n=10000$.
The horizontal axis is $-\log_{10}(\varepsilon)$, and 
the vertical axis is the rate $R$ (nats).
The red curve is the achievability bound in Theorem \ref{theorem:tight-finite-bound-singleton-Y}.
The blue curve is the converse bound in Theorem \ref{theorem:multi-source-finite-markov-converse-assumptiotn-1}.
The purple line is the entropy $H^W(X)$.
}
\label{Fig:Comparison-single-source-fixed-n}
\end{figure}

\section{Channel Coding} \label{section:channel}

In this section, we investigate the channel coding with 
a {\it conditional additive channel}.
The former part of this section discusses general properties of 
the channel coding with a {\it conditional additive channel}.
The latter part of this section discusses 
properties of the channel coding
when the conditional additive noise of the channel is Markovian.
We start this section by showing the problem setting in Section \ref{subsection:channel-problem-formulation} by introducing a {\it conditional additive channel}.
Section \ref{subsection:conversion-regular-g-additive} gives a canonical method to convert a regular channel to a conditional additive channel.
Section \ref{subsection:conversion-bpsk-awgn} gives a method to convert a BPSK-AWGN channel to a conditional additive channel.
Then, we show some single-shot achievability bounds in Section \ref{subsection:channel-single-achievability},
and single-shot converse bounds in Section \ref{subsection:channel-converse}.

As the latter part, 
we derive finite-length bounds for the Markov noise channel in Section \ref{subsection:channel-finite-markov}.
Then, in Sections \ref{subsection:channel-ldp} and \ref{subsection:channel-mdp},
we show the asymptotic characterization for the large deviation regime and the moderate deviation regime
by using those finite-length bounds. We also derive the second order rate in Section \ref{subsection:multi-random-second-order}.


The results shown in this section for the Markovian conditional additive noise are summarized in Table \ref{table:summary:channel}.
The checkmarks $\checkmark$ indicate that the tight asymptotic bounds (large deviation, moderate deviation, and second order)
 can be obtained from those bounds.
The marks $\checkmark^*$ indicate that the large deviation bound can be derived up to the critical rate. 
The computational complexity ``Tail" indicates that the computational complexities of those bounds 
depend on the computational complexities of tail probabilities.
It should be noted that Theorem \ref{theorem:channel-finite-markov-converse-assumptiotn-1}
is derived from a special case ($Q_{Y} = P_Y$) of Theorem \ref{theorem:channel-exponential-converse}.
The asymptotically optimal choice is $Q_Y = P_Y^{(1+\theta)}$.
Under Assumption \ref{assumption-Y-marginal-markov}, we can derive the bound of the Markov case only for that
special choice of $Q_Y$, while under Assumption \ref{assumption-memory-through-Y}, we can derive the bound of
the Markov case for the optimal choice of $Q_Y$.
Furthermore, Theorem \ref{theorem:channel-finite-markov-converse-assumptiotn-1} is not asymptotically tight in
the large deviation regime in general, but it is tight if ${\cal Y}$ is singleton, i.e., the channel is additive.
It should be also noted that Theorem \ref{theorem:channel-finite-marko-converse-assumptiton-2} does not imply
Theorem \ref{theorem:channel-finite-markov-converse-assumptiotn-1} even for the additive channel case since
Assumption \ref{assumption-memory-through-Y} restricts the structure of transition matrices even when 
${\cal Y}$ is singleton.

\begin{table}[htbp]
\begin{center}
\caption{Summary of the finite-length bounds for channel coding.}
\label{table:summary:channel}
{\renewcommand{\arraystretch}{1.4}
\begin{tabular}{|c|c|c|c|c|c|c|c|} \hline
  \multirow{2}{*}{Ach./Conv.}  
  & \multirow{2}{*}{Markov} & \multirow{2}{*}{Single Shot} &  \multirow{2}{*}{$\Pce$/$\barPce$} &  \multirow{2}{*}{Complexity} 
  & Large & Moderate & Second \\
  & & & & & Deviation & Deviation & Order \\ \hline
\multirow{4}{*}{Achievability} 
 & Theorem \ref{theorem:channel-finite-marko-direct-assumptiton-1} (Ass.~1) & Lemma \ref{lemma:channel-loose-bound} & $\barPce$ & $O(1)$ &    & \checkmark &  \\ \cline{2-8}
 & Theorem \ref{theorem:channel-finite-marko-direct-assumptiton-2} (Ass.~2) & Lemma \ref{lemma:channel-gallager-bound}  & $\barPce$ & $O(1)$ & $\checkmark^*$ & \checkmark &  \\ \cline{2-8}
 & Theorem \ref{theorem:channel-finite-markov-additive} (Additive) & Lemma \ref{lemma:channel-additive-channel-bound}  & $\barPce$ & $O(1)$ & $\checkmark^*$ & \checkmark &  \\ \cline{2-8}
 & \multicolumn{2}{|c|}{Lemma \ref{lemma:channel-spectrum-direct}}   &  $\barPce$ & Tail &  & \checkmark & \checkmark \\ 
 \hline
\multirow{4}{*}{Converse} 
 & Theorem \ref{theorem:channel-finite-markov-converse-assumptiotn-1} (Ass.~1)  & (Theorem \ref{theorem:channel-exponential-converse})  & $\Pce$ & $O(1)$ &  & \checkmark &  \\ \cline{2-8} 
 & Theorem \ref{theorem:channel-finite-marko-converse-assumptiton-2} (Ass.~2)  & Theorem \ref{theorem:channel-exponential-converse}  & $\Pce$ & $O(1)$ & $\checkmark^*$ & \checkmark &  \\ \cline{2-8} 
 & Theorem \ref{theorem:channel-finite-markov-converse-assumptiotn-1} (Additive)  & (Theorem \ref{theorem:channel-exponential-converse})  & $\Pce$ & $O(1)$ & $\checkmark^*$ & \checkmark &  \\ \cline{2-8} 
 & \multicolumn{2}{|c|}{Lemma \ref{lemma:channel-converse-general-additive-specialized}}   &  $\Pce$ & Tail &  & \checkmark & \checkmark \\ 
\hline 
\end{tabular}}
\end{center}
\end{table}

\subsection{Formulation for conditional additive channel} \label{subsection:channel-problem-formulation}
\subsubsection{Single-shot case}
We first present the problem formulation by the single shot setting.
For a channel $P_{B|A}(b|a)$ with input alphabet ${\cal A}$
and output alphabet ${\cal B}$, a channel code $\Psi = (\san{e},\san{d})$ consists of one encoder $\san{e}: \{1,\ldots, M\} \to {\cal A}$ and
one decoder $\san{d}:{\cal B} \to \{1,\ldots,M\}$. The average decoding error probability is defined by
\begin{eqnarray}
\Pce[\Psi] := \sum_{m=1}^M \frac{1}{M} P_{B|A}(\{b:\san{d}(b) \neq m \}|\san{e}(m)).\label{10-20-1}
\end{eqnarray}
For notational convenience, we introduce the error probability under the condition that the message 
size is $M$:
\begin{eqnarray}
\Pce(M) := \inf_{\Psi} \Pce[\Psi].\label{10-20-2}
\end{eqnarray}

Assume that the input alphabet ${\cal A}$ is the same set as the output alphabet ${\cal B}$ and they equals an additive group ${\cal X}$.
When the transition matrix $P_{B|A}(b|a)$ is given as $P_X(b-a)$
by using a distribution $P_X$ on ${\cal X}$,
the channel is called additive.
 
To extend the concept of additive channel,
we consider the case when 
the input alphabet ${\cal A}$ is an additive group ${\cal X}$
and the output alphabet ${\cal B}$ is the product set ${\cal X}\times {\cal Y}$.
When the transition matrix $P_{B|A}(x,y|a)$ is given as $P_{XY}(x-a,y)$
by using a distribution $P_{XY}$ on ${\cal X}\times {\cal Y}$,
the channel is called conditional additive.
In this paper, we are exclusively interested in the conditional additive channel.
As explained in Subsection \ref{subsection:conversion-regular-g-additive},
a channel is a conditional additive channel
if and only if 
it is a regular channel in the sense of \cite{delsarte:82}.
When we need to explicitly express the underlying distribution of the noise,
we denote the average decoding error probability
by $\Pce[\Psi|P_{XY}]$.

\subsubsection{$n$-fold extension}
When we consider $n$-fold extension, the channel code is denoted with subscript $n$ such as $\Psi_n = (\san{e}_n,\san{d}_n)$. 
The error probabilities given in \eqref{10-20-1} and \eqref{10-20-2}
are written with the superscript $(n)$
as $\Pce^{(n)}[\Psi_n]$ and $\Pce^{(n)}(M_n)$, respectively.
Instead of evaluating the error 
probability $\Pce^{(n)}(M_n)$ for given $M_n$,
we are also interested in evaluating 
\begin{eqnarray}
M(n,\varepsilon) := \sup\left\{ M_n : \Pce^{(n)}(M_n) \le \varepsilon \right\}
\end{eqnarray}
for given $0 \le \varepsilon \le 1$.

When the channel is given as a conditional distribution,
the channel is given by
\begin{eqnarray}
P_{B^n|A^n}(x^n,y^n|a^n) = P_{X^n Y^n}(x^n-a^n, y^n),\label{10-2-2}
\end{eqnarray}
where 
$P_{X^nY^n}$ is a noise distribution on ${\cal X}^n\times {\cal Y}^n$. 

For the code construction, we investigate the linear code.
For an $(n,k)$ linear code ${\cal C}_n \subset {\cal A}^n$, there exists a parity check matrix
$f_n:{\cal A}^n \to {\cal A}^{n-k}$ such that the kernel of $f_n$ is ${\cal C}_n$.
That is, given a parity check matrix $f_n:{\cal A}^n \to {\cal A}^{n-k}$,
we define the encoder $I_{\rom{Ker}(f_n)}:{\cal C}_n \to {\cal A}^n$
as the imbedding of the kernel $\rom{Ker}(f_n)$.
Then, using the decoder $\san{d}_{f_n} := \argmin_{\san{d}} \Pce[(I_{\rom{Ker}(f_n)},\san{d})]$,
we define $\Psi(f_n) = (I_{\rom{Ker}(f_n)}, \san{d}_{f_n})$.

Here, we employ a randomized choice of a parity check matrix.
In particular, instead of a two-universal hash function,
we focus on liner two-universal hash functions, 
because the linearity is required in the above relation with source coding.
So, denoting the set of linear two-universal hash functions from 
${\cal A}^n$ to ${\cal A}^{n-k}$ by ${\cal F}_l$,
we introduce the quantity:
\begin{eqnarray}
\barPce(n,k) := 
\sup_{F_n \in {\cal F}_l}  \mathbb{E}_{F_n}
\left[ \Pce^{(n)} [\Psi(F_n)] \right].
\end{eqnarray}
Taking the infimum over all linear codes associated with $F_n$
(cf.~\eqref{eq:definition-Pe-bar-source-side-info}),
we obviously have
\begin{eqnarray}
\Pce^{(n)}(|{\cal A}|^k) \le \barPce(n,k).
\end{eqnarray}
When we consider the error probability for conditionally additive channels, we use notation 
$\barPce(n,k|P_{XY})$ so that the underlying distribution of the noise is explicit.
We are also interested in characterizing 
\begin{eqnarray}
k(n,\varepsilon) := \sup\left\{ k : \barPce(n,k) \le \varepsilon \right\}
\end{eqnarray}
for given $0 \le \varepsilon \le 1$.

\subsection{Conversion from regular channel to conditional additive channel} \label{subsection:conversion-regular-g-additive}
This subsection shows that a channel is a regular channel 
in the sense of \cite{delsarte:82} if and only if it is conditional additive.
Then, we see that a binary erasure symmetric channel is an example of a regular channel.

We assume that the input alphabet ${\cal A}$ has an additive group structure.
Let $P_{\tilde{X}}$ be a distribution on the output alphabet ${\cal B}$. 
Let $\pi_a$ be a representation 
of the group ${\cal A}$ on ${\cal B}$, and let $G = \{ \pi_a : a \in {\cal A} \}$.
A regular channel \cite{delsarte:82} is defined by
\begin{eqnarray}
P_{B|A}(b|a) = P_{\tilde{X}}(\pi_a(b)).
\end{eqnarray}
The group action induces orbit 
\begin{eqnarray}
\rom{Orb}(b) := \{ \pi_a(b) : a \in {\cal A}\}.
\end{eqnarray}
The set of all orbits constitute a disjoint partition of ${\cal B}$.
A set of representatives of the orbits is denoted by $\bar{{\cal B}}$,
and let $\varpi:{\cal B} \to \bar{{\cal B}}$ be the map to the representatives. 

\begin{example}[Binary Erasure Symmetric Channel] \label{example:BES}
Let ${\cal A} = \{0,1\}$, ${\cal B} = \{ 0,1,? \}$, and 
\begin{eqnarray}
P_{\tilde{X}}(b) = \left\{ \begin{array}{ll}
1 - p - p^\prime & \mbox{if } b = 0 \\
p & \mbox{if } b = 1 \\
p^\prime & \mbox{if } b = ?
\end{array} \right..
\end{eqnarray} 
Then, let
\begin{eqnarray}
\pi_0 = \left[ \begin{array}{ccc}
0 & 1 & ? \\
0 & 1 & ?
\end{array} \right],~~
\pi_1 = \left[ \begin{array}{ccc}
0 & 1 & ? \\
1 & 0 & ?
\end{array} \right].
\end{eqnarray}
The channel defined in this way is a regular channel (see Fig.~\ref{Fig:BSE-Channel}).
In this case, there are two orbits: $\{0,1\}$ and $\{ ? \}$.
\end{example}

\begin{figure}[t]
\centering
\includegraphics[width=0.3\linewidth]{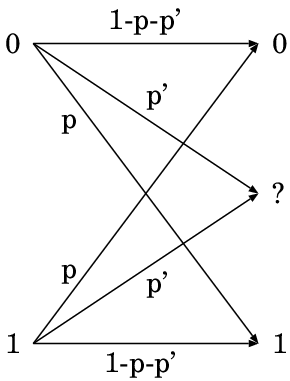}
\caption{The binary erasure symmetric channel.}
\label{Fig:BSE-Channel}
\end{figure}

Let ${\cal B} = {\cal X} \times {\cal Y}$ and $P_{\tilde{X}} = P_{XY}$ for some joint distribution on ${\cal X} \times {\cal Y}$. 
Now, we consider a conditional additive channel, whose 
transition matrix $P_{B|A}(x,y|a)$ is given as $P_{XY}(x-a,y)$.
When the group action is given by $\pi_a(x,y) = (x-a,y)$, 
the above conditional additive channel
given as a regular channel.
In this case, there are $|{\cal Y}|$ orbits
and the size of each orbit is $|{\cal X}|$ respectively.
This fact shows that 
any conditional additive channel is written as a regular channel.

Conversely, we show that any regular channel is written as a conditional additive channel.
For this purpose, we convert a regular channel to a conditional additive channel as follows.


We first explain the construction for single shot channel. 
For random variable $\tilde{X} \sim P_{\tilde{X}}$, let ${\cal Y} = \bar{{\cal B}}$ and $Y = \varpi(\tilde{X})$ be the random variable
describing the representatives of the orbits. For each orbit $\rom{Orb}(y)$, fix an element $0_y \in \rom{Orb}(y)$.
Then, let 
\begin{eqnarray}
\rom{Stb}(0_y) := \{ a \in {\cal A} : \pi_a(0_y) = 0_y\}
\end{eqnarray}
be the stabilizer subgroup of $0_y$\footnote{Since ${\cal A}$ is an Abelian group, the stabilizer
group actually does not depend on the choice $0_y \in \rom{Orb}(y)$.}. Let 
$\overline{{\cal A} / \rom{Stb}(0_y)}$ be a set of coset representatives of the coset ${\cal A} / \rom{Stb}(0_y)$, and let
\begin{eqnarray}
\vartheta_y: {\cal A} \to \overline{{\cal A} / \rom{Stb}(0_y)}
\end{eqnarray}
be the map to the coset representatives. Then, we can define the bijective map
\begin{eqnarray}
\iota_y: \rom{Orb}(y) \ni \pi_{-\bar{a}}(0_y) \mapsto \bar{a} \in \overline{{\cal A} / \rom{Stb}(0_y)}.
\end{eqnarray}
Let ${\cal X} = {\cal A}$ and $P_{X|Y}(\cdot|y)$ be the distribution on ${\cal A}$ defined by
\begin{eqnarray} \label{eq:definition-of-x-given-y-general-additive}
P_{X|Y}(x|y) := \frac{P_{\tilde{X}}(\iota_y^{-1}(\vartheta_y(x)))}{P_{\tilde{X}}(\rom{Orb}(y))} \frac{1}{|\rom{Stb}(0_y)|}.
\end{eqnarray}
When the output from the real channel is $b$, the output from the virtual channel is
defined by 
\begin{eqnarray} \label{eq:definition-virtual-channel}
(\iota_{\varpi(b)}(b) + A^\prime, \varpi(b))
\end{eqnarray}
where $A^\prime$ is randomly chosen from $\rom{Stb}(0_{\varpi(b)})$.
\begin{theorem}
The virtual channel defined by \eqref{eq:definition-virtual-channel} is the conditional additive channel 
such that the output is given by $(a+X,Y)$ for $(X,Y) \sim P_{XY}$, where $P_{XY}$ is defined from $P_Y$ and 
$P_{X|Y}$ of \eqref{eq:definition-of-x-given-y-general-additive}.
\end{theorem}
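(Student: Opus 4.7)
The plan is to compute the joint distribution of the virtual output $(U,V) := (\iota_{\varpi(B)}(B) + A',\, \varpi(B))$ conditioned on the channel input $A = a$, and match it to $P_{XY}(u-a, y)$. The key structural facts I will exploit are that $\mathcal{A}$ is Abelian (so $\pi_a \circ \pi_b = \pi_{a+b}$), that $\mathrm{Stb}(0_y)$ is a subgroup of $\mathcal{A}$, and that the orbit $\mathrm{Orb}(y)$ is in bijection with the cosets $\mathcal{A}/\mathrm{Stb}(0_y)$ via $\iota_y$.

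First I would note that the $V$-component matches trivially: since $\Pr(B = b\mid A=a) = P_{\tilde X}(\pi_a(b))$ and $\pi_a$ permutes each orbit within itself, summing over $b\in\mathrm{Orb}(y)$ gives $\Pr(\varpi(B)=y\mid A=a) = P_{\tilde X}(\mathrm{Orb}(y)) = P_Y(y)$, independent of $a$. Second, for the $U$-component, I would condition on $V=y$ and observe that, since $A'$ is uniform on $\mathrm{Stb}(0_y)$ independent of $B$, the event $\{\iota_y(B) + A' = u\}$ is equivalent to $\{\iota_y(B) = \bar u,\; A' = u - \bar u\}$ where $\bar u := \vartheta_y(u)$; this happens with conditional probability $|\mathrm{Stb}(0_y)|^{-1}$ times the probability that $B = \iota_y^{-1}(\bar u)$.

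The main step, and the place where group arithmetic has to be handled carefully, is the identity
\[
\pi_a\bigl(\iota_y^{-1}(\bar u)\bigr) = \iota_y^{-1}\bigl(\vartheta_y(\bar u - a)\bigr),
\]
which I would prove by writing $\iota_y^{-1}(\bar u) = \pi_{-\bar u}(0_y)$, applying $\pi_a$ to obtain $\pi_{a-\bar u}(0_y)$, decomposing $\bar u - a = \vartheta_y(\bar u - a) + t$ with $t \in \mathrm{Stb}(0_y)$, and using $\pi_{-t}(0_y) = 0_y$ to absorb $t$. Combined with the preceding step, this yields
\[
\Pr(U=u, V=y \mid A=a) = \frac{P_{\tilde X}\bigl(\iota_y^{-1}(\vartheta_y(\bar u - a))\bigr)}{|\mathrm{Stb}(0_y)|}.
\]
Finally, since $\bar u = \vartheta_y(u) \equiv u \pmod{\mathrm{Stb}(0_y)}$ gives $\vartheta_y(\bar u - a) = \vartheta_y(u-a)$, comparing with the definition \eqref{eq:definition-of-x-given-y-general-additive} of $P_{X|Y}$ and using $P_Y(y) = P_{\tilde X}(\mathrm{Orb}(y))$, the right-hand side equals $P_{XY}(u-a, y)$, which is exactly the distribution of $(a+X, Y)$ when $(X,Y) \sim P_{XY}$ and $(X,Y)$ is independent of $a$. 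The only real obstacle is the careful bookkeeping of coset representatives in the displayed identity above; once that algebraic identity is secured, the rest of the proof is a direct combination of the definitions.
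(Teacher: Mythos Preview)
Your proof is correct and follows essentially the same approach as the paper: both hinge on the same group-theoretic identity expressing how the action $\pi_a$ interacts with the coset-representative map $\vartheta_y$ and the bijection $\iota_y$. The only cosmetic difference is that the paper manipulates the virtual output at the level of random variables (writing $B=\pi_{-a}(\tilde X)$ and showing the first coordinate equals $a+\iota_Y(\tilde X)+A''$ with $A''$ uniform on $\mathrm{Stb}(0_Y)$), whereas you compute the conditional pmf $\Pr(U=u,V=y\mid A=a)$ directly and match it to $P_{XY}(u-a,y)$; the underlying algebra is identical.
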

\begin{proof}
When the input to the real channel is $a$, note that the output can be written as $\pi_{-a}(\tilde{X})$, where $\tilde{X} \sim P_{\tilde{X}}$.
By noting that $Y = \varpi(\pi_{-a}(\tilde{X})) \sim P_Y$, the output of the virtual channel is written as
\begin{eqnarray}
(\iota_{Y}(\pi_{-a}(\tilde{X})) + A^\prime, Y) 
&=& (\iota_Y(\pi_{-\vartheta_Y(a)}(\tilde{X})) + A^\prime, Y) \\
&=& (\vartheta_Y(\vartheta_Y(a) + \iota_Y(\tilde{X})) + A^\prime, Y) \label{eq:regular-conversion-proof-1} \\
&=& (a + \iota_Y(\tilde{X}) + A^{\prime\prime}, Y), \label{eq:regular-conversion-proof-2}
\end{eqnarray}
where \eqref{eq:regular-conversion-proof-1} follows from the fact that 
\begin{eqnarray}
\pi_{-\vartheta_Y(a)}(\tilde{X}) 
&=& \pi_{-\vartheta_Y(a)}( \pi_{-\iota_Y(\tilde{X})}(0_Y)) \\
&=& \pi_{- \vartheta_Y(a) - \iota_Y(\tilde{X})}(0_Y),
\end{eqnarray}
and we set $A^{\prime\prime} = \vartheta_Y(\vartheta_Y(a) + \iota_Y(\tilde{X})) - a - \iota_Y(\tilde{X}) + A^\prime$ in \eqref{eq:regular-conversion-proof-2}.
Since $A^{\prime\prime}$ is the uniform random variable on $\rom{Stb}(0_Y)$, the joint distribution of 
$(\iota_Y(\tilde{X}) + A^{\prime\prime}, Y)$ is $P_{XY}$. Thus, we have the statement of the theorem. 
\end{proof}

\begin{example}[Binary Erasure Symmetric Channel Revisited]
We convert the regular channel of Example \ref{example:BES} to a conditional additive channel.
Let us  label the orbit $\{ 0,1\}$ as $y=0$ and $\{ ? \}$ as $y=1$.
Let $0_0 = 0$ and $0_? = ?$. Then, $\rom{Stb}(0_0) = \{0\}$ and $\rom{Stb}(0_1) = \{0,1\}$.
The map $\vartheta_0$ is the identity map, and $\vartheta_1$ is the trivial map given by $\vartheta_1(a) = 0$.
The map $\iota_y$ is given by $\iota_0(0) = 0$, $\iota_0(1) = 1$, and $\iota_1(?) = 0$. The distribution $P_Y$
is given by $P_Y(0) = 1 - p^\prime$ and $P_Y(1) = p^\prime$. The conditional distribution $P_{X|Y}$ is given by
\begin{eqnarray}
P_{X|Y}(x|0) = \left\{
\begin{array}{ll}
\frac{1-p-p^\prime}{1-p^\prime} & \mbox{if } x = 0 \\
\frac{p}{1-p^\prime} & \mbox{if } x = 0
\end{array}
\right.
\end{eqnarray} 
and $P_{X|Y}(0|1) = P_{X|Y}(1|1) = \frac{1}{2}$.
\end{example}

When we consider $n$th extension, a channel is given by
\begin{eqnarray}
P_{B^n|A^n}(b^n|a^n) = P_{\tilde{X}^n}(\pi_{a^n}(b^n)),
\end{eqnarray}
where $n$th extension of the group action is defined by 
$\pi_{a^n}(b^n) = (\pi_{a_1}(b_1),\ldots,\pi_{a_n}(b_n))$.

Similarly, for $n$-fold extension, we can also construct the virtual conditional additive channel.
More precisely, for $\tilde{X}^n \sim P_{\tilde{X}^n}$, we set $Y^n = \varpi(\tilde{X}^n) = (\varpi(\tilde{X}_1),\ldots,\varpi(\tilde{X}_n))$
and 
\begin{eqnarray}
P_{X^n|Y^n}(x^n|y^n) := \frac{P_{\tilde{X}^n}(\iota_{y^n}^{-1}(\vartheta_{y^n}(x^n)))}{P_{\tilde{X}^n}(\rom{Orb}(y^n))} \frac{1}{|\rom{Stb}(0_{y^n})|},
\end{eqnarray} 
where 
\begin{eqnarray}
\rom{Orb}(y^n) &:=& \rom{Orb}(y_1)\times \cdots \times \rom{Orb}(y_n), \\
\vartheta_{y^n}(x^n) &:=& (\vartheta_{y_1}(x_1),\ldots,\vartheta_{y_n}(x_n)), \\
\iota_{y^n}(b^n) &:=& (\iota_{y_1}(b_1),\ldots,\iota_{y_n}(b_n)), \\
\rom{Stb}(0_{y^n}) &:=& \rom{Stb}(0_{y_1}) \times \cdots \times \rom{Stb}(0_{y_n}).
\end{eqnarray}

Since the conversion to the virtual channel in \eqref{eq:definition-virtual-channel} is reversible,
we can assume that the channel is a conditional additive from the beginning without loss of generality.

\subsection{Conversion of BPSK-AWGN Channel to Conditional Additive Channel} \label{subsection:conversion-bpsk-awgn}

Although we only considered finite input/output sources and channels throughout the paper,
in order to demonstrate the utility of the conditional additive channel framework, let us consider 
the additive white Gaussian noise (AWGN) channel with binary phase shift keying (BPSK) in this section. 
Let ${\cal A} = \{0,1\}$ be the input alphabet of the channel, and let ${\cal B} = \mathbb{R}$ be the output alphabet of the channel. 
For an input $a \in {\cal A}$ and Gaussian noise $Z$ with mean $0$ and variance $\sigma^2$, the output of channel is given by $B = (-1)^a + Z$. 
Then, the conditional probability density function of this channel is given as
\begin{align}
P_{B|A}(b|a)=\frac{1}{\sqrt{2\pi }\sigma}
e^{-\frac{(b-(-1)^{a})^2}{\sigma^2}}.\label{10-2-6}
\end{align}

Now, to define a conditional additive channel,
we choose ${\cal Y}:=\mathbb{R}_+$
and define the probability density function
$p_Y $ on ${\cal Y}$ with respect to the Lebesgue measure
and the conditional distribution $P_{X|Y}(x|y)$ as
\begin{align}
p_Y(y)&:= 
\frac{1}{\sqrt{2\pi }\sigma}
(e^{-\frac{(y-1)^2}{\sigma^2}}+ e^{-\frac{(y+1)^2}{\sigma^2}})
\\
P_{X|Y}(0|y)&:= 
\frac{e^{-\frac{(y-1)^2}{\sigma^2}}}{e^{-\frac{(y-1)^2}{\sigma^2}}+ e^{-\frac{(y+1)^2}{\sigma^2}}}
\\
P_{X|Y}(1|y)&:= 
\frac{e^{-\frac{(y+1)^2}{\sigma^2}}}{e^{-\frac{(y-1)^2}{\sigma^2}}+ e^{-\frac{(y+1)^2}{\sigma^2}}}
\end{align}
for $y \in \mathbb{R}_+$.
When we define $b := (-1)^x y \in \mathbb{R}$ for 
$x \in \{0,1\}$ and $y \in \mathbb{R}_{+}$,
we have
\begin{align}
p_{XY|A}(y,x|a)
=
\frac{1}{\sqrt{2\pi }\sigma}
e^{-\frac{(y-(-1)^{a+x})^2}{\sigma^2}}
=
\frac{1}{\sqrt{2\pi }\sigma}
e^{-\frac{((-1)^x y-(-1)^{a})^2}{\sigma^2}}
=
\frac{1}{\sqrt{2\pi }\sigma}
e^{-\frac{(b-(-1)^{a})^2}{\sigma^2}}.\label{10-2-7}
\end{align}
The relations \eqref{10-2-6} and \eqref{10-2-7}
show that the AWGN channel with BPSK is given as a conditional additive channel in the above sense.

By noting this observation, 
as explained in Remark \ref{R10-2-4},
the single-shot achievability bounds in Section \ref{subsection:multi-source-one-shot} 
are also valid for continuous $Y$, 
Also, the discussions for the single-shot converse bounds 
in Subsection \ref{subsection:channel-converse} hold even for continuous $Y$.
So, the bounds in Subsections \ref{subsection:channel-single-achievability}
and \ref{subsection:channel-converse} are also applicable to the BPSK-AWGN channel.

In particular, in the $n$ memoryless extension of the BPSK-AWGN channel,
the information measures for the noise distribution
are given as $n$ times of the single-shot information measures
for the noise distribution.
Even in this case, the upper and lower bounds in Subsections \ref{subsection:channel-single-achievability} and \ref{subsection:channel-converse}
are also applicable by replacing the information measures by $n$ times of the single-shot information measures.
Therefore, we obtain finite-length upper and lower bounds of the optimal coding length for the memoryless BPSK-AWGN channel.
Furthermore, 
even though the additive noise is not Gaussian,
when the probability density function $p_Z$ of the additive noise $Z$ satisfies the symmetry $p_Z(z)=p_Z(-z)$,
the BPSK channel with the additive noise $Z$ can be converted to a conditional additive channel in the same way.

\subsection{Achievability Bound Derived by Source Coding with Side-Information} \label{subsection:channel-single-achievability}
In this subsection, we give a code for a conditional additive channel from a code of source coding with side-information in a canonical way.
In this construction, we see that the decoding error probability of the channel code equals that of the source code.

When the channel is given as 
the conditional additive channel with conditional additive noise distribution 
$P_{X^nY^n}$ as \eqref{10-2-2}
and ${\cal X}={\cal A}$ is the finite field $\mathbb{F}_q$,
we can construct a linear channel code from a source coder
with full side-information whose encoder and decoder are $f_n$ and $\san{d}_n$ 
as follows. 
That is, we assume the linearity for the source encoder $f_n$.
Let ${\cal C}_n(f_n)$ be the kernel of the liner encoder $f_n$ of the source coder. 
Suppose that the sender sends a codeword $c_n \in {\cal C}_n(f_n)$
and $(c_n + X^n,Y^n)$ is received. 
Then, the receiver computes the syndrome $f_n(c_n+X^n) = f_n(X^n)$,
estimates $X^n$ from $f_n(X^n)$ and $Y^n$, and subtracts the estimate from $c_n + X^n$.
That is, 
we choose the channel decoder 
$\tilde{\san{d}}_n$ as
\begin{align}
\tilde{\san{d}}_n(x^{\prime n},y^n) := x^{\prime n} - \san{d}_n(f_n(x^{\prime n}),y^n).
\end{align}
We succeeded in decoding in this channel coding
if and only if $\san{d}_n (f_n(X^n),Y^n)$ equals $X^n$.
Thus, the error probability of this channel code coincides with that of the source code for the 
correlated source $(X^n,Y^n)$. 
In summary, we have the following lemma, which was first pointed out in \cite{chen:09c}.
\begin{lemma}[\protect{\cite[(19)]{chen:09c}}]
Given a linear encoder $f_n$ and a decoder $\san{d}_n$ for source coding with side-information with distribution $P_{X^nY^n}$,
let $I_{\rom{Ker}(f_n)}$ and $\tilde{\san{d}}_n$ be channel encoder and decoder induced from $(f_n,\san{d}_n)$. 
Then, the error probability of channel coding for conditionally additive channel with noise distribution $P_{X^nY^n}$ satisfies
\begin{align}
\Pce^{(n)}[(I_{\rom{Ker}(f_n)}, \tilde{\san{d}}_n)|P_{X^nY^n}] 
= \Pse^{(n)}[(f_n,\san{d}_n)|P_{X^nY^n}].
\end{align}
Furthermore,\footnote{In fact, when we additionally impose the linearity to the random function $F$ in the definition
\eqref{eq:definition-Pe-bar-source-side-info-2} for the definition of $\barPse(M|P_{X^nY^n}) $,
the result in \cite{chen:09c} implies that the equality in \eqref{eq:relation-bar-error-source-channel} holds.} 
taking the infimum for $F_n$ chosen to be a linear two-universal hash function, we also have
\begin{align} 
& \barPce(n,k) 
=
\sup_{F_n \in {\cal F}_l}  \mathbb{E}_{F_n}
\left[ \Pce^{(n)} [\Psi(F_n)] \right]
\le \sup_{F_n \in {\cal F}_l}  \mathbb{E}_{F_n}
\left[ \Pce^{(n)} [(I_{\rom{Ker}(F_n)}, \tilde{\san{d}}_n)] \right] \nonumber\\
=& \sup_{F_n \in {\cal F}_l}  \mathbb{E}_{F_n}
 \Pse^{(n)}[(F_n,\san{d}_n)] 
\le \sup_{F_n \in {\cal F}}  \mathbb{E}_{F_n}
 \Pse^{(n)}[(F_n,\san{d}_n)]
= \barPse^{(n)}(|{\cal A}^{n-k}| ).
\label{eq:relation-bar-error-source-channel}
\end{align}
\end{lemma}

By using this observation and the results in Section \ref{subsection:multi-source-one-shot},
we can derive the achievability bounds. 
By using the conversion argument in Section \ref{subsection:conversion-regular-g-additive},
we can also construct a channel code for a regular channel from a source code with full side-information.
Although the following bounds are just specialization of known bounds for conditional additive channels, 
we review these bounds here to clarify correspondence between the bounds in source coding
with side-information and channel coding. 

From Lemma \ref{lemma:multi-source-han-direct} and \eqref{eq:relation-bar-error-source-channel}, we have the following.
\begin{lemma}[\cite{verdu:94}] \label{lemma:channel-spectrum-direct}
The following bound holds:
\begin{eqnarray}
\barPce(n,k) \le \inf_{\gamma \ge 0} \left[ P_{X^nY^n}\left\{ \log\frac{1}{P_{X^n|Y^n}(x^n|y^n)} > \gamma \right\} + \frac{e^\gamma}{|{\cal A}|^{n-k}} \right].
\end{eqnarray}
\end{lemma}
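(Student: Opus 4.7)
The plan is to prove this lemma as a direct composition of three ingredients already established in the excerpt: the conversion of a regular channel into a conditional additive channel (Section \ref{subsection:conversion-regular-g-additive}), the source-channel duality \eqref{eq:relation-bar-error-source-channel}, and the information-spectrum achievability bound of Lemma \ref{lemma:multi-source-han-direct} applied to the $n$-fold source.

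First, I would reduce to the conditional additive case. By the conversion procedure of Section \ref{subsection:conversion-regular-g-additive}, the regular channel with noise $P_{\tilde X^n}$ is reversibly equivalent to the conditional additive channel whose noise is the induced $(X^n, Y^n) \sim P_{X^n Y^n}$ on ${\cal A}^n \times \bar{{\cal B}}^n$. Since the conversion preserves every code's error probability, it suffices to prove the bound with $\barPce(n,k)$ replaced by $\barPce(n,k|P_{X^n Y^n})$.

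Next, I would invoke \eqref{eq:relation-bar-error-source-channel} (applied at block length $n$, with the underlying ``single-shot'' distribution taken to be $P_{X^n Y^n}$), which yields
\begin{eqnarray}
\barPce(n,k|P_{X^n Y^n}) \le \barPse(|{\cal A}|^{n-k} \mid P_{X^n Y^n}).
\end{eqnarray}
The right-hand side is then bounded by Lemma \ref{lemma:multi-source-han-direct}, instantiated with the joint distribution $P_{X^n Y^n}$ and message size $M = |{\cal A}|^{n-k}$, producing exactly
\begin{eqnarray}
\inf_{\gamma \ge 0} \left[ P_{X^n Y^n}\left\{ \log\frac{1}{P_{X^n|Y^n}(x^n|y^n)} > \gamma \right\} + \frac{e^\gamma}{|{\cal A}|^{n-k}} \right].
\end{eqnarray}
Chaining the three displayed inequalities gives the claim.

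There is essentially no obstacle: the argument is a mechanical composition. The only point worth double-checking is that the random-linear encoder used on the channel side is compatible with the two-universal hash family used in \eqref{eq:definition-Pe-bar-source-side-info-2} on the source side; this is exactly the content of \eqref{eq:relation-bar-error-source-channel} as stated, so no new construction is needed. Consequently the proof is a one-line citation: apply Lemma \ref{lemma:multi-source-han-direct} to $P_{X^n Y^n}$ with $M = |{\cal A}|^{n-k}$ and combine with \eqref{eq:relation-bar-error-source-channel} after the regular-to-conditional-additive reduction.
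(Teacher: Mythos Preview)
Your proposal is correct and follows exactly the paper's approach: the paper introduces this lemma with the sentence ``From Lemma \ref{lemma:multi-source-han-direct} and \eqref{eq:relation-bar-error-source-channel}, we have the following,'' which is precisely the chain you describe. The regular-to-conditional-additive reduction you mention is handled ambiently in Section \ref{subsection:conversion-regular-g-additive}, so your write-up simply makes explicit what the paper leaves implicit.
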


From Lemma \ref{lemma:multi-source-tight-bound} and \eqref{eq:relation-bar-error-source-channel}, 
we have the following exponential type bound.
\begin{lemma}[\cite{gallager:65}] \label{lemma:channel-gallager-bound}
The following bound holds:
\begin{eqnarray}
\barPce(n,k) \le \inf_{-\frac{1}{2} \le \theta \le 0} |{\cal A}|^{\frac{\theta(n-k)}{1+\theta}} 
  e^{-\frac{\theta}{1+\theta} H_{1+\theta}^\uparrow(X^n|Y^n)}.
\end{eqnarray}
\end{lemma}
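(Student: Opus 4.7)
The plan is to derive this bound by chaining together three results already established earlier in the excerpt: the conversion from regular channels to conditional additive channels (Section \ref{subsection:conversion-regular-g-additive}), the duality identity \eqref{eq:relation-bar-error-source-channel} between linear channel coding and source coding with full side-information, and the single-shot Gallager-type achievability bound Lemma \ref{lemma:multi-source-tight-bound}. The argument should carry essentially no new content; the work is in lining up the quantifiers so that the bound on $\barPse$ gets transported cleanly to $\barPce$.

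First, I would invoke the conversion argument of Section \ref{subsection:conversion-regular-g-additive} to reduce to the conditional additive channel setting, so that, without loss of generality, the channel acts on pairs $(X^n,Y^n)\sim P_{X^nY^n}$ with additive noise $X^n$ and observable side-information $Y^n$, and the receiver observes $c^n+X^n$ together with $Y^n$. The conversion is stated to be reversible, so any achievability bound proved for conditional additive channels transfers to the original regular channel with the same error probability. At this point the problem is entirely a statement about $\barPce(n,k|P_{X^nY^n})$ for the $n$-fold noise distribution.

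Next, I would apply the duality bound \eqref{eq:relation-bar-error-source-channel},
\[
\barPce(n,k\mid P_{X^nY^n}) \;\le\; \barPse(|\mathcal{A}|^{n-k}\mid P_{X^nY^n}),
\]
which was recorded just above the lemma. This turns the channel coding problem into the source coding with full side-information problem with message size $M_n = |\mathcal{A}|^{n-k}$ on the block-length-$n$ source $(X^n,Y^n)$.

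Finally, I would apply Lemma \ref{lemma:multi-source-tight-bound} directly to the random variables $(X^n,Y^n)$ with $M=|\mathcal{A}|^{n-k}$, obtaining
\[
\barPse(|\mathcal{A}|^{n-k}\mid P_{X^nY^n}) \;\le\; \inf_{-\tfrac{1}{2}\le\theta\le 0} |\mathcal{A}|^{\frac{\theta(n-k)}{1+\theta}}\, e^{-\frac{\theta}{1+\theta} H^\uparrow_{1+\theta}(X^n\mid Y^n)},
\]
and concatenating with the preceding inequality yields the stated bound. Because both intermediate steps are already proven in the excerpt, there is no real obstacle; the only point that requires a little care is verifying that the notion of "linear two-universal family" used in the definition of $\barPce$ matches the hypothesis under which \eqref{eq:relation-bar-error-source-channel} holds, which the authors have already footnoted. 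Hence the proof is a one-line citation chain.
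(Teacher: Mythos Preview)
Your proposal is correct and matches the paper's own argument essentially verbatim: the paper simply states that the bound follows from Lemma \ref{lemma:multi-source-tight-bound} together with \eqref{eq:relation-bar-error-source-channel}, which is exactly the chain you wrote out. The extra reduction via Section \ref{subsection:conversion-regular-g-additive} is consistent with the paper's standing convention that regular channels may be treated as conditional additive channels without loss of generality.
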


From Lemma \ref{lemma:multi-source-loose-bound} and \eqref{eq:relation-bar-error-source-channel}, 
we have the following slightly loose exponential bound.
\begin{lemma}[\cite{han:book, hayashi:07b}] \label{lemma:channel-loose-bound}
The following bound holds:\footnote{The bound \eqref{eq:channel-loose-exponential-bound} was derived
in the original Japanese edition of \cite{han:book}, but it is not written in the English edition \cite{han:book}.
The quantum analogue was derived in \cite{hayashi:07b}.}
\begin{eqnarray} \label{eq:channel-loose-exponential-bound}
\barPce(n,k) \le
 \inf_{-1 \le \theta \le 0} |{\cal A}|^{\theta (n-k)} e^{- \theta H^{\downarrow}_{1+\theta}(X^n|Y^n)}.
\end{eqnarray}
\end{lemma}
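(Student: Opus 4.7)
The plan is to obtain this bound by composing two ingredients that are already available in the paper: the conversion of a regular channel into a conditional additive channel from Section \ref{subsection:conversion-regular-g-additive}, and the reduction of conditional additive channel coding to source coding with full side-information via \eqref{eq:relation-bar-error-source-channel}. Once these reductions are in place, the bound is an immediate specialization of the single-shot source coding achievability bound in Lemma \ref{lemma:multi-source-loose-bound} applied to the $n$-fold noise distribution $P_{X^n Y^n}$.

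More concretely, first I would invoke the conversion argument in Section \ref{subsection:conversion-regular-g-additive} so that, without loss of generality, the channel under consideration is a conditional additive channel with noise distribution $P_{X^n Y^n}$ on $({\cal A}^n, \bar{{\cal B}}^n)$. Next, applying \eqref{eq:relation-bar-error-source-channel} gives
\begin{equation*}
\barPce(n,k) \;\le\; \barPse(|{\cal A}|^{n-k} \mid P_{X^n Y^n}).
\end{equation*}
Finally, I would substitute $M = |{\cal A}|^{n-k}$ into Lemma \ref{lemma:multi-source-loose-bound}, applied to the $n$-fold source $(X^n, Y^n)$, to get
\begin{equation*}
\barPse(|{\cal A}|^{n-k} \mid P_{X^n Y^n}) \;\le\; \inf_{-1 \le \theta \le 0} |{\cal A}|^{\theta(n-k)} e^{-\theta H_{1+\theta}^{\downarrow}(X^n|Y^n)},
\end{equation*}
which is exactly the claimed inequality.

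There is essentially no obstacle in this argument since the heavy lifting has already been done: Lemma \ref{lemma:multi-source-loose-bound} itself followed by a one-line substitution $\theta = \theta'/(1-\theta')$ into the Gallager-type bound in Lemma \ref{lemma:multi-source-tight-bound} together with the monotonicity $H_{1/(1-\theta')}^{\uparrow}(X|Y) \ge H_{1+\theta'}^{\downarrow}(X|Y)$ (from Lemma \ref{lemma:appendix-upper-conditional-renyi-upper-and-lower} of the appendix). The only place requiring any care is checking that the reduction \eqref{eq:relation-bar-error-source-channel} actually yields a bound on $\barPce(n,k)$ for linear two-universal hash families in the channel coding sense, i.e., that the supremum over linear two-universal hash families of the average error probability of the dual channel code coincides with (or is upper bounded by) the corresponding quantity for source coding with side-information; this is precisely the content of the duality of Chen \emph{et al.} cited just before \eqref{eq:relation-bar-error-source-channel}, so the proof reduces to these three citations and a change of variable.
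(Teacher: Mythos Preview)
Your proposal is correct and takes essentially the same approach as the paper: the paper introduces this lemma precisely with the sentence ``From Lemma \ref{lemma:multi-source-loose-bound} and \eqref{eq:relation-bar-error-source-channel}, we have the following slightly loose exponential bound,'' which is exactly the reduction you describe. The only minor remark is that your first step (the regular-to-conditional-additive conversion) is already absorbed into the standing assumption preceding these lemmas, so the paper does not restate it here.
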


When ${\cal Y}$ is singleton, i.e., the virtual channel is additive, we have the following special case of Lemma \ref{lemma:channel-gallager-bound}.
\begin{lemma}[\cite{gallager:65}] \label{lemma:channel-additive-channel-bound}
Suppose that ${\cal Y}$ is singleton. Then, the following bound holds:
\begin{eqnarray}
\barPce(n,k) \le
\inf_{- \frac{1}{2} \le \theta \le 0} |{\cal A}|^{\frac{\theta (n-k)}{1+\theta}} e^{- \frac{\theta}{1+\theta} H_{1+\theta}(X^n)}.
\end{eqnarray}
\end{lemma}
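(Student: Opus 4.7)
The plan is to derive this bound as an immediate specialization of Lemma \ref{lemma:channel-gallager-bound}, whose statement is identical except that it contains the upper conditional R\'enyi entropy $H^\uparrow_{1+\theta}(X^n|Y^n)$ in place of the unconditional R\'enyi entropy $H_{1+\theta}(X^n)$. Thus the task reduces to verifying that these two quantities agree whenever ${\cal Y}$ is singleton; no fresh random-coding argument is needed because the conditional additive channel with trivial side-information is exactly an additive channel.

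First I would recall the defining identity \eqref{eq:upper-conditional-renyi}:
\begin{align*}
H^\uparrow_{1+\theta}(X^n|Y^n) \;=\; -\frac{1+\theta}{\theta}\log\sum_{y^n}P_{Y^n}(y^n)\Bigl[\sum_{x^n}P_{X^n|Y^n}(x^n|y^n)^{1+\theta}\Bigr]^{\frac{1}{1+\theta}}.
\end{align*}
When $|{\cal Y}^n|=1$, the unique $y^n$ carries mass $1$ and $P_{X^n|Y^n}(x^n|y^n)=P_{X^n}(x^n)$, so the outer sum collapses and the exponent $1/(1+\theta)$ cancels against the bracketed $(1+\theta)$-power. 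The prefactor reduces from $-(1+\theta)/\theta$ to $-1/\theta$, yielding
\begin{align*}
H^\uparrow_{1+\theta}(X^n|Y^n)\;=\;-\frac{1}{\theta}\log\sum_{x^n}P_{X^n}(x^n)^{1+\theta}\;=\;H_{1+\theta}(X^n).
\end{align*}

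Substituting this into the bound of Lemma \ref{lemma:channel-gallager-bound} gives the stated inequality. There is essentially no obstacle: once the conversion machinery of Section \ref{subsection:conversion-regular-g-additive} is in place and the Gallager-type bound has been stated in the upper conditional R\'enyi form, the additive-channel case (where the orbit structure is trivial and hence $Y$ is a singleton) is obtained by mere inspection of the definitions. The only small point to mention is that Lemma \ref{lemma:channel-gallager-bound} itself was inherited from Lemma \ref{lemma:multi-source-tight-bound} via the source-channel duality \eqref{eq:relation-bar-error-source-channel}, so the present lemma is two specializations removed from the fundamental single-shot bound of Lemma \ref{lemma:multi-source-tight-bound}.
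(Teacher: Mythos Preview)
Your proposal is correct and matches the paper's own treatment: the paper introduces Lemma \ref{lemma:channel-additive-channel-bound} explicitly as ``the following special case of Lemma \ref{lemma:channel-gallager-bound}'' and gives no further proof, so your verification that $H_{1+\theta}^\uparrow(X^n|Y^n)=H_{1+\theta}(X^n)$ when ${\cal Y}$ is a singleton is exactly the trivial step being left to the reader.
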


\subsection{Converse Bound} \label{subsection:channel-converse}

In this subsection, we show some converse bounds.
The following is the information spectrum type converse shown in \cite{hayashi:03}.

\begin{lemma}[\protect{\cite[Lemma 4]{hayashi:03}}] \label{lemma:channel-one-shot-converse}
For any code $\Psi_n = (\san{e}_n,\san{d}_n)$ and any output distribution $Q_{B^n} \in {\cal P}({\cal B}^n)$, we have
\begin{eqnarray} \label{eq:hayashi-nagaoka-bound}
\Pce^{(n)}[\Psi_n] \ge \sup_{\gamma \ge 0}\left[ \sum_{m=1}^{M_n} \frac{1}{M_n} 
 P_{B^n|A^n}\left\{ \log \frac{P_{B^n|A^n}(b^n|\san{e}_n(m))}{Q_{B^n}(b^n)} < \gamma \right\} - \frac{e^\gamma}{M_n} \right].
\end{eqnarray}
\end{lemma}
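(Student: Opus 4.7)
The plan is to bound the correct decoding probability from above via a substitution argument and then rearrange to obtain a lower bound on the error probability.

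First, I would write the average correct decoding probability as $1 - \Pce[\Psi_n] = \sum_{m=1}^{M_n} \frac{1}{M_n} P_{B^n|A^n}(D_m \mid \san{e}_n(m))$, where $D_m := \san{d}_n^{-1}(m)$ is the decoding region of message $m$. The crucial structural fact is that the sets $\{D_m\}_{m=1}^{M_n}$ form a disjoint partition of ${\cal B}^n$, so for any probability distribution on ${\cal B}^n$ their total measure is at most $1$.

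Next, for fixed $\gamma \ge 0$, I would introduce the sub-normalized event
\begin{eqnarray}
T_m := \left\{b^n \in {\cal B}^n \mymid \log \frac{P_{B^n|A^n}(b^n \mid \san{e}_n(m))}{Q_{B^n}(b^n)} < \gamma \right\}
\end{eqnarray}
and decompose $D_m = (D_m \cap T_m) \sqcup (D_m \cap T_m^c)$. The contribution from $D_m \cap T_m^c$ is bounded trivially by $P_{B^n|A^n}(T_m^c \mid \san{e}_n(m)) = 1 - P_{B^n|A^n}(T_m \mid \san{e}_n(m))$. For the $D_m \cap T_m$ piece, the defining inequality of $T_m$ gives the pointwise bound $P_{B^n|A^n}(b^n \mid \san{e}_n(m)) \le e^\gamma Q_{B^n}(b^n)$, hence
\begin{eqnarray}
P_{B^n|A^n}(D_m \cap T_m \mid \san{e}_n(m)) \le e^\gamma Q_{B^n}(D_m).
\end{eqnarray}
Summing over $m$ with weights $\frac{1}{M_n}$ and using disjointness of the $D_m$ together with $Q_{B^n}$ being a probability distribution yields $\sum_m \frac{1}{M_n} Q_{B^n}(D_m) \le \frac{1}{M_n}$, so this aggregate contribution is at most $\frac{e^\gamma}{M_n}$.

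Combining the two bounds gives $1 - \Pce[\Psi_n] \le \frac{e^\gamma}{M_n} + \sum_m \frac{1}{M_n} \left( 1 - P_{B^n|A^n}(T_m \mid \san{e}_n(m)) \right)$, which rearranges at once to \eqref{eq:hayashi-nagaoka-bound} after taking the supremum over $\gamma \ge 0$. There is no real obstacle beyond spotting the right decomposition: the argument is just a change-of-measure trick on the likelihood ratio combined with the partition property $\sum_m Q_{B^n}(D_m) \le 1$, which is what converts the bound on each $D_m \cap T_m$ into the clean $\frac{e^\gamma}{M_n}$ term.
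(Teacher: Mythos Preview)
Your argument is correct and is the standard change-of-measure proof of this Verd\'u--Han/Hayashi--Nagaoka converse: split each decoding region by the likelihood-ratio threshold, bound the low-ratio part by $e^\gamma Q_{B^n}(D_m)$, and use disjointness of the $D_m$ to collapse that sum to $e^\gamma/M_n$. The paper does not give its own proof of this lemma---it simply cites \cite[Lemma~4]{hayashi:03}---so there is nothing to compare; your write-up is exactly the argument one would expect from that reference.
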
 

When a channel is a conditional additive channel, we have
\begin{align} \label{eq:property-of-conditional-additive}
P_{B^n|A^n}(a^n + x^n, y^n|a^n) = P_{X^nY^n}(x^n,y^n).
\end{align} 
By taking the output distribution $Q_{B^n}$ as 
\begin{eqnarray} \label{eq:choice-of-Q}
Q_{B^n}(a^n + x^n, y^n) = \frac{1}{|{\cal A}|^n} Q_{Y^n}(y^n)
\end{eqnarray}
for some $Q_{Y^n} \in {\cal P}({\cal Y}^n)$, we have the following bound.
\begin{lemma} \label{lemma:channel-converse-general-additive-specialized}
When a channel is a conditional additive channel, 
for any distribution $Q_{Y^n} \in {\cal P}({\cal Y}^n)$, we have
\begin{eqnarray} \label{eq:channel-converse-general-additive}
\Pce^{(n)}(M_n) 
\ge \sup_{\gamma \ge 0}\left[ P_{X^nY^n}\left\{ \log \frac{Q_{Y^n}(y^n)}{P_{X^nY^n}(x^n,y^n)} > n\log |{\cal A}| - \gamma \right\} - \frac{e^\gamma}{M_n} \right].
\end{eqnarray}
\end{lemma}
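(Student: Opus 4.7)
The plan is to derive Lemma \ref{lemma:channel-converse-general-additive-specialized} as a direct specialization of the general single-shot converse in Lemma \ref{lemma:channel-one-shot-converse}, exploiting the translation-invariance of the conditional additive channel together with the specific product-form choice of output distribution in \eqref{eq:choice-of-Q}.

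First, I would apply Lemma \ref{lemma:channel-one-shot-converse} to an arbitrary code $\Psi_n = (\san{e}_n, \san{d}_n)$ with the output distribution $Q_{B^n}(a^n + x^n, y^n) = \frac{1}{|{\cal A}|^n} Q_{Y^n}(y^n)$. For each message $m$, I would write $a^n = \san{e}_n(m)$ and parametrize the channel output as $b^n = (a^n + x^n, y^n)$; by \eqref{eq:property-of-conditional-additive} the conditional distribution $P_{B^n|A^n}(\cdot|a^n)$ is the pushforward of $P_{X^nY^n}$ under $(x^n,y^n) \mapsto (a^n + x^n, y^n)$, so the log-likelihood ratio inside the Hayashi–Nagaoka bound becomes
\begin{eqnarray}
\log \frac{P_{B^n|A^n}(b^n|a^n)}{Q_{B^n}(b^n)}
 = \log \frac{P_{X^nY^n}(x^n,y^n)}{\frac{1}{|{\cal A}|^n} Q_{Y^n}(y^n)}
 = n \log |{\cal A}| - \log \frac{Q_{Y^n}(y^n)}{P_{X^nY^n}(x^n,y^n)}.
\end{eqnarray}

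Second, I would observe that the event $\{ \log \frac{P_{B^n|A^n}(b^n|a^n)}{Q_{B^n}(b^n)} < \gamma \}$ pulls back, under this change of variables, to the event $\{ \log \frac{Q_{Y^n}(y^n)}{P_{X^nY^n}(x^n,y^n)} > n \log |{\cal A}| - \gamma \}$ measured under $P_{X^nY^n}$. Crucially, this event no longer depends on the codeword $a^n = \san{e}_n(m)$: the additive structure has absorbed the dependence on $m$. Hence
\begin{eqnarray}
\sum_{m=1}^{M_n} \frac{1}{M_n} P_{B^n|A^n}\left\{ \log \frac{P_{B^n|A^n}(b^n|\san{e}_n(m))}{Q_{B^n}(b^n)} < \gamma \right\}
 = P_{X^nY^n}\left\{ \log \frac{Q_{Y^n}(y^n)}{P_{X^nY^n}(x^n,y^n)} > n\log|{\cal A}| - \gamma \right\}.
\end{eqnarray}

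Third, substituting into \eqref{eq:hayashi-nagaoka-bound} yields the bound uniformly in the code $\Psi_n$, so taking the infimum over all codes gives the claimed inequality on $\Pce(M_n)$, and the supremum over $\gamma \ge 0$ is preserved. No real obstacle arises here; the only thing to keep straight is the bookkeeping of the change-of-variables and making sure the parametrization $b^n \leftrightarrow (x^n, y^n)$ is a bijection for each fixed $a^n$, which it is because $\pi_{a_i}$ is a group action on the $x$-coordinate while $y$ is untouched. This shows that the lemma is obtained mechanically from Lemma \ref{lemma:channel-one-shot-converse} once the output distribution \eqref{eq:choice-of-Q} is plugged in.
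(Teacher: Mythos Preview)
Your proposal is correct and follows essentially the same approach as the paper: apply Lemma \ref{lemma:channel-one-shot-converse} with the output distribution \eqref{eq:choice-of-Q}, use \eqref{eq:property-of-conditional-additive} to rewrite the log-likelihood ratio as $n\log|{\cal A}| - \log\frac{Q_{Y^n}(y^n)}{P_{X^nY^n}(x^n,y^n)}$, and observe that the resulting event does not depend on the codeword so the average over $m$ collapses to a single $P_{X^nY^n}$-probability. The paper's proof is exactly this computation, presented slightly more tersely.
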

\begin{proof}
By noting \eqref{eq:property-of-conditional-additive} and \eqref{eq:choice-of-Q}, the first term of the right hand side 
of \eqref{eq:hayashi-nagaoka-bound} can be rewritten as 
\begin{align}
& \sum_{m=1}^{M_n} \frac{1}{M_n} 
 P_{B^n|A^n}\left\{ \log \frac{P_{B^n|A^n}(b^n|\san{e}_n(m))}{Q_{B^n}(b^n)} < \gamma \right\} \\
&= \sum_{m=1}^{M_n} \frac{1}{M_n}
 P_{X^n Y^n}\left\{ \log \frac{P_{B^n|A^n}(\san{e}_n(m)+x^n,y^n|\san{e}_n(m))}{Q_{B^n}(\san{e}_n(m) + x^n,y^n)} \right\} \\
&= P_{X^nY^n}\left\{ \log \frac{Q_{Y^n}(y^n)}{P_{X^nY^n}(x^n,y^n)} > n\log |{\cal A}| - \gamma \right\},
\end{align}
which implies the statement of the lemma.
\end{proof}

By a similar argument as in Theorem \ref{theorem:multi-source-converse}, we can also derive the following converse bound.
\begin{theorem} \label{theorem:channel-exponential-converse}
For any $Q_{Y^n} \in {\cal P}({\cal Y}^n)$, we have
\begin{eqnarray}
\lefteqn{ -\log \Pce^{(n)}(M_n) } \\
&\le& \inf_{s > 0 \atop \tilde{\theta} \in \mathbb{R}, \vartheta \ge 0}
\bigg[ (1+s) \tilde{\theta} \left\{ H_{1+\tilde{\theta}}(P_{X^nY^n}|Q_{Y^n}) - H_{1+(1+s)\tilde{\theta}}(P_{X^nY^n}|Q_{Y^n}) \right\}  \\
&& - (1+s) \log \left( 1- 2 e^{- \frac{-\vartheta R 
  + (\tilde{\theta}+\vartheta(1+\tilde{\theta})) H_{1+\tilde{\theta} + \vartheta(1+\tilde{\theta})}(P_{X^nY^n}|Q_{Y^n})
    - (1+\vartheta)\tilde{\theta}H_{1+\tilde{\theta}}(P_{X^nY^n}|Q_{Y^n})}{1+\vartheta}} \right) \bigg] / s
  \label{eq:bound-channel-exponential-converse-1} \\
&\le& \inf_{s > 0 \atop -1 < \tilde{\theta} < \theta(a(R )) }
\bigg[
 (1+s)\tilde{\theta} \left\{ H_{1+\tilde{\theta}}(P_{X^nY^n}|Q_{Y^n}) - H_{1+(1+s)\tilde{\theta}}(P_{X^nY^n}|Q_{Y^n})  \right\} \\
&&  - (1+s) \log \left( 1 - 2 e^{ (\theta(a(R ))-\tilde{\theta}) a(R ) - \theta(a(R )) H_{1+\theta(a(R ))}(P_{X^nY^n}|Q_{Y^n})
  + \tilde{\theta} H_{1+\tilde{\theta}}(P_{X^nY^n}|Q_{Y^n})  } \right)
\bigg] / s, 
 \label{eq:bound-channel-exponential-converse-2}
\end{eqnarray}
where $R = n \log |{\cal A}| - \log M_n$, 
and $\theta(a)$ and $a(R )$ are the inverse functions defined in \eqref{eq:definition-inverse-theta-multi-one-shot} 
and \eqref{eq:definition-inverse-a-multi-one-shot} respectively.
\end{theorem}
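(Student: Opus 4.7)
The plan is to adapt the proof of Theorem \ref{theorem:multi-source-converse}, using Lemma \ref{lemma:channel-converse-general-additive-specialized} in place of Lemma \ref{lemma:multi-source-han-converse} as the starting converse bound. First I would reparametrize Lemma \ref{lemma:channel-converse-general-additive-specialized} via $\gamma' := n\log|{\cal A}| - \gamma$, which rewrites the correction $e^\gamma/M_n$ as $e^{R-\gamma'}$ with $R = n\log|{\cal A}| - \log M_n$. The resulting inequality
\begin{align*}
\Pce(M_n) \ge \sup_{\gamma'}\left[P_{X^nY^n}\{\log(Q_{Y^n}/P_{X^nY^n}) > \gamma'\} - e^{R-\gamma'}\right]
\end{align*}
has exactly the same shape as Lemma \ref{lemma:multi-source-han-converse} with $\log M$ replaced by $R$. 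Consequently, the source-coding derivation carries over verbatim after substituting $P_{X^nY^n}$, $Q_{Y^n}$ for $P_{XY}$, $Q_Y$, modulo the mild constraint $\gamma' \le n\log|{\cal A}|$ which is not binding in the relevant regime.

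The core step of that derivation is a change-of-measure lower bound on the tail probability $P_{X^nY^n}\{T > \gamma'\}$, where $T := \log(Q_{Y^n}/P_{X^nY^n})$. I would introduce the tilted distribution $\tilde P^{(\mu)}(x,y) \propto P_{X^nY^n}(x,y)^{1+\mu} Q_{Y^n}(y)^{-\mu}$, whose normalizing constant equals $e^{-\mu H_{1+\mu}(P_{X^nY^n}|Q_{Y^n})}$, giving the identity $P_{X^nY^n}(x,y) = \tilde P^{(\mu)}(x,y)\,e^{\mu T - \mu H_{1+\mu}}$. Choosing $\mu = \tilde\theta + \vartheta(1+\tilde\theta)$ and restricting the tail to a band $\{\gamma' < T < \gamma' + c\}$ converts the bound into a concentration bound for $T$ under $\tilde P^{(\tilde\theta)}$ around its tilted mean, to which a Chebyshev-type inequality of order $1+s$ applies. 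Since the cumulant generating function of $T$ under $P_{X^nY^n}$ is $-\theta H_{1+\theta}(P_{X^nY^n}|Q_{Y^n})$, finite differences of this CGF at $\tilde\theta$ and $(1+s)\tilde\theta$ produce exactly the prefactor $(1+s)\tilde\theta\{H_{1+\tilde\theta}(P_{X^nY^n}|Q_{Y^n}) - H_{1+(1+s)\tilde\theta}(P_{X^nY^n}|Q_{Y^n})\}$, while the subtraction of the correction $e^{R-\gamma'}$ accounts for the factor $2$ inside $\log(1-2 e^{(\cdots)})$.

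The specialization \eqref{eq:bound-channel-exponential-converse-2} then follows from \eqref{eq:bound-channel-exponential-converse-1} by choosing $\vartheta = (\theta(a(R)) - \tilde\theta)/(1+\tilde\theta)$, which is non-negative precisely for $\tilde\theta < \theta(a(R))$; with this choice $\tilde\theta + \vartheta(1+\tilde\theta) = \theta(a(R))$ and $(1+\vartheta)(1+\tilde\theta) = 1+\theta(a(R))$, so the defining identity \eqref{eq:definition-inverse-a-multi-one-shot} for $a(R)$ collapses the exponent inside the log to precisely $(\theta(a(R)) - \tilde\theta) a(R) - \theta(a(R)) H_{1+\theta(a(R))}(P_{X^nY^n}|Q_{Y^n}) + \tilde\theta H_{1+\tilde\theta}(P_{X^nY^n}|Q_{Y^n})$. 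The main obstacle will be the careful bookkeeping of the four parameters $s,\tilde\theta,\vartheta,\gamma'$ and of the tilted-measure moments; once the CGF/R\'enyi-entropy correspondence is set up, the remainder is algebra and the argument parallels the source-coding proof modulo the $\gamma \leftrightarrow n\log|{\cal A}| - \gamma$ reparametrization.
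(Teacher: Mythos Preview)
Your high-level plan (first paragraph) and the specialization (third paragraph) are correct and match the paper's approach: the paper, too, reduces to the source-coding argument via Lemma~\ref{lemma:channel-converse-general-additive-specialized}, and the choice $\vartheta=(\theta(a(R))-\tilde\theta)/(1+\tilde\theta)$ is exactly how \eqref{eq:bound-channel-exponential-converse-2} is obtained from \eqref{eq:bound-channel-exponential-converse-1}.

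What is off is your description of the ``core step'' in the second paragraph. The proof of Theorem~\ref{theorem:multi-source-converse} does \emph{not} start from the information-spectrum bound and then lower-bound the tail $P_{X^nY^n}\{T>\gamma'\}$ by change of measure and a Chebyshev-type inequality. Rather, it first introduces the tilted law $P_{X^nY^n,\tilde\rho}$, sets $\alpha$ and $\beta$ to be the \emph{error probabilities} of the same code under $P_{X^nY^n}$ and $P_{X^nY^n,\tilde\rho}$, and applies data processing for the R\'enyi divergence of order $1+s$ to the binary test $\{X\neq \san{d}(\san{e}(X),Y)\}$; this is precisely what produces the term $(1+s)\tilde\theta\{H_{1+\tilde\theta}-H_{1+(1+s)\tilde\theta}\}/s$ via \eqref{eq:relation-renyi-divergence-cgf} and \eqref{eq:relation-single-shot-q-phi}. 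Only \emph{afterwards} is Lemma~\ref{lemma:multi-source-han-converse} invoked, applied to the \emph{tilted} distribution, to lower-bound $\beta$; the Markov inequality is then used on the tilted tail (not the original one). A direct change-of-measure on the tail, as you describe, does not yield the $1/s$ structure. For the channel version, the paper carries this out by defining a tilted conditional additive channel $P_{B^n|A^n,\tilde\rho}$ with noise $P_{X^nY^n,\tilde\rho}$, taking $\beta$ to be the channel error under this tilted channel, applying R\'enyi data processing to $P_{M_nA^nB^n\hat M_n,\tilde\rho}$ versus $P_{M_nA^nB^n\hat M_n}$, and then using the conditional-additive structure to reduce $D_{1+s}(P_{A^nB^n,\tilde\rho}\|P_{A^nB^n})$ to $D_{1+s}(P_{X^nY^n,\tilde\rho}\|P_{X^nY^n})$; your reparametrized Lemma~\ref{lemma:channel-converse-general-additive-specialized} enters only at the stage of bounding $1-\beta$. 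You should make these two points explicit rather than asserting that the source-coding argument ``carries over verbatim''.
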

\begin{proof}
See Appendix \ref{appendix:theorem:channel-exponential-converse}.
\end{proof}

\subsection{Finite-Length Bound for Markov Noise Channel} \label{subsection:channel-finite-markov}

From this section, we address conditional additive channel
whose conditional additive noise us subject to Markovian chain.
Here, the input alphabet ${\cal A}^n$ equals the additive group ${\cal X}^n=\mathbb{F}_q^n$
and the output alphabet ${\cal B}^n$ is ${\cal X}^\times {\cal Y}^n$.
That is, the transition matrix describing the channel is given 
by using a transition matrix ${W}$ on ${\cal X}^\times {\cal Y}^n$
and an initial distribution $Q$ as
\begin{align}
P_{B^n|A^n}(x^n+a^n,y^n|a^n)
= Q({x}_1,{y}_1) 
\prod_{i=2}^n {W}({x}_i,{y}_i|{x}_{i-1},{y}_{i-1}).
\end{align}

As in Section \ref{subsection:multi-terminal-measures-markov}, we consider two assumptions on the transition matrix $W$
of the noise process $(\mathbf{X},\mathbf{Y})$, i.e., Assumption \ref{assumption-Y-marginal-markov}
and Assumption \ref{assumption-memory-through-Y}. We also use the same notations as in 
Section \ref{subsection:multi-terminal-measures-markov}.

\begin{example}[Gilbert-Elliot channel with state-information available at the receiver] \label{example:gilbert-elliot}
The Gilbert-Elliot channel \cite{gilbert:60,elliott:63} is characterized by a channel state $Y^n$ on ${\cal Y}^n = \{0,1\}^n$,
and an additive noise $X^n$ on ${\cal X}^n = \{0,1\}^n$. 
The noise process $(X^n,Y^n)$ is a Markov chain induced by the transition matrix $W$ introduced in
Example \ref{example:gilbert-elliot-noise}.
For the channel input $a^n$, the channel output is given by $(a^n + X^n, Y^n)$
when the state-information is available at the receiver. Thus, this channel can be regarded as 
a conditional additive channel, and the transition matrix of the noise process satisfies 
Assumption \ref{assumption-memory-through-Y}.
\end{example}

Proofs of the following bounds are almost the same as those in Section \ref{subsection:multi-source-finite-markov}, and thus omitted.
From Lemma \ref{lemma:channel-loose-bound} and 
Lemma \ref{lemma:mult-terminal-finite-evaluation-down-conditional-renyi},
we can derive the following achievability bound.
\begin{theorem}[Direct, Ass. 1] \label{theorem:channel-finite-marko-direct-assumptiton-1}
Suppose that the transition matrix $W$ of the conditional additive noise
satisfies Assumption \ref{assumption-Y-marginal-markov}.
Let $R := \frac{n-k}{n}\log |{\cal A}|$. Then we have
\begin{eqnarray}
- \log \barPce(n,k) &\ge& \sup_{-1 \le \theta \le 0} \left[ -\theta n R + (n-1) \theta H_{1+\theta}^{\downarrow,W}(X|Y) + \underline{\delta}(\theta) \right].
\end{eqnarray}
\end{theorem}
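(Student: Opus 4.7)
The plan is to follow exactly the two-step recipe used for Theorem~\ref{theorem:multi-source-finite-marko-direct-assumptiton-1}, leveraging the fact that through the conversion in Section~\ref{subsection:conversion-regular-g-additive} we may assume the channel is a conditional additive channel with noise $(X^n,Y^n)$ generated by the transition matrix $W$. I would start from the single-shot achievability bound in Lemma~\ref{lemma:channel-loose-bound}, which gives, for every $\theta \in [-1,0]$,
\begin{eqnarray*}
\barPce(n,k) \le |{\cal A}|^{\theta(n-k)} e^{-\theta H_{1+\theta}^{\downarrow}(X^n|Y^n)}.
\end{eqnarray*}
Taking negative logarithms (which reverses the inequality) and substituting $nR = (n-k)\log|{\cal A}|$ yields
\begin{eqnarray*}
-\log \barPce(n,k) \ge -\theta n R + \theta H_{1+\theta}^{\downarrow}(X^n|Y^n).
\end{eqnarray*}

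Next I would invoke Assumption~\ref{assumption-Y-marginal-markov} in order to apply the left-hand inequality of Lemma~\ref{lemma:mult-terminal-finite-evaluation-down-conditional-renyi}, which states that
\begin{eqnarray*}
\theta H_{1+\theta}^{\downarrow}(X^n|Y^n) \ge (n-1)\theta H_{1+\theta}^{\downarrow,W}(X|Y) + \underline{\delta}(\theta).
\end{eqnarray*}
Substituting this lower bound into the previous display and then taking the supremum over $\theta \in [-1,0]$ gives exactly the claimed inequality. The sign bookkeeping is the one point to watch: since $\theta \le 0$, it is the \emph{left}-hand (i.e.\ smaller) side of Lemma~\ref{lemma:mult-terminal-finite-evaluation-down-conditional-renyi} that provides the lower bound on the CGF-like quantity $\theta H_{1+\theta}^{\downarrow}(X^n|Y^n)$ needed here.

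There is essentially no obstacle beyond these two steps; the argument is fully parallel to the source-coding-with-side-information proof, which is why the paper states the proofs are omitted. The only conceptual content is the reduction of the regular/Markov channel to a Markov conditional additive channel (already established at the start of Section~\ref{subsection:channel-finite-markov}) together with the inequality $\barPce(n,k|P_{XY}) \le \barPse(|{\cal A}|^{n-k}|P_{XY})$ from \eqref{eq:relation-bar-error-source-channel}, which together let the single-shot source-coding bound be reused verbatim.
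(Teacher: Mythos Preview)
Your proposal is correct and is exactly the approach the paper indicates: combine Lemma~\ref{lemma:channel-loose-bound} with the left-hand inequality of Lemma~\ref{lemma:mult-terminal-finite-evaluation-down-conditional-renyi}, precisely mirroring the proof of Theorem~\ref{theorem:multi-source-finite-marko-direct-assumptiton-1}. One tiny remark on your sign-bookkeeping comment: the reason you need the left-hand inequality of Lemma~\ref{lemma:mult-terminal-finite-evaluation-down-conditional-renyi} is simply that you want a lower bound on $\theta H_{1+\theta}^{\downarrow}(X^n|Y^n)$ to feed into the lower bound on $-\log\barPce(n,k)$; this has nothing to do with the sign of $\theta$, since Lemma~\ref{lemma:mult-terminal-finite-evaluation-down-conditional-renyi} already bounds the product $\theta H_{1+\theta}^{\downarrow}(X^n|Y^n)$ as a whole.
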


From Theorem \ref{theorem:channel-exponential-converse} for $Q_{Y^n} = P_{Y^n}$
and Lemma \ref{lemma:mult-terminal-finite-evaluation-down-conditional-renyi}, we have the following converse bound.
\begin{theorem}[Converse, Ass. 1] \label{theorem:channel-finite-markov-converse-assumptiotn-1}
Suppose that transition matrix $W$ of the conditional additive noise satisfies Assumption \ref{assumption-Y-marginal-markov}.
Let $R := \log |{\cal A}| - \frac{1}{n} \log M_n$. If $H^W(X|Y) < R < H_0^{\downarrow,W}(X|Y)$, then we have
\begin{eqnarray}
\lefteqn{ - \log \Pce^{(n)}(M_n) } \\&\le&
 \inf_{s > 0 \atop -1 < \tilde{\theta} < \theta(a(R )) } \bigg[ (n-1) (1+s) \tilde{\theta}
  \left\{ H_{1+\tilde{\theta}}^{\downarrow,W}(X|Y) - H_{1+(1+s)\tilde{\theta}}^{\downarrow,W}(X|Y) \right\} + \delta_1 \\
  && - (1+s) \log \left(1 - 2e^{(n-1)[ (\theta(a(R )) - \tilde{\theta}) a(R ) - \theta(a(R )) H_{1+\theta(a(R ))}^{\downarrow,W}(X|Y)
   + \tilde{\theta} H_{1+\tilde{\theta}}^{\downarrow,W}(X|Y)] + \delta_2} \right) \bigg] / s,
\end{eqnarray}
where 
$\theta(a) = \theta^\downarrow(a)$ and $a(R ) = a^\downarrow(R )$ 
are the inverse functions defined by \eqref{eq:definition-theta-inverse-multi-markov} 
and \eqref{eq:definition-a-inverse-multi-markov} respectively, and
\begin{eqnarray}
\delta_1 &:=& (1+s) \overline{\delta}(\tilde{\theta}) - \underline{\delta}((1+s)\tilde{\theta}), \\
\delta_2 &:=& \frac{ (\theta(a(R ))-\tilde{\theta}) R - (1+\tilde{\theta}) \underline{\delta}(\theta(a(R ))) + (1+\theta(a(R ))) \overline{\delta}(\tilde{\theta})}{1 + \theta(a(R ))}.
\end{eqnarray}
\end{theorem}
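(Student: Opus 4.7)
The plan is to mirror the proof of Theorem \ref{theorem:multi-source-finite-markov-converse-assumptiotn-1}, replacing the source coding converse (Theorem \ref{theorem:multi-source-converse}) with its channel coding counterpart (Theorem \ref{theorem:channel-exponential-converse}). First I would apply \eqref{eq:bound-channel-exponential-converse-1} with the choice $Q_{Y^n} = P_{Y^n}$. Under this specialization, the conditional R\'enyi entropy $H_{1+\theta}(P_{X^nY^n}|P_{Y^n})$ reduces to the lower conditional R\'enyi entropy $H_{1+\theta}^{\downarrow}(X^n|Y^n)$ by definition. Note also that the rate parameter inside Theorem \ref{theorem:channel-exponential-converse} equals $n\log|{\cal A}| - \log M_n = nR$ with the present normalization $R := \log|{\cal A}| - \frac{1}{n}\log M_n$; this accounts for the factor of $nR$ that will appear in the exponent.

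Next, I would invoke Lemma \ref{lemma:mult-terminal-finite-evaluation-down-conditional-renyi} to convert each $n$-shot quantity $\theta H_{1+\theta}^{\downarrow}(X^n|Y^n)$ into its transition matrix counterpart $(n-1)\theta H_{1+\theta}^{\downarrow,W}(X|Y)$, at the cost of additive residues bounded by $\overline{\delta}(\theta)$ from above and $\underline{\delta}(\theta)$ from below. Using the one-sided bound appropriate to each sign will produce exactly the combination $\delta_1 = (1+s)\overline{\delta}(\tilde{\theta}) - \underline{\delta}((1+s)\tilde{\theta})$ in the $(1+s)\tilde{\theta}\{H_{1+\tilde{\theta}}^{\downarrow}(X^n|Y^n) - H_{1+(1+s)\tilde{\theta}}^{\downarrow}(X^n|Y^n)\}$ term, and an analogous combination $\delta_2$ inside the exponent under the logarithm.

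To put the logarithmic term into the form claimed in the statement, I would restrict attention to $-1 < \tilde{\theta} < \theta(a(R))$ and set $\vartheta = \frac{\theta(a(R)) - \tilde{\theta}}{1+\tilde{\theta}}$, which is nonnegative on that range. With this choice, $\tilde{\theta} + \vartheta(1+\tilde{\theta}) = \theta(a(R))$ and $1+\vartheta = \frac{1+\theta(a(R))}{1+\tilde{\theta}}$, and using the defining identity \eqref{eq:definition-a-inverse-multi-markov} for $a(R)$, the exponent $\frac{-\vartheta \cdot nR + \cdots}{1+\vartheta}$ simplifies to $(n-1)[(\theta(a(R))-\tilde{\theta})a(R) - \theta(a(R))H_{1+\theta(a(R))}^{\downarrow,W}(X|Y) + \tilde{\theta}H_{1+\tilde{\theta}}^{\downarrow,W}(X|Y)] + \delta_2$. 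This is exactly the algebraic identity invoked in the derivation of \eqref{eq:bound-multi-source-exponential-converse-2} from \eqref{eq:bound-multi-source-exponential-converse-1}, now repeated in the $n$-shot Markovian setting.

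The only real obstacle is bookkeeping: one must verify that after the change of variables $\vartheta = \frac{\theta(a(R)) - \tilde{\theta}}{1+\tilde{\theta}}$, the three single-shot error contributions from Lemma \ref{lemma:mult-terminal-finite-evaluation-down-conditional-renyi} (for the entropies at orders $1+\tilde{\theta}$, $1+(1+s)\tilde{\theta}$, and $1+\theta(a(R))$) assemble into exactly $\delta_1$ and $\delta_2$ as stated, including the denominator $1+\theta(a(R))$ in $\delta_2$. The sign pattern needs to be chosen so that each substitution moves the inequality in the correct direction (upper bound on $-\log \Pce(M_n)$). Aside from this accounting, the steps are a direct transcription of the source coding argument, so the proof reduces to one paragraph citing Theorem \ref{theorem:channel-exponential-converse} and Lemma \ref{lemma:mult-terminal-finite-evaluation-down-conditional-renyi}.
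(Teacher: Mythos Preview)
Your proposal is correct and follows essentially the same approach as the paper: the paper explicitly states that the proofs in Section \ref{subsection:channel-finite-markov} are ``almost the same as those in Section \ref{subsection:multi-source-finite-markov}, and thus omitted,'' and the proof of the source coding analogue (Theorem \ref{theorem:multi-source-finite-markov-converse-assumptiotn-1}) proceeds exactly as you describe---apply \eqref{eq:bound-multi-source-exponential-converse-1} with $Q_{Y^n}=P_{Y^n}$, invoke Lemma \ref{lemma:mult-terminal-finite-evaluation-down-conditional-renyi}, then restrict $-1<\tilde{\theta}<\theta(a(R))$ and set $\vartheta=\frac{\theta(a(R))-\tilde{\theta}}{1+\tilde{\theta}}$.
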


Next, we derive tighter bounds under Assumption \ref{assumption-memory-through-Y}.
From Lemma \ref{lemma:channel-gallager-bound} and 
Lemma \ref{lemma:multi-terminal-finite-evaluation-upper-conditional-renyi}, we have the following achievability bound.
\begin{theorem}[Direct, Ass. 2] \label{theorem:channel-finite-marko-direct-assumptiton-2}
Suppose that the transition matrix $W$ of the conditional additive noise satisfies Assumption \ref{assumption-memory-through-Y}. 
Let $R := \frac{n-k}{n} \log |{\cal A}| $. Then we have
\begin{eqnarray}
- \log \barPce(n,k) \ge \sup_{-\frac{1}{2} \le \theta \le 0} 
 \frac{-\theta n R + (n-1) \theta H_{1+\theta}^{\uparrow,W}(X|Y)}{1+\theta} + \underline{\xi}(\theta).
\end{eqnarray}
\end{theorem}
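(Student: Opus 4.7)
The plan is to derive this bound as a direct parallel to Theorem \ref{theorem:multi-source-finite-marko-direct-assumptiton-2}: combine the Gallager-type single-shot channel achievability bound of Lemma \ref{lemma:channel-gallager-bound} with the finite Perron--Frobenius evaluation of $H_{1+\theta}^\uparrow(X^n|Y^n)$ supplied by Lemma \ref{lemma:multi-terminal-finite-evaluation-upper-conditional-renyi}, which is available precisely because Assumption \ref{assumption-memory-through-Y} is in force.

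First, I would take negative logarithm in Lemma \ref{lemma:channel-gallager-bound} and substitute $(n-k)\log|{\cal A}| = nR$ to get, for every $\theta \in [-1/2,0]$,
\begin{eqnarray}
-\log \barPce(n,k) \ge -\frac{\theta n R}{1+\theta} + \frac{\theta}{1+\theta} H_{1+\theta}^\uparrow(X^n|Y^n).
\end{eqnarray}
Next, I would invoke Lemma \ref{lemma:multi-terminal-finite-evaluation-upper-conditional-renyi} to lower bound the second term on the right. The sign check is the only subtle point: on $\theta \in [-1/2,0]$ one has $\theta/(1+\theta)\le 0$, so the $\frac{\theta}{1+\theta}H_{1+\theta}^\uparrow(X^n|Y^n)$ term is already written in the same sign convention as the two-sided bound of Lemma \ref{lemma:multi-terminal-finite-evaluation-upper-conditional-renyi}; the left inequality of that lemma reads
\begin{eqnarray}
\frac{\theta}{1+\theta} H_{1+\theta}^\uparrow(X^n|Y^n) \ge (n-1)\frac{\theta}{1+\theta} H_{1+\theta}^{\uparrow,W}(X|Y) + \underline{\xi}(\theta),
\end{eqnarray}
which is exactly what is needed.

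Combining these two inequalities yields
\begin{eqnarray}
-\log \barPce(n,k) \ge \frac{-\theta n R + (n-1)\theta H_{1+\theta}^{\uparrow,W}(X|Y)}{1+\theta} + \underline{\xi}(\theta)
\end{eqnarray}
for every $\theta \in [-1/2,0]$. Finally, taking the supremum over this interval gives the theorem. The main ``obstacle'' is nothing substantive---it is simply the bookkeeping of the sign of $\theta/(1+\theta)$ to ensure that the lower half of Lemma \ref{lemma:multi-terminal-finite-evaluation-upper-conditional-renyi} is the half that preserves the direction of the achievability bound. Since both ingredients have been established, the argument reduces to a one-line substitution plus a supremum, and no further new ideas are required.
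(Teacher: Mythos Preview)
Your proposal is correct and matches the paper's own approach exactly: the paper states that the bound follows from Lemma \ref{lemma:channel-gallager-bound} and Lemma \ref{lemma:multi-terminal-finite-evaluation-upper-conditional-renyi} (the details being the same as in Theorem \ref{theorem:multi-source-finite-marko-direct-assumptiton-2}), which is precisely the combination you carry out. Your sign check on $\theta/(1+\theta)$ is the right observation, and after that the argument is indeed just substitution plus taking the supremum over $\theta\in[-\tfrac12,0]$.
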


By using Theorem \ref{theorem:channel-exponential-converse} for $Q_{Y^n} = P_{Y^n}^{(1+\theta(a(R )))}$
and Lemma \ref{lemma:multi-terminal-finite-evaluation-two-parameter-conditional-renyi}, we can derive the following converse bound.
\begin{theorem}[Converse, Ass. 2] \label{theorem:channel-finite-marko-converse-assumptiton-2}
Suppose that the transition matrix $W$ of the conditional additive noise satisfies Assumption \ref{assumption-memory-through-Y}. 
Let $R := \log|{\cal A}| - \frac{1}{n} \log M_n$. If $H^W(X|Y) < R < H_0^{\uparrow,W}(X|Y)$, we have
\begin{eqnarray}
\lefteqn{ - \log \Pce^{(n)}(M_n) } \\
&\le& \inf_{s > 0 \atop -1 < \tilde{\theta} < \theta(a(R ))} \bigg[
 (n-1) (1+s) \tilde{\theta} 
  \left\{ H_{1+\tilde{\theta},1+\theta(a(R ))}^W(X|Y) - H_{1+(1+s)\tilde{\theta},1+\theta(a(R ))}^W(X|Y) \right\} + \delta_1 \\
 && - (1+s)\log \left( 1 - 2e^{(n-1)[ (\theta(a(R ))-\tilde{\theta}) a(R ) - \theta(a(R )) H_{1+\theta(a(R ))}^{\uparrow,W}(X|Y)
  + \tilde{\theta} H_{1+\tilde{\theta},1+\theta(a(R ))}^W(X|Y) ] + \delta_2} \right)
\bigg] / s,
\end{eqnarray}
where 
$\theta(a) = \theta^\uparrow(a)$ and $a(R ) = a^\uparrow(R )$ 
are the inverse functions defined by \eqref{eq:definition-theta-inverse-markov-optimal-Q} 
and \eqref{eq:definition-a-inverse-markov-optimal-Q} respectively, and
\begin{eqnarray}
\delta_1 &:=& (1+s) \overline{\zeta}(\tilde{\theta}, \theta(a(R ))) - \underline{\zeta}((1+s)\tilde{\theta},\theta(a(R ))), \\
\delta_2 &:=& \frac{ (\theta(a(R ))-\tilde{\theta}) R - (1+\tilde{\theta}) \underline{\zeta}(\theta(a(R )),\theta(a(R )))
 + (1+\theta(a(R ))) \overline{\zeta}(\tilde{\theta},\theta(a(R ))) }{1+\theta(a(R ))}.
\end{eqnarray}
\end{theorem}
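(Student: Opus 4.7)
The plan is to mirror the proof of Theorem \ref{theorem:multi-source-finite-marko-converse-assumptiton-2}, substituting the channel-coding single-shot converse of Theorem \ref{theorem:channel-exponential-converse} for its source-coding counterpart. First I would apply \eqref{eq:bound-channel-exponential-converse-1} to the $n$-fold product channel with the asymptotically optimal choice $Q_{Y^n} = P_{Y^n}^{(1+\theta(a(R)))}$. By the definition \eqref{eq:two-parameter-conditional-renyi} applied to $(X^n,Y^n)$, this choice makes $H_{1+\theta}(P_{X^nY^n}|Q_{Y^n})$ equal to the two-parameter conditional R\'enyi entropy $H_{1+\theta,\,1+\theta(a(R))}(X^n|Y^n)$; in particular, at $\theta = \theta(a(R))$ it coincides with $H_{1+\theta(a(R))}^{\uparrow}(X^n|Y^n)$ by Statement \ref{item:multi-terminal-single-shot-property-7} of Lemma \ref{lemma:multi-terminal-single-shot-property}.

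Next I would restrict the infimum to $-1 < \tilde{\theta} < \theta(a(R))$ and fix the second parameter by $\vartheta = (\theta(a(R)) - \tilde{\theta})/(1+\tilde{\theta}) \ge 0$. This choice is engineered so that $\tilde{\theta} + \vartheta(1+\tilde{\theta}) = \theta(a(R))$ and $1+\vartheta = (1+\theta(a(R)))/(1+\tilde{\theta})$, collapsing the exponent inside the logarithm of \eqref{eq:bound-channel-exponential-converse-1} to
\begin{eqnarray*}
\lefteqn{\tfrac{\theta(a(R))-\tilde{\theta}}{1+\theta(a(R))}\,(n\log|{\cal A}| - \log M_n)} \\
&&{}- \tfrac{(1+\tilde{\theta})\theta(a(R))}{1+\theta(a(R))}\,H_{1+\theta(a(R))}^{\uparrow}(X^n|Y^n) + \tilde{\theta}\,H_{1+\tilde{\theta},\,1+\theta(a(R))}(X^n|Y^n).
\end{eqnarray*}
Substituting the defining identity $R = (1+\theta(a(R)))\,a(R) - \theta(a(R))\,H_{1+\theta(a(R))}^{\uparrow,W}(X|Y)$ from \eqref{eq:definition-a-inverse-markov-optimal-Q} lets $nR$ be rewritten as $(n-1)R + R$ and recasts the $(n-1)R$ contribution into the form $(n-1)[(\theta(a(R))-\tilde{\theta})\,a(R) - \theta(a(R))\,H_{1+\theta(a(R))}^{\uparrow,W}(X|Y)]$, with the stray single $R$ producing an $O(1)$ residue absorbed into $\delta_2$.

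Third, I would replace each remaining $n$-fold conditional R\'enyi quantity by its transition-matrix counterpart using Lemma \ref{lemma:multi-terminal-finite-evaluation-two-parameter-conditional-renyi} (for $H_{1+\tilde{\theta},\,1+\theta(a(R))}$ and $H_{1+(1+s)\tilde{\theta},\,1+\theta(a(R))}$) and Lemma \ref{lemma:multi-terminal-finite-evaluation-upper-conditional-renyi} (for $H_{1+\theta(a(R))}^{\uparrow}$). The factor $(1+s)$ in the prefactor of the difference $H_{1+\tilde{\theta}} - H_{1+(1+s)\tilde{\theta}}$ then produces the combination $(1+s)\overline{\zeta}(\tilde{\theta},\theta(a(R))) - \underline{\zeta}((1+s)\tilde{\theta},\theta(a(R)))$ which is precisely $\delta_1$; bundling the constants that drop out of the exponent inside the logarithm gives $\delta_2$ as stated.

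The main obstacle is purely bookkeeping rather than conceptual, namely tracking signs when applying Lemma \ref{lemma:multi-terminal-finite-evaluation-two-parameter-conditional-renyi}: the two-sided estimate has different expressions $\overline{\zeta}, \underline{\zeta}$ depending on whether $\theta$ is positive or negative (cf.\ \eqref{eq:definition-overline-zeta}--\eqref{eq:definition-underline-zeta-2}), and each occurrence must be matched to the correct side so that every substitution preserves the direction of inequality required to upper bound $\Pce(M_n)$. Once the signs are paired correctly—upper bounds used where quantities carry negative coefficients and vice versa—the expression collapses to the asserted form, and the restriction $H^W(X|Y) < R < H_0^{\uparrow,W}(X|Y)$ is exactly what is needed to guarantee $\theta(a(R)) \in (-1,0)$ so that a valid $\tilde{\theta}$ range exists.
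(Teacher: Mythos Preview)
Your proposal is correct and follows essentially the same route as the paper: apply \eqref{eq:bound-channel-exponential-converse-1} of Theorem \ref{theorem:channel-exponential-converse} with $Q_{Y^n} = P_{Y^n}^{(1+\theta(a(R)))}$, restrict to $-1 < \tilde{\theta} < \theta(a(R))$ with $\vartheta = (\theta(a(R)) - \tilde{\theta})/(1+\tilde{\theta})$, and then convert the $n$-fold R\'enyi quantities to transition-matrix quantities via Lemma \ref{lemma:multi-terminal-finite-evaluation-two-parameter-conditional-renyi}. The only cosmetic difference is that the paper cites Lemma \ref{lemma:multi-terminal-finite-evaluation-two-parameter-conditional-renyi} alone (its constants $\overline{\zeta},\underline{\zeta}$ already incorporate the $\overline{\xi},\underline{\xi}$ of Lemma \ref{lemma:multi-terminal-finite-evaluation-upper-conditional-renyi}), whereas you invoke both lemmas explicitly.
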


Finally, when ${\cal Y}$ is singleton, i.e., the channel is additive, we can derive the 
following achievability bound from Lemma \ref{lemma:channel-additive-channel-bound}.
\begin{theorem}[Direct, Singleton] \label{theorem:channel-finite-markov-additive}
Let $R := \frac{n-k}{n} \log |{\cal A}|$. Then we have
\begin{eqnarray}
- \log \barPce(n,k) \ge \sup_{-\frac{1}{2} \le \theta \le 0} 
 \frac{ - \theta n R + (n-1) \theta H_{1+\theta}^W(X) + \underline{\delta}(\theta)}{1 + \theta}.
\end{eqnarray}
\end{theorem}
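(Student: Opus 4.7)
The plan is to combine the single-shot achievability bound for additive channels, namely Lemma \ref{lemma:channel-additive-channel-bound}, with the finite-block evaluation of the (unconditional) Rényi entropy rate coming from the Perron--Frobenius analysis. Since $\mathcal{Y}$ is singleton, the R\'enyi entropy $H_{1+\theta}(X^n)$ coincides with $H_{1+\theta}^\downarrow(X^n|Y^n)$ and $H_{1+\theta}^W(X)$ coincides with $H_{1+\theta}^{\downarrow,W}(X|Y)$; this is exactly the specialization noted just after the definition \eqref{eq:definition-lower-conditional-renyi-markov}.

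First, I would rewrite Lemma \ref{lemma:channel-additive-channel-bound} in logarithmic form. Using $R = \frac{n-k}{n}\log|\mathcal{A}|$, so that $(n-k)\log|\mathcal{A}| = nR$, the lemma gives
\begin{align}
-\log \barPce(n,k) \;\ge\; \sup_{-\frac{1}{2} \le \theta \le 0} \frac{-\theta n R + \theta H_{1+\theta}(X^n)}{1+\theta}.
\end{align}
Next, I would apply Lemma \ref{lemma:mult-terminal-finite-evaluation-down-conditional-renyi} specialized to singleton $\mathcal{Y}$. The lower bound there reads
\begin{align}
\theta H_{1+\theta}(X^n) \;\ge\; (n-1)\,\theta H_{1+\theta}^W(X) + \underline{\delta}(\theta),
\end{align}
valid for $\theta \in (-1,0) \cup (0,\infty)$ and extended by continuity to $\theta=0$. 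Note that the direction of the inequality is the one I need here (not its flip) because this is the side of \eqref{eq:definition-underline-delta} that does not involve dividing through by $\theta$.

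Finally, since $1+\theta \ge 1/2 > 0$ on the range $-\tfrac{1}{2}\le \theta\le 0$, dividing by $1+\theta$ preserves the inequality, so
\begin{align}
\frac{-\theta n R + \theta H_{1+\theta}(X^n)}{1+\theta} \;\ge\; \frac{-\theta n R + (n-1)\theta H_{1+\theta}^W(X) + \underline{\delta}(\theta)}{1+\theta},
\end{align}
and taking the supremum over $\theta \in [-\tfrac{1}{2},0]$ yields the claimed bound. There is essentially no obstacle here: the only point needing mild care is the sign bookkeeping (one must keep $\theta H_{1+\theta}$ rather than $H_{1+\theta}$ so that the inequality from Lemma \ref{lemma:mult-terminal-finite-evaluation-down-conditional-renyi} goes in the correct direction, since $\theta$ is negative). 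Everything else is a direct substitution parallel to the proof of Theorem \ref{theorem:channel-finite-marko-direct-assumptiton-1}, with Lemma \ref{lemma:channel-additive-channel-bound} used in place of Lemma \ref{lemma:channel-loose-bound}.
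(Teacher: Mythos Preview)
Your proposal is correct and matches the paper's approach exactly: the paper states that this bound is derived from Lemma \ref{lemma:channel-additive-channel-bound} (combined, as in the other finite-block theorems whose proofs it omits, with the Perron--Frobenius evaluation in Lemma \ref{lemma:mult-terminal-finite-evaluation-down-conditional-renyi} specialized to singleton $\mathcal{Y}$). Your care about keeping the quantity $\theta H_{1+\theta}$ so that the inequality direction is preserved, and about $1+\theta>0$ on $[-\tfrac12,0]$, is precisely the right bookkeeping.
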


\begin{remark}
Our treatment for Markovian conditional additive channel covers Markovian regular channels because Markovian regular channel can be reduced to Markovian conditional additive channel as follows.
Let $\tilde{\mathbf{X}} = \{ \tilde{X}^n \}_{n=1}^\infty$ be a Markov chain on ${\cal B}$
whose distribution is given by
\begin{eqnarray}
P_{\tilde{X}^n}(\tilde{x}^n) = Q(\tilde{x}_1) \prod_{i=2}^n \tilde{W}(\tilde{x}_i|\tilde{x}_{i-1})
\end{eqnarray}
for a transition matrix $\tilde{W}$ and an initial distribution $Q$.
Let $(\mathbf{X},\mathbf{Y}) = \{ (X^n,Y^n) \}_{n=1}^\infty$ be the noise process 
of the conditional additive channel derived from the noise process $\tilde{\mathbf{X}}$ of
the regular channel  by the argument of Section \ref{subsection:conversion-regular-g-additive}.
Since we can write 
\begin{eqnarray}
P_{X^nY^n}(x^n,y^n) = Q(\iota_{y_1}^{-1}(\vartheta_{y_1}(x_1))) \frac{1}{|\rom{Stb}(0_{y_1})|}
 \prod_{i=2}^n \tilde{W}(\iota_{y_i}^{-1}(\vartheta_{y_i}(x_i))| \iota_{y_{i-1}}^{-1}(\vartheta_{y_{i-1}}(x_{i-1}))) 
  \frac{1}{|\rom{Stb}(0_{y_i})|},
\end{eqnarray}
the process $(\mathbf{X},\mathbf{Y})$ is also a Markov chain. 
Thus, the regular channel given by $\tilde{\mathbf{X}}$ is reduced to the conditional 
additive channel given by $(\mathbf{X},\mathbf{Y})$.
\end{remark}

\subsection{Second Order} \label{subsection:multi-random-second-order}

To discuss the asymptotic performance, we introduce the quantity
\begin{eqnarray}
C := \log |{\cal A}| - H^W(X|Y).
\end{eqnarray}
By applying the central limit theorem 
(cf.~\cite[Theorem 27.4, Example 27.6]{billingsley-book}) to
Lemma \ref{lemma:channel-spectrum-direct} and 
Lemma \ref{lemma:channel-converse-general-additive-specialized} for $Q_{Y^n} = P_{Y^n}$, 
and by using Theorem \ref{theorem:multi-markov-variance}, we have the following.
\begin{theorem} \label{theorem:channel-second-order}
Suppose that the transition matrix $W$ of the conditional additive noise satisfies Assumption \ref{assumption-Y-marginal-markov}.
For arbitrary $\varepsilon \in (0,1)$, we have
\begin{eqnarray}
\log M(n,\varepsilon)=
k(n,\varepsilon) \log |{\cal A}|
= C n + \sqrt{\san{V}^W(X|Y)} \Phi^{-1}(\varepsilon)\sqrt{n}+ 
o(\sqrt{n}).
\end{eqnarray}
\end{theorem}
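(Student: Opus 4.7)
The plan is to mirror the proof of Theorem~\ref{theorem:source-coding-second-order}, combining the information-spectrum achievability bound of Lemma~\ref{lemma:channel-spectrum-direct} with the converse bound of Lemma~\ref{lemma:channel-converse-general-additive-specialized} specialized to $Q_{Y^n}=P_{Y^n}$, and evaluating the relevant tail probabilities via a Markov central limit theorem. By the reduction in Section~\ref{subsection:conversion-regular-g-additive}, I may assume without loss of generality that the channel is a conditional additive channel with Markov noise $(X^n,Y^n)$, so that both bounds involve only the scalar $-\log P_{X^n|Y^n}(X^n|Y^n)$. Under Assumption~\ref{assumption-Y-marginal-markov} the decomposition $W(x,y|x',y') = W(y|y')\,W(x|x',y',y)$ yields
\begin{align*}
-\log P_{X^n|Y^n}(X^n|Y^n) = -\log P_{X_1|Y_1}(X_1|Y_1) + \sum_{i=2}^n \bigl[-\log W(X_i|X_{i-1},Y_{i-1},Y_i)\bigr],
\end{align*}
which is, up to an initial boundary term, an additive functional of consecutive states of the ergodic chain $(X_i,Y_i)$. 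The Markov CLT \cite[Theorem~27.4, Example~27.6]{billingsley-book}, together with the single-letter variance formula of Theorem~\ref{theorem:multi-markov-variance}, therefore shows that $(-\log P_{X^n|Y^n}(X^n|Y^n) - nH^W(X|Y))/\sqrt{n}$ converges in distribution to $\mathcal{N}(0,\san{V}^W(X|Y))$.

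Write $L := \sqrt{\san{V}^W(X|Y)}\,\Phi^{-1}(\varepsilon)$ and fix $\delta>0$. For achievability, pick the largest integer $k$ with $k\log|{\cal A}| \le nC + \sqrt{n}(L-\delta)$, equivalently $(n-k)\log|{\cal A}| \ge nH^W(X|Y) - \sqrt{n}(L-\delta)$, and apply Lemma~\ref{lemma:channel-spectrum-direct} with $\gamma = nH^W(X|Y) - \sqrt{n}(L-\delta/2)$. Then $e^\gamma/|{\cal A}|^{n-k}$ decays like $e^{-\sqrt{n}\delta/2}$, while the tail probability converges by the CLT to $\Phi(\Phi^{-1}(\varepsilon)-\delta/(2\sqrt{\san{V}^W(X|Y)}))<\varepsilon$, so $\barPce(n,k)<\varepsilon$ eventually and $k(n,\varepsilon)\log|{\cal A}| \ge nC + \sqrt{n}(L-\delta) - O(1)$; combined with the trivial inequality $|{\cal A}|^{k(n,\varepsilon)} \le M(n,\varepsilon)$, this also lower bounds $\log M(n,\varepsilon)$. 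For the converse, set $\log M_n = nC + \sqrt{n}(L+\delta)$ and apply Lemma~\ref{lemma:channel-converse-general-additive-specialized} with $\gamma = nC + \sqrt{n}(L+\delta/2)$: again $e^\gamma/M_n = e^{-\sqrt{n}\delta/2}$, and the tail probability converges to $\Phi(\Phi^{-1}(\varepsilon)+\delta/(2\sqrt{\san{V}^W(X|Y)})) > \varepsilon$, forcing $\Pce(M_n)>\varepsilon$ eventually and thus $\log M(n,\varepsilon) < \log M_n$. Sending $\delta\to 0$ sandwiches both $(\log M(n,\varepsilon)-nC)/\sqrt{n}$ and $(k(n,\varepsilon)\log|{\cal A}|-nC)/\sqrt{n}$ at $L$.

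The sole non-routine ingredient is the Markov CLT applied to the non-obviously additive functional $-\log P_{X^n|Y^n}(X^n|Y^n)$; thanks to the telescoping identity above this reduces to the standard CLT for ergodic Markov chains applied to a bounded function of consecutive states, with the limiting variance identified through Theorem~\ref{theorem:multi-markov-variance}. All remaining steps are straightforward bookkeeping parallel to the proof of Theorem~\ref{theorem:source-coding-second-order}.
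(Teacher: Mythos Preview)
Your proposal is correct and follows essentially the same route as the paper: the paper's own proof simply states that it ``can be proved exactly in the same manner as Theorem~\ref{theorem:source-coding-second-order},'' i.e., by combining Lemma~\ref{lemma:channel-spectrum-direct} and Lemma~\ref{lemma:channel-converse-general-additive-specialized} (with $Q_{Y^n}=P_{Y^n}$) with the Markov CLT and Theorem~\ref{theorem:multi-markov-variance}, which is precisely what you spell out. The only cosmetic difference is that the paper uses an $n^{1/4}$ slack in $\gamma$ rather than a fixed $\delta$ followed by $\delta\to 0$; both choices are standard and yield the same conclusion.
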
 
\begin{proof}
It can be proved exactly in the same manner as Theorem \ref{theorem:source-coding-second-order}.
\end{proof}

From the above theorem, the (first-order) capacity of the conditional additive channel 
under Assumption \ref{assumption-Y-marginal-markov} is given by
\begin{eqnarray}
\lim_{n\to\infty} \frac{1}{n} \log M(n,\varepsilon)
= \lim_{n\to \infty} \frac{1}{n} \log \frac{k(n,\varepsilon) \log |{\cal A}|}{n} 
= C 
\end{eqnarray}
for every $0 < \varepsilon < 1$.
In the next subsections, we consider the asymptotic behavior of the error probability when the rate
is smaller than the capacity in the moderate deviation regime
and the large deviation regime, respectively.

\subsection{Moderate Deviation} \label{subsection:channel-mdp}

From Theorem \ref{theorem:channel-finite-marko-direct-assumptiton-1} and
Theorem \ref{theorem:channel-finite-markov-converse-assumptiotn-1}, we have the following.
\begin{theorem} \label{theorem:channel-mdp}
Suppose that the transition matrix $W$ of the conditional additive noise satisfies Assumption \ref{assumption-Y-marginal-markov}.
For arbitrary $t \in (0,1/2)$ and $\delta > 0$, we have 
\begin{eqnarray}
\lim_{n\to\infty} - \frac{1}{n^{1-2t}} \log \Pce^{(n)}\left(e^{n C + n^{1-t}\delta} \right)
&=& \lim_{n\to\infty} - \frac{1}{n^{1-2t}} \log \barPce^{(n)}\left(n, \frac{nC - n^{1-t}\delta}{\log|{\cal A}|} \right) \\
&=& \frac{\delta^2}{2 \san{V}^W(X|Y)}.
\end{eqnarray}
\end{theorem}
\begin{proof}
It can be proved exactly in the same manner as Theorem \ref{theorem:multi-moderate-deviation}.
\end{proof}

\subsection{Large Deviation} \label{subsection:channel-ldp}

From Theorem \ref{theorem:channel-finite-marko-direct-assumptiton-1} and
Theorem \ref{theorem:channel-finite-markov-converse-assumptiotn-1}, we have the following.
\begin{theorem} \label{theorem:large-deviation-source-coding-assumption-1}
Suppose that the transition matrix $W$ of the conditional additive noise satisfies Assumption \ref{assumption-Y-marginal-markov}.
For $H^W(X|Y) < R$, we have
\begin{eqnarray}
\liminf_{n\to\infty} - \frac{1}{n} \log \barPce^{(n)}\left(n, n\left(1-\frac{R}{\log|{\cal A}|}\right)\right) 
 \ge \sup_{-1 \le \theta \le 0}\left[ -\theta R + \theta H_{1+\theta}^{\downarrow,W}(X|Y) \right].
\end{eqnarray}
On the other hand, for $H^W(X|Y) < R < H_0^{\downarrow,W}(X|Y)$, we have
\begin{eqnarray}
\limsup_{n\to\infty} - \frac{1}{n} \log \Pce^{(n)}\left(e^{n(\log|{\cal A}|-R)}\right) 
 &\le& - \theta(a(R )) a(R ) + \theta(a(R )) H_{1+\theta(a(R ))}^{\downarrow,W}(X|Y) \\
 &=& \sup_{-1 < \theta \le 0} \frac{- \theta R + \theta H_{1+\theta}^{\downarrow,W}(X|Y)}{1+\theta}.
 \label{eq:converse-channel-assumption-1}
\end{eqnarray}
\end{theorem}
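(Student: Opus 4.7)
The plan is to mirror the proof of Theorem \ref{theorem:source-coding-ldp-assumption-1} using the channel coding counterparts Theorem \ref{theorem:channel-finite-marko-direct-assumptiton-1} and Theorem \ref{theorem:channel-finite-markov-converse-assumptiotn-1}. Setting $k_n = n(1 - R/\log|{\cal A}|)$ makes $\frac{n-k_n}{n}\log|{\cal A}| = R$ in the achievability bound, and setting $M_n = e^{n(\log|{\cal A}|-R)}$ makes $\log|{\cal A}| - \frac{1}{n}\log M_n = R$ in the converse bound, so both finite-blocklength theorems apply directly at the target rate $R$.

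For the achievability part, I would apply Theorem \ref{theorem:channel-finite-marko-direct-assumptiton-1} and obtain
\begin{align}
- \log \barPce(n,k_n)
\ge \sup_{-1\le \theta\le 0}\bigl[-\theta n R + (n-1)\theta H_{1+\theta}^{\downarrow,W}(X|Y) + \underline{\delta}(\theta)\bigr].
\end{align}
Dividing by $n$, restricting $\theta$ to a compact subinterval so that $\underline{\delta}(\theta)$ is uniformly bounded, and noting $(n-1)/n \to 1$, the $\liminf$ is at least $\sup_{-1\le\theta\le 0}[-\theta R + \theta H_{1+\theta}^{\downarrow,W}(X|Y)]$, which is the claimed achievability exponent.

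For the converse, I would follow the asymptotic argument in the proof of Theorem \ref{theorem:source-coding-ldp-assumption-1} verbatim, but starting from Theorem \ref{theorem:channel-finite-markov-converse-assumptiotn-1}. Fix any $s>0$ and $-1<\tilde\theta<\theta(a(R))$. The additive constants $\delta_1$ and $\delta_2$ in that theorem are $O(1)$ in $n$, so the logarithmic term $-(1+s)\log(1-2e^{(n-1)[\cdots]+\delta_2})$ divided by $n$ vanishes provided the exponent inside the bracket is strictly negative, which holds because $\tilde\theta<\theta(a(R))$ makes $(\theta(a(R))-\tilde\theta)a(R) - \theta(a(R))H_{1+\theta(a(R))}^{\downarrow,W}(X|Y) + \tilde\theta H_{1+\tilde\theta}^{\downarrow,W}(X|Y) < 0$ by concavity of $\theta \mapsto \theta H_{1+\theta}^{\downarrow,W}(X|Y)$ (Statement~\ref{item:multi-terminal-markov-property-1} of Lemma~\ref{lemma:multi-terminal-markov-property}). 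This yields
\begin{align}
\limsup_{n\to\infty} -\frac{1}{n}\log \Pce(M_n)
\le \frac{1+s}{s}\tilde\theta\bigl\{H_{1+\tilde\theta}^{\downarrow,W}(X|Y) - H_{1+(1+s)\tilde\theta}^{\downarrow,W}(X|Y)\bigr\}.
\end{align}
Letting $s \to 0$ converts the right-hand side into $-\tilde\theta\cdot\frac{d[\theta H_{1+\theta}^{\downarrow,W}(X|Y)]}{d\theta}\big|_{\theta=\tilde\theta} + \tilde\theta H_{1+\tilde\theta}^{\downarrow,W}(X|Y)$, and then letting $\tilde\theta \to \theta(a(R))$ and using the definition \eqref{eq:definition-theta-inverse-multi-markov} gives $-\theta(a(R))a(R) + \theta(a(R))H_{1+\theta(a(R))}^{\downarrow,W}(X|Y)$, which is the first form of the converse exponent. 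The alternative expression $\sup_{-1<\theta\le 0}[-\theta R + \theta H_{1+\theta}^{\downarrow,W}(X|Y)]/(1+\theta)$ then follows immediately from Lemma \ref{lemma:legendra-transform-2}.

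The main obstacle is essentially cosmetic rather than substantive: one must verify that the $O(1)$ correction terms $\delta_1,\delta_2$ coming from $\overline{\delta},\underline{\delta}$ in Lemma \ref{lemma:mult-terminal-finite-evaluation-down-conditional-renyi} remain uniformly bounded as $s \to 0$ and $\tilde\theta$ approaches $\theta(a(R))$ from below, which is ensured by the continuity of the Perron--Frobenius eigenvector in $\theta$ on the relevant compact range, and that the exponent inside the logarithm stays bounded away from $0$ for large $n$ along the chosen path of parameters. Apart from this routine bookkeeping, the proof is a direct transcription of the source coding argument because the finite blocklength bounds have the same structure.
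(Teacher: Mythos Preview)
Your proposal is correct and follows exactly the approach the paper indicates: the paper's proof simply states ``It can be proved exactly in the same manner as Theorem \ref{theorem:source-coding-ldp-assumption-1},'' and you have spelled out precisely that transcription, invoking Theorem \ref{theorem:channel-finite-marko-direct-assumptiton-1} and Theorem \ref{theorem:channel-finite-markov-converse-assumptiotn-1} in place of their source-coding analogues and then carrying over the $s\to 0$, $\tilde\theta\to\theta(a(R))$ limiting argument and Lemma \ref{lemma:legendra-transform-2} verbatim.
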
 
\begin{proof}
It can be proved exactly in the same manner as Theorem \ref{theorem:source-coding-ldp-assumption-1}.
\end{proof}

Under Assumption \ref{assumption-memory-through-Y}, from 
Theorem \ref{theorem:channel-finite-marko-direct-assumptiton-2} and 
Theorem \ref{theorem:channel-finite-marko-converse-assumptiton-2}, we have the following tighter bound.
\begin{theorem} \label{theorem:channel-ldp-2}
Suppose that the transition matrix $W$ of the conditional additive noise satisfies Assumption \ref{assumption-memory-through-Y}. 
For $H^W(X|Y) < R$, we have
\begin{eqnarray} \label{eq:channel-ldp-assumption-2-direct}
\liminf_{n\to\infty} - \frac{1}{n} \log \barPce^{(n)}\left(n, n\left(1-\frac{R}{\log|{\cal A}|}\right)\right) 
\ge \sup_{-\frac{1}{2} \le \theta \le 0} \frac{- \theta R + \theta H_{1+\theta}^{\uparrow,W}(X|Y)}{1+\theta}.
\end{eqnarray}
On the other hand, for $H^W(X|Y) < R < H_0^{\uparrow,W}(X|Y)$, we have
\begin{eqnarray} \label{eq:multi-source-ldp-assumption-2-converse}
\limsup_{n\to\infty} - \frac{1}{n} \log \Pce^{(n)}\left(e^{n(\log|{\cal A}|-R)}\right) 
 &\le& - \theta(a(R )) a(R ) + \theta(a(R )) H_{1+\theta(a(R ))}^{\uparrow,W}(X|Y) \\
 &=& \sup_{-1 < \theta \le 0} \frac{- \theta R + \theta H_{1+\theta}^{\uparrow,W}(X|Y)}{1+\theta}.
\end{eqnarray}
\end{theorem}
\begin{proof}
It can be proved exactly in the same manner as Theorem \ref{theorem:source-coding-ldp-assumption-2}.
\end{proof}

When ${\cal Y}$ is singleton, i.e., the channel is additive, 
from Theorem \ref{theorem:channel-finite-markov-additive} and 
\eqref{eq:converse-channel-assumption-1}, we have the following.
\begin{theorem} \label{theorem:channel-ldp-additive}
For $H^W(X) < R$, we have
\begin{eqnarray}
\liminf_{n\to\infty} - \frac{1}{n} \log \barPce^{(n)}\left(n, n\left(1-\frac{R}{\log|{\cal A}|}\right)\right) 
\ge \sup_{-\frac{1}{2} \le \theta \le 0} \frac{ - \theta R + \theta H_{1+\theta}^W(X) }{1 + \theta}.
\end{eqnarray}
On the other hand, for $H^W(X) < R < H_0^W(X)$, we have
\begin{eqnarray}
\limsup_{n\to\infty} - \frac{1}{n} \log \Pce^{(n)}\left(e^{n(\log|{\cal A}|-R)}\right) 
 \le \sup_{-1 < \theta \le 0} \frac{- \theta R + \theta H_{1+\theta}^W(X)}{1+\theta}.
 \end{eqnarray}
\end{theorem}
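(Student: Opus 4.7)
The plan is to derive both parts of Theorem \ref{theorem:channel-ldp-additive} by specializing the previously established non-asymptotic Markov bounds to the singleton-${\cal Y}$ case and letting $n \to \infty$, in close analogy with the proof of Theorem \ref{theorem:source-coding-ldp-assumption-2}.

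For the achievability part, I will start from Theorem \ref{theorem:channel-finite-markov-additive}, which gives the finite-blocklength bound $-\log \barPce(n,k) \ge \sup_{-\frac{1}{2} \le \theta \le 0} \frac{-\theta n R + (n-1)\theta H_{1+\theta}^W(X) + \underline{\delta}(\theta)}{1+\theta}$ with $R = \frac{n-k}{n}\log|{\cal A}|$. Dividing by $n$, taking $\liminf$, and observing that $\underline{\delta}(\theta)$ is $O(1)$ in $n$ (since the Perron-Frobenius eigenvector $v_\theta$ and the vector $w_\theta$ depend only on the fixed transition matrix and the fixed initial distribution) will yield the advertised exponent $\sup_{-1/2 \le \theta \le 0}(-\theta R + \theta H_{1+\theta}^W(X))/(1+\theta)$.

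For the converse, the key observation is that when ${\cal Y}$ is singleton, Assumption \ref{assumption-Y-marginal-markov} is trivially satisfied and the lower conditional R\'enyi entropy for the transition matrix collapses to the unconditional R\'enyi entropy, i.e., $H_{1+\theta}^{\downarrow,W}(X|Y) = H_{1+\theta}^W(X)$. Consequently, Theorem \ref{theorem:large-deviation-source-coding-assumption-1} applies verbatim and its converse \eqref{eq:converse-channel-assumption-1} immediately yields $\limsup_{n\to\infty} -\frac{1}{n}\log \Pce(e^{n(\log|{\cal A}|-R)}) \le \sup_{-1 < \theta \le 0}(-\theta R + \theta H_{1+\theta}^W(X))/(1+\theta)$ in the stated range $H^W(X) < R < H_0^W(X)$.

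There is no substantive obstacle here; the argument is essentially a bookkeeping combination of Theorem \ref{theorem:channel-finite-markov-additive} for the direct part with the singleton specialization of Theorem \ref{theorem:large-deviation-source-coding-assumption-1} for the converse. The only detail to check is that the $O(1)$ correction $\underline{\delta}(\theta)$ arising from Lemma \ref{lemma:mult-terminal-finite-evaluation-down-conditional-renyi}, as defined in \eqref{eq:definition-underline-delta}, does not contaminate the exponent after division by $n$; this is immediate from its $n$-independent definition.
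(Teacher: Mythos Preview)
Your proposal is correct and matches the paper's own approach: the paper's proof simply says ``It can be proved in the same manner as Remark~\ref{remark:singleton},'' and that remark is precisely the observation you make, namely that when ${\cal Y}$ is a singleton the unconditional $H_{1+\theta}(X)$ is a special case of $H_{1+\theta}^\downarrow(X|Y)$, so Theorem~\ref{theorem:channel-finite-markov-additive} yields the direct part while the converse in \eqref{eq:converse-channel-assumption-1} of Theorem~\ref{theorem:large-deviation-source-coding-assumption-1} specializes directly.
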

\begin{proof}
It can be proved in the same manner as Remark \ref{remark:singleton}.
\end{proof}

\section{Discussion and Conclusion}

In this paper, we have developed a unified approach to
source coding with side information and channel coding for conditional additive channel 
for finite-length and asymptotic analyses of Markov chains.
In our approach, the conditional R\'enyi entropies defined for transition
matrices play important roles. Although we only illustrated the source coding
with side-information and the channel coding for conditional additive channel 
as applications of our approach, 
it can be applied to some other problems in information theory such as random number generation problems,
as shown in another paper \cite{HayWat16}.

Our obtained results for 
the source coding with side information 
and the channel coding of the conditional additive channel 
has been extended to the case 
when the side information is continuous and the joint distribution $X$ and $Y$ is memoryless.
Since this case covers the BPSK-AWGN channel,
it can be expected that it covers the MPSK-AWGN channel.
Since such channels are often employed in the real channel coding,
it is an interesting future topic to investigate the finite-length bound for these channels.
Further, we could not define the conditional R\'{e}nyi entropy for transition matrices of continuous $Y$.
Hence, our result could not extended to such a continuous case.
It is another interesting future topic to extend the obtained result to the case with 
continuous $Y$.

\appendix
\subsection{Preparation for Proofs} \label{Appendix:preparation}

When we prove some properties of R\'enyi entropies or derive converse bounds,
some properties of cumulant generating functions (CGFs) become useful.
For this purpose, we introduce some terminologies in statistics
from \cite{hayashi-watanabe:13,hayashi-watanabe:13b}.
Then, in  Appendix \ref{Appendix:preparetion-multi-terminal}, 
we show relation between terminologies in statistics and those in
information theory.
For proofs, see \cite{hayashi-watanabe:13,hayashi-watanabe:13b}.

\subsubsection{Single-Shot Setting}

Let $Z$ be a random variable with distribution $P$. Let 
\begin{eqnarray} \label{eq:definition-single-shot-cgf}
\phi(\rho) &:=& \log \mathsf{E}\left[ e^{\rho Z}\right] \\
&=& \log \sum_z P(z) e^{\rho Z}
\end{eqnarray}
be the cumulant generating function (CGF). 
Let us introduce an exponential family
\begin{eqnarray} \label{eq:one-shot-tail-exponential-family}
P_\rho(z) := P(z)e^{\rho z - \phi(\rho)}.
\end{eqnarray}
By differentiating the CGF, we find that 
\begin{eqnarray}
\phi^\prime(\rho) &=& \mathsf{E}_\rho[Z] \\
&:=& \sum_z P_\rho(z) z. 
\end{eqnarray}
We also find that 
\begin{eqnarray} \label{eq:one-shot-tail-second-derivative}
\phi^{\prime\prime}(\rho) = \sum_z P_\rho(z) \left( z - \mathsf{E}_\rho[Z] \right)^2.
\end{eqnarray}
We assume that $Z$ is not constant. Then, \eqref{eq:one-shot-tail-second-derivative} implies that $\phi(\rho)$ is a strict convex function
and $\phi^\prime(\rho)$ is monotonically increasing. Thus, we can define the inverse function $\rho(a)$ of $\phi^\prime(\rho)$
by 
\begin{eqnarray} \label{eq:definition-cgf-single-inverse-function}
\phi^\prime(\rho(a)) = a.
\end{eqnarray}

Let 
\begin{eqnarray}
D_{1+s}(P\|Q) := \frac{1}{s} \log \sum_z P(z)^{1+s} Q(z)^{-s}
\end{eqnarray}
be the R\'enyi divergence. Then, we have the following relation:
\begin{eqnarray} \label{eq:relation-renyi-divergence-cgf}
s D_{1+s}(P_{\tilde{\rho}}\|P_{\rho}) =  \phi((1+s) \tilde{\rho} - s \rho) - (1+s) \phi(\tilde{\rho}) + s \phi(\rho).
\end{eqnarray}

\subsubsection{Transition Matrix}

Let $\{ W(z|z^\prime) \}_{(z,z^\prime) \in {\cal Z}^2}$ be 
an ergodic and irreducible transition matrix, and let $\tilde{P}$ be its
stationary distribution.
For a function $g:{\cal Z} \times {\cal Z} \to \mathbb{R}$,  let
\begin{eqnarray}
\mathsf{E}[g] := \sum_{z,z^\prime} \tilde{P}(z^\prime) W(z|z^\prime) g(z,z^\prime).
\end{eqnarray}
We also introduce the following tilted matrix:
\begin{eqnarray}
W_\rho(z|z^\prime) := W(z|z^\prime) e^{\rho g(z,z^\prime)}.
\end{eqnarray}
Let $\lambda_\rho$ be the Perron-Frobenius eigenvalue of $W_\rho$. 
Then, the CGF for $W$ with generator $g$ is defined by
\begin{eqnarray} \label{eq:definition-transition-cgf}
\phi(\rho) := \log \lambda_\rho.
\end{eqnarray}

\begin{lemma} \label{lemma:general-markov-cgf-strict-convexity}
The function $\phi(\rho)$ is a convex function of $\rho$, and it is strict convex iff. $\phi^{\prime\prime}(0) > 0$.
\end{lemma}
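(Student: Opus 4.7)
The plan is to prove convexity first by a Hölder/Gelfand argument, then obtain the strict-convexity dichotomy by identifying $\phi''(\rho)$ with an asymptotic variance under a tilted chain.

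For convexity, set $\rho_\lambda := (1-\lambda)\rho_0 + \lambda \rho_1$ with $\lambda \in [0,1]$. The entrywise identity
\begin{eqnarray}
W_{\rho_\lambda}(z|z') = W_{\rho_0}(z|z')^{1-\lambda}\, W_{\rho_1}(z|z')^{\lambda}
\end{eqnarray}
is immediate from $W_\rho(z|z') = W(z|z') e^{\rho g(z,z')}$. Expanding $W_{\rho_\lambda}^n(z_n|z_0)$ as a sum over length-$n$ paths and applying Hölder with exponents $\tfrac{1}{1-\lambda}, \tfrac{1}{\lambda}$ along the path sum gives
\begin{eqnarray}
W_{\rho_\lambda}^n(z_n|z_0) \,\le\, \bigl(W_{\rho_0}^n(z_n|z_0)\bigr)^{1-\lambda} \bigl(W_{\rho_1}^n(z_n|z_0)\bigr)^{\lambda}.
\end{eqnarray}
Summing over $z_n$ and applying Hölder once more yields $\|W_{\rho_\lambda}^n\|_1 \le \|W_{\rho_0}^n\|_1^{1-\lambda}\, \|W_{\rho_1}^n\|_1^{\lambda}$ (with column-sum norm, say). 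Taking $n$-th roots and passing to the limit via Gelfand's formula $\lambda_\rho = \lim_n \|W_\rho^n\|_1^{1/n}$, then taking logs, produces $\phi(\rho_\lambda) \le (1-\lambda)\phi(\rho_0) + \lambda \phi(\rho_1)$.

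For the strict-convexity characterization, I would interpret $\phi'(\rho)$ and $\phi''(\rho)$ through a tilted ergodic chain. Let $v_\rho$ (resp.\ $u_\rho$) be the right (resp.\ left) Perron-Frobenius eigenvector of $W_\rho$, and define the stochastic matrix
\begin{eqnarray}
\hat{W}_\rho(z|z') := \frac{W_\rho(z|z')\, v_\rho(z)}{\lambda_\rho\, v_\rho(z')},
\end{eqnarray}
with stationary distribution $\hat{P}_\rho(z) \propto u_\rho(z) v_\rho(z)$. Differentiating the eigenequation $W_\rho v_\rho = \lambda_\rho v_\rho$ in $\rho$ and pairing with $u_\rho$ yields $\phi'(\rho) = \hat{\mathsf{E}}_\rho[g(Z_1,Z_0)]$; a second differentiation, carried out via perturbation theory on the spectral projector at $\lambda_\rho$, identifies
\begin{eqnarray}
\phi''(\rho) = \lim_{n \to \infty} \frac{1}{n}\mathrm{Var}_\rho\!\left[\sum_{i=1}^n g(Z_i, Z_{i-1})\right],
\end{eqnarray}
the asymptotic variance of the additive functional $g$ under the $\rho$-tilted stationary chain.

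It is classical that this asymptotic variance vanishes iff $g(z,z') - \phi'(\rho)$ is a \emph{coboundary}: there exists $h:{\cal Z}\to\mathbb{R}$ with $g(z,z') - \phi'(\rho) = h(z) - h(z')$ on every edge in the support of $W$. Because the tilt $e^{\rho g}$ is strictly positive, $\hat{W}_\rho$ has the same support as $W$ for every $\rho$, so the coboundary condition on $g$ is $\rho$-independent apart from the additive constant $\phi'(\rho)$. Hence $\phi''(\rho_0) = 0$ at any single $\rho_0$ forces $g$ to be cohomologous to a constant on the chain, which in turn forces $\phi''(\rho) = 0$ for every $\rho$, and $\phi$ to be affine. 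This yields the stated equivalence: $\phi''(0) > 0$ iff $\phi'' > 0$ everywhere iff $\phi$ is strictly convex. The principal obstacle in this route is the careful derivation of the asymptotic-variance formula for $\phi''(\rho)$ given the $\rho$-dependence of both $v_\rho$ and $\hat{P}_\rho$, together with the coboundary characterization of vanishing asymptotic variance; both are classical but require the resolvent/fundamental-matrix machinery for ergodic chains.
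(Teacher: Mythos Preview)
Your argument is correct, but there is little to compare against: the paper does not give its own proof of this lemma. Appendix~\ref{Appendix:preparation} explicitly states that these preparatory results are imported from the companion paper \cite{hayashi-watanabe:13,hayashi-watanabe:13b}, and Lemma~\ref{lemma:general-markov-cgf-strict-convexity} is one of them.

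That said, your proposal is a sound, self-contained proof. A couple of remarks. First, your convexity step via H\"older on path sums followed by Gelfand's spectral-radius formula is clean and standard. Second, in your definition of the tilted stochastic matrix $\hat{W}_\rho(z|z') = W_\rho(z|z')\,v_\rho(z)/(\lambda_\rho v_\rho(z'))$, the column sums equal $1$ only if $v_\rho$ satisfies the \emph{left} eigenvector relation $\sum_z v_\rho(z) W_\rho(z|z') = \lambda_\rho v_\rho(z')$; you have called it the right eigenvector. This is a labeling issue (depending on whether one views $W_\rho$ as row- or column-indexed by the ``to'' state) and does not affect the substance of the argument, but you should align it with the paper's convention, which in Lemma~\ref{lemma:finite-evaluation-of-cgf} takes eigenvectors of $W_\rho^T$.

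For the strict-convexity part, your route through the asymptotic-variance identity $\phi''(\rho) = \lim_n \tfrac{1}{n}\mathrm{Var}_\rho[\sum_i g(Z_i,Z_{i-1})]$ and the coboundary characterization of vanishing asymptotic variance is valid and gives the dichotomy directly. The companion paper's approach, as one can infer from the calculations in Appendix~\ref{appendix:lemma:multi-terminal-markov-property} (where the second derivative of a log-Perron--Frobenius eigenvalue is expressed as a Fisher-information-type quantity for a parametrized family of tilted transition matrices), is more information-geometric: it identifies $\phi''(\rho)$ with a nonnegative quadratic form on the tilted chain and reads off positivity directly. Your coboundary argument is arguably more transparent for showing that vanishing at a single $\rho_0$ propagates to all $\rho$, since the coboundary condition on $g$ depends only on the support of $W$ and is therefore $\rho$-independent.
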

From Lemma \ref{lemma:general-markov-cgf-strict-convexity}, $\phi^\prime(\rho)$ is monotone increasing function.
Thus, we can define the inverse function $\rho(a)$ of $\phi^\prime(\rho)$ by
\begin{eqnarray} \label{eq:definition-inverse-general-markov}
\phi^\prime(\rho(a)) = a.
\end{eqnarray}

\subsubsection{Markov Chain}

Let $\mathbf{Z} = \{ Z^n \}_{n=1}^\infty$ be the Markov chain induced by $W(z|z^\prime)$ and 
an initial distribution $P_{Z_1}$. For functions $g:{\cal Z} \times {\cal Z} \to \mathbb{R}$ 
and $\tilde{g}:{\cal Z} \to \mathbb{R}$, let $S_n := \sum_{i=2}^n g(Z_i,Z_{i-1}) + \tilde{g}(Z_1)$. Then,
the CGF for $S_n$ is given by
\begin{eqnarray}
\phi_n(\rho) := \log \mathsf{E}\left[ e^{\rho S_n} \right].
\end{eqnarray}
We will use the following finite evaluation for $\phi_n(\rho)$. 
\begin{lemma} \label{lemma:finite-evaluation-of-cgf}
Let $v_\rho$ be the eigenvector of $W_\rho^T$ with respect to the Perron-Frobenius eigenvalue $\lambda_\rho$
such that $\min_{z} v_\rho(z) =1$. Let $w_\rho(z) := P_{Z_1}(z) e^{\rho \tilde{g}(z)}$. Then, we have
\begin{eqnarray}
(n-1) \phi(\rho) + \underline{\delta}_\phi(\rho) 
\le \phi_n(\rho) \le (n-1) \phi(\rho) + \overline{\delta}_\phi(\rho),
\end{eqnarray}
where 
\begin{eqnarray}
\overline{\delta}_\phi(\rho) &:=& \log \langle v_\rho | w_\rho \rangle, \\
\underline{\delta}_\phi(\rho) &:=& \log \langle v_\rho | w_\rho \rangle - \log \max_z v_\rho(z).
\end{eqnarray}
\end{lemma}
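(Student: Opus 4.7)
The plan is to rewrite $\mathsf{E}[e^{\rho S_n}]$ as a matrix product involving powers of $W_\rho$ and then sandwich that product between two multiples of $\lambda_\rho^{n-1}$ by using the normalized left Perron-Frobenius eigenvector $v_\rho$.

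First I would unpack the definition of $S_n = \tilde{g}(Z_1) + \sum_{i=2}^n g(Z_i, Z_{i-1})$ to compute
\begin{align}
\mathsf{E}[e^{\rho S_n}]
&= \sum_{z_1,\ldots,z_n} P_{Z_1}(z_1) e^{\rho \tilde{g}(z_1)} \prod_{i=2}^n W(z_i|z_{i-1}) e^{\rho g(z_i,z_{i-1})} \\
&= \sum_{z_1,\ldots,z_n} w_\rho(z_1) \prod_{i=2}^n W_\rho(z_i|z_{i-1})
= \mathbf{1}^T W_\rho^{n-1} w_\rho,
\end{align}
where $\mathbf{1}$ is the all-ones column vector, $w_\rho$ is regarded as a column vector, and the matrix convention is $(W_\rho)_{z,z'} = W_\rho(z|z')$.

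Next, since $v_\rho$ is by hypothesis the eigenvector of $W_\rho^T$ for the Perron-Frobenius eigenvalue $\lambda_\rho$, it is the left eigenvector of $W_\rho$, so that $v_\rho^T W_\rho^{n-1} = \lambda_\rho^{n-1} v_\rho^T$. The normalization $\min_z v_\rho(z) = 1$ gives the componentwise bounds $\mathbf{1}^T \le v_\rho^T \le (\max_z v_\rho(z))\, \mathbf{1}^T$. Because $W_\rho$ has non-negative entries and $w_\rho$ has non-negative entries (so that $W_\rho^{n-1} w_\rho$ is a non-negative column vector), applying these componentwise inequalities to $W_\rho^{n-1} w_\rho$ yields
\begin{align}
\mathbf{1}^T W_\rho^{n-1} w_\rho
\;\le\; v_\rho^T W_\rho^{n-1} w_\rho
\;\le\; \max_z v_\rho(z) \cdot \mathbf{1}^T W_\rho^{n-1} w_\rho.
\end{align}

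Finally, the left-eigenvector identity gives $v_\rho^T W_\rho^{n-1} w_\rho = \lambda_\rho^{n-1} \langle v_\rho | w_\rho \rangle$. Substituting this into the sandwich above and combining with the expression for $\mathsf{E}[e^{\rho S_n}]$ gives
\begin{align}
\frac{\lambda_\rho^{n-1} \langle v_\rho | w_\rho \rangle}{\max_z v_\rho(z)}
\;\le\; \mathsf{E}[e^{\rho S_n}]
\;\le\; \lambda_\rho^{n-1} \langle v_\rho | w_\rho \rangle.
\end{align}
Taking logarithms and recalling $\phi(\rho) = \log \lambda_\rho$ immediately yields the stated bounds with $\underline{\delta}_\phi$ and $\overline{\delta}_\phi$. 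There is no serious obstacle here; the one point to verify carefully is that irreducibility of $W$ plus positivity of the tilting factors $e^{\rho g}$ ensure that $\lambda_\rho$ is simple and that $v_\rho$ can be chosen strictly positive, so that the normalization $\min_z v_\rho(z) = 1$ is well defined and the componentwise inequalities above are valid.
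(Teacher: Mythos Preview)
Your proof is correct and follows essentially the same approach as the paper: the paper does not spell out a proof of this particular lemma (it defers to the companion paper), but its proof of the analogous Lemma~\ref{lemma:multi-terminal-finite-evaluation-upper-conditional-renyi} proceeds exactly as you do, writing the moment as $\langle u | K_\theta^{n-1} w_\theta\rangle$ with $u$ the all-ones vector and sandwiching it via $u \le v_\theta \le (\max_y v_\theta(y)) u$ to obtain $\kappa_\theta^{n-1}\langle v_\theta|w_\theta\rangle$ and $\kappa_\theta^{n-1}\langle v_\theta|w_\theta\rangle/\max_y v_\theta(y)$.
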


From this lemma, we have the following.
\begin{corollary}
For any initial distribution and $\rho \in \mathbb{R}$, we have
\begin{eqnarray}
\lim_{n\to \infty } \phi_n(\rho) = \phi(\rho).
\end{eqnarray}
\end{corollary}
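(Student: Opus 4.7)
The plan is essentially to read off the result from the finite-$n$ sandwich provided by Lemma \ref{lemma:finite-evaluation-of-cgf}, with one caveat: as stated, $\phi_n(\rho) = \log \mathsf{E}[e^{\rho S_n}]$ where $S_n$ is a sum of $n-1$ terms (plus the initial correction $\tilde{g}(Z_1)$), so Lemma \ref{lemma:finite-evaluation-of-cgf} forces $\phi_n(\rho) = (n-1)\phi(\rho) + O(1)$ and hence $\phi_n(\rho)$ diverges whenever $\phi(\rho) \neq 0$. The corollary as written therefore appears to be missing the customary $1/n$ normalization, and I will treat it as the statement $\lim_{n\to\infty} \tfrac{1}{n}\phi_n(\rho) = \phi(\rho)$, which is what the lemma directly yields and what is needed in the rest of the appendix.

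Under this reading, the proof is immediate. First, apply Lemma \ref{lemma:finite-evaluation-of-cgf} for the fixed value of $\rho$: one obtains
\begin{eqnarray}
(n-1) \phi(\rho) + \underline{\delta}_\phi(\rho)
\le \phi_n(\rho) \le (n-1) \phi(\rho) + \overline{\delta}_\phi(\rho).
\end{eqnarray}
The quantities $\underline{\delta}_\phi(\rho) = \log \langle v_\rho | w_\rho \rangle - \log \max_z v_\rho(z)$ and $\overline{\delta}_\phi(\rho) = \log \langle v_\rho | w_\rho \rangle$ depend only on the (left) Perron--Frobenius eigenvector $v_\rho$ of $W_\rho$ and the vector $w_\rho(z) = P_{Z_1}(z)e^{\rho \tilde{g}(z)}$, so both are finite constants that are independent of $n$. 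The finiteness uses the Perron--Frobenius theorem (irreducibility of $W$ is assumed, which carries over to the strictly positive tilted matrix $W_\rho$ for every $\rho \in \mathbb{R}$), giving a strictly positive eigenvector $v_\rho$ and hence $\langle v_\rho | w_\rho \rangle > 0$ whenever $P_{Z_1}$ is not identically zero.

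Dividing the sandwich by $n$ and letting $n\to\infty$ then makes the $\underline{\delta}_\phi(\rho)/n$ and $\overline{\delta}_\phi(\rho)/n$ terms vanish, and the prefactor $(n-1)/n$ tends to $1$; by the squeeze principle, $\tfrac{1}{n}\phi_n(\rho) \to \phi(\rho)$, which is the claimed corollary (after restoring the $1/n$). No genuine obstacle arises; the only subtle point is the initial-distribution-free nature of the limit, which is precisely what makes the $O(1)$ residual terms $\underline{\delta}_\phi, \overline{\delta}_\phi$ harmless in the limit even though they depend on $P_{Z_1}$.
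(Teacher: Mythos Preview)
Your proposal is correct and takes essentially the same approach as the paper: the corollary is stated immediately after Lemma \ref{lemma:finite-evaluation-of-cgf} with the comment ``From this lemma, we have the following,'' so the intended proof is exactly the sandwich argument you give. You are also right that the statement is missing a $1/n$ normalization --- the applications (Theorems \ref{theorem:asymptotic-down-conditional-renyi}, \ref{theorem:asymptotic-up-conditional-renyi}, \ref{theorem:asymptotic-two-parameter-renyi-markov}) all carry the $1/n$, confirming your reading.
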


The relation
\begin{eqnarray}
\lim_{n\to\infty} \frac{1}{n} \san{E}[S_n] &=& \phi^\prime(0) \\
&=& \san{E}[g]
\end{eqnarray}
is well known.
Furthermore, we also have the following.
\begin{lemma} \label{lemma:appendix-variance-limit}
For any initial distribution, we have
\begin{eqnarray}
\lim_{n \to \infty} \frac{1}{n} \mathrm{Var}\left[ S_n \right] = \phi^{\prime\prime}(0).
\end{eqnarray}
\end{lemma}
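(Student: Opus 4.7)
The plan is to recast the variance as the second derivative of the CGF and then transfer the pointwise bound from Lemma~\ref{lemma:finite-evaluation-of-cgf} to the level of second derivatives via analytic perturbation theory. First I would observe that, since $\phi_n$ is the cumulant generating function of $S_n$, one has $\mathrm{Var}[S_n] = \phi_n''(0)$, so the claim is equivalent to $\phi_n''(0)/n \to \phi''(0)$.

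Next, because the transition matrix $W$ is ergodic and irreducible, its Perron--Frobenius eigenvalue $\lambda_0$ is simple, and the tilted matrix $W_\rho$ is entrywise entire in $\rho \in \mathbb{C}$. Standard eigenvalue perturbation theory then yields an open complex disk $D = \{|\rho| < \varepsilon\}$ on which (i) there is a distinguished simple eigenvalue $\lambda_\rho$ of $W_\rho$ that is analytic in $\rho$, together with analytic left and right eigenvectors $v_\rho$ and $r_\rho$; and (ii) the spectral gap persists uniformly, i.e.\ every other eigenvalue of $W_\rho$ lies in $\{|z| \le \kappa |\lambda_\rho|\}$ for some $\kappa < 1$ independent of $\rho \in D$. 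Writing $\mathbb{E}[e^{\rho S_n}] = \langle \mathbf{1} | W_\rho^{n-1} | w_\rho\rangle$ with $w_\rho(z) := P_{Z_1}(z) e^{\rho \tilde{g}(z)}$ and invoking the spectral decomposition of $W_\rho$ gives
\begin{eqnarray*}
\phi_n(\rho) = (n-1)\phi(\rho) + c(\rho) + r_n(\rho),
\end{eqnarray*}
where $c(\rho) := \log\bigl[\langle \mathbf{1} | r_\rho\rangle \langle v_\rho | w_\rho\rangle\bigr]$ is analytic on $D$ and $r_n(\rho)$ tends to $0$ geometrically in $n$, uniformly on any compact sub-disk of $D$.

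Finally, I would differentiate twice at $\rho=0$. Since $r_n$ is a uniform limit of analytic functions on a complex neighborhood of $0$, Cauchy's integral formula implies $r_n''(0) \to 0$, and hence $\phi_n''(0) = (n-1)\phi''(0) + c''(0) + o(1)$. Dividing by $n$ yields the desired identity $\lim_n \mathrm{Var}[S_n]/n = \phi''(0)$. Note that this also supplies a finite-$n$ correction term $c''(0)$ that matches the $O(1)$ gap predicted by Lemma~\ref{lemma:finite-evaluation-of-cgf}.

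The main obstacle is setting up the perturbation argument cleanly: one must verify that the Perron eigenvalue and its spectral projector admit analytic continuations to a fixed complex neighborhood of $0$, and that the residual spectrum is contained uniformly in a strictly smaller disk so that the remainder $r_n$ decays in operator norm uniformly in $\rho$. These points are standard consequences of the simplicity of the Perron--Frobenius eigenvalue for ergodic irreducible transition matrices, but they must be invoked explicitly to turn a pointwise real bound on $\phi_n$ into a bound on its second derivative at the origin.
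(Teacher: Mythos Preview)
The paper does not actually prove this lemma: it is one of the preliminary facts collected in Appendix~\ref{Appendix:preparation}, where the authors state at the outset ``For proofs, see \cite{hayashi-watanabe:13,hayashi-watanabe:13b}.'' So there is no in-paper argument to compare against.

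Your approach is correct and is the standard route for this kind of statement. Identifying $\mathrm{Var}[S_n]$ with $\phi_n''(0)$, invoking analytic perturbation of the simple Perron--Frobenius eigenvalue to obtain the decomposition $\phi_n(\rho) = (n-1)\phi(\rho) + c(\rho) + r_n(\rho)$ on a complex neighborhood of the origin, and then pushing the uniform convergence of $r_n$ to its second derivative via Cauchy's integral formula is exactly how such limits are typically established (this is, for instance, the mechanism behind the Markov-chain CLT proofs in Dembo--Zeitouni and related texts). The points you flag as obstacles---analyticity of the eigenvalue and eigenprojector, and uniform containment of the residual spectrum in a strictly smaller disk---are indeed the only substantive things to verify, and they follow from simplicity of the Perron--Frobenius eigenvalue under the paper's standing assumption that $W$ is ergodic and irreducible (which in this paper's usage includes aperiodicity, so the spectral gap at $\rho=0$ is genuine). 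One minor point worth making explicit in a full write-up is that $\langle \mathbf{1}\,|\, r_\rho\rangle\langle v_\rho\,|\, w_\rho\rangle$ is nonzero at $\rho=0$ (both eigenvectors are positive and $w_0 = P_{Z_1}$), hence nonvanishing on a possibly smaller disk, so that $c(\rho)$ is well defined there.

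As the paper itself notes in the introduction, an alternative route to the same limit is the fundamental-matrix computation of \cite{kemeny-snell-book}, which avoids complex analysis entirely; your perturbation argument has the advantage of giving the finite-$n$ correction $c''(0)$ explicitly and of meshing directly with Lemma~\ref{lemma:finite-evaluation-of-cgf}.
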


\subsection{Relation Between CGF and Conditional R\'enyi Entropies} \label{Appendix:preparetion-multi-terminal}

\subsubsection{Single-Shot Setting}

For correlated random variable $(X,Y)$, let us consider $Z = \log \frac{Q_Y(Y)}{P_{XY}(X,Y)}$. Then,
the relation between the CGF and conditional R\'enyi entropy relative to $Q_Y$ is given by
\begin{eqnarray} \label{eq:relation-single-shot-q-phi}
\theta H_{1+\theta}(P_{XY}|Q_Y) = - \phi(-\theta; P_{XY}|Q_Y).
\end{eqnarray}
From this, we can also find that the relationship between the inverse functions
(cf.~\eqref{eq:definition-inverse-theta-multi-one-shot} and \eqref{eq:definition-cgf-single-inverse-function}):
\begin{eqnarray}
\theta(a) = - \rho(a).
\end{eqnarray}
Thus, the inverse function defined in \eqref{eq:definition-inverse-a-multi-one-shot} also satisfies 
\begin{eqnarray} \label{eq:definition-inverse-R-multi-source-cgf-version}
(1- \rho(a(R )) a(R ) + \phi(\rho(a(R )); P_{XY}|Q_Y) = R.
\end{eqnarray}

Similarly, by setting $Z = \log \frac{1}{P_{X|Y}(X|Y)}$, we have
\begin{eqnarray} \label{eq:relation-single-shot-downarrow-phi}
\theta H_{1+\theta}^\downarrow(X|Y) = - \phi(-\theta; P_{XY}|P_Y).
\end{eqnarray}
Then, the variance (cf.~\eqref{eq:multi-single-shot-variance-1}) satisfies 
\begin{eqnarray} \label{eq:variance-second-derivative-cgf}
\san{V}(X|Y) = \phi^{\prime\prime}(0; P_{XY}|P_Y).
\end{eqnarray}

Let $\phi(\rho,\rho^\prime)$ be the CGF of $Z = \log \frac{P_Y^{(1-\rho^\prime)}(Y)}{P_{XY}(X,Y)}$
(cf.~\eqref{eq:single-shot-optimal-conditioning-distribution} for the definition of $P_Y^{(1-\rho^\prime)}$).
Then, we have
\begin{eqnarray} \label{eq:relation-single-shot-two-parameter-phi}
\theta H_{1+\theta,1+\theta^\prime}(X|Y) = - \phi(-\theta,-\theta^\prime).
\end{eqnarray}
It should be noted that $\phi(\rho,\rho^\prime)$ is a CGF for fixed $\rho^\prime$, 
but $\phi(\rho,\rho)$ cannot be treated as a CGF.

\subsubsection{Transition Matrix}

For transition matrix $W(x,y|x^\prime,y^\prime)$, we consider the function given by
\begin{eqnarray}
g((x,y),(x^\prime,y^\prime)) := \log \frac{W(y|y^\prime)}{W(x,y|x^\prime,y^\prime)}.
\end{eqnarray}
Then, the relation between the CGF and the lower conditional R\'enyi entropy is given by
\begin{eqnarray} \label{eq:relation-down-conditional-renyi-moment-transition-matrix}
\theta H_{1+\theta}^{\downarrow, W}(X|Y) = - \phi(-\theta).
\end{eqnarray}
Then, the variance defined in \eqref{eq:lower-conditional-renyi-markov-theta-0-derivative} satisfies
\begin{eqnarray} \label{eq:relation-variance-cgf-transition-matrix}
\san{V}^W(X|Y) = \phi^{\prime\prime}(0).
\end{eqnarray}

\subsection{Proof of Lemma \ref{lemma:property-upper-conditional-renyi-single-shot}}
\label{Appendix:lemma:property-upper-conditional-renyi-single-shot}

We use the following lemma.
\begin{lemma} \label{lemma:appendix-upper-conditional-renyi-upper-and-lower}
For $\theta \in (-1,0) \cup (0,1)$, we have
\begin{eqnarray} \label{eq:appendix-upper-conditional-renyi-upper-and-lower}
H_{\frac{1}{1-\theta}}^\downarrow(X|Y) \le H_{\frac{1}{1-\theta}}^\uparrow(X|Y) \le H_{1+\theta}^\downarrow(X|Y).
\end{eqnarray}
\end{lemma}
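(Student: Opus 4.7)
The first inequality $H_{1/(1-\theta)}^\downarrow(X|Y) \le H_{1/(1-\theta)}^\uparrow(X|Y)$ is essentially immediate: by the variational definition \eqref{eq:upper-conditional-renyi}, $H_{1+\theta'}^\uparrow$ is the maximum of $H_{1+\theta'}(P_{XY}|Q_Y)$ over $Q_Y$, and the lower version is the special choice $Q_Y=P_Y$. (This is also Statement \ref{item:multi-terminal-single-shot-property-7} of Lemma \ref{lemma:multi-terminal-single-shot-property}.) So the real work is the second inequality.

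For the second inequality I plan a reduction to a pointwise (fixed-$y$) statement. Using the explicit formulas,
\begin{align*}
H_{\frac{1}{1-\theta}}^\uparrow(X|Y) &= -\tfrac{1}{\theta}\log \sum_y P_Y(y)\Bigl[\sum_x P_{X|Y}(x|y)^{\frac{1}{1-\theta}}\Bigr]^{1-\theta},\\
H_{1+\theta}^\downarrow(X|Y) &= -\tfrac{1}{\theta}\log \sum_y P_Y(y)\sum_x P_{X|Y}(x|y)^{1+\theta},
\end{align*}
so it suffices to show, for each $y$ and $a_x:=P_{X|Y}(x|y)$ (a probability vector),
\begin{equation*}
\sum_x a_x^{1+\theta} \;\le\; \Bigl[\sum_x a_x^{\frac{1}{1-\theta}}\Bigr]^{1-\theta}\quad(0<\theta<1),
\end{equation*}
and the reverse inequality for $-1<\theta<0$; the overall claim then follows because $-1/\theta$ is negative on $(0,1)$ and positive on $(-1,0)$, flipping the direction in each case to give the stated inequality on Rényi entropies.

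The key algebraic trick is the factorization $a_x^{1+\theta}=\bigl(a_x^{1/(1-\theta)}\bigr)^{1-\theta}\cdot a_x^{\theta}$, which I plan to combine with H\"older's inequality using the conjugate exponents $p=\tfrac{1}{1-\theta}$ and $q=\tfrac{1}{\theta}$, whose reciprocals sum to $(1-\theta)+\theta=1$. For $\theta\in(0,1)$ both $p,q>1$ and standard H\"older gives
\begin{equation*}
\sum_x \bigl(a_x^{1/(1-\theta)}\bigr)^{1-\theta}\cdot a_x^{\theta}\;\le\;\Bigl(\sum_x a_x^{1/(1-\theta)}\Bigr)^{1-\theta}\Bigl(\sum_x a_x\Bigr)^{\theta}=\Bigl(\sum_x a_x^{1/(1-\theta)}\Bigr)^{1-\theta},
\end{equation*}
using $\sum_x a_x=1$. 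For $\theta\in(-1,0)$ we have $p\in(\tfrac12,1)$ and $q<0$, which is exactly the regime of the reverse H\"older inequality; the same computation then yields the opposite inequality with the constant-$1$ factor again collapsing.

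The main obstacle, if any, is simply bookkeeping: keeping track of the sign of $-1/\theta$ when converting the pointwise inequality on sums into the inequality on entropies, and checking that the exponent pair $(p,q)$ lies in the ordinary-H\"older versus reverse-H\"older regime on the right side of $\theta=0$. Once that is laid out, the proof is a two-line application of the factorization plus (reverse) H\"older, and both regimes collapse to the same final statement \eqref{eq:appendix-upper-conditional-renyi-upper-and-lower}.
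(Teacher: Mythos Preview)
Your argument is correct. The first inequality is handled exactly as in the paper (it follows from the variational definition of $H^\uparrow$; note, however, that the cross-reference you want is Statement~\ref{item:multi-terminal-single-shot-property-4} of Lemma~\ref{lemma:multi-terminal-single-shot-property}, not Statement~\ref{item:multi-terminal-single-shot-property-7}). For the second inequality the paper does not actually supply a proof but simply cites \cite[Lemma~6]{hayashi:13}; your direct, self-contained derivation via the factorization $a_x^{1+\theta}=(a_x^{1/(1-\theta)})^{1-\theta}a_x^{\theta}$ and (reverse) H\"older with exponents $p=1/(1-\theta)$, $q=1/\theta$ is a clean alternative that avoids the external reference. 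Both approaches yield the same inequality; yours has the advantage of being elementary and fully spelled out here, while the paper's route keeps the appendix short by delegating to prior work.
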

\begin{proof}
The left hand side inequality of \eqref{eq:appendix-upper-conditional-renyi-upper-and-lower} is obvious from
the definition of two R\'enyi entropies (the latter is defined by taking maximum). 
The right hand side inequality was proved in \cite[Lemma 6]{hayashi:13}.
\end{proof}

Now, we go back to the proof of Lemma \ref{lemma:property-upper-conditional-renyi-single-shot}.
From \eqref{eq:down-conditional-renyi-theta-0} and \eqref{eq:multi-single-shot-variance-1}, 
by the Taylor approximation, we have
\begin{eqnarray}
H_{1+\theta}^\downarrow(X|Y) = H(X|Y) - \frac{1}{2} \san{V}(X|Y) \theta + o(\theta).
\end{eqnarray}
Furthermore, since $\frac{1}{1-\theta} = 1 + \theta + o(\theta)$, we also have
\begin{eqnarray}
H_{\frac{1}{1-\theta}}^\downarrow(X|Y) = H(X|Y) - \frac{1}{2} \san{V}(X|Y) \theta + o(\theta).
\end{eqnarray}
Thus, from Lemma \ref{lemma:appendix-upper-conditional-renyi-upper-and-lower}, 
we can derive \eqref{eq:up-conditional-renyi-theta-0} and \eqref{eq:multi-single-shot-variance-2}. \qed

\subsection{Proof of Lemma \ref{lemma:multi-terminal-single-shot-property}} 
\label{Appendix:lemma:multi-terminal-single-shot-property}

Statements \ref{item:multi-terminal-single-shot-property-1} and \ref{item:multi-terminal-single-shot-property-2}
follow from the relationships in \eqref{eq:relation-single-shot-q-phi}
and \eqref{eq:relation-single-shot-downarrow-phi} and strict convexity of
the CGFs. 

To prove Statement \ref{item:multi-terminal-single-shot-property-3}, we first prove 
strict convexity of the Gallager function 
\begin{eqnarray}
E_0(\tau; P_{XY}) := \log \sum_y P_Y(y) \left( \sum_x P_{X|Y}(x|y)^{\frac{1}{1+\tau}} \right)^{1+\tau}
\end{eqnarray}
for $\tau > -1$. We use the H\"older inequality:
\begin{eqnarray} \label{eq:Holder}
\sum_i a_i^{\alpha} b_i^\beta \le \left( \sum_i a_i \right)^\alpha \left( \sum_i b_i \right)^\beta
\end{eqnarray}
for  $\alpha,\beta > 0$ such that $\alpha + \beta =1$, where 
the equality holds iff. $a_i = c b_i$ for some constant $c$.
For $\lambda \in (0,1)$, let $1+\tau_3 = \lambda (1+\tau_1) + (1-\lambda) (1+\tau_2)$,
which implies 
\begin{eqnarray}
\frac{1}{1+\tau_3} = \frac{1}{1+\tau_1} \frac{\lambda (1+\tau_1)}{1+\tau_3} + \frac{1}{1+\tau_2} \frac{(1-\lambda)(1+\tau_2)}{1+\tau_3}
\end{eqnarray}
and
\begin{eqnarray}
\frac{\lambda (1+\tau_1)}{1+\tau_3} + \frac{(1-\lambda)(1+\tau_2)}{1+\tau_3} = 1.
\end{eqnarray}
Then, by applying the H\"older inequality twice, we have
\begin{eqnarray}
\lefteqn{ \sum_y P_Y(y) \left( \sum_x P_{X|Y}(x|y)^{\frac{1}{1+\tau_3}} \right)^{1+\tau_3} } \\
&=& \sum_y P_Y(y) \left( \sum_x P_{X|Y}(x|y)^{\frac{1}{1+\tau_1} \frac{\lambda(1+\tau_1)}{1+\tau_3}}
 P_{X|Y}(x|y)^{\frac{1}{1+\tau_2} \frac{(1-\lambda)(1+\tau_2)}{1+\tau_3}} \right)^{1+\tau_3} \\
&\le& \sum_y P_Y(y) \left[ \left( \sum_x P_{X|Y}(x|y)^{\frac{1}{1+\tau_1}} \right)^{\frac{\lambda(1+\tau_1)}{1+\tau_3}}
 \left( \sum_x P_{X|Y}(x|y)^{\frac{1}{1+\tau_2}} \right)^{\frac{(1-\lambda)(1+\tau_2)}{1+\tau_3}} \right]^{1+\tau_3} 
  \label{eq:equality-condition-Gallager-1} \\
 &=&  \sum_y P_Y(y) \left( \sum_x P_{X|Y}(x|y)^{\frac{1}{1+\tau_1}} \right)^{\lambda (1+\tau_1)}
  \left( \sum_x P_{X|Y}(x|y)^{\frac{1}{1+\tau_2}} \right)^{(1-\lambda)(1+\tau_2)} \\
 &=&  \sum_y P_Y(y)^\lambda \left( \sum_x P_{X|Y}(x|y)^{\frac{1}{1+\tau_1}} \right)^{\lambda (1+\tau_1)}
  P_Y(y)^{1-\lambda} \left( \sum_x P_{X|Y}(x|y)^{\frac{1}{1+\tau_2}} \right)^{(1-\lambda)(1+\tau_2)} \\
 &\le& \left[ \sum_y P_Y(y) \left( \sum_x P_{X|Y}(x|y)^{\frac{1}{1+\tau_1}} \right)^{(1+\tau_1)} \right]^{\lambda}
   \left[ \sum_y P_Y(y) \left( \sum_x P_{X|Y}(x|y)^{\frac{1}{1+\tau_2}} \right)^{(1+\tau_2)} \right]^{1-\lambda}.
   \label{eq:equality-condition-Gallager-2}
\end{eqnarray}
The equality in the second inequality holds iff. 
\begin{eqnarray} \label{eq:condition-Gallager-function-strict-inequality}
\left( \sum_x P_{X|Y}(x|y)^{\frac{1}{1+\tau_1}} \right)^{1+\tau_1}
 = c \left( \sum_x P_{X|Y}(x|y)^{\frac{1}{1+\tau_2}} \right)^{1+\tau_2}~~\forall y \in {\cal Y} 
\end{eqnarray}
for some constant $c$. Futhermore, the equality in the first inequality holds iff. 
$P_{X|Y}(x|y) = \frac{1}{|\rom{supp}(P_{X|Y}(\cdot|y))|}$. Substituting this into \eqref{eq:condition-Gallager-function-strict-inequality},
we find that $|\rom{supp}(P_{X|Y}(\cdot|y))|$ is irrespective of $y$. Thus, both the equalities hold 
simultaneously iff. $\san{V}(X|Y) = 0$.
Now, since 
\begin{eqnarray} \label{eq:relation-upper-renyi-gallager}
\theta H_{1+\theta}^\uparrow(X|Y) = - (1+\theta) E_0\left( \frac{-\theta}{1+\theta} ; P_{XY} \right),
\end{eqnarray}
we have
\begin{eqnarray} \label{eq:relation-up-conditional-renyi-gallager-second-derivative}
\frac{d^2 [\theta H_{1+\theta}^\uparrow(X|Y)]}{d\theta^2} &=&
 - \frac{1}{(1+\theta)^4} E_0^{\prime\prime}\left( \frac{-\theta}{1+\theta} ; P_{XY} \right) \\
 &\le& 0
\end{eqnarray}
for $\theta \in (-1, \infty)$, where the equality holds iff. $\san{V}(X|Y) = 0$.

Statement \ref{item:multi-terminal-single-shot-property-4} is obvious from the definitions of the two measures.
The first part of Statement \ref{item:multi-terminal-single-shot-property-5} follows from
\eqref{eq:relation-single-shot-two-parameter-phi} and convexity of the CGF, but we need 
another argument to check the conditions for strict concavity.  
Since the second term of
\begin{eqnarray}
\theta H_{1+\theta,1+\theta^\prime}(X|Y) = 
- \log \sum_y P_Y(y) \left[ \sum_x P_{X|Y}(x|y)^{1+\theta} \right] \left[\sum_x P_{X|Y}(x|y)^{1+\theta^\prime} \right]^{\frac{\theta}{1+\theta^\prime}}
 + \frac{\theta \theta^\prime}{1+\theta^\prime} H_{1+\theta^\prime}^\uparrow(X|Y)
 \label{eq:strict-concavity-of-two-parameter-renyi-single-shot}
\end{eqnarray}
is linear with respect to $\theta$, it suffice to show strict concavity of the first term.
By using the H\"older inequality twice, for $\theta_3 = \lambda \theta_1 + (1-\lambda) \theta_2$, we have
\begin{eqnarray}
\lefteqn{
\sum_y P_Y(y) \left[ \sum_x P_{X|Y}(x|y)^{1+\theta_3} \right] \left[\sum_x P_{X|Y}(x|y)^{1+\theta^\prime} \right]^{\frac{\theta_3}{1+\theta^\prime}}
} \\
&\le& \sum_y P_Y(y) \left[ \sum_x P_{X|Y}(x|y)^{1+\theta_1} \right]^\lambda
 \left[ \sum_x P_{X|Y}(x|y)^{1+\theta_2} \right]^{1-\lambda}
  \left[\sum_x P_{X|Y}(x|y)^{1+\theta^\prime} \right]^{\frac{\lambda \theta_1 + (1-\lambda) \theta_2}{1+\theta^\prime}} \\
 &\le& \left[ \sum_y P_Y(y) \left[ \sum_x P_{X|Y}(x|y)^{1+\theta_1} \right] 
   \left[\sum_x P_{X|Y}(x|y)^{1+\theta^\prime} \right]^{\frac{\theta_1}{1+\theta^\prime}} \right]^\lambda \\
&& ~~~~  \left[ \sum_y P_Y(y) \left[ \sum_x P_{X|Y}(x|y)^{1+\theta_2} \right] 
   \left[\sum_x P_{X|Y}(x|y)^{1+\theta^\prime} \right]^{\frac{\theta_2}{1+\theta^\prime}} \right]^{1-\lambda},
\end{eqnarray}
where both the equalities hold simultaneously iff. $\san{V}(X|Y) = 0$, which can be proved 
in a similar manner as the equality conditions in \eqref{eq:equality-condition-Gallager-1}
and \eqref{eq:equality-condition-Gallager-2}. Thus we have the latter part of Statement \ref{item:multi-terminal-single-shot-property-5}.

Statements \ref{item:multi-terminal-single-shot-property-6}-\ref{item:multi-terminal-single-shot-property-8}
are also obvious from the definitions. 
Statements \ref{item:multi-terminal-single-shot-property-1-b}, \ref{item:multi-terminal-single-shot-property-2-b},
\ref{item:multi-terminal-single-shot-property-3-b}, \ref{item:multi-terminal-single-shot-property-5-b},
follows from
Statements \ref{item:multi-terminal-single-shot-property-1}, \ref{item:multi-terminal-single-shot-property-2},
\ref{item:multi-terminal-single-shot-property-3}, \ref{item:multi-terminal-single-shot-property-5},
(cf.~\cite[Lemma 1]{hayashi:13}).
\qed

\subsection{Proof of Lemma \ref{lemma:limit-renyi-max-entropy}}
\label{appendix:lemma:limit-renyi-max-entropy}

Since \eqref{eq:limit-renyi-max-entropy-0} and \eqref{eq:limit-renyi-max-entropy-2} are obvious from the definitions,
we only prove \eqref{eq:limit-renyi-max-entropy-1}. We note that 
\begin{eqnarray}
\lefteqn{
\left[  \sum_y P_Y(y) \left[ \sum_x P_{X|Y}(x|y)^{1+\theta} \right]^{\frac{1}{1+\theta}} \right]^{1+\theta}
} \\
&\le& \left[ \sum_y P_Y(y) | \rom{supp}(P_{X|Y}(\cdot|y))|^{\frac{1}{1+\theta}} \right]^{1+\theta} \\
&\le& \max_{y \in \rom{supp}(P_Y)} | \rom{supp}(P_{X|Y}(\cdot|y))|
\end{eqnarray}
and
\begin{eqnarray}
\lefteqn{
\left[  \sum_y P_Y(y) \left[ \sum_x P_{X|Y}(x|y)^{1+\theta} \right]^{\frac{1}{1+\theta}} \right]^{1+\theta}
} \\
&\ge& P_Y(y^*)^{1+\theta} \left[ \sum_x P_{X|Y}(x|y^*)^{1+\theta} \right] \\
&\stackrel{\theta \to -1}{\to}& |\rom{supp}(P_{X|Y}(\cdot|y^*))|,
\end{eqnarray}
where 
\begin{eqnarray}
y^* := \argmax_{y \in \rom{supp}(P_Y)} | \rom{supp}(P_{X|Y}(\cdot|y))|.
\end{eqnarray}
\qed

\subsection{Proof of Lemma \ref{lemma:properties-upper-conditional-renyi-transition-matrix}}
\label{appendix:lemma:properties-upper-conditional-renyi-transition-matrix}

From Lemma \ref{lemma:appendix-upper-conditional-renyi-upper-and-lower},
Theorem \ref{theorem:asymptotic-down-conditional-renyi}, and Theorem \ref{theorem:asymptotic-up-conditional-renyi},
we have
\begin{eqnarray} 
H_{\frac{1}{1-\theta}}^{\downarrow,W}(X|Y) \le 
H_{\frac{1}{1-\theta}}^{\uparrow,W}(X|Y) \le 
H_{1+\theta}^{\downarrow,W}(X|Y)
\end{eqnarray}
for $\theta \in (-1,0) \cup (0,1)$. Thus, we can prove Lemma \ref{lemma:properties-upper-conditional-renyi-transition-matrix}
in the same manner as Lemma \ref{lemma:property-upper-conditional-renyi-single-shot}. \qed

\subsection{Proof of \eqref{eq:alternative-definition-of-upper-conditional-W}} 
\label{appendix:eq:alternative-definition-of-upper-conditional-W}

First, in the same manner as Theorem \ref{theorem:asymptotic-down-conditional-renyi}, we can show
\begin{eqnarray}
\lim_{n \to \infty} \frac{1}{n} H_{1+\theta}(P_{X^nY^n}|Q_{Y^n}) = H_{1+\theta}^{W|V}(X|Y),
\end{eqnarray}
where $Q_{Y^n}$ is a Markov chain induced by $V$ for some initial distribution.
Then, since $H_{1+\theta}(P_{X^nY^n}|Q_{Y^n}) \le H_{1+\theta}^\uparrow(X^n|Y^n)$ for each $n$, 
by using Theorem \ref{theorem:asymptotic-up-conditional-renyi}, we have
\begin{eqnarray}
H_{1+\theta}^{W|V}(X|Y) \le H_{1+\theta}^{\uparrow,W}(X|Y).
\end{eqnarray}
Thus, the rest of the proof is to show that $H_{1+\theta}^{\uparrow,W}(X|Y)$ is attainable by some $V$.

Let $\hat{Q}_\theta$ be the normalized left eigenvector of $K_\theta$, and let 
\begin{eqnarray}
V_\theta(y|y^\prime) := \frac{\hat{Q}_\theta(y)}{\kappa_\theta \hat{Q}_\theta(y^\prime)} K_\theta(y|y^\prime).
\end{eqnarray}
Then, $V_\theta$ attains the maximum. To prove this, we will show that $\kappa_\theta^{1+\theta}$ is the 
Perron-Frobenius eigenvalue of 
\begin{eqnarray} \label{eq:w-v-for-optimal-v}
W(x,y|x^\prime,y^\prime)^{1+\theta} V_\theta(y|y^\prime)^{-\theta}.
\end{eqnarray}
We first confirm that $(\hat{Q}_\theta(y)^{1+\theta} : (x,y) \in {\cal X} \times {\cal Y})$ is an eigenvector of
\eqref{eq:w-v-for-optimal-v} as follows:
\begin{eqnarray}
\lefteqn{
\sum_{x,y} \hat{Q}_\theta(y)^{1+\theta} W(x,y|x^\prime,y^\prime)^{1+\theta} V_\theta(y|y^\prime)^{-\theta} 
} \\
&=& \sum_y \hat{Q}_\theta(y)^{1+\theta} W_\theta(y|y^\prime) 
 \left[ \frac{\hat{Q}_\theta(y)}{\kappa_\theta \hat{Q}_\theta(y^\prime)} W_\theta(y|y^\prime)^{\frac{1}{1+\theta}} \right]^{-\theta} \\
&=& \kappa_\theta^\theta \hat{Q}_\theta(y^\prime)^{\theta} \sum_y \hat{Q}_\theta(y) W_\theta(y|y^\prime)^{\frac{1}{1+\theta}} \\
&=& \kappa_\theta^{1+\theta} \hat{Q}_{\theta}(y^\prime)^{1+\theta}.
\end{eqnarray}
Since $(\hat{Q}_\theta(y)^{1+\theta} : (x,y) \in {\cal X} \times {\cal Y})$ is a positive vector
and the Perron-Frobenius eigenvector is the unique positive eigenvector, we find that 
$\kappa_\theta^{1+\theta}$ is the Perron-Frobenius eigenvalue. Thus, we have
\begin{eqnarray}
H_{1+\theta}^{W|V_\theta}(X|Y) &=& - \frac{1+\theta}{\theta} \log \kappa_\theta \\
&=& H_{1+\theta}^{\uparrow,W}(X|Y).
\end{eqnarray}
\qed

\subsection{Proof of Lemma \ref{lemma:multi-terminal-markov-property}}
\label{appendix:lemma:multi-terminal-markov-property}

Statement \ref{item:multi-terminal-markov-property-1} follows from \eqref{eq:relation-down-conditional-renyi-moment-transition-matrix}
and strict convexity of the CGF.
Statements \ref{item:multi-terminal-markov-property-3}, \ref{item:multi-terminal-markov-property-5},
\ref{item:multi-terminal-markov-property-6}, and \ref{item:multi-terminal-markov-property-7}, 
follow from the corresponding statements 
in Lemma \ref{lemma:multi-terminal-single-shot-property},
Theorem \ref{theorem:asymptotic-down-conditional-renyi},
Theorem \ref{theorem:asymptotic-up-conditional-renyi},
and Theorem \ref{theorem:asymptotic-two-parameter-renyi-markov}.

Now, we prove\footnote{The concavity of $\theta H_{1+\theta}^{\uparrow,W}(X|Y)$ 
follows from the limiting argument, i.e.,
the concavity of $\theta H_{1+\theta}^{\uparrow}(X^n|Y^n)$ (cf.~Lemma \ref{lemma:multi-terminal-single-shot-property})
and Theorem \ref{theorem:asymptotic-up-conditional-renyi}.
However, the strict concavity does not follows from the limiting argument.} 
Statement \ref{item:multi-terminal-markov-property-2}.
For this purpose, we introduce transition matrix counterpart of the Gallager function as follows.
Let
\begin{eqnarray}
\bar{K}_\tau(y|y^\prime) := W(y|y^\prime) \left[ \sum_x W(x|x^\prime,y^\prime,y)^{\frac{1}{1+\tau}} \right]^{1+\tau}
\end{eqnarray}
for $\tau > -1$, which is well defined under Assumption \ref{assumption-memory-through-Y}.
Let $\bar{\kappa}_\tau$ be the Perron-Frobenius eigenvalue of $\bar{K}_\tau$, and let
$\tilde{Q}_\tau$ and $\hat{Q}_\tau$ be its normalized right and left eigenvectors. 
Then, let 
\begin{eqnarray}
L_\tau(y|y^\prime) := \frac{\hat{Q}_\tau(y)}{\bar{\kappa}_\tau \hat{Q}_\tau(y^\prime)} \bar{K}_\tau(y|y^\prime)
\end{eqnarray}
be a parametrized transition matrix. The stationary distribution of $L_\tau$ is given by
\begin{eqnarray}
Q_\tau(y^\prime) := \frac{\hat{Q}_\tau(y^\prime) \tilde{Q}_\tau(y^\prime)}{\sum_{y^{\prime\prime}} \hat{Q}_\tau(y^{\prime\prime}) \tilde{Q}_\tau(y^{\prime\prime})}.
\end{eqnarray}
We prove strict convexity of $E_0^W(\tau) := \log \bar{\kappa}_\tau$ for $\tau > -1$.
Then, by the same reason as \eqref{eq:relation-up-conditional-renyi-gallager-second-derivative}, 
we can show Statement \ref{item:multi-terminal-markov-property-2}.
Let $Q_\tau(y,y^\prime) := L_\tau(y|y^\prime) Q_\tau(y^\prime)$. 
By the same calculation as \cite[Proof of Lemma 13 and Lemma 14]{hayashi-watanabe:13},
we have
\begin{eqnarray}
\sum_{y,y^\prime} Q_\tau(y,y^\prime) \left[ \frac{d}{d\tau} \log L_\tau(y|y^\prime) \right]^2
 = - \sum_{y,y^\prime} Q_\tau(y,y^\prime) \left[ \frac{d^2}{d\tau^2} \log L_\tau(y|y^\prime) \right].
\end{eqnarray} 
Furthermore, from the definition of $L_\tau$, we have
\begin{eqnarray}
\lefteqn{ - \sum_{y,y^\prime} Q_\tau(y,y^\prime) \left[ \frac{d^2}{d\tau^2} \log L_\tau(y|y^\prime) \right] }  \\
&=& - \sum_{y,y^\prime} Q_\tau(y,y^\prime) \left[ \frac{d^2}{d \tau^2} \log \frac{1}{\kappa_\tau} 
  + \frac{d^2}{d\tau^2} \log \frac{\hat{Q}_\tau(y)}{\hat{Q}_\tau(y^\prime)} + \frac{d^2}{d\tau^2} \log K_\tau(y|y^\prime) \right] \\
&=& \frac{d^2}{d\tau^2} \log \kappa_\tau -  \sum_{y,y^\prime} Q_\tau(y,y^\prime) \frac{d^2}{d\tau^2} \log K_\tau(y|y^\prime).
\end{eqnarray}
Now, we show convexity of $\log \bar{K}_\tau(y|y^\prime)$ for each $(y,y^\prime)$.
By using the H\"older inequality (cf.~Appendix \ref{Appendix:lemma:multi-terminal-single-shot-property}),
for $\tau_3 = \lambda \tau_1 + (1-\lambda) \tau_2$, we have
\begin{eqnarray} \label{eq:convexity-markov-Gallager}
\left[ \sum_x W(x|x^\prime,y^\prime,y)^{\frac{1}{1+\tau_3}} \right]^{1+\tau_3}
\le \left[ \sum_x W(x|x^\prime,y^\prime,y)^{\frac{1}{1+\tau_1}} \right]^{\lambda(1+\tau_1)}
 \left[ \sum_x W(x|x^\prime,y^\prime,y)^{\frac{1}{1+\tau_2}} \right]^{(1-\lambda)(1+\tau_2)}.
\end{eqnarray}
Thus, $E_0^W(\tau)$ is convex. To check strict convexity, we note that the equality in \eqref{eq:convexity-markov-Gallager}
holds iff. $W(x|x^\prime,y^\prime,y) = \frac{1}{|\rom{supp}(W(\cdot|x^\prime,y^\prime,y))|}$.
Since
\begin{eqnarray}
\sum_x W(x|x^\prime,y^\prime,y)^{1+\theta} = \frac{1}{|\rom{supp}(W(\cdot|x^\prime,y^\prime,y))|^\theta}
\end{eqnarray}
does not depend on $x^\prime$ from Assumption \ref{assumption-memory-through-Y}, we have 
$|\rom{supp}(W(\cdot|x^\prime,y^\prime,y))| = C_{yy^\prime}$ for some integer $C_{yy^\prime}$.
By substituting this into $\bar{K}_\tau$, we have
\begin{eqnarray}
\bar{K}_\tau(y|y^\prime) = W(y|y^\prime) C_{yy^\prime}^\tau.
\end{eqnarray}
On the other hand, we note that the CGF $\phi(\rho)$ is defined as the logarithm of the Perron-Frobenius eigenvalue of
\begin{eqnarray} \label{eq:proof-strict-convexity-two-parameter}
W(x,y|w^\prime,y^\prime)^{1-\rho} W(y|y^\prime)^\rho 
= W(y|y^\prime) \frac{1}{C_{yy^\prime}^{1-\rho}} \bol{1}[x \in \rom{supp}(W(\cdot|x^\prime,y^\prime,y)) ].
\end{eqnarray}
Since 
\begin{eqnarray}
\lefteqn{ \sum_{x,y} \hat{Q}_\tau(y) W(y|y^\prime) \frac{1}{C_{yy^\prime}^{1-\tau}} \bol{1}[x \in \rom{supp}(W(\cdot|x^\prime,y^\prime,y)) ] } \\
&=& \sum_y  \hat{Q}_\tau(y) W(y|y^\prime) C_{yy^\prime}^\tau \\
&=& \bar{\kappa}_\tau \hat{Q}_\tau(y^\prime),
\end{eqnarray}
$\bar{\kappa}_\tau$ is the Perro-Frobenius eigenvalue of \eqref{eq:proof-strict-convexity-two-parameter}, and thus
we have $E_0^W(\tau) = \phi(\tau)$ when the equality in \eqref{eq:convexity-markov-Gallager} 
holds for every $(y,y^\prime)$ such that $W(y|y^\prime) > 0$. 
Since $\phi(\tau)$ is strict convex if $\san{V}^W(X|Y) > 0$,
$E_0^W(\tau)$ is strict convex if $\san{V}^W(X|Y) > 0$. 
Thus, $\theta H_{1+\theta}^{\uparrow,W}(X|Y)$ is strict concave if $\san{V}^W(X|Y) > 0$.
On the other hand, from
\eqref{eq:upper-conditional-renyi-transition-variance}, 
$\theta H_{1+\theta}^{\uparrow,W}(X|Y)$ is strict concave only if $\san{V}^W(X|Y) > 0$.

Statement \ref{item:multi-terminal-markov-property-4} can be proved by modifying 
the proof of Statement \ref{item:multi-terminal-single-shot-property-5} 
of Lemma \ref{lemma:multi-terminal-single-shot-property} to a transition matrix
in a similar manner as Statement \ref{item:multi-terminal-markov-property-2} of the present lemma.

Finally, Statements \ref{item:multi-terminal-markov-property-1-b}, \ref{item:multi-terminal-markov-property-2-b},
\ref{item:multi-terminal-markov-property-4-b} follows from
Statements \ref{item:multi-terminal-markov-property-1}, \ref{item:multi-terminal-markov-property-2},
\ref{item:multi-terminal-markov-property-4}
(cf.~\cite[Lemma 1]{hayashi:13}).
\qed

\subsection{Proof of Lemma \ref{lemma:legendra-transform-2}}
\label{subsection:proof-lemma:legendra-transform-2}

We only prove \eqref{eq:legendra-transform-2-lower} since we can prove \eqref{eq:legendra-transform-2-upper}
exactly in the same manner by replacing $H_{1+\theta}^{\downarrow,W}(X|Y)$, $\theta^\downarrow(a)$, and $a^\downarrow(R )$
by $H_{1+\theta}^{\uparrow,W}(X|Y)$, $\theta^\uparrow(a)$, and $a^\uparrow(R )$. 
Let 
\begin{align}
f(\theta) := \frac{- \theta R + \theta H_{1+\theta}^{\downarrow,W}(X|Y)}{1+\theta}.
\end{align}
Then, we have
\begin{align}
f^\prime(\theta) &= \frac{-R + (1+\theta) \frac{d[\theta H_{1+\theta}^{\downarrow,W}(X|Y)]}{d\theta} - \theta H_{1+\theta}^{\downarrow,W}(X|Y)}{(1+\theta)^2} \\
&= \frac{- R + R\left( \frac{d[\theta H_{1+\theta}^{\downarrow,W}(X|Y)]}{d\theta} \right)}{(1+\theta)^2}.
\end{align}
Since $R(a )$ is monotonically increasing and $\frac{d[\theta H_{1+\theta}^{\downarrow,W}(X|Y)]}{d\theta}$ 
is monotonically decreasing, we have
$f^\prime(\theta) \ge 0$ for $\theta \le \theta(a(R ))$ and 
$f^\prime(\theta) \le 0$ for $\theta \ge \theta(a(R ))$. Thus, $f(\theta)$ takes its maximum at $\theta(a( R))$.
Furthermore, since $-1 \le \theta(a(R )) \le 0$ for $H^W(X|Y) \le R \le H_0^{\downarrow,W}(X|Y)$, we have
\begin{align}
& \sup_{-1\le \theta \le 0} \frac{-\theta R + \theta H_{1+\theta}^{\downarrow,W}(X|Y)}{1+\theta} \\
 &= \frac{- \theta(a(R )) R + \theta(a(R )) H_{1+\theta(a(R ))}^{\downarrow,W}(X|Y)}{1+\theta(a(R ))} \\
 &= \frac{- \theta(a(R )) [(1+\theta(a(R ))) a(R ) - \theta(a(R )) H_{1+\theta(a(R ))}^{\downarrow,W}(X|Y)] + \theta(a(R )) H_{1+\theta(a(R ))}^{\downarrow,W}(X|Y)}{1+\theta(a(R ))} \\
 &= - \theta(a(R )) a(R ) + \theta(a(R )) H_{1+\theta(a(R ))}^{\downarrow,W}(X|Y),
\end{align}
where we substituted $R = R(a(R ))$ in the second equality. \qed

\subsection{Proof of Lemma \ref{lemma:multi-terminal-finite-evaluation-upper-conditional-renyi}}
\label{appendix:proof-lemma:multi-terminal-finite-evaluation-upper-conditional-renyi}

Let $u$ be the vector such that $u(y) =1$ for every $y \in {\cal Y}$. 
From the definition of $H_{1+\theta}^{\uparrow}(X^n|Y^n)$, we have
the following sequence of calculations:
\begin{eqnarray}
\lefteqn{ e^{- \frac{\theta}{1+\theta} \theta H_{1+\theta}^{\uparrow}(X^n|Y^n)} } \\
&=& \sum_{y_1,\ldots,y_n} \left[ \sum_{x_n,\ldots,x_1} P(x_1,y_1)^{1+\theta} 
 \prod_{i=2}^n W(x_i,y_i|x_{i-1},y_{i-1})^{1+\theta} \right]^{\frac{1}{1+\theta}} \\
&\stackrel{\rom{(a)}}{=}& \sum_{y_n,\ldots,y_1} \left[ \sum_{x_1} P(x_1,y_1)^{1+\theta}\right]^{\frac{1}{1+\theta}}
  \prod_{i=2}^n W_{\theta}(y_i|y_{i-1})^{\frac{1}{1+\theta}} \\
&=& \inner{u}{K_\theta^{n-1}w_\theta} \\
&\le& \inner{v_\tau}{K_\theta^{n-1} w_\theta} \\
&=& \inner{(K_\theta^T)^{n-1} v_\theta}{w_\theta} \\
&=& \kappa_\theta^{n-1} \inner{v_\theta}{w_\theta} \\
&=& e^{-(n-1)\frac{\theta}{1+\theta} H_{1+\theta}^{\uparrow,W}(X|Y)} \inner{v_\theta}{w_\theta},  
\end{eqnarray}
which implies the left hand side inequality,  where we used Assumption \ref{assumption-memory-through-Y} in $\rom{(a)}$.
On the other hand, we have the following sequence of calculations:
\begin{eqnarray}
\lefteqn{ e^{- \frac{\theta}{1+\theta} \theta H_{1+\theta}^{\uparrow}(X^n|Y^n)} } \\
&=& \inner{u}{K_\theta^{n-1}w_\theta} \\
&\ge& \frac{1}{\max_y v_\theta(y)} \inner{v_\theta}{K_\theta^{n-1} w_\theta} \\
&=& \frac{1}{\max_y v_\theta(y)} \inner{(K_\theta^T)^{n-1} v_\theta}{w_\theta} \\
&=& \kappa_\theta^{n-1} \frac{\inner{v_\theta}{w_\theta}}{\max_y v_\theta(y)} \\
&=& e^{-(n-1)\frac{\theta}{1+\theta} H_{1+\theta}^{\uparrow,W}(X|Y)} \frac{\inner{v_\theta}{w_\theta}}{\max_y v_\theta(y)},
\end{eqnarray}
which implies the right hand side inequality. \qed

\subsection{Proof of Theorem \ref{theorem:multi-source-converse}}
\label{appendix:theorem:multi-source-converse}

For arbitrary $\tilde{\rho} \in \mathbb{R}$, we set $\alpha := P_{XY}\{ X \neq \san{d}(\san{e}(X),Y) \}$ and
$\beta: = P_{XY,\tilde{\rho}}\{ X \neq \san{d}(\san{e}(X),Y) \}$, where 
\begin{eqnarray}
P_{XY,\rho}(x,y) := P_{XY}(x,y)^{1-\rho} Q_Y(y)^\rho e^{-\phi(\rho;P_{XY}|Q_Y)}.
\end{eqnarray}
Then, by the monotonicity of the R\'enyi divergence, we have
\begin{eqnarray}
s D_{1+s}(P_{XY,\tilde{\rho}} \| P_{XY}) &\ge& \log \left[ \beta^{1+s} \alpha^{-s} + (1-\beta)^{1+s} (1-\alpha)^{-s} \right] \\
&\ge& \log \beta^{1+s} \alpha^{-s}.
\end{eqnarray}
Thus, we have
\begin{eqnarray}
- \log \alpha \le \frac{\phi((1+s) \tilde{\rho};P_{XY}|Q_Y) - (1+s) \phi(\tilde{\rho};P_{XY}|Q_Y) - (1+s) \log \beta}{s}.
\end{eqnarray}
Now, by using Lemma \ref{lemma:multi-source-han-converse}, we have
\begin{eqnarray}
1 - \beta &\le& P_{XY,\tilde{\rho}}\left\{ \log \frac{Q_Y(y)}{P_{XY,\tilde{\rho}}(x,y)} \le \gamma \right\} + \frac{M}{e^\gamma}.
\end{eqnarray}
We also have, for any $\sigma \le 0$, 
\begin{eqnarray}
\lefteqn{ P_{XY,\tilde{\rho}}\left\{ \log \frac{Q_Y(y)}{P_{XY,\tilde{\rho}}(x,y)} \le \gamma \right\} } \label{eq:multi-source-exponential-converse-proof-1} \\
&\le& \sum_{x,y} P_{XY,\tilde{\rho}}(x,y) e^{\sigma \left(\log \frac{Q_Y(y)}{P_{XY,\tilde{\rho}}(x,y)} - \gamma \right)} \\
&=& e^{- [ \sigma \gamma - \phi(\sigma; P_{XY,\tilde{\rho}}|Q_Y)]}.
\end{eqnarray}
Thus, by setting $\gamma$ so that
\begin{eqnarray}
\sigma \gamma - \phi(\sigma; P_{XY,\tilde{\rho}}|Q_Y) = \gamma - R,
\end{eqnarray}
we have
\begin{eqnarray}
1 - \beta \le 2 e^{- \frac{\sigma R - \phi(\sigma; P_{XY,\tilde{\rho}}|Q_Y)}{1-\sigma}}.
\label{eq:multi-source-exponential-converse-proof-1-5}
\end{eqnarray}
Furthermore, we have the relation
\begin{eqnarray}
\phi(\sigma; P_{XY,\tilde{\rho}}|Q_Y)
&=& \log \sum_{x,y} P_{XY,\tilde{\rho}}(x,y)^{1-\sigma} Q_Y(y)^\sigma \\
&=& \log \sum_{x,y} \left( P_{XY}(x,y)^{1-\tilde{\rho}} Q_Y(y)^{\tilde{\rho}} e^{- \phi(\tilde{\rho}; P_{XY}|Q_Y)}\right)^{1-\sigma} Q_Y(y)^\sigma \\
&=& - (1-\sigma) \phi(\tilde{\rho}; P_{XY}|Q_Y) + \log \sum_{x,y} P_{XY}(x,y)^{1-\tilde{\rho} - \sigma(1-\tilde{\rho})} Q_Y(y)^{\tilde{\rho}+\sigma(1-\tilde{\rho})} \\
&=& \phi(\tilde{\rho}+\sigma(1-\tilde{\rho}); P_{XY}|Q_Y) - (1-\sigma) \phi(\tilde{\rho}; P_{XY}|Q_Y).
\label{eq:multi-source-exponential-converse-proof-2}
\end{eqnarray}
Thus, by substituting $\tilde{\rho} = - \tilde{\theta}$ and $\sigma = - \vartheta$, and by using \eqref{eq:relation-single-shot-q-phi}, 
we can derive \eqref{eq:bound-multi-source-exponential-converse-1}. 

Now, we restrict the range of $\tilde{\rho}$ so that $\rho(a(R )) < \tilde{\rho} < 1$, and take 
\begin{eqnarray}
\sigma = \frac{\rho(a(R )) - \tilde{\rho}}{1-\tilde{\rho}}.
\label{eq:multi-source-exponential-converse-proof-3}
\end{eqnarray}
Then, by substituting this into \eqref{eq:multi-source-exponential-converse-proof-2}
and \eqref{eq:multi-source-exponential-converse-proof-2} into \eqref{eq:multi-source-exponential-converse-proof-1-5},
we have ($\phi(\rho;P_{XY}|Q_Y)$ is omitted as $\phi(\rho)$)
\begin{eqnarray} 
\lefteqn{ \frac{\sigma R - \phi(\tilde{\rho} 
 + \sigma(1-\tilde{\rho})) + (1-\sigma)\phi(\tilde{\rho})}{1-\sigma} }  \label{eq:multi-source-proof-converting-exponent-4} \\
&=& \frac{(\rho(a(R )) -  \tilde{\rho}) R - (1-\tilde{\rho}) \phi(\rho(a(R )) ) + (1-\rho(a(R ))) \phi(\tilde{\rho})}{1 - \rho(a(R ))} \\
&=& \frac{(\rho(a(R )) - \tilde{\rho}) \left\{ (1-\rho(a(R ))) a(R ) + \phi(\rho(a(R ))) \right\}  - (1-\tilde{\rho}) \phi(\rho(a(R ))) + (1-\rho(a(R ))) \phi(\tilde{\rho})}{1 - \rho(a(R ))} \nonumber \\ \\
&=& (\rho(a(R )) - \tilde{\rho}) a(R ) - \phi(\rho(a(R ))) + \phi(\tilde{\rho}), 
\label{eq:multi-source-proof-converting-exponent-5}
\end{eqnarray} 
where we used \eqref{eq:definition-inverse-R-multi-source-cgf-version} in the second equality.
Thus, by substituting $\tilde{\rho} = - \tilde{\theta}$ and by using \eqref{eq:relation-single-shot-q-phi} again,
we have \eqref{eq:bound-multi-source-exponential-converse-2}. \qed

\subsection{Proof of Theorem \ref{theorem:channel-exponential-converse}}
\label{appendix:theorem:channel-exponential-converse}

Let
\begin{eqnarray}
P_{X^n Y^n, \rho}(x^n,y^n) := P_{X^nY^n}(x^n,y^n)^{1-\rho} Q_{Y^n}(y^n)^{\rho} e^{-\phi(\rho;P_{X^nY^n}|Q_{Y^n})},
\end{eqnarray}
and let $P_{B^n|A^n,\rho}$ be a conditional additive channel defined by
\begin{eqnarray}
P_{B^n|A^n,\rho}(a^n+x^n|a^n) = P_{X^nY^n,\rho}(x^n,y^n).
\end{eqnarray}
We also define the joint distribution of the message, the input, the output, and the decoded message for each channel:
\begin{eqnarray}
P_{M_n A^n B^n\hat{M}_n}(m,a^n,b^n,\hat{m}) &:=& \frac{1}{M_n} \bol{1}[\san{e}_n(m) = a^n] P_{B^n|A^n}(b^n|a^n) \bol{1}[\san{d}_n(b^n) = \hat{m}], \\
P_{M_n A^n B^n\hat{M}_n,\rho}(m,a^n,b^n,\hat{m}) &:=& 
 \frac{1}{M_n} \bol{1}[\san{e}_n(m) = a^n] P_{B^n|A^n,\rho}(b^n|a^n) \bol{1}[\san{d}_n(b^n) = \hat{m}].
\end{eqnarray}
For arbitrary $\tilde{\rho} \in \mathbb{R}$, let 
$\alpha := P_{M_n\hat{M}_n}\{ m \neq \hat{m} \}$ and $\beta := P_{M_n \hat{M}_n,\tilde{\rho}}\{ m \neq \hat{m} \}$.
Then, by the monotonicity of the R\'enyi divergence, we have
\begin{eqnarray}
s D_{1+s}(P_{A^nB^n,\tilde{\rho}} \| P_{A^nB^n}) 
&\ge& s D_{1+s}(P_{M_n \hat{M}_n,\tilde{\rho}} \| P_{M_n\hat{M}_n}) \\
&\ge& \log \left[ \beta^{1+s} \alpha^{-s} + (1-\beta)^{1+s} (1-\alpha)^{-s} \right] \\
&\ge& \log \beta^{1+s} \alpha^{-s}.
\end{eqnarray}
Thus, we have
\begin{eqnarray}
- \log \alpha \le \frac{s D_{1+s}(P_{A^nB^n,\tilde{\rho}} \| P_{A^nB^n}) - (1+s) \log \beta}{s}.
\end{eqnarray}
Here, we have
\begin{eqnarray}
D_{1+s}(P_{A^nB^n,\tilde{\rho}} \| P_{A^nB^n}) = D_{1+s}(P_{X^nY^n,\tilde{\rho}} \| P_{X^nY^n}).
\end{eqnarray}
On the other hand, from Lemma \ref{lemma:channel-converse-general-additive-specialized}, we have
\begin{eqnarray}
1 - \beta \le P_{X^nY^n,\tilde{\rho}}\left\{ \log \frac{Q_{Y^n}(y^n)}{P_{X^nY^n,\tilde{\rho}}(x^n,y^n)} \le n \log |{\cal A}| - \gamma \right\} + \frac{e^R}{e^{n \log|{\cal A}| - \gamma}}.
\end{eqnarray}
Thus, by the same argument as in \eqref{eq:multi-source-exponential-converse-proof-1}-\eqref{eq:multi-source-exponential-converse-proof-2} and by noting \eqref{eq:relation-single-shot-q-phi}, we can derive 
\eqref{eq:bound-channel-exponential-converse-1}. 

Now, we restrict the range of $\tilde{\rho}$ so that $\rho(a(R )) < \tilde{\rho} < 1$, and take 
\begin{eqnarray}
\sigma = \frac{\rho(a(R )) - \tilde{\rho}}{1-\tilde{\rho}}.
\end{eqnarray}
Then, by noting \eqref{eq:relation-single-shot-q-phi},
we have \eqref{eq:bound-channel-exponential-converse-2}. 
\qed

\section*{Acknowledgment}

The authors would like to thank Prof.~Vincent Y.~F.~Tan for pointing out 
Remark \ref{remark:convergence-on-moderate-deviation}.
The authors are also grateful to Mr. Ryo Yaguchi for his helpful comments.
HM is partially supported by a MEXT Grant-in-Aid for
Scientific Research (A) No. 23246071. He is partially supported by the National Institute of Information and Communication Technology (NICT), Japan. 
The Centre for Quantum Technologies is funded by the Singapore Ministry of Education and the
National Research Foundation as part of the Research Centres of Excellence programme.


\bibliographystyle{../09-04-17-bibtex/IEEEtran}
\bibliography{../09-04-17-bibtex/reference.bib}


\end{document}